\newtheorem{theorem}{Theorem}
\newtheorem{lemma}[theorem]{Lemma}
\newtheorem{corollary}[theorem]{Corollary}
\theoremstyle{definition}
\newtheorem{definition}{Definition}
\newtheorem{invariant}{Invariant}
\title{AdWords in a Panorama}
\author{
    Zhiyi Huang\thanks{The University of Hong Kong. Email: zhiyi@cs.hku.hk.}
    \and
    Qiankun Zhang\thanks{The University of Hong Kong. Email: qkzhang@cs.hku.hk.}
    \and
    Yuhao Zhang\thanks{The University of Hong Kong. Email: yhzhang2@cs.hku.hk.}
}
\date{August 2020}
\begin{document}

\begin{titlepage}
    \thispagestyle{empty}
    \maketitle
    \begin{abstract}
        \thispagestyle{empty}
        Three decades ago, Karp, Vazirani, and Vazirani (STOC 1990) defined the online matching problem and gave an optimal $1-\frac{1}{e} \approx 0.632$-competitive algorithm.
%introduced the Ranking algorithm with the optimal $1-\frac{1}{e}$ competitive ratio.
Fifteen years later, Mehta, Saberi, Vazirani, and Vazirani (FOCS 2005) introduced the first generalization called \emph{AdWords} driven by online advertising and obtained the optimal $1-\frac{1}{e}$ competitive ratio in the special case of \emph{small bids}.
It has been open ever since whether there is an algorithm for \emph{general bids} better than the $0.5$-competitive greedy algorithm.
This paper presents a $0.5016$-competitive algorithm for AdWords, answering this open question on the positive end.
The algorithm builds on several ingredients, including a combination of the online primal dual framework and the configuration linear program of matching problems recently explored by Huang and Zhang (STOC 2020), a novel formulation of AdWords which we call the panorama view, and a generalization of the online correlated selection by Fahrbach, Huang, Tao, and Zadimorghaddam (FOCS 2020) which we call the panoramic online correlated selection.

    \end{abstract}
\end{titlepage}

% probability {{{
\newcommand{\E}{\mathbf{E}}
\newcommand{\Var}{\mathbf{Var}}
\renewcommand{\Pr}{\mathbf{Pr}}
% }}}

\newcommand{\R}{\mathbf{R}}
\newcommand{\status}{\mathcal{S}}

% notation
\newcommand{\kmax}{k_{\max}}
\newcommand{\kmin}{k_a^{\min}}

% algorithmic
\newcommand{\ALGIND}{\hspace{\algorithmicindent}}
\newcommand{\INDSTATE}{\STATE\ALGIND}
\newcommand{\INDINDSTATE}{\STATE\ALGIND\ALGIND}

% editing comments
\newcommand{\TBD}{\textcolor{red}{TBD.}}
\newcommand{\qiankun}[1]{\textcolor{red}{(Qiankun: #1)}}
\newcommand{\yuhao}[1]{\textcolor{blue}{(Yuhao: #1)}}

\newcommand{\defeq}{\stackrel{\textnormal{def}}{=}}

\section{Introduction}
\label{sec:intro}

Consider an \emph{ad platform} in online advertising, e.g., a search engine in the case of sponsored search.
Each advertiser on the platform provides its \emph{bids} for a set of keywords, for which it likes its ad to be shown.
It further has a \emph{budget} which upper bounds its payment in a day.
When a user submits a request, often referred to as an \emph{impression}, the platform sees the bids of the advertisers for it.
The platform then selects an advertiser, who pays either its bid or its remaining budget, whichever is smaller.
The goal of the platform is to allocate impressions to advertisers to maximize the total payment of the advertisers.
%in the online algorithm literature.
The revenues of online advertising in the US have surpassed those of television advertising in 2016~\cite{IAB/report/2016}, and have totaled \$57.9 billions in the first half of 2019~\cite{IAB/report/2019}.

While online advertising acquires growing importance in practice, it has also been extensively studied in theoretical computer science.
The first related research dates back to three decades ago, when \citet{KarpVV/STOC/1990} introduced the online matching problem and designed an online algorithm with the optimal $1-\frac{1}{e} \approx 0.632$ competitive ratio.
It can be viewed as the special case with unit bids and unit budgets.
Fifteen years later, \citet{MehtaSVV/JACM/2007} formally formulated it as the \emph{AdWords} problem.
They introduced an optimal $1-\frac{1}{e}$-competitive algorithm under the \emph{small-bid assumption}:
an advertiser's bid for any impression is much smaller than its budget.

Subsequently, AdWords has been studied under stochastic assumptions.
\citet{GoelM/SODA/2008} showed that \emph{assuming a random arrival order of the impressions} and small bids, a $1-\frac{1}{e}$ competitive ratio can be achieved using the greedy algorithm:
allocate each impression to the advertiser who would make the largest payment.
Later, the algorithm proposed by \citet{DevernurH/EC/2009} achieved the near-optimal competitive ratio of $1-\epsilon$ under the same random-arrival and small-bid assumptions.
\citet{MirrokniOZ/SODA/2012} analyzed the algorithm of \citet{MehtaSVV/JACM/2007} in the more restricted \emph{unknown iid model}, and obtained an improved ratio of $0.76$ for small bids.
Finally, \citet{DevanurJSW/EC/2011} proved that the greedy algorithm is $1-\frac{1}{e}$-competitive for general bids in the unknown iid model.
For small bids, they proposed a $(1 - \epsilon)$-competitive algorithm.
We refer readers to the survey by \citet{Mehta/FTTCS/2013} for further references.

Little is known, however, about the most general case of AdWords, i.e., with general bids and without stochastic assumptions.
On the positive end, we only have the greedy algorithm and the trivial $0.5$ competitive ratio.
On the negative end, there is no provable evidence that the optimal $1-\frac{1}{e}$ competitive ratio of online matching cannot be achieved in AdWords.
It has been open since \citet{MehtaSVV/JACM/2007} whether there is an online algorithm that achieves a competitive ratio strictly better than $0.5$.

\subsection{Our Contributions and Techniques}

The main result of the paper is the first online algorithm for AdWords that breaks the $0.5$ barrier.

\begin{theorem}
	\label{thm:main}
	There is a $0.5016$-competitive algorithm for AdWords.
\end{theorem}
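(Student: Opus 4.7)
The plan is to prove Theorem~\ref{thm:main} by an online primal--dual argument on the configuration LP of AdWords, combined with a randomized rounding scheme built from a new correlated-selection primitive. The configuration LP has, for every advertiser $a$ and every multiset $S$ of impressions whose total bid is at most the budget of $a$, a variable $x_{a,S}$, and maximizes $\sum_{a,S} p(a,S)\, x_{a,S}$, where $p(a,S)$ is the payment realized by assigning $S$ to $a$. Its dual has one variable $\alpha_a$ per advertiser and one variable $\beta_i$ per impression, with constraints $\alpha_a + \sum_{i \in S} \beta_i \ge p(a, S)$ for all $(a,S)$. As in Huang--Zhang's configuration-LP primal--dual framework, it suffices to maintain approximate dual feasibility at ratio $\Gamma = 0.5016$ while keeping the algorithm's expected primal value equal to the total dual value.

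I would design the algorithm in two layers. The base layer is a water-filling-style rule in the spirit of Mehta--Saberi--Vazirani--Vazirani: each arriving impression is tentatively assigned to the advertiser maximizing its bid times a monotone discount of the fraction of budget already spent, possibly composed with random thresholds as in the optimal small-bid analysis. The second layer, which is what breaks the $\tfrac{1}{2}$ barrier, is a panoramic online correlated selection layered on top of the base rule: rather than sampling independently at each round, impressions that would otherwise pick the same advertiser are coupled so that, conditional on the earlier impression selecting that advertiser, the later one is biased toward its backup. This induces negative correlation across the lifetime of each advertiser and thereby boosts its expected realized payment above what a pointwise analysis would yield.

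To produce a dual certificate I would invoke the panorama view. Instead of tracking a single realized trajectory per advertiser, one maintains in parallel all trajectories consistent with the OCS coupling, and sets $\alpha_a$ and $\beta_i$ as suitable averages of marginal gains across this panorama of realizations. The key verification is that for every pair $(a,S)$ the quantity $\alpha_a + \sum_{i \in S} \beta_i$ is at least $\Gamma \cdot p(a,S)$, which reduces to an extremal configuration problem to be solved by examining worst-case bid profiles against worst-case OCS outcomes, and tuning the gain-sharing between $\alpha_a$ and $\beta_i$ together with the base layer's discount curve.

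The main obstacle is handling general bids. Under the small-bid assumption the classical continuous water-filling analysis already yields $1 - \tfrac{1}{e}$ and no OCS is needed; with general bids a single impression may consume the whole remaining budget, which simultaneously breaks that continuous analysis and erodes the correlation gain obtainable from the pairwise OCS of Fahrbach--Huang--Tao--Zadimorghaddam. I expect the heart of the proof to be twofold: first, constructing a panoramic OCS that couples entire sequences of potential allocations to an advertiser (not just pairs) while still guaranteeing a quantitative negative-correlation bound uniformly across all bid-to-budget ratios; and second, designing a charging scheme within the panorama view that never wastes this correlation bonus on impressions whose bid already saturates the remaining budget of the chosen advertiser. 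Balancing these against the $\Gamma \cdot p(a,S)$ requirement for every configuration is what ultimately determines the $0.5016$ constant.
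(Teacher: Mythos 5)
Your proposal correctly names the main architectural ingredients the paper also uses---the configuration-LP online primal--dual framework, a correlated-selection primitive generalizing OCS, and a hybrid with an MSVV-style rule for small bids---but as written it is a research plan rather than a proof, and the pieces it leaves open are exactly the load-bearing ones. First, what you call the panorama view is not the device the analysis needs. In the paper, the panorama view is a concrete reformulation in which each advertiser $a$ carries an interval $[0,B_a)$, every assignment or semi-assignment of an impression is further placed on a subset of measure at most $b_{ai}$, and $a$'s payment is lower-bounded by the Lebesgue measure of the union of these subsets. This is what makes interactions between impressions local (only overlapping subsets matter), lets one state the correlated-selection guarantee pointwise, $x_a(y)\ge 1-2^{-k_a(y)}(1-\gamma)^{\max\{k_a(y)-1,0\}}$, and permits dual bookkeeping per point $y$. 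Your version---\emph{maintain all trajectories consistent with the coupling and average marginal gains}---does not decouple the budget-additive payment, so your claim that dual feasibility \emph{reduces to an extremal configuration problem over worst-case bid profiles against worst-case OCS outcomes} has no reason to be tractable; the tangled interaction of general bids with the budget cap is precisely the obstacle, and the interval/measure formulation is what removes it.

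Second, the two components you yourself flag as \emph{the heart of the proof} are asserted, not supplied: there is no construction of a panoramic OCS with a quantitative $\gamma$ (the paper builds one via ex-ante/ex-post dependence graphs and a sender/receiver protocol, obtaining $\gamma=0.05144$ for large bids and $\Theta(1/k_{\max})$ in general), and no explicit dual update rule or charging scheme verifying $\alpha_a+\sum_{i\in S}\beta_i\ge\Gamma\cdot b_a(S)$ for every configuration $S$. In the paper this verification is most of the work: the $\beta_i$'s are charged to disjoint subsets $Y_N$, $Y_R$, $Y_D$ of $[0,B_a)$, and for the hybrid algorithm that actually attains $0.5016$ (the purely OCS-based algorithm only reaches $0.50005$ for general bids), dual feasibility fails pointwise and is rescued by a measure-preserving pairing of points in the left and right halves of the budget interval so that it holds on average over each pair. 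Finally, the constant is not obtained by qualitative tuning: it comes from solving a finite LP over the dual-update parameters (truncated at $k_{\max}=20$). Without these steps your outline certifies nothing beyond the trivial $\tfrac{1}{2}$, so while the high-level approach matches the paper, the proof is missing at every decisive joint.
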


We develop the algorithm under the online primal dual framework.
In a nutshell, by considering an appropriate linear program (LP) of the problem, the online primal dual framework designs the online algorithm according to the optimality conditions of LPs, and uses the objective of the dual LP as the benchmark in the analysis.
\citet{BuchbinderJN/ESA/2007} applied it to AdWords with small bids, using the standard matching LP, to obtain an alternative analysis of the $1-\frac{1}{e}$ competitive algorithm by \citet{MehtaSVV/JACM/2007}.
Later, \citet{DevanurJ/STOC/2012} and \citet{DevanurJK/SODA/2013} found further applications of the framework in other online matching problems.
Recently, \citet{HuangZ/STOC/2020} demonstrated an advantage of using the configuration LP instead of the standard matching LP in online matching with stochastic rewards.
The current paper also builds on the strength of the configuration LP, echoing the message of \citet{HuangZ/STOC/2020}.
See Section~\ref{sec:prelim} for details.

Our second ingredient is a novel formulation of AdWords which we call the \emph{panorama view}.
Recall that an advertiser's payment in the original formulation is either the sum of its bids for the assigned impressions or its budget, whichever is smaller.
The panorama view further associates each advertiser with an interval whose length equals the budget, and requires the algorithms to assign each impression to not only an advertiser, but further a subset of its interval with size equal to the bid.
For example, consider an impression $i$ and an advertiser $a$ whose budget is $2$ and whose bid for $i$ is $1$.
The panorama view associates advertiser $a$ with an interval $[0, 2)$.
Further, when an algorithm assigns $i$ to $a$, it must further assign $i$ to a subset of size at most $1$, e.g., $[0.5, 1.5)$.
Define an advertiser's payment in the panorama view to be the size of the union of the assigned subsets, which lower bounds the payment in the original formulation.
The panorama view allows a fine-grained characterization on how the assignment of an impression $i$ to an advertiser $a$ affects the marginal gains of the other impressions assigned to $a$.
Concretely, suppose we shortlist two advertisers for each impression, and then assign it to one of them with a fresh random bit.
In the original formulation, having advertiser $a$ in impression $i$'s shortlist decreases the marginal gain of \emph{all} other impressions that shortlist $a$ in a \emph{complicated manner}.
In the panorama view, however, it decreases the marginal gain \emph{only for those whose assigned subsets intersect with $i$'s};
more precisely, it \emph{decreases the contribution of the intersection by half}.
See Section~\ref{sec:panorama} for a formal definition of the panorama view and some examples.

Finally, instead of using a fresh random bit to select a shortlisted advertiser for each impression, our algorithm selects one with negative correlation.
If a previous impression which shortlists advertiser $a$ with an overlapping subset does \emph{not} select $a$, the current one will be more likely to select $a$.
Given the same shortlists, negatively correlated selections get larger expected gains in the panorama view than independent selections.
An algorithmic ingredient called \emph{online correlated selection} (OCS) by Huang and Tao~\cite{HuangT/arXiv/2019, Huang/arXiv/2019} provides a quantitative control of such negative correlation in the special case when bids equal budgets.
The final piece of our algorithm is a generalization of OCS which applies to the general case of AdWords in the panorama view.
We refer to it as the panoramic OCS (PanOCS).
Section~\ref{sec:prelim} includes a formal definition of OCS, Section~\ref{sec:panocs-glimpse} defines the PanOCS and sketches the main ideas behind it, and Section~\ref{sec:panocs} provides the details.

Building on these ingredients, we get a $0.50005$-competitive online primal dual algorithm for AdWords in Section~\ref{sec:basic-algorithm}, weaker than the ratio in Theorem~\ref{thm:main} yet breaking the $0.5$ barrier nonetheless.
To obtain the final ratio, we observe that the above algorithm works better for larger bids while the algorithm of \citet{MehtaSVV/JACM/2007} is better for smaller bids.
Hence, we design a $0.5016$-competitive hybrid algorithm in Section~\ref{sec:hybrid} by unifying both approaches under the online primal dual framework.
Appendix~\ref{app:small-bid} analyzes the algorithm of \citet{MehtaSVV/JACM/2007} for small bids using online primal dual and configuration LP, which may serve as a warmup for readers unfamiliar with the framework.

\subsection{Other Related Works}

AdWords is closely related to the literature of online matching started by \citet{KarpVV/STOC/1990}.
\citet{AggarwalGKM/SODA/2011} studied the vertex-weighted problem and obtained the optimal $1-\frac{1}{e}$ competitive ratio with a generalization of the algorithm by \citet{KarpVV/STOC/1990}.
\citet{FeldmanKMMP/WINE/2009} investigated edge-weighted online matching in the free-disposal model, where the algorithm may dispose a previous matched edge for free to make room for a new one.
They called it the \emph{display ads} problem, and achieved the optimal $1-\frac{1}{e}$ competitive ratio assuming large capacities, i.e., each offline vertex can be matched to a large number of online vertices.
The analysis was simplified by \citet{DevanurHKMY/TEAC/2016} under the online primal dual framework.
Further, Fehrbach et al.~\cite{FahrbachHTZ/FOCS/2020, FahrbachZ/arXiv/2017, HuangT/arXiv/2019, Huang/arXiv/2019} obtained a better than $0.5$-competitive edge-weighted algorithm without assuming large capacities.
In doing so, they introduced the OCS which directly inspired this paper.
Finally, there are generalized models which allow all vertices to be online and even consider general graphs~\cite{WangW/ICALP/2015, HuangKTWZZ/STOC/2018, HuangPTTWZ/SODA/2019, AshlagiBDJSS/EC/2019, GamlathKMSW/FOCS/2019, HuangTWZ/FOCS/2020, HuangNTWZZ/JACM/2020}.

Online matching problems are also widely investigated under different stochastic assumptions.
First, consider random arrivals of online vertices.
\citet{KarandeMT/STOC/2011} and \citet{MahdianY/STOC/2011} showed that the algorithm of \citet{KarpVV/STOC/1990} is strictly better than $1-\frac{1}{e}$-competitive in this model.
\citet{HuangTWZ/ICALP/2018} gave a better than $1-\frac{1}{e}$-competitive algorithm for the vertex-weighted problem.
\citet{KesselheimRTV/ESA/2013} showed that the greedy algorithm is $\frac{1}{e}$-competitive for the edge-weighted problem even without free-disposal.
Under the stronger assumption that online vertices are drawn iid from an unknown distribution, \citet{KapralovPV/SODA/2013} proved that greedy is $1-\frac{1}{e}$-competitive for a more general problem called online submodular welfare maximization which captures both the edge-weighted problem with free-disposal and AdWords as special cases.
Further assuming that the distribution is known leads to better competitive ratios~\cite{FeldmanMM/FOCS/2009, ManshadiOS/MOR/2012, HaeuplerMZ/WINE/2011, JailletL/MOR/2014}.
We leave for future research if the algorithm in this paper is better than $1-\frac{1}{e}$-competitive under random arrivals.

Finally, \citet{MehtaP/FOCS/2012} proposed online matching with stochastic rewards, where an edge chosen by the algorithm is successfully matched only with some probability.
They focused on the special case of equal success probabilities and gave algorithms that are $0.567$-competitive if the success probability is vanishing, and better than $0.5$-competitive in general.
Later, \citet{MehtaWZ/SODA/2014} showed a $0.534$-competitive algorithm for vanishing unequal success probabilities.
Recently, \citet{HuangZ/STOC/2020} improved the competitive ratios to $0.576$ and $0.572$ for vanishing equal and unequal success probabilities respectively.
In doing so, they showed an advantage of the configuration LP over the standard matching LP under online primal dual.
This paper echoes the above message.

\section{Preliminaries}
\label{sec:prelim}

Consider a bipartite graph $G = (A, I, E)$, where $A$ and $I$ are sets of vertices corresponding to the advertisers and impressions in AdWords respectively, and $E \subseteq A \times I$ is the set of edges between them.
Further, each edge $(a, i)$ is associated with a non-negative real number $b_{ai}$ which represents advertiser $a$'s bid for impression $i$.%
\footnote{AdWords as an online algorithm problem does not consider the strategic behaviors of the advertisers. 
We merely inherit the term \emph{bid} from the original paper of \citet{MehtaSVV/JACM/2007}.}
By allowing zero bids, we may assume without loss of generality (wlog) that $G$ is a complete bipartite graph, i.e., $E = A \times I$.
Finally, each advertiser $a$ is associated with a positive budget $B_a$ which upper bounds the payment of the advertiser.
Concretely, assigning a subset of impressions $S \subseteq I$ to an advertiser $a$ leads to a budget-additive payment:
\[
    b_a(S) \defeq \min \bigg\{ \sum_{i \in S} b_{ai}, B_a \bigg\}
    ~.
\]
By this definition, we may assume wlog that $b_{ai} \le B_a$ for any advertiser $a$ and any impression $i$.

The advertisers are given upfront, while the impressions arrive one at a time.
We write $i < i'$ if an impression $i$ arrives before another impression $i'$.
On the arrival of an impression, the algorithm must immediately and irrevocably assign it to an advertiser.
The objective is to maximize the sum of the above payments from all advertisers.
Following the standard competitive analysis of online algorithms, an algorithm is \emph{$\Gamma$-competitive} for some \emph{competitive ratio} $0 \le \Gamma \le 1$ if its expected objective is at least $\Gamma$ times the offline optimal in hindsight for any AdWords instance.

\paragraph{Configuration Linear Program.}
The algorithms in this paper and their analyses rely on the LP relaxations of the problem.
Instead of the standard matching LP, this paper considers the more expressive advertiser-side configuration LP and its dual:\\[2ex]
\begin{minipage}{.5\textwidth}
\[%\begin{equation}
    %\label{eqn:config-lp-primal}
    \begin{aligned}
        \textrm{max} \quad
        & 
        \sum_{a \in A} \sum_{S \subseteq I} b_a(S) x_{aS} \\
        \textrm{s.t.} \quad
        &
        \sum_{S \subseteq I} x_{aS} \le 1 && \forall a \in A \\
        &
        \sum_{a \in A} \sum_{S \ni i} x_{aS} \le 1 && \forall i \in I \\[1ex]
        &
        x_{aS} \ge 0 && \forall a \in A, \forall S \subseteq I
    \end{aligned}
\]%\end{equation}
\end{minipage}
\begin{minipage}{.5\textwidth}
\[%\begin{equation}
    %\label{eqn:config-lp-dual}
    \begin{aligned}
        \textrm{min} \quad
        &
        \sum_{a \in A} \alpha_a + \sum_{i \in I} \beta_i \\
        \textrm{s.t.} \quad
        &
        \alpha_a + \sum_{i \in S} \beta_i \ge b_a(S) && \forall a \in A, \forall S \subseteq I \\[1ex]
        & 
        \alpha_a \ge 0 && \forall a \in A \\[3.5ex]
        & 
        \beta_i \ge 0 && \forall i \in I
    \end{aligned}
\]%\end{equation}
\end{minipage}\\[3ex]

Let $P$ and $D$ denote the objectives of the primal and dual LPs respectively.
Throughout the paper we will always let $x_{aS}$ be the probability that $S$ is the subset of impressions assigned to advertiser $a$.
Then, the primal objective $P$ equals the objective of the algorithm.

\paragraph{Online Primal Dual Framework.}
We build on the online primal dual framework which uses the dual objective as an upper bound of the offline optimal in the competitive analyses of online algorithms.
In particular, this paper applies it to the configuration LP of AdWords.

\begin{lemma}
    \label{lem:online-primal-dual}
    Suppose an online algorithm is coupled with a dual algorithm which maintains a dual assignment such that for some $0 \le \Gamma \le 1$:
    \begin{enumerate}
        \item Approximate dual feasibility: $\alpha_a + \sum_{i \in S} \beta_i \ge \Gamma \cdot b_a(S)$ for any $a \in A$ and any $S \subseteq I$.
        \item Reverse weak duality: $P \ge D$;
    \end{enumerate}
    Then, it is $\Gamma$-competitive.
\end{lemma}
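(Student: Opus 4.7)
The plan is a one-line application of weak LP duality with the feasibility slack absorbed into the scaling factor $\Gamma$. First, I would note that the offline optimum OPT is achieved by some integral assignment $\{S^*_a\}_{a \in A}$ that partitions the impressions among advertisers. Setting $x_{aS^*_a} = 1$ and all other $x_{aS} = 0$ yields a feasible primal solution whose objective equals OPT, so $\text{OPT} \le P^*$, where $P^*$ denotes the configuration LP optimum.

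Next, I would rescale the maintained duals to $\tilde\alpha_a \defeq \alpha_a / \Gamma$ and $\tilde\beta_i \defeq \beta_i / \Gamma$. The approximate dual feasibility hypothesis immediately gives $\tilde\alpha_a + \sum_{i \in S} \tilde\beta_i \ge b_a(S)$ for every $a$ and $S$, and nonnegativity is preserved since $\Gamma \in (0,1]$. Thus $(\tilde\alpha, \tilde\beta)$ is a feasible dual solution, and by weak duality its objective value $D/\Gamma$ upper bounds $P^*$, and hence OPT.

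Chaining these bounds with the reverse weak duality hypothesis $P \ge D$ yields $P \ge D \ge \Gamma \cdot \text{OPT}$. Since $x_{aS}$ is defined as the probability that the algorithm assigns the subset $S$ to advertiser $a$, the primal objective $P = \sum_{a \in A} \sum_{S \subseteq I} b_a(S)\, x_{aS}$ equals the expected value produced by the online algorithm. Therefore $\E[\text{ALG}] \ge \Gamma \cdot \text{OPT}$, which is exactly $\Gamma$-competitiveness.

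There is no real obstacle: the lemma is essentially a packaging of weak LP duality, and the only subtle point worth flagging is that competitive analysis compares the algorithm to the offline \emph{integer} optimum, whereas duality bounds the LP optimum; this direction is immediate because any feasible integer assignment is also primal feasible in the configuration LP.
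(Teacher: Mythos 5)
Your proposal is correct and matches the paper's proof: both rescale the maintained duals by $\Gamma^{-1}$ to obtain a feasible dual solution, invoke weak duality of the configuration LP to bound the offline optimum by $\Gamma^{-1} D$, and combine with $P \ge D$. The extra remark that the integer offline optimum is dominated by the LP optimum is a point the paper leaves implicit, but it is the same argument.
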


\begin{proof}
    By the first condition, scaling the dual assignment by a factor of $\Gamma^{-1}$ makes it feasible while changing the dual objective by the same factor.
    Therefore, by weak duality of LPs, the offline optimal is at most $\Gamma^{-1} D$.
    Putting together with the second condition proves the lemma.
\end{proof}

\paragraph{Online Correlated Selection.}
The algorithms in this paper further utilize a recent algorithmic ingredient called \emph{online correlated selection} (OCS) by Huang and Tao~\cite{Huang/arXiv/2019, HuangT/arXiv/2019}.
Consider a set of ground elements, and further a sequence of pairs of these elements arriving one at a time.
Suppose we randomly select one element from each pair with a fresh random bit.
Then, an element will be selected at least once with probability $1 - 2^{-k}$ after appearing in $k$ pairs.
The OCS correlates the randomness to achieve better efficiency.
We state below a simplified definition, removing some aspects irrelevant to AdWords.

\begin{definition}
    For any $0 \le \gamma \le 1$, a $\gamma$-OCS is an online algorithm ensuring that for any element which appears in $k$ pairs, it is selected at least once with probability at least:
    \[
        1 - 2^{-k} (1 - \gamma)^{\max \{ k-1, 0 \}}
        ~.
    \]
\end{definition}

\section{Panorama View}
\label{sec:panorama}

The algorithms in this paper are based on a novel viewpoint of the AdWords problem which we call the \emph{panorama view}.
Recall that the payment of an advertiser $a$ is budget-additive in AdWords:
assigning a subset of impressions $S$ to an advertiser $a$ gives $b_a(S) = \min \big\{ \sum_{i \in S} b_{ai}, B_a \big\}$.
Let $\mu(\cdot)$ denote the Lebesgue measure.
In the panorama view, we further associate each advertiser $a$ with an interval $[0, B_a)$;
each impression $i$ assigned to $a$ is further assigned to a subset $Y_{ai} \subseteq [0, B_a)$ whose Lebesgue measure $\mu(Y_{ai})$ is at most $b_{ai}$.
In fact, we will always choose $Y_{ai}$ to be a finite union of disjoint left-closed, right-open intervals, for which the Lebesgue measure is simply the sum of their lengths.
Further define the payment of an advertiser $a$ in the panorama view as:
\[
    \mu \big( \cup_{i \in S} Y_{ai} \big)
    ~.
\]

Correspondingly, the objective in the panorama view is the sum of the above payment from all advertisers.
Importantly, it lower bounds the original objective of AdWords.

\begin{lemma}
    \label{lem:panorama-vs-original-objective}
    For any advertiser $a$, any subset of impressions $S$ assigned to $a$, and any subsets $Y_{ai} \subseteq [0, B_a)$ with Lebesgue measure at most $b_{ai}$ for impressions $i \in S$, we have:
    \[
        \mu \big( \cup_{i \in S} Y_{ai} \big) \le \min \bigg\{ \sum_{i \in S} b_{ai}, B_a \bigg\}
        ~.
    \]
\end{lemma}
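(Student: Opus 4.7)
The statement is essentially the conjunction of two standard properties of Lebesgue measure, so my plan is to establish each of the two upper bounds separately and then take their minimum.

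First, I would bound $\mu(\cup_{i \in S} Y_{ai})$ by $\sum_{i \in S} b_{ai}$. This is countable subadditivity of Lebesgue measure: since $S$ is finite (impressions arrive one at a time and only finitely many have arrived at any point; even if $S$ is countably infinite the same argument works), we have $\mu(\cup_{i \in S} Y_{ai}) \le \sum_{i \in S} \mu(Y_{ai})$. The hypothesis $\mu(Y_{ai}) \le b_{ai}$ then yields the first bound.

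Second, I would bound $\mu(\cup_{i \in S} Y_{ai})$ by $B_a$. This follows from monotonicity of Lebesgue measure: since each $Y_{ai} \subseteq [0, B_a)$, their union is also contained in $[0, B_a)$, hence $\mu(\cup_{i \in S} Y_{ai}) \le \mu([0, B_a)) = B_a$.

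Taking the minimum of these two bounds gives the desired inequality. There is no real obstacle here; the lemma is a one-line consequence of subadditivity and monotonicity of $\mu$. The only thing worth being slightly careful about is ensuring the sets involved are measurable, but this is automatic because the problem statement restricts $Y_{ai}$ to finite unions of left-closed, right-open intervals, so all sets in sight are Borel and the formulas above apply without issue.
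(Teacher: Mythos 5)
Your proof is correct and matches the paper's argument exactly: subadditivity of $\mu$ together with $\mu(Y_{ai}) \le b_{ai}$ gives the first bound, monotonicity with $Y_{ai} \subseteq [0, B_a)$ gives the second, and the minimum of the two is the claim. No gaps.
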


\begin{proof}
    On the one hand, by subadditivity of the Lebesgue measure function $\mu$, and by the measure upper bounds of the subsets $Y_{ai}$'s, we have $\mu \big( \cup_{i \in S} Y_{ai} \big) \le \sum_{i \in S} \mu \big( Y_{ai} \big) \le \sum_{i \in S} b_{ai}$.
    On the other hand, because $Y_{ai}$'s are subsets of $[0, B_a)$, we have $\mu \big( \cup_{i \in S} Y_{ai} \big) \le \mu\big([0, B_a)\big) = B_a$.
    %Putting together proves the lemma.
\end{proof}

\paragraph{Example 1 (Deterministic Algorithms).}
Consider an arbitrary deterministic algorithm.
Then, whenever it assigns an impression $i$ to an advertiser $a$, we may wlog further assign it to the leftmost unassigned interval.
For instance, suppose impressions $1, 2, 3$, and so on are assigned to advertiser $a$ in this order;
we may further assign $1$ to $[0, b_{a1})$, $2$ to $[b_{a1}, b_{a1} + b_{a2})$, $3$ to $[b_{a1} + b_{a2}, b_{a1} + b_{a2} + b_{a3})$, and so forth.
In doing so, the objectives in the panorama view is identical to the original one.

\paragraph{Oblivious Semi-randomized Algorithms.}
The panorama view of AdWords separates itself from the original one when it comes to a special family of randomized algorithms which we call the \emph{oblivious semi-randomized algorithms}.
They are semi-randomized in that for every impressions $i$, they either assign it deterministically to an advertiser-subset combination, or choose two advertiser-subset combinations and assign it to one of them with equal marginal probability.
We shall refer to the former as a \emph{deterministic round} and the latter as a \emph{randomized round}.
If an impression $i$ corresponds to a randomized round, we say that it is \emph{semi-assigned} to the advertiser-subset combinations.
For the time being, readers may think of using a fresh random bit in every randomized round for a concrete understanding of the panorama view, although our algorithms will correlate the decisions in different rounds negatively.

Further, these algorithms are oblivious:
neither the decisions of deterministic versus randomized rounds, nor the choices of advertiser-subset combinations depend on the realization of random bits in previous rounds.
Hence, the semi-assignments to the same advertiser may have overlapping subsets and thus, the objective in the panorama view no longer equals the original one in general.

\paragraph{Example 2 (Oblivious Semi-randomized Algorithms).}
Let there be two advertisers whose budgets equal $2$, and three impressions for which both advertisers bid $1$.
Further suppose that we select with a fresh random bit for each impression.
In the original budget-additive payments, with probability $\frac{1}{4}$ all impressions are assigned to the same advertiser and thus the objective equals $2$;
otherwise, the objective equals $3$.
Hence, the expected objective equals $\frac{1}{4} \cdot 2 + \frac{3}{4} \cdot 3 = \frac{11}{4}$.
In the panorama view, however, the algorithm must further assign each impression to a subset of $[0, 2)$ for both advertisers.
It is wlog to assign the first impression to $[0, 1)$, contributing $1$ to the objective.
Further, it is reasonable to further assign the second impression to $[1, 2)$ so that it is disjoint with the first one and contributes $1$ to the objective.
However, the third impression only contributes $\frac{1}{2}$ to the objective regardless of the choices of subsets because the entire interval $[0, 2)$ has been semi-assigned once. % in the first two semi-assignments.
Therefore, the expected objective is only $1 + 1 + \frac{1}{2} = \frac{5}{2}$.
%, strictly smaller than $\frac{11}{4}$ in the original view.

\bigskip

It may seem odd to restrict ourselves to oblivious algorithms.
Would it not be better if we first check the realized assignments in earlier rounds and then pick an advertiser-subset combination disjoint with the previous ones?
In a nutshell, we focus on oblivious algorithms to separate the algorithmic component for choosing assignments and semi-assignments, and that for correlating the decisions in different randomized rounds.
Importantly, we can achieve negative correlation:
a semi-assignment is more likely to get selected if an earlier overlapping one is not.
By contrast, screening the options based on the realization of earlier random bits could lead to positive correlations (e.g., weighted sampling without replacements \cite{Alexander/AnnStat/1989}).
That said, there may be AdWords algorithms with controlled positive correlations which are better than $0.5$-competitive.
The study of such algorithms, however, is beyond the scope of this paper and is left for future research.

\paragraph{Bookkeeping at the Point-level.}
%
%In the rest of the paper, 
It is more convenient to account for the primal objective at the point-level as follows.
We say that a point $y \in [0, B_a)$ of an advertiser $a$ is assigned if there is an impression $i$ assigned to $a$ and a subset $Y_{ai}$ containing $y$, either due to a deterministic round, or due to a semi-assignment in a randomized round which selected $a$.
For randomized algorithms, let $0 \le x_a(y) \le 1$ denote the probability that $y$ is assigned.
Then, the primal objective equals: 
\begin{equation}
    \label{eqn:panorama-primal}
    P = \sum_{a \in A} \int_0^{B_a} x_a(y) dy
    ~.
\end{equation}

Similarly, we say that $y$ is semi-assigned whenever an impression is semi-assigned to $a$ and a subset containing $y$.
Let $k_a(y)$ denote the number of times that $y$ is semi-assigned.
Further define $k_a(y) = \infty$ if $y$ has been assigned in a deterministic round, driven by the fact that semi-assignments on their own take finitely many rounds to make a point $y$ assigned with certainty.

We further introduce point-level dual variables $\alpha_a(y)$, $a \in A$, $y \in [0 ,B_a)$, and let:
\begin{equation}
    \label{eqn:alpha-point-level}
    \alpha_a = \int_0^{B_a} \alpha_a(y) dy
    ~.
\end{equation}

Then, approximate dual feasibility becomes:
\begin{equation}
    \label{eqn:dual-feasibility-panorama}
    \int_0^{B_a} \alpha_a(y) dy + \sum_{i \in S} \beta_i \ge \Gamma \cdot b_a(S)
\end{equation}

\paragraph{Panoramic Interval-level Assignments.}
We first introduce some notations which are useful throughout the paper.
For any point $y \in [0, B_a)$, any subset $Y \subseteq [0, B_a)$, and any $0 \le b \le B_a$, let $y \oplus_Y b$ denote the point in $[0, B_a)$ such that the interval $[y, y \oplus_Y b)$ excluding $Y$ has Lebesgue measure $b$.
Here, we abuse notation and allow $y \oplus_Y b$ to be smaller than $y$, in which case $[y, y \oplus_Y b)$ denotes the union of $[y, B_a)$ and $[0, y \oplus_Y b)$.
In the boundary case when the subset $[0, B_a) \setminus Y$ has a measure strictly less than $b$, i.e., $B_a - \mu(Y) < b$, define $y \oplus_Y b = y$.
Further define the reverse operation $y \ominus_Y b$ such that $[y \ominus_Y b, y) \setminus Y$ has measure $b$.

For any advertiser $a$ and the set of impressions assigned and semi-assigned to it, the algorithm will further select subsets of $[0, B_a)$ greedily as follows.
Maintain a point $y^*$ initially at $0$ which represents the start of the next subset.
For each impression $i$, further assign or semi-assign it to $[y^*, y^* \oplus_{Y_D} b_{ai}) \setminus Y_D$, where $Y_D$ is the subset of $[0, B_a)$ that has already been assigned deterministically.
Further update $y^* = y^* \oplus_{Y_D} b_{ai}$.
To this end, think of $[0, B_a)$ as a circle by gluing its endpoints;
the algorithm scans along the circle to find a subset with measure $b_{ai}$ that has not been deterministically assigned.
It is similar to taking a panorama and hence the name of the alternative view of AdWords.

The panoramic interval-level assignments equalize the numbers of times the points $y \in [0, B_a)$ are semi-assigned, among those that have not been deterministically assigned.
%The next lemma formalizes this statement.
We omit the proof since it follows by the definition of the algorithm.
See Figure~\ref{fig:interval-level} for an illustrative example.

\begin{lemma}
    \label{lem:K-property}
    For any $a \in A$ and any $y \in [0, B_a)$, $k_a(y)$ equals either
    (1) $k_{\min} = \min_{z \in [0, B_a)} k_a(z)$, or
    (2) $k_{\min} + 1$, or
    (3) $\infty$.
    Further, the first kind satisfies $y \ge y^*$, and the second kind satisfies $y < y^*$.
\end{lemma}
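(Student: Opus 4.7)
I would prove the lemma by induction on the number of impressions processed by the per-advertiser panoramic procedure for $a$ described just before the statement. The intuition is that $y^*$ acts as a sweeping front on the circle $[0, B_a)$: each round advances $y^*$ by exactly $b_{ai}$ worth of non-deterministic measure, so every full revolution of $y^*$ around the non-deterministic portion of $[0, B_a)$ raises the semi-assignment count of the swept points by one. Thus at any intermediate moment the non-$Y_D$ portion of the circle splits into two arcs separated by $y^*$: the trailing arc $[0, y^*) \setminus Y_D$ has been swept one extra time compared to the leading arc $[y^*, B_a) \setminus Y_D$, so the only finite values of $k_a(y)$ are $k_{\min}$ on the leading arc and $k_{\min}+1$ on the trailing arc.

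For the base case, before any impression is assigned or semi-assigned to $a$ we have $y^* = 0$, $Y_D = \emptyset$, and $k_a(y) = 0$ everywhere, so $k_{\min} = 0$ and every point satisfies case (1), while case (2) is vacuous. For the inductive step, let $y^{**} = y^* \oplus_{Y_D} b_{ai}$ and consider the four sub-cases determined by deterministic versus randomized and wraparound ($y^{**} \le y^*$) versus not. In the randomized, non-wraparound sub-case, only points in $[y^*, y^{**}) \setminus Y_D$ have their count incremented from $k_{\min}$ to $k_{\min}+1$; after updating $y^* \leftarrow y^{**}$, the leading arc $[y^{**}, B_a) \setminus Y_D$ still carries $k_{\min}$ and the trailing arc $[0, y^{**}) \setminus Y_D$ uniformly carries $k_{\min}+1$. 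In the randomized, wraparound sub-case, points in $[y^*, B_a) \setminus Y_D$ jump from $k_{\min}$ to $k_{\min}+1$, points in $[0, y^{**}) \setminus Y_D$ jump from $k_{\min}+1$ to $k_{\min}+2$, and points in $[y^{**}, y^*) \setminus Y_D$ stay at $k_{\min}+1$, so the new minimum is $k_{\min}+1$ and the positional claim is preserved around the new $y^* = y^{**}$.

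For deterministic rounds, the same interval $[y^*, y^{**}) \setminus Y_D$ is absorbed into $Y_D$ rather than being semi-assigned, sending its points directly to $\infty$. An analogous split on wraparound shows that after the update the surviving non-$Y_D$ points again consist of a leading arc with count $k_{\min}$ followed by a trailing arc with count $k_{\min}+1$; in the wraparound case every point of the old leading arc is absorbed, so $k_{\min}$ is promoted by one and the old trailing arc becomes the new leading arc under the updated $y^* = y^{**}$. The only subtlety worth flagging is precisely this last bookkeeping: one must verify that the promoted $k_{\min}$ matches the updated position of $y^*$ and that it continues to be the correct minimum over $[0, B_a)$; all remaining sub-cases follow by directly unfolding the definitions of $y \oplus_{Y_D} b$ and the panoramic assignment rule, and the base case together with these four inductive updates immediately yields the claim.
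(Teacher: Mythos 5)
Your induction is correct and is exactly the argument the paper has in mind: the paper omits the proof of Lemma~\ref{lem:K-property}, stating that it follows from the definition of the panoramic interval-level assignment, and your sweep-front invariant with the four deterministic/randomized, wraparound/non-wraparound cases is the straightforward formalization of that. The only (shared, degenerate) caveat is when the leading arc $[y^*, B_a)\setminus Y_D$ becomes empty, where the positional claim should be read as the non-deterministic points splitting into at most two consecutive levels around $y^*$; this does not affect how the lemma is used.
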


\begin{figure}
    \begin{subfigure}{.33\textwidth}
        \centering
        \includegraphics[width=\textwidth]{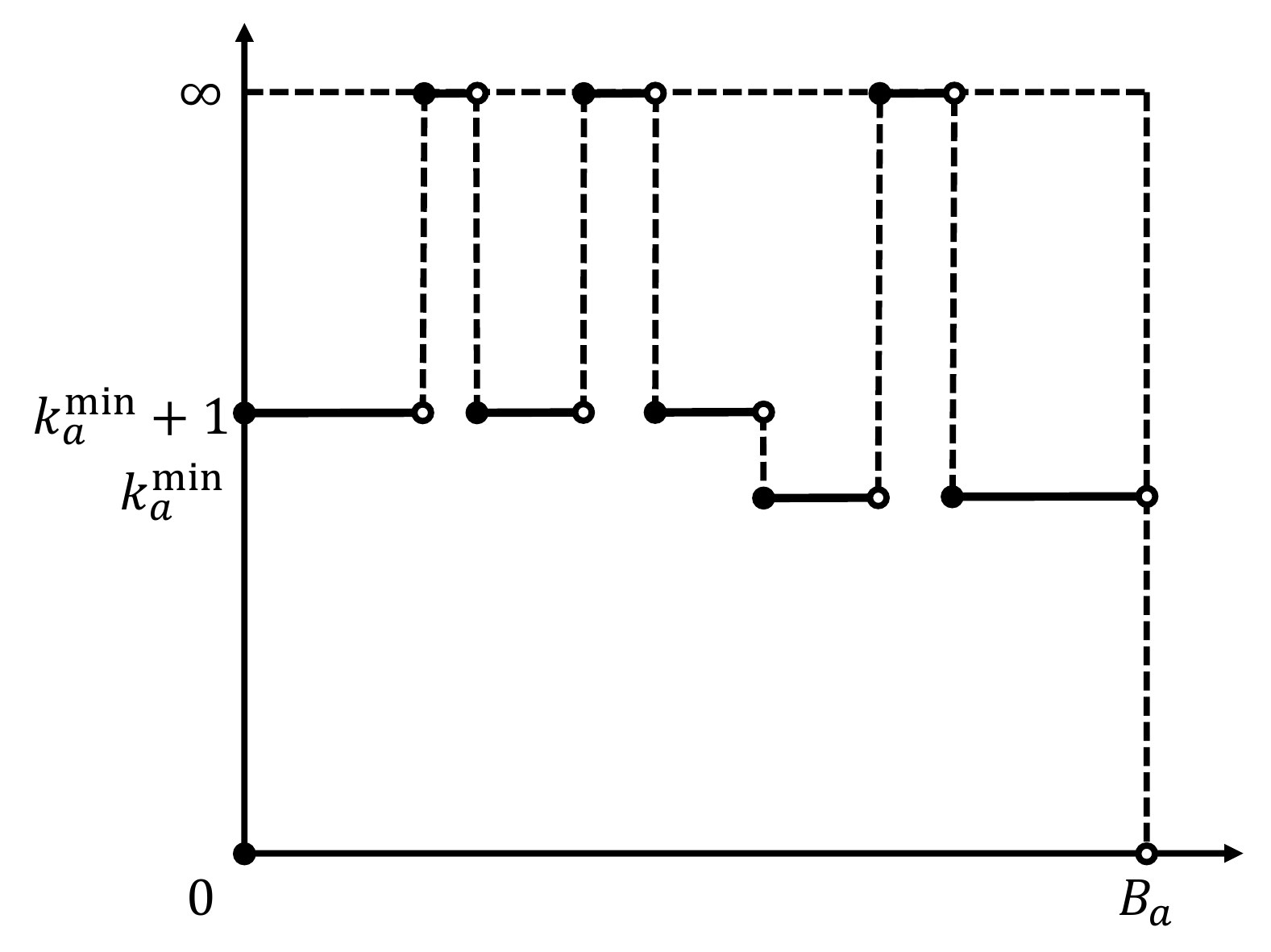}
        \caption{Interval-level assignments}
    \end{subfigure}
    \begin{subfigure}{.33\textwidth}
        \centering
        \includegraphics[width=\textwidth]{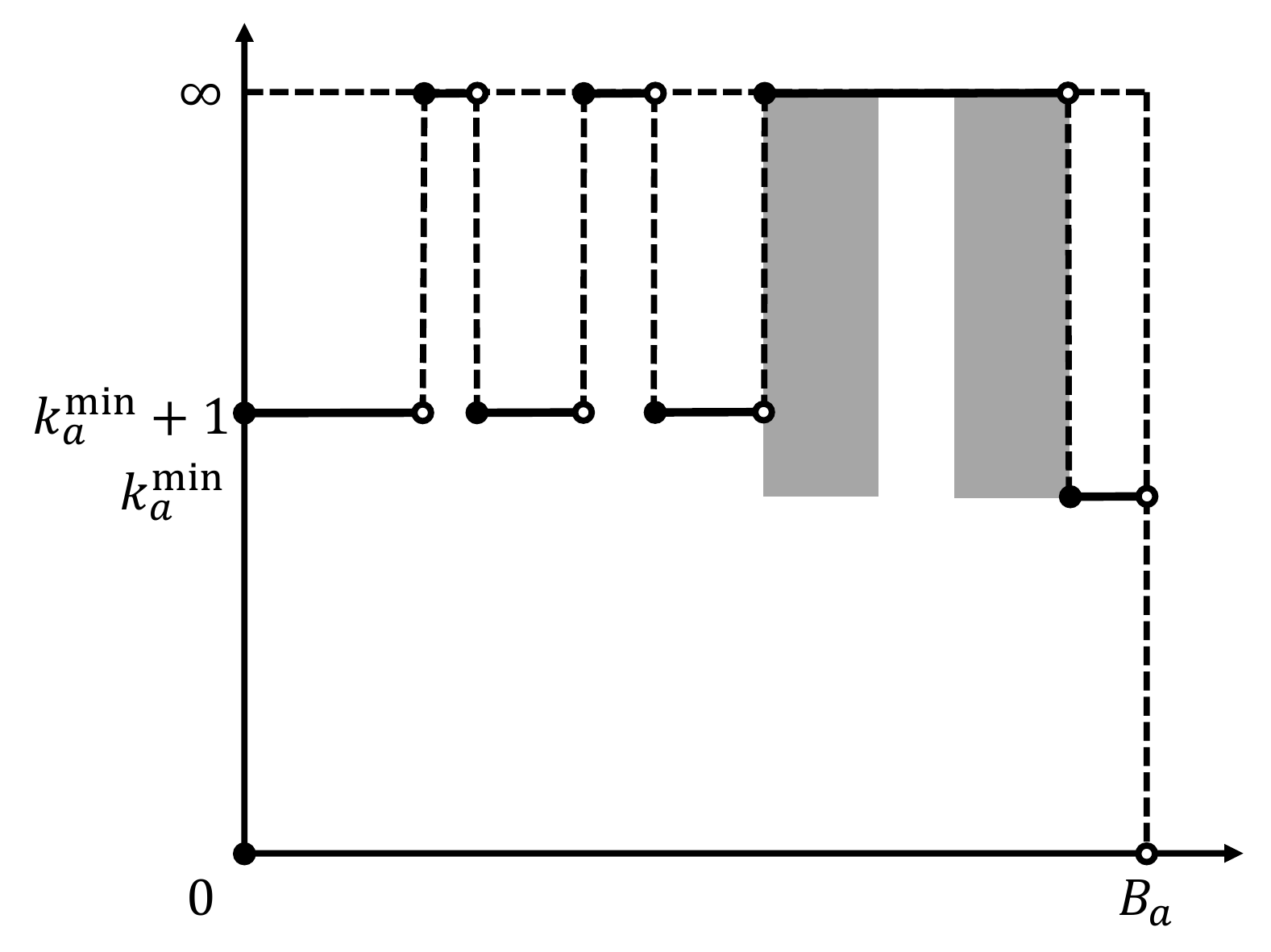}
        \caption{Update in a deterministic round}
    \end{subfigure}
    \begin{subfigure}{.33\textwidth}
        \centering
        \includegraphics[width=\textwidth]{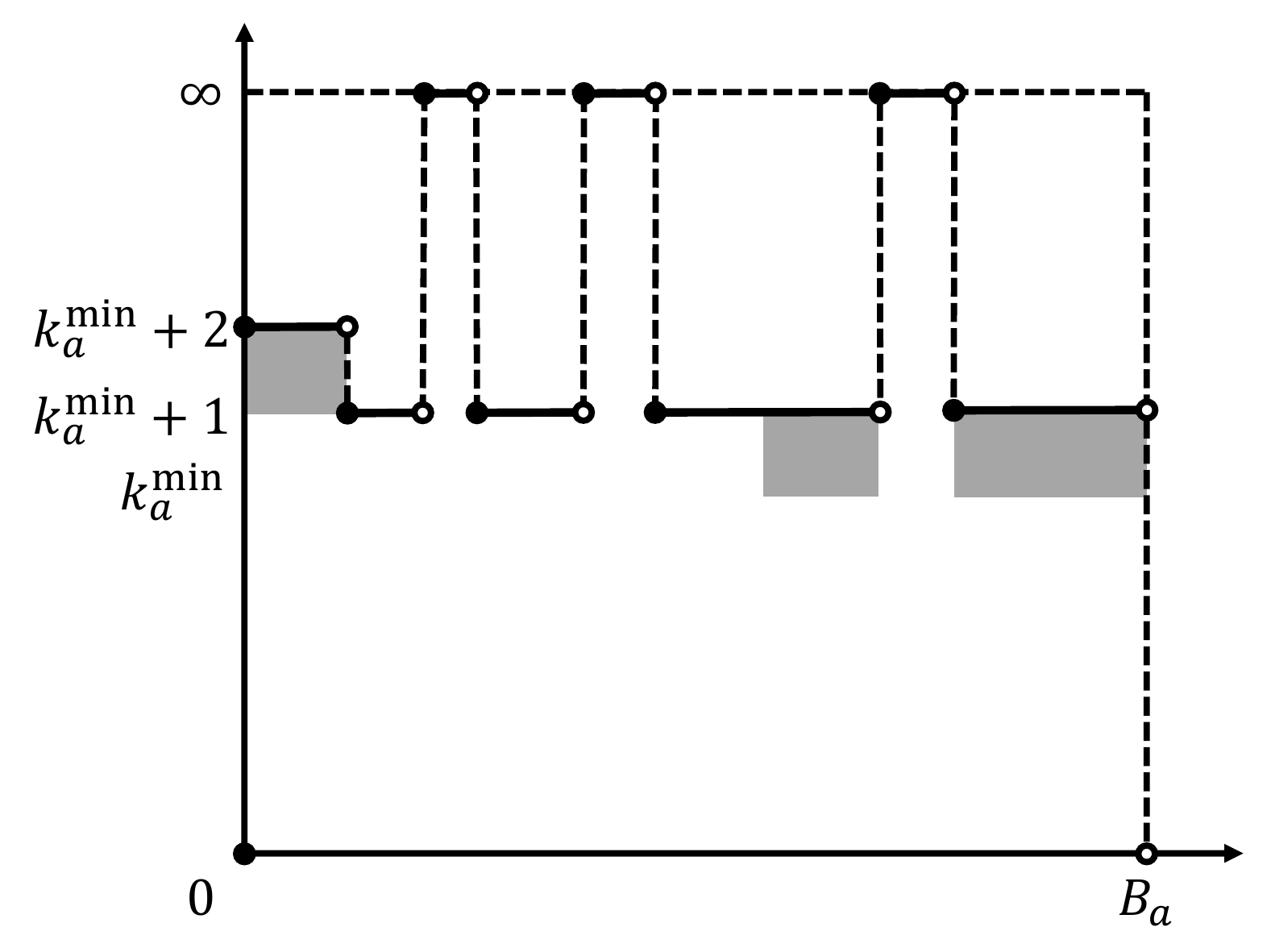}
        \caption{Update in a randomized round}
    \end{subfigure}
    \caption{Illustrative example of interval-level assignment represented by $k_a(y)$}
    \label{fig:interval-level}
\end{figure}

\section{Basic Algorithm}
\label{sec:basic-algorithm}

This section presents an oblivious semi-randomized algorithm that is better than $0.5$-competitive for AdWords.
Section~\ref{sec:panocs-glimpse} presents a brief introduction to an algorithmic ingredient called \emph{panoramic online correlated selection} (PanOCS), which correlates the randomized decisions in different rounds negatively.
Section~\ref{sec:pd-algorithm} then demonstrates an online algorithm powered by PanOCS, and Section~\ref{sec:analysis} analyzes it under the online primal dual framework.
Finally, Section~\ref{sec:basic-lp-solution} optimizes the parameters of the algorithm to achieve a $0.5041$ competitive ratio in the crux of AdWords, i.e., when all nonzero bids are large, $\frac{1}{2} B_a \le b_{ai} \le B_a$, and a smaller $0.50005$ competitive ratio in the general case.
The latter is smaller than the ratio in the main theorem but breaks the $0.5$ barrier nonetheless.

Formal descriptions of the PanOCS algorithms and their analyses may be of independent interest and are therefore deferred to a separate Section~\ref{sec:panocs}.
The $0.5016$ ratio in the main theorem requires a hybrid approach which treats large and small bids differently, which we present in Section~\ref{sec:hybrid}.

\subsection{Panoramic Online Correlated Selection at a Glimpse}
\label{sec:panocs-glimpse}

Recall the oblivious semi-randomized algorithms.
In each randomized round, such an algorithm chooses a pair of advertiser-subset combinations, oblivious to the random bits in previous rounds.
Then, the combinations are passed on to an algorithmic component which selects one of them with equal marginal probability, and \emph{correlates across different randomized rounds negatively}.
We call it the PanOCS since it is a generalization of the OCS~\cite{Huang/arXiv/2019, HuangT/arXiv/2019} in the panorama view of AdWords.

For a formal definition, recall that $x_a(y)$ is the probability a point $y \in [0, B_a)$ of an advertiser $a$ has been assigned, and $k_a(y)$ is the number of randomized rounds in which $y$ is semi-assigned.

\begin{restatable}{definition}{defpanocs}
    \label{def:panocs}
    A PanOCS is an online algorithm which takes a sequence of pairs of advertiser-subset combinations as input, and for each pair selects one combination.
    It is a $\gamma$-PanOCS for some $0 \le \gamma \le 1$ if for any advertiser $a$, and any point $y \in [0, B_a)$, we have:
    \begin{equation}
    \label{eqn:panocs-definition}
        x_a(y) \ge 1 - 2^{-k_a(y)} (1 - \gamma)^{ \text{max} \{k_a(y)-1, 0 \}}
        ~.
    \end{equation}
\end{restatable}

Observe that using an independent random bit in every randomized round is a $0$-PanOCS, since the probability of being assigned after $k$ semi-assignments with independent random bits is precisely $1 - 2^{-k}$.
The parameter $\gamma$ quantifies the advantage over independent random bits.

The intuition behind the inequality is best explained with a thought experiment.
Suppose that whenever $y$ is semi-assigned other than the first time, there is a $\gamma$ chance to be perfectly negatively correlated with the last semi-assignment of $y$:
$a$ is chosen this time if it is not chosen last time, and vice versa.
Further suppose that the above events are negatively dependent for the $k_a(y)-1$ different pairs of adjacent semi-assignments of $y$.
Then, $y$ is never assigned only if none of the events happens, whose probability is at most $(1-\gamma)^{k_a(y)-1}$, and further when none of the $k_a(y)$ independent selections picks $a$, which equals $2^{-k_a(y)}$.
%he relation between $x_a(y)$ and $k_a(y)$ is precisely the above inequality.
Our analysis will substantiate this intuition.

To see the connection with OCS, consider a special case when the bids equal the budgets, i.e., $b_{ai} = B_a$.
Then, since $x_a(y)$ and $k_a(y)$ are independent of $y$, it suffices to consider if advertiser $a$ has been assigned.
As a result, the above definition coincides with the definition of OCS, taking the advertisers as the ground elements.
The extra challenge of PanOCS is to ensure the inequality simultaneously for all points $y \in [0, B_a)$ when the bids are arbitrary.

\begin{theorem}
    \label{thm:panocs-large-bid}
    Suppose all nonzero bids are large, i.e., $\frac{1}{2} B_a \le b_{ai} \le B_a$ or $b_{ai} = 0$ for any advertiser $a \in A$ and any impression $i \in I$.
    Then, there is a $0.05144$-PanOCS.
\end{theorem}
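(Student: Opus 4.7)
The plan is to lift a $\gamma_0$-OCS construction in the style of Fahrbach et al.~\cite{FahrbachHTZ/FOCS/2020} to the panorama view, paying a controllable factor because a single new semi-assignment can overlap with up to \emph{two} distinct prior semi-assignments of the same advertiser at different parts of its subset.

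\textbf{Structural step.} First I would establish a structural fact exploiting the large-bid assumption. Fix an advertiser $a$ and consider the subsets $Y^{(1)}, Y^{(2)}, \ldots \subseteq [0, B_a)$ produced by the panoramic greedy placement of the impressions semi-assigned to $a$. Since every nonzero bid satisfies $b_{ai} \ge \tfrac{1}{2} B_a$, any two consecutive subsets $Y^{(j-1)}$ and $Y^{(j)}$ have total Lebesgue measure at least $B_a$. Together with the fact that $Y^{(j)}$ begins exactly where $Y^{(j-1)}$ ends on the circle $[0, B_a)$, this forces $Y^{(j)} \setminus Y^{(j-1)}$ to be a contiguous arc contained in $Y^{(j-2)}$. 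Consequently, for every point $y \in Y^{(j)}$ the last previous semi-assignment of $y$ at $a$ is either round $j-1$ or round $j-2$, and $Y^{(j)}$ is partitioned into at most two pieces on which these two candidates are the unique relevant predecessors.

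\textbf{Algorithm.} Next I would design the PanOCS as follows. For each randomized round $t$, draw a fresh random bit to pick between the two advertiser--subset options, and independently label $t$ as ``sender'', ``receiver'', or ``neutral'' according to a carefully tuned distribution, in the style of the two-layer OCS of Fahrbach et al. A receiver round $t$ attempts to pair with a compatible prior sender round $s$ sharing an advertiser $a$, and sets its bit for $a$ opposite to $s$'s bit, creating perfect anti-correlation on the overlap $Y_t \cap Y_s$. To handle the panorama obstruction, whenever two distinct sender candidates exist for the two partition pieces of $Y_t$ identified by the structural step, round $t$ picks uniformly at random between them; this costs a factor of $\tfrac{1}{2}$ in the effective pairing probability but preserves the coupling structure at every point $y$.

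\textbf{Analysis and main obstacle.} The hard step is to verify the PanOCS inequality of Definition~\ref{def:panocs} \emph{pointwise}, not just for the advertiser in aggregate. Fix $a$ and $y \in [0, B_a)$ and let $t_1 < \cdots < t_k$ with $k = k_a(y)$ be the rounds semi-assigning $y$. By the structural step, each consecutive pair $(t_j, t_{j+1})$ is a pair of adjacent semi-assignments of $a$ whose subsets both contain $y$, so the sender/receiver pairing anti-correlates the random bits of $t_j$ and $t_{j+1}$ with probability at least $\gamma$. A coupling argument, tracking the two-layer process restricted to the sequence $t_1, \ldots, t_k$, should yield
\[
\Pr\bigl[y \text{ is never assigned to } a\bigr] \le 2^{-k}(1-\gamma)^{\max\{k-1,0\}},
\]
which rearranges to the claimed lower bound on $x_a(y)$. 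The bulk of the technical work, and the reason the result is not a black-box consequence of an off-the-shelf OCS, is in showing that the pairing chain does not \emph{over}-correlate pointwise even when a single sender round could in principle serve two different receivers at different parts of its arc. Controlling this interaction and optimizing the sender/receiver/neutral probabilities jointly then yields the explicit constant $\gamma \ge 0.05144$.
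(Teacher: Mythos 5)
Your structural step is sound and matches the paper's key lemma: with large bids, any two consecutive semi-assignments to $a$ cover all of $[0,B_a)$, so the last predecessor of any point in round $j$ is round $j-1$ or $j-2$, i.e., each node has at most two in-arcs per advertiser in the ex-ante dependence graph. Your overall sender/receiver pairing plan is also the same route the paper takes. However, the proposal stops short of a proof at exactly the two places where the theorem's content lies. First, the pointwise guarantee of Definition~\ref{def:panocs} is not delivered by an unspecified ``coupling argument'': the paper proves it by a recursion on the events $F_m$ (no realized arc in the ex-post graph among the first $m$ rounds whose subsets contain $y$) together with an auxiliary event $G_m$ (round $i_m$ is a sender targeting $i_{m+1}$), yielding $f_m \le f_{m-1} - \gamma f_{m-2} \le (1-\gamma) f_{m-1}$; the delicate step is bounding the competition a targeted receiver faces from its at most three other in-arcs (including parallel arcs from the same node and arcs w.r.t.\ the other advertiser), and showing the case with such extra in-arcs is no worse. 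This is precisely the interaction you flag as ``the bulk of the technical work'' and then leave unresolved, so the bound $2^{-k}(1-\gamma)^{k-1}$ is asserted rather than established.

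Second, the constant $0.05144$ cannot be claimed from tuning unspecified sender/receiver/neutral probabilities. In the paper it is $\max_{p}\,\frac{1}{4}\,p\,(1-p)\bigl(1-\frac{3p}{8}\bigr)$, attained at $p=\frac{4}{9}$, and the factor $\bigl(1-\frac{3p}{8}\bigr)$ comes from a refined mechanism in which a receiver proactively scans all sender in-neighbors that actually picked it and only loses to realized competitors, each of which materializes with probability $\frac{p}{4}$. Your mechanism instead imposes a blunt factor-of-$\frac{1}{2}$ tie-break between two sender candidates (plus a ``neutral'' label whose role is never fixed); this is closer in spirit to the warmup $\frac{1}{64}$-PanOCS and, absent any calculation, there is no reason it reaches $0.05144$. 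So the proposal is a correct plan with the right structural lemma, but the quantitative heart of the theorem---the recursion controlling correlations and the optimization producing $\gamma = 0.05144$---is missing.
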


\begin{theorem}
    \label{thm:panocs-general-bid}
    Suppose the algorithm makes at most $k_{\max}$ semi-assignments to any point $y \in [0, B_a)$ of any advertiser $a \in A$.
    Then, there is a $0.01245 \cdot \kmax^{-1}$-PanOCS.
\end{theorem}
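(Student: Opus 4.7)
The plan is to reduce the general-bid setting to the large-bid construction in Theorem~\ref{thm:panocs-large-bid}, paying an extra factor of $1/k_{\max}$. In the large-bid regime, consecutive semi-assignments of an advertiser $a$ necessarily overlap on at least $\frac{1}{2} B_a$ of the interval, so the ``partner'' prior semi-assignment with which to correlate the current selection negatively is essentially canonical across $[0, B_a)$. Under general bids, different points $y \in Y_{ai}$ may have different most-recent predecessors, and a single new subset $Y_{ai}$ can overlap arbitrarily many prior semi-assignments, so no single choice of partner can serve every point at once.

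The mechanism I would use is a randomized slotting scheme. When a new semi-assignment $(i, a, Y_{ai})$ arrives, label every prior semi-assignment of $a$ that covers a given point $y \in Y_{ai}$ by its backward rank in $\{1, \dots, k_{\max}\}$ among the at most $k_{\max}$ prior semi-assignments at $y$. The algorithm samples an independent uniform rank $j \in \{1, \dots, k_{\max}\}$ for the current round together with a uniformly random witness point in $Y_{ai}$, and attempts to negatively correlate with the rank-$j$ predecessor at the witness point by invoking the large-bid-style construction from Theorem~\ref{thm:panocs-large-bid} on that pair while preserving the marginal probability $\frac{1}{2}$ of selecting each of the two shortlisted combinations; if no rank-$j$ predecessor exists at the chosen witness, the current selection uses a fresh independent fair coin.

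For a fixed point $y$ with $k_a(y) = k$, the probability that the sampled rank and witness point line up with the unique correct predecessor of $y$ in any given round is $1/k_{\max}$, so the effective per-pair negative-correlation strength at $y$ is $\gamma/k_{\max}$, where $\gamma$ is the constant from Theorem~\ref{thm:panocs-large-bid}. Chaining this across the $k-1$ adjacent pairs of semi-assignments at $y$ via negative dependence, and combining with the $2^{-k}$ factor coming from the independent components of the $k$ selections, yields
\[
    x_a(y) \ge 1 - 2^{-k}\,(1 - \gamma/k_{\max})^{\max\{k-1,\,0\}},
\]
which is precisely the $\gamma/k_{\max}$-PanOCS bound; pushing the constants from the large-bid analysis through the dilution then produces the stated $0.01245 \cdot k_{\max}^{-1}$.

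The main obstacle will be to establish genuine negative dependence across the $k-1$ adjacent pairs at $y$, rather than merely per-pair negative correlation, since the PanOCS inequality multiplies rather than adds the per-pair correlation gains. This is the analogue of the central difficulty in the OCS and large-bid PanOCS analyses, and would be addressed either by a round-by-round martingale-style induction or by an explicit coupling to the large-bid construction. The additional twist specific to this theorem is verifying that the random slot index and witness point, refreshed independently in each round, dilute the correlation strength by exactly the intended factor of $k_{\max}$ without distorting the $\frac{1}{2}$ marginal at any point $y \in [0, B_a)$ or introducing positive correlations between non-adjacent semi-assignments at $y$.
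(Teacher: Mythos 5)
There is a genuine gap, and it sits exactly where the paper identifies the difficulty of general bids. First, your claimed per-round success probability of $1/\kmax$ is not correct. For the negative correlation to help at a fixed point $y$, the partner you pick must be the predecessor whose subset contains $y$; in your scheme this requires not only that the sampled rank matches, but also that the random witness point lands in $Y_{ai} \cap Y_{ai'}$, whose relative measure inside $Y_{ai}$ can be arbitrarily small (e.g., the current round is a bid of size $B_a$ while the adjacent predecessor at $y$ is a tiny bid). So the effective per-pair strength at $y$ is $\frac{\mu(Y_{ai}\cap Y_{ai'})}{\mu(Y_{ai})}\cdot\frac{1}{\kmax}\cdot\gamma$, not $\gamma/\kmax$, and no uniform lower bound of the form $c/\kmax$ follows. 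Second, and more fundamentally, a receiver-only ``pick a predecessor and flip its bit'' rule destroys the matching structure that the OCS/PanOCS guarantee relies on: several later rounds may independently choose the same predecessor, and those rounds then become perfectly \emph{positively} correlated with one another; if two of them are related (their subsets share some point $y'$), the bound of Definition~\ref{def:panocs} fails at $y'$. Preventing this requires the sender side to commit to at most one partner, but under general bids a single semi-assignment can be adjacent to arbitrarily many subsequent small-bid rounds, so the bounded out-degree that makes the sender's uniform choice work in Lemma~\ref{lem:large-bid-bounded-degree} and Algorithm~\ref{alg:large-bid-panocs-improved} is gone. Your proposal does not say how the ``large-bid-style construction invoked on that pair'' is run from the sender's side, which is precisely the unresolved obstacle.

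The paper's proof closes this gap with an idea that is absent from your plan: a two-level partition of each advertiser's randomized rounds into groups of pairwise \emph{unrelated} impressions (Lemma~\ref{lem:same-block-unconstrained} and Eqn.~\eqref{eqn:panocs-partition}), so that the inevitable positive correlation \emph{within} a group is harmless because group members share no point. Treating each group as one large bid restores bounded adjacency (at most $8\kmax$ neighboring groups, Corollary~\ref{cor:group-level-degree-bound}), a sender/receiver protocol is run at the group level with pick probability $\frac{1}{4\kmax}$, and each impression follows one of its two groups with probability $\frac{1}{2}$; the recursion $f_m \le f_{m-1} - \gamma f_{m-2}$ then yields $\gamma = \frac{1}{16\kmax}(1-p)\bigl(1-(1-\tfrac{p}{4\kmax})^{4\kmax}\bigr) \ge \frac{1}{16\kmax}(1-p)(1-e^{-p})$, optimized at $p = 2 - W(e^2)$ to give $0.01245\cdot\kmax^{-1}$. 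If you want to salvage your approach, you would need to replace the rank/witness dilution with some mechanism that (i) guarantees a partner covering $y$ with probability $\Omega(1/\kmax)$ uniformly over $y$, and (ii) ensures each round is anti-correlated with at most one other round unless the positively-correlated rounds are provably unrelated --- which is essentially what the grouping construction accomplishes.
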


The proofs of the theorems are deferred to a separate Section~\ref{sec:panocs}.
We include below a proof sketch of a weaker $\frac{1}{64}$-PanOCS for large bids to foreshadow the arguments in our PanOCS analyses.

\begin{proof}[Proof Sketch of a Weaker Theorem~\ref{thm:panocs-large-bid} ($\gamma = \frac{1}{64}$)]
For any impression $i$ semi-assigned to an advertiser $a$, we write $(i+j)_a$ to denote the $j$-th impression semi-assigned to $a$ after impression $i$ (or the $(-j)$-th impression before $i$, if $j < 0$).

We next explain the algorithm.
Consider an impression $i$ in a randomized round.
Suppose it is semi-assigned to advertisers $a_1$ and $a_2$.
Then, sample $a^* \in \{a_1, a_2\}$ and $j \in \{-2, -1, 1, 2\}$ uniformly at random.
If $j > 0$, select $a_1$ or $a_2$ and the corresponding subsets with a fresh random bit;
further, pass the result to future impression $(i+j)_{a^*}$.
If $j < 0$, check if the past impression $(i+j)_{a^*}$ passes its result to $i$.
If so, makes the opposite choice:
select $a^*$ and the corresponding subset if $a^*$ was \emph{not} selected in round $(i+j)_{a^*}$, and vice versa.
Otherwise, select with a fresh random bit.

Finally, we argue this is a $\frac{1}{64}$-PanOCS.
Fix any advertiser $a$ and any point $y \in [0, B_a)$.
Suppose $i_1 < i_2 < \dots < i_k$ are the impressions semi-assigned to $a$ and subsets containing $y$.
Consider any neighboring $i_\ell$ and $i_{\ell+1}$.
We now use the assumption of large bids to get that \emph{$i_{\ell+1}$ is the first or second impression semi-assigned to advertiser $a$ after $i_\ell$}.
Hence, $i_\ell$ may pass its result to $i_{\ell+1}$, which may then make the opposite choice.
When it happens, $y$ is assigned exactly once in rounds $i_\ell$ and $i_{\ell+1}$.
More precisely, this is when $i_\ell$ samples $a^* = a$ and $j > 0$ so that $(i_\ell + j)_a = i_{\ell+1}$, and $i_{\ell+1}$ samples $a^* = a$ and $j < 0$ so that $(i_{\ell+1}+j)_a = i_\ell$, which happens with probability $\frac{1}{64}$.
Moreover, we claim that the events are negatively dependent for $k-1$ neighboring pairs of $i_\ell$ and $i_{\ell+1}$.
Hence, the probability of having no such pair is at most $(1-\frac{1}{64})^{k-1}$.
Finally, even if $i_1 < i_2 < \dots < i_k$ are independent, the probability that $a$ is never selected is only $2^{-k}$.
Putting together, $y$ has been assigned in at least one of these $k$ rounds with probability no less than $1 - 2^{-k} (1 - \frac{1}{64})^{k-1}$.
\end{proof}

\subsection{Online Primal Dual Algorithm}
\label{sec:pd-algorithm}

We demonstrate an online primal dual Algorithm~\ref{alg:pd-algorithm}, taking a $\gamma$-PanOCS as a blackbox.
Recall that an oblivious semi-randomized algorithm either deterministically assigns $i$ to an advertiser-subset combination, or semi-assigns it to two combinations in a randomized round.
In the latter case, let the $\gamma$-PanOCS select a combination.
Let $\bar{x}_a(y)$ be the lower bound of $x_a(y)$ given by a $\gamma$-PanOCS in Eqn.~\eqref{eqn:panocs-definition}, and let $\bar{P}$ be the corresponding lower bound of the primal objective in Eqn.~\eqref{eqn:panorama-primal}, i.e.:
\[
    \bar{x}_a(y) \defeq 1 - 2^{-k_a(y)} \cdot (1 - \gamma)^{\max\{k_a(y)-1,0\}}
    \quad\text{,}\qquad
    \bar{P} \defeq \sum_{a \in A} \int_0^{B_a} \bar{x}_a(y) dy
    ~.
\]

Let $\Delta x$ denote the increment of $\bar{x}_a(y)$ as $k_a(y)$ increases, i.e.:
\begin{equation}
\label{eqn:x-increment-random}
    \Delta  x(k) \defeq
    \begin{cases}
        2^{-1} & k=1 ~;\\
        2^{-k} (1-\gamma)^{k-2}(1+\gamma) & k \ge 2 ~.
    \end{cases}
\end{equation}

An online primal dual algorithm's decision for each impression is driven by maximizing $\beta_i$.
%greedily picks the advertiser(s) to maximize $\beta_i$ as follows.
% The PanOCS on the cases of large bids and general bids will derive different values of $\gamma$, which leads to the corresponding competitive ratios as shown in Section~\ref{sec:instantiation} and Section~\ref{sec:general-bid}, respectively.
For each advertiser $a$, compute two quantities $\Delta_a^D \beta_i$ and $\Delta_a^R \beta_i$ which we shall detail shortly.
The former denotes how much $\beta_i$ would gain if $i$ is assigned to $a$.
The latter denotes how much $\beta_i$ would gain if $i$ is semi-assigned to $a$.
The corresponding subsets are decided by the panoramic interval-level assignment in Section~\ref{sec:panorama}.
%Concretely, on the arrival of $i$, the algorithm picks a subset combination $Y_{ai} \subseteq [0,B_a)$ with the smallest $k_a(y)$ and computes $\Delta_a^R \beta_i$ and $\Delta_a^D \beta_i$ for each of its neighbor advertiser.
Then, find advertisers $a_1$ and $a_2$ with the largest $\Delta_a^R \beta_i$, and advertiser $a^*$ with the largest $\Delta_a^D \beta_i$.
If $\Delta_{a_1}^R \beta_i + \Delta_{a_2}^R \beta_i$ is greater than $\Delta_{a^*}^D \beta_i$, semi-assign $i$ to $a_1$ and $a_2$ in a randomized round.
Otherwise, assign $i$ to $a^*$ in a deterministic round.

Next, we define $\Delta_a^D \beta_i$ and $\Delta_a^R \beta_i$ from two invariants below.
First, let the lower bound of primal equal the dual, i.e., $\bar{P} = D$.
It ensures reverse weak duality in Lemma~\ref{lem:online-primal-dual} because $P \ge \bar{P} = D$.
Second, recall that $\alpha_a(y)$'s account for dual variable $\alpha_a$ at the point-level as explained in Eqn.~\eqref{eqn:alpha-point-level}.
For a set of parameters $\Delta \alpha(\ell)$, $\ell \ge 1$, which will be optimized in the analysis, let:
\begin{equation}
    \label{eqn:alpha-invariant}
    \alpha_a(y) = \sum_ {\ell = 1}^{k_a(y)} \Delta \alpha (\ell)
    ~.
\end{equation}
%
%That is, when $y$ is semi-assigned for the $k$-th time, increase $\alpha_a(y)$ by $\Delta \alpha(k)$.

%\qiankun{replace $k_a(y)$ by $k^i_a(y)$?}

We first derive $\Delta_a^R \beta_i$ from the invariants.
Suppose $i$ is semi-assigned to $a$ and a subset $Y_{ai}$.
For any point $y \in Y_{ai}$, the primal increment due to $y$ is $\Delta x(k_a(y)+1)$, where $k_a(y)$ denotes the value before the semi-assignment.
The dual increment in $\alpha_a(y)$ is $\Delta \alpha(k_a(y)+1)$ by the second invariant.
Finally, by the first invariant, the increment in $\beta_i$ due to point $y \in Y_{ai}$ shall equal the difference between $\Delta x(k_a(y)+1)$ and $\Delta \alpha(k_a(y)+1)$.
For convenience of notations, define:
\begin{equation}
    \label{eqn:beta-r-definition}
    \Delta \beta(k) \defeq \Delta x(k) - \Delta \alpha(k)
    ~.
\end{equation}
Our choice of $\Delta \alpha(k)$ will ensure non-negativity of $\Delta \beta(k)$.
%As the previous works utilizing the online primal dual framework, we keep the primal and dual increments equal in each round.
%Concretely, consider the impression $i$ semi-assigned to the subset combination $Y_{ai}$ of $a$ from the panorama viewpoint, for each point $y \in Y_{ai}$ which has been semi-assigned $k_a(y)$ times, define $\Delta_a^R \beta_i$ as:
Putting together we get that:
\begin{equation}
    \label{eqn:delta-r-beta-definition}
    \Delta_a^R \beta_i
    \defeq \int_{Y_{ai}} \Delta \beta(k_a(y)+1) dy
    %= \int_{Y_{ai}} \bigg( \Delta  x(k_a(y)+1) - \Delta \alpha(k_a(y)+1)) \bigg) dy
    ~.
\end{equation}

Similarly, suppose $i$ is assigned deterministically to $a$ and a subset $Y_{ai}$.
For any point $y \in Y_{ai}$, the primal increment due to $y$ is $\sum_{\ell > k_a(y)} \Delta x(\ell)$ since $k_a(y)$ becomes $\infty$;
the dual increment in $\alpha_a(y)$ is $\sum_{\ell > k_a(y)} \Delta \alpha(\ell)$ by the second invariant.
%The dual update policy in a deterministic round is consistent with the definition of $k_a(y)$, which equal to infinity if $y$ has been assigned deterministically. 
%The increments of $x_a(y)$ and $\alpha_a(y)$ in a deterministic round are the sum of increments from $k_a(y)+1$ to the finitely many randomized rounds.
Thus, together with the first invariant, we let:
%$\Delta_a^D \beta_i$ be:
%
\begin{equation}
    \label{eqn:delta-d-beta-definition}
    \Delta_a^D \beta_i
    %= \int_{Y_{ai}} \sum_{\ell = k_a(y)+1}^\infty \bigg( \Delta x(\ell) - \Delta \alpha(\ell)) \bigg) dy
    \defeq \int_{Y_{ai}} \sum_{\ell > k_a(y)} \Delta \beta(\ell) dy
    ~.
\end{equation}
%Note that the dual update policy is essentially splitting the gain in each round to both advertiser and impression sides by definition.
%We mention here one of the constaints is that the $\Delta \beta$ needs to be non-increasing such that the advertiser-subset combination with the smallest value of $k_a(y)$ will be chosen first.

\begin{algorithm}[t]
    \caption{Basic Online Primal Dual Algorithm (Parameterized by $\Delta \alpha(k)$, $k \ge 1$)}
    \label{alg:pd-algorithm}
    \begin{algorithmic}
        \smallskip
        \STATE \textbf{state variables:~}\\
            $k_a(y) \ge 0$, number of times $y$ is semi-assigned;
            $k_a(y) = \infty$ if $y$ is assigned in a deterministic round\\[1ex]
        \FORALL{impression $i$}
            \FORALL{advertiser $a \in A$}
                \STATE computer subset $Y_{ai} \subseteq [0,B_a)$ using panoramic interval-level assignment (Section~\ref{sec:panorama})
                \STATE compute $\Delta_a^R \beta_i$ and $\Delta_a^D \beta_i$ according to Equations~\eqref{eqn:beta-r-definition}, \eqref{eqn:delta-r-beta-definition}, and \eqref{eqn:delta-d-beta-definition}
            \ENDFOR
            \STATE find $a_1$, $a_2$ that maximize $\Delta_a^R \beta_i$, and $a^*$ that maximizes $\Delta_a^D \beta_i$
            \STATE \textbf{if} $\Delta^R_{a_1} \beta_i + \Delta^R_{a_2} \beta_i \ge \Delta^D_{a^*} \beta_i$ \hspace*{\fill} \textbf{\emph{\# randomized round}}
                \INDSTATE assign $i$ to what PanOCS selects between $a_1$ and $a_2$ and the corresponding subsets
                \STATE \textbf{else} (i.e., $\Delta^R_{a_1} \beta_i + \Delta^R_{a_2} \beta_i < \Delta^D_{a^*} \beta_i$) \hspace*{\fill} \textbf{\emph{\# deterministic round}}
                    \INDSTATE assign $i$ to $a^*$ and the corresponding subset
                \STATE \textbf{endif}
        \ENDFOR
    \end{algorithmic}
\end{algorithm}

\subsection{Online Primal Dual Analysis}
\label{sec:analysis}
% By the definition of the algorithm and the dual updating policies, it is clear to conclude the dual gains on both sides.
% On the offline advertiser side, the gain $\alpha_a$ is determined by the final value of $k_a(y)$,
% \[
% \alpha_a = \int_0^{B_a} \sum_{\ell = 1} ^\infty \Delta \alpha(\ell) dy
% ~,
% \]
% and the gain $\beta_i$ differs in deterministic or randomized round,
% \[
% \beta_i = 
% \begin{cases}
%         \Delta_{a^*}^D \beta_i & i \text{is assigned to } a^{*} \text{deterministically} \\
%         \Delta_{a_1}^R \beta_i + \Delta_{a_2}^R \beta_i & i \text{is semi-assigned to } a_1, a_2 \text{randomly}
%     \end{cases}
%     ~. 
% \]
Recall that reverse weak duality always holds because of the first invariant.
%Under the above dual updating policy, the first condition in Lemma~\ref{lem:online-primal-dual} holds by definition.
Next, we derive a set of conditions on the parameters which imply approximate dual feasibility.
These conditions will be numbered.
Then, we will optimize the competitive ratio $\Gamma$ and $\Delta \alpha(k)$'s through an LP.
For any advertiser $a$ and any subset of impressions $S \subseteq I$, recall approximate dual feasibility in Eqn.~\eqref{eqn:dual-feasibility-panorama}:
\[
    \int_0^{B_a} \alpha_a(y) dy + \sum_{i \in S} \beta_i \ge b_a(S) \cdot \Gamma
    ~.
\]

% In this section we present how the approximate dual feasibility, the second condition in Lemma~\ref{lem:online-primal-dual}, is satisfied, and derives a set of constrains w.r.t. $\gamma$.
% Recall that our dual update policies always keep the increments in primal and dual objectives equal, in any randomized or deterministic rounds.
% Hence, the first condition in Lemma~\ref{lem:online-primal-dual} holds by definition.
% Thus, it suffices to show, from our panorama viewpoint of AdWords, the dual feasibility:
% \begin{equation}
% \label{eqn:dual-feasibility}
%     \int_0^{B_a} \alpha_a(y) dy + \sum_{i \in S} \beta_i \ge b_a(S) \cdot \Gamma
%     ~,
% \end{equation}
% holds for each advertiser $a$ and each subset of its neighbor impressions.

% Similar to the previous works utilizing the online primal dual framework, we fix an advertiser $a$ and a subset $S$ of its neighbor and consider all possible matching status of impressions $i \in S$.
% To avoid ambiguity, in the following discussion, we use $k_a(y)$ to denote the final value at the end of the algorithm, and $k^i_a(y)$ to denote the value exactly before the arrival of an impression $i$.

\subsubsection{Warmup}
\label{sec:dual-analysis-warm-up}

First consider a special case when $S$ has only one impression $i$ who bids $b_{ai} = B_a$.
This warmup case is simple enough to be analyzed in around one page, yet is also general enough to derive the binding conditions on the parameters that are still sufficient in the general case.

%as a warm up, which is helpful for readers to understand the online primal dual analysis from the panorama viewpoint, and is sufficient to derive all the constraints we use for the LP to optimize the parameters.
We will divide into four subcases, depending on whether impression $i$ is assigned to the advertiser $a$, and whether $i$ is a deterministic or randomized round.
In each case, we will lower bound both $\alpha_a(y)$'s and $\beta_i$ as functions of the $k_a(y)$'s.
%Fix an advertiser $a$ and a subset of its neighbor $S$. We discuss how the algorithm assigns the impression $i \in S$ in several cases: $i$ is not assigned to $a$, $i$ is assigned to $a$ either randomly or deterministically.
To avoid ambiguity, let $k_a(y)$ be the final value at the end of the algorithm, and $k^i_a(y)$ be the value right before the arrival of impression $i$.

\paragraph{Case 1: Round of $i$ is randomized, and $i$ is not semi-assigned to $a$.}
By definition, both $a_1$ and $a_2$ chosen by the algorithm contribute at least $\Delta_a^R \beta_i$ to $\beta_i$.
Hence, by the definition of $\Delta_a^R \beta_i$ in Eqn.~\eqref{eqn:delta-r-beta-definition} and the invariant about $\alpha_a(y)$ in Eqn.~\eqref{eqn:alpha-invariant}, approximate dual feasibility reduces to:
\[
    \int_0^{B_a} \sum_{\ell=1}^{k_a(y)} \Delta \alpha(\ell) dy + 2 \int_0^{B_a} \Delta \beta \big( k_a^i(y) + 1 \big) dy \ge \Gamma \cdot B_a
    ~.
\]

Since $k_a(y) \ge k_a^i(y)$, and the first term is increasing in $k_a(y)$'s, it suffices to prove the inequality when $k_a(y)$ equals $k_a^i(y)$.
We will ensure the inequality pointwisely for every $y \in [0, B_a)$:
\begin{equation}
    \label{eqn:beta-bound-not-to-a}
    \forall k \ge 0 : \qquad \sum_{\ell = 1}^k \Delta \alpha(\ell) + 2 \cdot \Delta \beta(k+1) \ge \Gamma
    ~.
\end{equation}

\paragraph{Case 2: Round of $i$ is deterministic, and $i$ is not assigned to $a$.}
We reduce to the previous case by introducing a condition about the \emph{superiority of randomized rounds.}
\begin{equation}
    \label{eqn:random-vs-deter}
    \forall k \ge 1: \qquad \Delta \beta (k) \ge \sum_{\ell = k+1}^\infty \Delta \beta (\ell)
    ~.
\end{equation}

\begin{lemma}
    \label{lem:random-vs-deter}
    Assuming Eqn.~\eqref{eqn:random-vs-deter}, for any advertiser $a$ and any impression $i$, $2 \cdot \Delta_a^R \beta_i \ge \Delta_a^D \beta_i$.
\end{lemma}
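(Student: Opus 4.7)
The plan is to prove the inequality pointwise on the interval $Y_{ai}$ that is shared by both $\Delta_a^R \beta_i$ and $\Delta_a^D \beta_i$. First I would note that, by construction of the panoramic interval-level assignment (Section~\ref{sec:panorama}), the subset $Y_{ai}$ that the algorithm selects for impression $i$ and advertiser $a$ depends only on $a$'s past impressions and on $b_{ai}$, not on whether $i$ will be assigned or semi-assigned. So both $\Delta_a^R \beta_i$ and $\Delta_a^D \beta_i$ are integrals of explicit functions of $k_a(y)$ over the \emph{same} set $Y_{ai}$, and both use the same snapshot of $k_a(\cdot)$ taken right before $i$'s arrival.

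Next, I would observe that every $y \in Y_{ai}$ has $k_a(y) < \infty$: the interval-level assignment takes $Y_{ai} \subseteq [0, B_a) \setminus Y_D$, where $Y_D$ is the subset already deterministically assigned, so no $y \in Y_{ai}$ has been given an infinite label. This lets me work with a finite integer $k = k_a(y)$ at each point.

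With those reductions in hand, the inequality $2\,\Delta_a^R \beta_i \ge \Delta_a^D \beta_i$ becomes
\[
    \int_{Y_{ai}} \Big( 2\,\Delta\beta(k_a(y)+1) - \sum_{\ell > k_a(y)} \Delta\beta(\ell) \Big)\, dy \;\ge\; 0,
\]
and it suffices to show the integrand is nonnegative pointwise. For a fixed $y \in Y_{ai}$ with $k = k_a(y)$, after peeling off the $\ell = k+1$ term this reduces to
\[
    \Delta\beta(k+1) \;\ge\; \sum_{\ell = k+2}^{\infty} \Delta\beta(\ell),
\]
which is exactly Eqn.~\eqref{eqn:random-vs-deter} applied at index $k+1 \ge 1$.

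I do not foresee an obstacle here: the only thing worth double-checking is that the definitions of $\Delta_a^R\beta_i$ and $\Delta_a^D\beta_i$ genuinely integrate over the same $Y_{ai}$ with the same $k_a(y)$, and that the tail $\sum_{\ell>k}\Delta\beta(\ell)$ converges (non-negativity of $\Delta\beta$ coupled with the hypothesis gives a geometric decay bound, so the sum is finite). Once those are confirmed, the lemma is a one-line consequence of the hypothesized inequality.
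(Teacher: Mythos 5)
Your proof is correct and follows essentially the same route as the paper: both reduce to the pointwise inequality $2\,\Delta\beta(k_a(y)+1) \ge \sum_{\ell \ge k_a(y)+1} \Delta\beta(\ell)$ over the common set $Y_{ai}$, which is just Eqn.~\eqref{eqn:random-vs-deter} applied at index $k_a(y)+1$ with one term moved across (the paper adds $\Delta\beta(k)$ to both sides, you peel it off). Your extra remarks that $Y_{ai}$ and the snapshot of $k_a(\cdot)$ are shared by both offers, and that $Y_{ai}$ avoids deterministically assigned points, are accurate and only make explicit what the paper leaves implicit.
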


We remark that the lemma holds in the general case as well.
Adding $\Delta \beta(k)$ to both sides of Eqn.~\eqref{eqn:random-vs-deter} gives $2 \cdot \Delta \beta (k) \ge \sum_{\ell = k}^\infty \Delta \beta (\ell)$.
It then follows by the definition of $\Delta_a^R \beta_i$ and $\Delta_a^D \beta_i$ in Equations~\eqref{eqn:delta-r-beta-definition} and \eqref{eqn:delta-d-beta-definition}.
Intuitively, it means that a randomized round with two equally good advertisers in terms of $\Delta_a^R \beta_i$ is better than a deterministic round with only one of them.

By the definition of the algorithm, the advertiser $a^*$ to which the algorithm deterministically assigns $i$ satisfies $\Delta_{a^*}^D \beta_i \ge \Delta_{a^*}^R \beta_i + \Delta_a^R \beta_i$.
Further by $2 \cdot \Delta_{a^*}^R \beta_{i} \ge \Delta_{a^*}^D \beta_{i}$ because of Lemma~\ref{lem:random-vs-deter}, we have $\Delta_{a^*}^R \beta_{i} \ge \Delta_{a}^R \beta_{i}$.
Thus, we get $\beta_i = \Delta_{a^*}^{D} \beta_{i} \ge \Delta_{a}^{R} \beta_{i} +\Delta_{a^*}^{R} \beta_{i} \ge 2 \cdot \Delta_{a}^{R} \beta_{i}$.
The rest is verbatim.

\paragraph{Case 3: Round of $i$ is randomized, and $i$ is semi-assigned to $a$.}
%
%We bound $\beta_i$ in this case by removing i with bid $B_a$ \emph{assuming that it is semi-assigned to $a$ at the very end} among all impressions semi-assigned to $a$. In other words, we consider a worst case scenario that $i$ is the last impression semi-assigned to $a$ such that it gets the smallest value of $\beta_i$.
Since the algorithm chooses \emph{not} to deterministically assign $i$ to $a$, we have $\beta_i \ge \Delta_a^D \beta_i$.
By the definition of $\Delta_a^D \beta_i$ in Eqn.~\eqref{eqn:delta-d-beta-definition} and the invariant about $\alpha_a(y)$'s in Eqn.~\eqref{eqn:alpha-invariant}, approximate dual feasibility reduces to:
\[
    \int_0^{B_a} \sum_{\ell = 1}^{k_a(y)} \Delta \alpha(\ell) dy + \int_0^{B_a} \sum_{\ell > k_a^i(y)} \Delta \beta(\ell) dy \ge \Gamma \cdot B_a
    ~.
\]

Importantly, since $i$ is semi-assigned to $a$, we have $k_a(y) \ge k_a^i(y) + 1$.
By contrast, the previous cases only have $k_a(y) \ge k_a^i(y)$.
It suffices to ensure the inequality pointwise when $k_a(y) = k_a^i(y) + 1$:
\begin{equation}
    \label{eqn:beta-bound-half-to-a}
    \forall k \ge 1 : \qquad \sum_{\ell = 1}^k \Delta \alpha(\ell) + \sum_{\ell = k}^\infty \Delta \beta(\ell) \ge \Gamma
    ~.
\end{equation}

\paragraph{Case 4: Round of $i$ is deterministic, and $i$ is assigned to $a$.}
We have $k_a(y) = \infty$ for any $y \in [0,B_a)$ after round $i$ because $b_{ai} = B_a$ in the warmup case.
Since $k_a(y)$ may already be very large before $i$ arrives, we do not have any nontrivial lower bound of $\beta_i$.
Hence, $\alpha_a(y)$'s on their own must satisfy approximate dual feasibility.
By the invariant of $\alpha_a(y)$'s in Eqn.~\eqref{eqn:alpha-invariant}, it reduces to:
\begin{equation}
\label{eqn:bound-at-limit}
    \sum_{\ell=1}^\infty \Delta \alpha(\ell) \geq \Gamma
    ~.
\end{equation}

\paragraph{Optimizing the competitive ratio.}
We shall solve an LP, whose variables are the competitive ratio $\Gamma$ and parameters $\Delta \alpha(k)$'s and $\Delta \beta(k)$'s, and whose constraints are the first invariant about $\Delta \alpha(k)$'s and $\Delta \beta(k)$'s in Eqn.~\eqref{eqn:beta-r-definition} and the sufficient conditions for approximate dual feasibility in Equations~\eqref{eqn:beta-bound-not-to-a} to \eqref{eqn:bound-at-limit}.
\begin{align*}
    \label{eqn:matching-lp-primal}
    \text{maximize} \quad & \Gamma \\
    \text{subject to} \quad & \text{Eqn.~\eqref{eqn:beta-r-definition}, \eqref{eqn:beta-bound-not-to-a}, \eqref{eqn:random-vs-deter}, \eqref{eqn:beta-bound-half-to-a}, and \eqref{eqn:bound-at-limit}} \\
    & \Delta \alpha(k), \Delta \beta(k) \ge 0 \qquad\qquad\qquad \forall k \ge 1
\end{align*}

We will present the solution to the LP after giving the analysis for the general case.

\subsubsection{General Case}

We next present a formal proof of approximate dual feasibility in the general case under the invariant in Eqn.~\eqref{eqn:beta-r-definition}, and the conditions derived from the warmup, i.e., Equations \text{\eqref{eqn:beta-bound-not-to-a}} to \eqref{eqn:bound-at-limit}.
To simplify the argument, we further assume \emph{monotonicity of $\Delta \beta$}:
\begin{equation}
    \label{eqn:dual-monotonicity}
        \forall k \ge 1 : \qquad \Delta \beta(k) \ge \Delta \beta(k+1)
        ~.    
\end{equation}
We remark that it would be satisfied automatically by the solution of the LP even if it was not stated explicitly.
The LP becomes:
\begin{equation}
    \label{eqn:basic-factor-lp}
    \begin{aligned}
    \text{maximize} \quad & \Gamma \\
    \text{subject to} \quad & \text{Eqn.~\eqref{eqn:beta-r-definition}, \eqref{eqn:beta-bound-not-to-a}, \eqref{eqn:random-vs-deter}, \eqref{eqn:beta-bound-half-to-a}, \eqref{eqn:bound-at-limit}, and \eqref{eqn:dual-monotonicity}} \\
    & \Delta \alpha(k), \Delta \beta(k) \ge 0 \qquad\qquad\qquad \forall k \ge 1
    \end{aligned}
\end{equation}

\begin{lemma}
    \label{lem:basic-approximate-dual-feasible}
    Suppose $\Gamma$, $\Delta \alpha(k)$'s, and $\Delta \beta(k)$'s form a solution of the LP in Eqn.~\eqref{eqn:basic-factor-lp}.
    Then, Algorithm~\ref{alg:pd-algorithm} satisfies approximate dual feasibility, which we restate below:
    \[
        \int_0^{B_a} \alpha_a(y) dy + \sum_{i \in S} \beta_i \ge b_a(S) \cdot \Gamma
        ~.
    \]
\end{lemma}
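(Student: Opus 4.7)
My plan is to reduce approximate dual feasibility to a pointwise inequality on $[0, B_a)$, extending the four-case analysis from the warmup. First, I would classify each $i \in S$ by how the algorithm treated it with respect to $a$: let $S_D$ be those deterministically assigned to $a$, $S_R^a$ those semi-assigned to $a$ in a randomized round, and $S_0 = S \setminus (S_D \cup S_R^a)$. Following the four warmup cases, I would establish the per-impression lower bounds $\beta_i \ge 2 \Delta_a^R \beta_i$ for $i \in S_0$ (via Lemma~\ref{lem:random-vs-deter} and the algorithm's selection rule, exactly as in warmup Cases~1 and~2), $\beta_i \ge \Delta_a^D \beta_i$ for $i \in S_R^a$ (since the algorithm preferred randomization over a deterministic assignment to $a$, so $\Delta_{a_1}^R\beta_i + \Delta_{a_2}^R \beta_i \ge \Delta_a^D \beta_i$), and $\beta_i = \Delta_a^D \beta_i$ for $i \in S_D$.

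Next, using the integral representations in Equations~\eqref{eqn:delta-r-beta-definition} and~\eqref{eqn:delta-d-beta-definition}, I would rewrite the LHS of approximate dual feasibility as $\int_0^{B_a} F(y)\,dy$, where $F(y)$ collects $\alpha_a(y) = \sum_{\ell=1}^{k_a(y)} \Delta\alpha(\ell)$ together with the per-$y$ $\beta$-contributions from each $i \in S$ with $y \in Y_{ai}$, namely $2\Delta\beta(k_a^i(y)+1)$ if $i \in S_0$ and $\sum_{\ell > k_a^i(y)} \Delta\beta(\ell)$ if $i \in S_R^a \cup S_D$. The task becomes to bound $\int F(y)\,dy$ from below by $\Gamma\,b_a(S) = \Gamma \min\{\sum_{i \in S} b_{ai}, B_a\}$.

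I would then split on whether $b_a(S) = B_a$ or $b_a(S) = \sum_i b_{ai}$. In each case the goal reduces to a pointwise inequality at every $y$, provable by selecting one of the LP constraints from Eqn.~\eqref{eqn:basic-factor-lp}: Eqn.~\eqref{eqn:bound-at-limit} when $k_a(y) = \infty$; Eqn.~\eqref{eqn:beta-bound-half-to-a} with $k = k_a(y)$ when some $i \in S_R^a \cup S_D$ has $y \in Y_{ai}$, invoking monotonicity Eqn.~\eqref{eqn:dual-monotonicity} and $k_a(y) \ge k_a^i(y)+1$ to pass from $\Delta\beta(k_a^i(y)+1)$ to the (smaller) $\Delta\beta(k_a(y))$ and shorter tail; and Eqn.~\eqref{eqn:beta-bound-not-to-a} with $k = k_a(y)$ when $y$ is covered only by $S_0$ impressions, again using monotonicity to dominate by $\Delta\beta(k_a(y)+1)$.

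The main obstacle is the cross-impression bookkeeping: in the case $b_a(S) = \sum b_{ai}$, overlapping $Y_{ai}$'s among $S_0$ impressions (whose $y^*$ does not advance) can inflate the pointwise count $n_y = |\{i \in S : y \in Y_{ai}\}|$ above $k_a(y)$; in the case $b_a(S) = B_a$ some $y \in [0, B_a)$ may be uncovered by any $Y_{ai}$, so the global $\Gamma B_a$ target exceeds the naive pointwise contribution. I expect to resolve this by re-framing the target as a per-impression bound of the form $\beta_i + (\text{share of } \int\alpha_a) \ge \Gamma\,b_{ai}$, where the shares are defined via an integral over $Y_{ai}$ of a suitable layered portion of $\alpha_a(y)$, chosen so that (i) each share is verifiable pointwise using one of the LP constraints above, and (ii) the total charged to $\alpha_a(y)$ across impressions never exceeds $\alpha_a(y)$ itself, exploiting that distinct layers of semi-assignments to the same $y$ generate disjoint $\Delta\alpha(\ell)$ contributions.
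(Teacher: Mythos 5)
Your classification of $S$ and the per-impression lower bounds ($\beta_i \ge 2\Delta_a^R\beta_i$ for unmatched impressions, $\beta_i \ge \Delta_a^D\beta_i$ for semi-assigned ones) coincide with the paper's starting point, and you correctly locate the obstacle (overlapping $Y_{ai}$'s and uncovered points). The gap is in the proposed resolution. A per-impression target of the form $\beta_i + (\text{share of } \int\alpha_a) \ge \Gamma\, b_{ai}$ with shares totalling at most $\int_0^{B_a}\alpha_a(y)dy$ would, after summing over $i \in S$, give $\int_0^{B_a}\alpha_a(y)dy + \sum_{i\in S}\beta_i \ge \Gamma\sum_{i\in S}b_{ai}$, i.e.\ it proves feasibility against the \emph{uncapped} sum of bids. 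This is false in general: take two advertisers with budget $B$ and $k$ impressions each bidding $B$ for both, with $S$ all $k$ impressions. By the first invariant and $\bar{x}_a(y)\le 1$, the entire dual objective satisfies $D = \bar{P} \le 2B$, so $\int\alpha_a + \sum_{i\in S}\beta_i$ stays bounded while $\Gamma\sum_{i\in S}b_{ai} = \Gamma k B \to \infty$. The cap at $B_a$ in $b_a(S)$ is not a technicality one can defer; it has to enter the charging itself. Moreover, even when $\sum_{i\in S}b_{ai}\le B_a$, the localized accounting in your condition (ii) fails: an impression $i$ charging point $y\in Y_{ai}$ via Eqn.~\eqref{eqn:beta-bound-half-to-a} needs the \emph{prefix} $\sum_{\ell\le k_a^i(y)+1}\Delta\alpha(\ell)$, so two impressions with overlapping subsets demand nested prefixes at the shared points — not disjoint layers — and their combined demand can exceed $\alpha_a(y)$ there (e.g.\ two impressions of bid $\frac{B_a}{2}$ semi-assigned to the same half of the circle, with other impressions in between advancing $y^*$). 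The $\alpha$ mass that saves you lives at points \emph{outside} $\bigcup_i Y_{ai}$, so charges must be exported, not kept on $Y_{ai}$.

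The paper resolves exactly this by relocating each impression's $\beta$-charge to a \emph{disjoint} subset $\tilde{Y}_{ai}$ rather than keeping it on $Y_{ai}$: for $i\in N$ these are obtained by scanning forward from the final $y^*$ (Eqn.~\eqref{eqn:basic-beta-charging-per-impression-n}), for $i\in R$ by scanning backward in reverse arrival order (Eqn.~\eqref{eqn:basic-beta-charging-per-impression-r}), in both cases excluding $Y_D$ and stopping after one full circle — which is precisely where the cap $b_a(S)\le B_a$ is absorbed, via the measure bound $\mu(Y_N)+\mu(Y_R)+\mu(Y_D)\ge b_a(S)$. Domination of the relocated charge is not just monotonicity of $\Delta\beta$ plus $k_a(y)\ge k_a^i(y)$: it uses that the panoramic assignment picks the pointwise-minimal $k_a^i(y)$'s (Lemmas~\ref{lem:basic-status-comparison-n} and \ref{lem:basic-status-comparison-r}), and, for $i\in R$, the additional combinatorial Lemma~\ref{lem:basic-k-comparison-r} showing $k_a^{-i}(y)\ge k_a^i(y)$, i.e.\ pretending the $R$-impressions arrived last only overestimates the state, so the tail may be started at the final $k_a(y)$ on $\tilde{Y}_{ai}$. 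Your outline has no counterpart to this relocation/disjointness step or to Lemma~\ref{lem:basic-k-comparison-r}, and without them neither of your two cases ($b_a(S)=B_a$ or $b_a(S)=\sum_i b_{ai}$) reduces to a valid pointwise inequality.
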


Recall that reverse weak duality always holds.
Lemma~\ref{lem:online-primal-dual} leads to the next corollary.

\begin{corollary}
    Algorithm~\ref{alg:pd-algorithm} is $\Gamma$-competitive for any solution of the LP in Eqn.~\eqref{eqn:basic-factor-lp}.
\end{corollary}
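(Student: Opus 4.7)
The plan is to reduce the general-$S$ approximate dual feasibility to the warmup's pointwise inequalities by aggregating them over $i \in S$. For each $i \in S$ one would classify the round at $i$ into the four warmup cases relative to advertiser $a$. The arguments from the warmup carry over verbatim to give, in Cases 1--2 (where $a$ is neither semi-assigned nor deterministically assigned to $i$), $\beta_i \ge 2 \Delta^R_a \beta_i$ via Lemma~\ref{lem:random-vs-deter}; in Case 3 ($a$ is semi-assigned to $i$), $\beta_i \ge \Delta^D_a \beta_i$ together with the structural property $k_a(y) \ge k^i_a(y)+1$ for $y \in Y_{ai}$; and in Case 4 ($a$ is deterministically assigned to $i$), $k_a(y) = \infty$ for $y \in Y_{ai}$ and hence $\alpha_a(y) \ge \Gamma$ by \eqref{eqn:bound-at-limit}.

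The new ingredient is a rearrangement inequality letting me replace the algorithm's $Y_{ai}$ by any analysis subset $Y^*_{ai} \subseteq [0, B_a)$ of measure at most $b_{ai}$. Because the panoramic interval-level assignment places $Y_{ai}$ on the points with the smallest $k^i_a$ values by Lemma~\ref{lem:K-property}, and because the integrands $\Delta\beta(k^i_a(y)+1)$ and $\sum_{\ell > k^i_a(y)} \Delta\beta(\ell)$ are monotone in $k^i_a(y)$ by \eqref{eqn:dual-monotonicity}, we have $\int_{Y_{ai}}(\cdot)\,dy \ge \int_{Y^*_{ai}}(\cdot)\,dy$. Re-running the warmup's per-point derivation with this substitution gives, for each $y \in Y^*_{ai}$, a bound of the form $\alpha_a(y) + c_i(y) \ge \Gamma$, where $c_i(y)$ denotes $y$'s share of the lower bound on $\beta_i$, provided that the warmup's structural hypothesis $k_a(y) \ge k^i_a(y)+1$ is preserved in Case 3 (which is automatic for $y \in Y_{ai}$ and must be enforced when choosing $Y^*_{ai}$).

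To finish, I would select disjoint analysis subsets $Y^*_{ai}$ ($i \in S$) whose union has measure exactly $b_a(S) = \min\bigl(\sum_{i \in S} b_{ai}, B_a\bigr)$, setting $Y^*_{ai} = Y_{ai}$ for deterministically-assigned impressions and $Y^*_{ai} \subseteq Y_{ai}$ for semi-assigned ones. Since the $Y^*_{ai}$'s are disjoint, each $y$ in the cover lies in a unique $Y^*_{ai}$, so integrating the pointwise bound and adding the non-negative contribution of $\alpha_a$ on the uncovered portion yields
\[
\int_0^{B_a} \alpha_a(y)\,dy + \sum_{i \in S} \beta_i \;\ge\; \Gamma \cdot \mu\Bigl(\bigcup_{i \in S} Y^*_{ai}\Bigr) \;=\; \Gamma \cdot b_a(S),
\]
which is the desired approximate dual feasibility.

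The step I expect to be the main obstacle is constructing the disjoint packing: semi-assigned $Y_{ai}$'s can overlap when the panoramic walk wraps around $[0, B_a)$, so carving out disjoint $Y^*_{ai} \subseteq Y_{ai}$ with the prescribed total measure requires a Hall-type matching argument leveraging the layered three-value structure of $k^i_a(\cdot)$ from Lemma~\ref{lem:K-property}. The flexibility in Cases 1--2, where $Y^*_{ai}$ may be any subset of $[0, B_a)$ of measure at most $b_{ai}$, should provide the slack needed to complete the packing whenever the semi-assigned overlaps do not already consume the full $b_a(S)$ worth of measure.
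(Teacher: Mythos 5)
Your overall frame (classify impressions as in the warmup, lower bound $\beta_i$ by $2\Delta_a^R\beta_i$, $\Delta_a^D\beta_i$, or $0$, transfer the bounds to analysis subsets via monotonicity of $\Delta\beta$ and the minimality of the panoramic assignment, then charge pointwise against Eqn.~\eqref{eqn:beta-bound-not-to-a}, \eqref{eqn:beta-bound-half-to-a}, \eqref{eqn:bound-at-limit}) matches the paper. But the step you flag as "the main obstacle" is not a technicality that a Hall-type argument will rescue: under your constraint that semi-assigned impressions must be charged to subsets $Y^*_{ai} \subseteq Y_{ai}$, the required packing of measure $b_a(S)$ simply does not exist in general. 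Take $B_a = 1$ and an interleaved sequence in which $i_1, i_2, i_3 \in S$ each bid $\tfrac12$ on $a$ and are semi-assigned to $[0,\tfrac12)$, while impressions $j_1, j_2 \notin S$ (also bidding $\tfrac12$ on $a$) are semi-assigned to $[\tfrac12,1)$ in between, so that the panoramic pointer wraps and every $Y_{ai_\ell}$ equals $[0,\tfrac12)$. Here $N = \emptyset$, $Y_D = \emptyset$, $b_a(S)=1$, yet $\mu\bigl(\cup_{i\in R} Y_{ai}\bigr) = \tfrac12$, so your disjoint $Y^*_{ai}$'s can cover at most half the needed measure and your final display only yields $\Gamma/2$. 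The slack you hope to get from the $N$-impressions is absent exactly when it is needed. The paper's proof resolves this by charging the impressions in $R$ to sets $\tilde{Y}_{ai}$ scanned \emph{backwards} from $y^*$ (Eqn.~\eqref{eqn:basic-beta-charging-per-impression-r}), which may be disjoint from $Y_{ai}$ itself; the legitimacy of using the index $k_a(y)$ (rather than $k_a(y)+1$) on those foreign points is precisely the content of the hypothetical state $k_a^{-i}$ and Lemma~\ref{lem:basic-k-comparison-r}, i.e., $k_a^{-i}(y) \ge k_a^i(y)$, combined with Lemma~\ref{lem:basic-status-comparison-r}. That comparison lemma is the genuinely new combinatorial ingredient of the general case, and your proposal has no substitute for it.

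A secondary, smaller slip: you only credit the deterministic measure belonging to impressions $i \in S$ (setting $Y^*_{ai} = Y_{ai}$ for those), and treat the rest of the interval as merely "non-negative." The paper's measure bound $\mu(Y_N)+\mu(Y_R)+\mu(Y_D) \ge b_a(S)$ counts \emph{all} deterministically assigned points, each contributing $\alpha_a(y) \ge \Gamma$ by Eqn.~\eqref{eqn:bound-at-limit}, regardless of whether the assigning impression lies in $S$; without this, even cases where $Y_D$ alone should certify feasibility (e.g., $a$ fully deterministically assigned by impressions outside $S$) escape your accounting. This one is easy to patch, but the missing analogue of Lemma~\ref{lem:basic-k-comparison-r} is not.
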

%discuss the approximate dual feasibility under general $S$.

\begin{proof}[Proof of Lemma~\ref{lem:basic-approximate-dual-feasible}]
    The values of $\alpha_a(y)$'s are determined by the invariant in Eqn.~\eqref{eqn:alpha-invariant}.
    The points $y$ which are deterministically assigned are special because $\alpha_a(y) \ge \Gamma$ by Eqn.~\eqref{eqn:alpha-invariant} and Eqn.~\eqref{eqn:bound-at-limit}, i.e., $\alpha_a(y)$ on its own satisfies approximate dual feasibility locally at the point-level.
    To refer to these points in the rest of the argument, define:
    %the following notation:
    %
    \[
        Y_D \defeq \big\{ y \in [0, B_a) : \text{$y$ is deterministically assigned by the end of the algorithm} \big\}
        ~.
    \]

    Similar to the warmup in the last subsection, we will lower bound $\beta_i$ differently depending on the assignment of the impression $i$.
    If $i$ is neither assigned nor semi-assigned to $a$, we will bound $\beta_i$ by $2\cdot\Delta_a^R \beta_i$ like case 1 and 2 in the warmup, and will resort to Eqn.~\eqref{eqn:beta-bound-not-to-a} and Eqn.~\eqref{eqn:random-vs-deter} and the corresponding Lemma~\ref{lem:random-vs-deter}.
    If $i$ is semi-assigned to $a$, we will bound $\beta_i$ by $\Delta_a^D \beta_i$ like case 3 in the warmup, and will resort to Eqn.~\eqref{eqn:beta-bound-half-to-a}.
    If $i$ is assigned to $a$ deterministically, we will use the trivial bound of $\beta_i \ge 0$ like case 4 in the warmup, and will resort to Eqn.~\eqref{eqn:bound-at-limit}.
    To this end, define:
    \begin{align*}
        N & \defeq \big\{ i \in S : \text{$i$ is neither assigned nor semi-assigned to $a$}\} \\
        %, with $b^N = \sum_{i\in N} b_{ai}$;
        R & \defeq \big\{ i \in S : \text{$i$ is a randomized round semi-assigned to $a$} \big\}
        %, with $b^R = \sum_{i\in R} b_{ai}$.
    \end{align*}

    The rest of the proof is a charging argument as follows.
    We will find subsets $Y_N, Y_R \subseteq [0, B_a)$ and will distribute the contribution from $\beta_i$'s for $i \in N$ and $i \in R$ to the points $y \in Y_N$ and $y \in Y_R$ respectively by defining $\beta(y)$'s such that:
    \begin{itemize}
        \item Subsets $Y_N$, $Y_R$, and $Y_D$ are disjoint.
        \item Subsets $Y_N$, $Y_R$, and $Y_D$ have total measure at least $b_a(S)$, i.e.:
            \begin{equation}
                \label{eqn:basic-total-measure}
                \mu(Y_N) + \mu(Y_R) + \mu(Y_D) \ge b_a(S)
                ~.
            \end{equation}
        \item The values of $\beta(y)$'s lower bound the $\beta_i$'s, i.e.:
            \begin{align}
                \label{eqn:basic-distribution-N}
                \sum_{i \in N} \beta_i & 
                \ge \int_{Y_N} \beta(y) dy ~; \\
                \label{eqn:basic-distribution-R}
                \sum_{i \in R} \beta_i &
                \ge \int_{Y_R} \beta(y) dy ~.
            \end{align}
        \item The values of $\beta(y)$'s satisfy approximate dual feasibility locally at the point-level, i.e.:
            \begin{align}
                \label{eqn:basic-approximate-dual-feasible-N}
                \forall y \in Y_N : \qquad \alpha_a(y) + \beta(y) \ge \Gamma ~; \\[1ex]
                \label{eqn:basic-approximate-dual-feasible-R}
                \forall y \in Y_R : \qquad \alpha_a(y) + \beta(y) \ge \Gamma ~.
            \end{align}
    \end{itemize}

    Assuming the above, approximate dual feasibility follows by a sequence of inequalities as follows:
    \begin{align*}
        \int_0^{B_a} \alpha_a(y) dy + \sum_{i \in S} \beta_i &
        \ge \int_0^{B_a} \alpha_a(y) dy + \sum_{i \in N} \beta_i + \sum_{i \in R} \beta_i
        && \text{($N, R \subseteq S$ and disjoint)} \\
        & \ge \int_0^{B_a} \alpha_a(y) dy + \int_{Y_N} \beta(y) dy + \int_{Y_R} \beta(y) dy && \text{(Eqn.~\eqref{eqn:basic-distribution-N}, \eqref{eqn:basic-distribution-R})} \\[1ex]
        & \ge \int_{Y_N \cup Y_R \cup Y_D} \alpha_a(y) dy + \int_{Y_N} \beta(y) dy + \int_{Y_R} \beta(y) dy
        && \text{($Y_N, Y_R, Y_D \subseteq [0, B_a)$)} \\[1ex]
        & = \int_{Y_N} \big( \alpha_a(y) + \beta(y) \big) dy + \int_{Y_R} \big( \alpha_a(y) + \beta(y) \big) dy\\
        & \quad + \int_{Y_D} \alpha_a(y) dy
        && \text{($Y_N$, $Y_R$, $Y_D$ disjoint)} \\[1ex]
        & \ge \Gamma \cdot \mu(Y_N) + \Gamma \cdot \mu(Y_R) + \Gamma \cdot \mu(Y_D) && \text{(Eqn.~\eqref{eqn:basic-approximate-dual-feasible-N}, \eqref{eqn:basic-approximate-dual-feasible-R}, \eqref{eqn:bound-at-limit})} \\[3ex]
        & \ge \Gamma \cdot b_a(S) ~.
        && \text{(Eqn.~\eqref{eqn:basic-total-measure})}
    \end{align*}

    The rest of the argument substantiates the above plan by constructing subsets $Y_N$, $Y_R$, and the corresponding $\beta(y)$'s and proving that they satisfy the aforementioned properties.
    See Figure~\ref{fig:general-case-charging} for an illustration of the construction.
    
    \begin{figure}
    \begin{subfigure}{.5\textwidth}
        \centering
        \includegraphics[width=\textwidth]{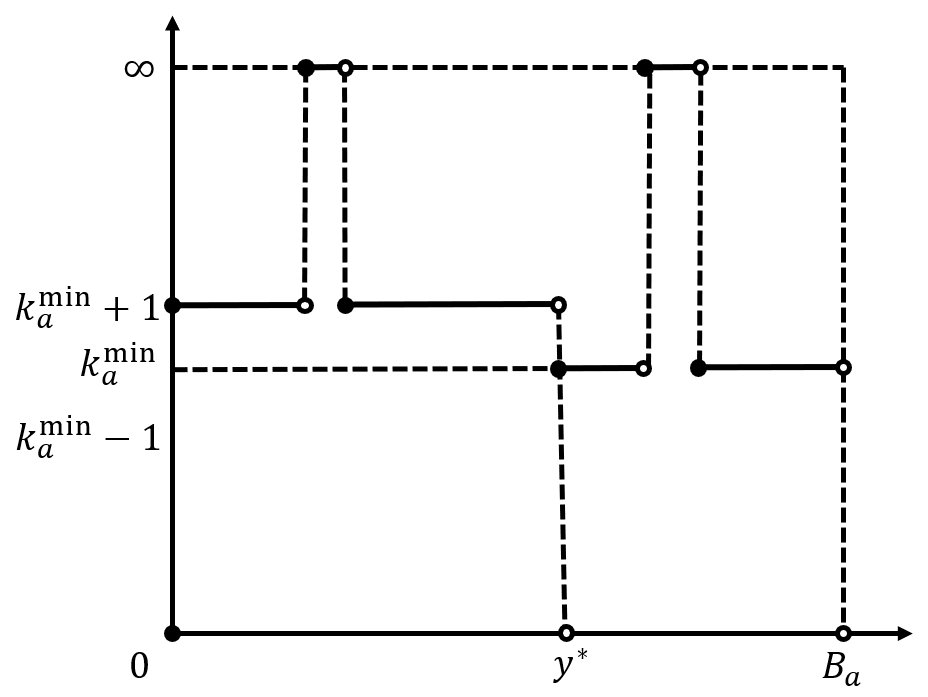}
        \caption{Final status of $a$}
    \end{subfigure}
    \begin{subfigure}{.5\textwidth}
        \centering
        \includegraphics[width=\textwidth]{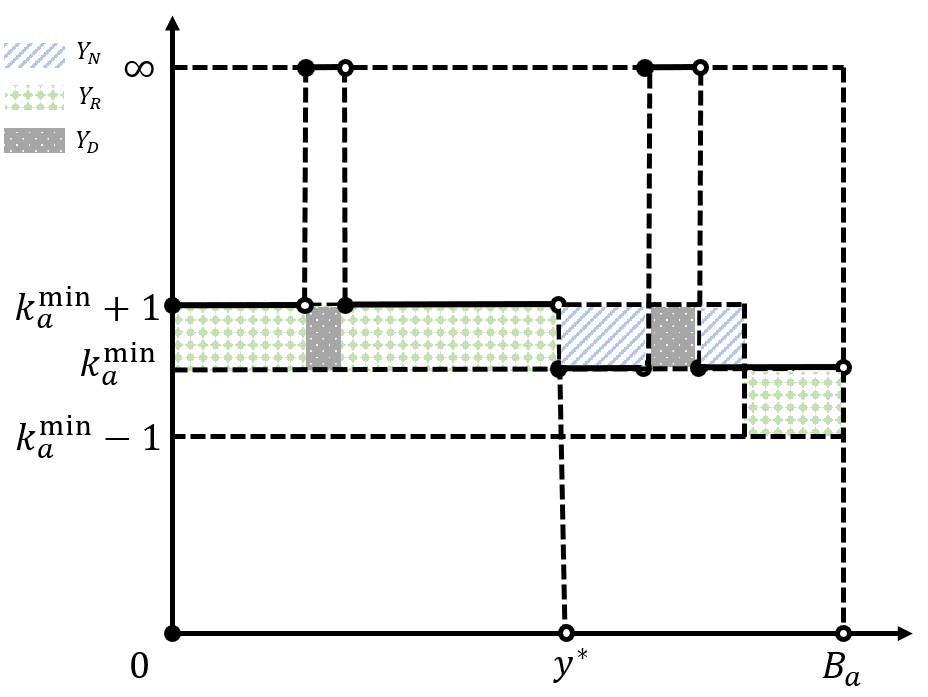}
        \caption{Constructions of $Y_D$, $Y_N$, and $Y_R$}
    \end{subfigure}
    %
    % \begin{subfigure}{.33\textwidth}
    %     \centering
    %     \includegraphics[width=\textwidth]{fig/interval-level-randomized.pdf}
    %     \caption{Update in a randomized round}
    % \end{subfigure}
    %
    \caption{Illustrative example of the subsets $Y_D$, $Y_N$, and $Y_R$}
    \label{fig:general-case-charging}
\end{figure}

    \paragraph{Construction of $Y_N$ and the Corresponding $\beta(y)$'s.}
    Similar to case 1 and 2 in the warmup, we lower bound $\beta_i$ by $2$ times $\Delta_a^R \beta_i$.
    If $i$ is a randomized round, both advertisers $a_1$ and $a_2$ to which $i$ is semi-assigned contribute at least $\Delta_a^R \beta_i$, or else advertiser $a$ should have been chosen instead.
    If $i$ is a deterministic round, it is the same argument as in the warmup, which we restate below for completeness.
    Since $i$ chooses advertiser $a^*$ deterministically instead of randomizing between advertisers $a^*$ and $a$, we have $\beta_i = \Delta_{a^*}^D \beta_i \ge \Delta_{a^*}^R \beta_i + \Delta_a^R \beta_i$.
    Further by Eqn.~\eqref{eqn:random-vs-deter} and Lemma~\ref{lem:random-vs-deter}, we have $\Delta_{a^*}^D \beta_i \le 2 \Delta_{a^*}^R \beta_i$.
    Cancelling $\Delta_{a^*}^R \beta_i$ by combining the two inequalities leads to $\beta_i = \Delta_{a^*}^D \beta_i \ge 2 \Delta_a^R \beta_i$.
    Recall that $k_a^i(y)$'s denote the values of the state variables when impression $i$ arrives, and $Y_{ai}$ denotes the subset by the panoramic interval-level assignment, should $i$ be semi-assigned or assigned to advertiser $a$ when it arrives.
    By the definition of $\Delta_a^R \beta_i$ in Eqn.~\eqref{eqn:beta-r-definition}:
    \begin{equation}
        \label{eqn:basic-beta-bound-n}
        \forall i \in N : \qquad \beta_i \ge 2 \int_{Y_{ai}} \Delta \beta \big( k_a^i(y)+1 \big) dy
        ~.
    \end{equation}

    We need to further derive a lower bound w.r.t.\ the $k_a(y)$'s at the end of the algorithm.
    This would be easy if we distribute $\beta_i$ to the points $y \in Y_{ai}$ since $k_a(y)$ is nondecreasing over time.
    Such a charging may not work, however, because $Y_{ai}$ may intersect $Y_D$.
    The next lemma resolves this.
    %We prove the following general lemma that will be useful in the other cases as well.

    \begin{lemma}
        \label{lem:basic-status-comparison-n}
        %Consider any status of advertiser $a$ given by $\hat{k}_a(y)$'s such that $\hat{k}_a(y) \ge k_a^i(y)$ for any $y \in [0, B_a)$.
        %Then, 
        For any subset $\tilde{Y}_{ai}$ with measure at most $b_{ai}$, we have:
        \[
            \int_{Y_{ai}} \Delta \beta(k_a^i(y)+1) dy \ge \int_{\tilde{Y}_{ai}} \Delta \beta(k_a(y)+1) dy
            ~.
        \]
    \end{lemma}
    
    \begin{proof}
        Since the panoramic interval-level assignment chooses a subset $Y_{ai}$ of measure $b_{ai}$ with the minimum $k_a^i(y)$'s, by the monotonicity of $\Delta \beta (\cdot)$ in Eqn.~\eqref{eqn:dual-monotonicity} we have:
        \[
            \int_{Y_{ai}} \Delta \beta(k_a^i(y)+1) dy \ge \int_{\tilde{Y}_{ai}} \Delta \beta(k_a^i(y)+1) dy
            ~.
        \]

        Further observe that $k_a(y) \ge k_a^i(y)$ for any $y \in [0, B_a)$.
        Applying the monotonicity of $\Delta \beta (\cdot)$ once again proves the lemma.
    \end{proof}

    Before explaining the definition of $Y_N$, let us recall some notations defined earlier.
    Let $y^*$ be the threshold above which $y \notin Y_D$ satisfies $k_a(y) = \kmin = \min_{z \in [0, B_a)} k_a(z)$, and below which $y \notin Y_D$ satisfies $k_a(y) = \kmin+1$ (see panoramic interval-level assignment and Lemma~\ref{lem:K-property} in Section~\ref{sec:panorama}).
    For any point $y \in [0, B_a)$, any subset $Y \subseteq [0, B_a)$, and any $0 \le b \le B_a$, let $y \oplus_Y b$ denote the point in $[0, B_a)$ such that $[y, y \oplus_Y b)$ excluding $Y$ has Lebesgue measure $b$.
    Further recall our abuse of notation which allows $y \oplus_Y b$ to be smaller than $y$, in which case $[y, y \oplus_Y b)$ denotes the union of $[y, B_a)$ and $[0, y \oplus_Y b)$.
    In the boundary case when $b \ge B_a - \mu(Y)$, define $y \oplus_Y b = y$.
    Define the inverse operator $y \ominus_Y b$ similarly.
    We write $i' < i$ if $i'$ arrives before $i$.

    For any $i \in N$, define $\tilde{Y}_{ai}$ as:

    %Let $i_1, i_2, \dots i_{|N|}$ be the impressions in $N$.
    %Define
    %
    \begin{equation}
        \label{eqn:basic-beta-charging-per-impression-n}
        \tilde{Y}_{ai} \defeq \Big[ y^* \oplus_{Y_D} \sum_{i' \in N\,:\,i' < i} b_{ai'}, y^* \oplus_{Y_D} \sum_{i' \in N\,:\,i' \le i} b_{ai'} \Big) \setminus Y_D
        ~.
    \end{equation}

    In other words, we scan through the interval $[0, B_a)$ starting from $y^*$ and treating it as a circle by gluing its endpoints.
    Then, we construct $\tilde{Y}_{ai}$'s for $i \in N$ one at a time by their arrival order, letting each be a subset excluding $Y_D$ with measure up to $b_{ai}$.
    If $\sum_{i \in N} b_{ai} \le B_a - \mu(Y_D)$, which we consider to be the canonical case, these would be the panoramic interval-level assignments if these $i \in N$ arrived after the final state of the algorithm and were semi-assigned to $a$.

    Finally, the definition of the boundary case ensures that $y^* \oplus_{Y_D} \sum_{i' \in N\,:\,i' \le i} b_{ai'} = y^*$ when $\sum_{i' \in N\,:\,i' \le i} b_{ai'} \ge B_a - \mu(Y_D)$.
    Therefore, we stop scanning through $[0, B_a)$ after a full circle, and the above $\tilde{Y}_{ai}$'s are disjoint.

    Define $Y_N$ and the corresponding $\beta(y)$ as:
    \begin{equation}
        \label{eqn:basic-beta-charging-n}
        \begin{aligned}
            Y_N & \defeq \bigcup_{i \in N} \tilde{Y}_{ai} ~, \\
            \forall y \in Y_N : \qquad \beta(y) & \defeq 2 \cdot \Delta \beta \big(k_a(y)+1\big) ~.
        \end{aligned}
    \end{equation}

    \paragraph{Proof of Eqn.~\eqref{eqn:basic-distribution-N}.}
    For any $i \in N$, $\tilde{Y}_{ai}$ is a subset with measure at most $b_{ai}$ by definition.
    By Lemma~\ref{lem:basic-status-comparison-n} we have:
    \begin{equation}
        \label{eqn:basic-beta-bound-charging-n}
        \int_{Y_{ai}} \Delta \beta(k_a^i(y)+1) dy \ge \int_{\tilde{Y}_{ai}} \Delta \beta(k_a(y)+1) dy
        ~.
    \end{equation}

    Eqn.~\eqref{eqn:basic-distribution-N} then follows by Eqn.~\eqref{eqn:basic-beta-bound-n}, the above inequality in Eqn.~\eqref{eqn:basic-beta-bound-charging-n}, and the definition of $Y_N$ and the correposnding $\beta(y)$ for $y \in Y_N$ in Eqn.~\eqref{eqn:basic-beta-charging-n} through a sequence of inequalities below:
    \begin{align*}
        \sum_{i \in N} \beta_i
        &
        \ge \sum_{i \in N} \int_{Y_{ai}} 2 \cdot \Delta \beta \big( k_a^i(y) + 1 \big) dy
        &&
        \text{(Eqn.~\eqref{eqn:basic-beta-bound-n})} \\
        &
        \ge \sum_{i \in N} \int_{\tilde{Y}_{ai}} 2 \cdot \Delta \beta \big( k_a(y) + 1 \big) dy
        &&
        \text{(Eqn.~\eqref{eqn:basic-beta-bound-charging-n})} \\
        &
        = \int_{Y_N} \beta(y) dy
        ~.
        &&
        \text{(Eqn.~\eqref{eqn:basic-beta-charging-n})}
    \end{align*}

    \paragraph{Proof of Eqn.~\eqref{eqn:basic-approximate-dual-feasible-N}.}
    Writing $\alpha_a(y)$ as $\sum_{\ell=1}^{k_a(y)} \Delta \alpha(\ell)$ by the invariant in Eqn.~\eqref{eqn:alpha-invariant}, and by the definition of $\beta(y)$ in Eqn.~\eqref{eqn:basic-beta-charging-n}, Eqn.~\eqref{eqn:basic-approximate-dual-feasible-N} reduces to Eqn.~\eqref{eqn:beta-bound-not-to-a}, which is a constraint in the LP:
    \begin{align*}
        \alpha_a(y) + \beta(y)
        &
        = \sum_{\ell=1}^{k_a(y)} \Delta \alpha(\ell) + 2 \cdot \Delta \beta \big( k_a(y) + 1 \big)
        &&
        \text{(Eqn.~\eqref{eqn:alpha-invariant} and \eqref{eqn:basic-beta-charging-n})} \\
        &
        \ge \Gamma ~.
        &&
        \text{(Eqn.~\eqref{eqn:beta-bound-not-to-a})}
    \end{align*}

    \paragraph{Construction of $Y_R$ and the Corresponding $\beta(y)$'s.}
    Similar to case 3 in the warmup, we lower bound $\beta_i$ by $\Delta_a^D \beta_i$, which holds because the algorithm does not assign $i$ to advertiser $a$ deterministically.
    Further by the definition of $\Delta_a^D \beta_i$ in Eqn.~\eqref{eqn:delta-d-beta-definition}, we have:
    \begin{equation}
        \label{eqn:basic-beta-bound-r}
        \forall i \in R : \qquad \beta_i \ge \int_{Y_{ai}} \sum_{\ell = k^i_a(y)+1}^\infty \Delta \beta(\ell) dy
        ~.
    \end{equation}

    We need to further derive a lower bound w.r.t.\ the $k_a(y)$'s at the end of the algorithm.
    The next lemma is similar to Lemma~\ref{lem:basic-status-comparison-n} in the previous case, but more generally considers arbitrary $\hat{k}_a(y) \ge k_a^i(y)$ instead of only the $k_a(y)$ at the end of the algorithm.

    \begin{lemma}
        \label{lem:basic-status-comparison-r}
        Consider any $\hat{k}_a(y)$'s such that $\hat{k}_a(y) \ge k_a^i(y)$ for any $y \in [0, B_a)$.
        Then, for any subset $\tilde{Y}_{ai}$ with measure at most $b_{ai}$, we have:
        \[
            \int_{Y_{ai}} \sum_{\ell = k^i_a(y)+1}^\infty \Delta \beta(\ell) dy \ge \int_{\tilde{Y}_{ai}} \sum_{\ell = \hat{k}_a(y)+1}^\infty \Delta \beta(\ell) dy
            ~.
        \]
    \end{lemma}

    \begin{proof}
        Since the panoramic interval-level assignment chooses a subset $Y_{ai}$ of measure $b_{ai}$ with the minimum $k_a^i(y)$'s, we have:
        \[
            \int_{Y_{ai}} \sum_{\ell = k^i_a(y)+1}^\infty \Delta \beta(\ell) dy \ge \int_{\tilde{Y}_{ai}} \sum_{\ell = k^i_a(y)+1}^\infty \Delta \beta(\ell) dy
            ~.
        \]

        The lemma then follows by the assumption that $\hat{k}_a(y) \ge k_a^i(y)$ for any $y \in [0, B_a)$.
    \end{proof}

    %Recall that $y^*$ is the threshold above which $y \notin Y_D$ satisfies $k_a(y) = \kmin = \min_{z \in [0, B_a)} k_a(z)$, and below which $y \notin Y_D$ satisfies $k_a(y) = \kmin+1$ (see panoramic interval-level assignment and Lemma~\ref{lem:K-property} in Section~\ref{sec:panorama}).
    %Further recall that we write $i' > i$ if $i'$ arrives after $i$.
    %Let $i_1, i_2, \dots i_{|R|}$ be the impressions in $R$, \emph{sorted by their arrival order from earliest to latest}.
    For any $i \in R$, define $\tilde{Y}_{ai}$ as:
    %, $1 \le j \le |R|$, as:
    %
    \begin{equation}
        \label{eqn:basic-beta-charging-per-impression-r}
        \tilde{Y}_{ai} \defeq \Big[ y^* \ominus_{Y_D} \sum_{i' \in R\,:\,i' \ge i} b_{ai'}, y^* \ominus_{Y_D} \sum_{i' \in R\,:\,i' > i} b_{ai'} \Big) \setminus Y_D
        ~.
    \end{equation}

    In other words, scan \emph{backwards} through the interval $[0, B_a)$ starting from $y^*$ and treating it as a circle by gluing its endpoints.
    Then, construct $\tilde{Y}_{ai}$'s for $i \in R$ one at a time by the opposite of their arrival order, letting each be a subset excluding $Y_D$ with measure up to $b_{ai}$.
    If $\sum_{i \in N} b_{ai} \le B_a - \mu(Y_D)$, which we consider to be the canonical case, these would be the panoramic interval-level assignments if the $i \in R$ arrived at the end of the instance, assuming the same final state of the algorithm.
    Again, by the definition of the boundary case, we stop scanning through $[0, B_a)$ after a full circle and, therefore, the above $\tilde{Y}_{ai}$'s are disjoint.

    Define $Y_R$ and the corresponding $\beta(y)$ as:
    \begin{equation}
        \label{eqn:basic-beta-charging-r}
        \begin{aligned}
            Y_R & \defeq \bigcup_{i \in R} \tilde{Y}_{ai} \setminus Y_N ~, \\
            \forall y \in Y_R : \qquad \beta(y) & \defeq \sum_{\ell = k_a(y)}^{\infty} \Delta \beta \big(\ell\big) ~.
        \end{aligned}
    \end{equation}
    %after the final state of the algorithm and were semi-assigned to $a$.

    \paragraph{Proof of Eqn.~\eqref{eqn:basic-distribution-R}.}
    For any $i \in R$, define $k^{-i}_a(y)$ by considering what the state variables of advertiser $a$ would have been before the arrival of $i$ \emph{if the impressions in $R$ were the latest ones in the instance}.
    More precisely, for any $i \in R$, let:
    \[
        k^{-i}_a(y) \defeq \begin{cases}
            k_a(y) - 1 & y \in \big[ y^* \ominus_{Y_D} \sum_{i' \in R\,:\,i' \ge i} b_{ai'}, y^* \big) \setminus Y_D ~; \\[1ex]
            k_a(y) & \text{otherwise.}
        \end{cases}
    \]

    Intuitively, these are the largest possible values of $k_a^i(y)$'s, as the next lemma formalizes.
    %The proof is deferred to the appendix.

    \begin{lemma}
        \label{lem:basic-k-comparison-r}
        For any $i \in R$ and any $y \in [0, B_a)$ we have:
        \[
            k_a^{-i}(y) \ge k_a^i(y)
            ~.
        \]
    \end{lemma}
    
    \begin{proof}
        %We will prove by contradiction.
        Suppose for contrary $k_a^{-i}(y) < k_a^{i}(y)$ for some impression $i \in R$ some point $y \in [0,B_a)$.
        By the definition of $k_a^{-i}(y)$, this means $k_a^{i}(y) = k_a(y) < \infty $, and $k_a^{-i}(y) = k_a(y)-1$.
        Importantly, by $k_a^{i}(y) = k_a(y)$, the panoramic assignment can only assign or semi-assign impressions after and including $i$ to points in $(y, y^*)$, and at most once per point.
        This is the backbone of the proof.

        Its first implications is that all impressions $i' \in R$ after and including $i$ are semi-assigned to \emph{disjoint} subsets of $(y, y^*)$:
        \[
            \bigcup_{i' \in R\,:\,i' \ge i} Y_{ai'} \subset (y, y^*)
            ~.
        \]

        Its second implication, which may be less obvious, is that points in $Y_D$ cannot be semi-assigned since the arrival of impression $i$.
        Consider any point $y \in Y_D$.
        If it has already been deterministically assigned by the time impression $i$ arrives, the claim holds trivially.
        Otherwise, it can be assigned or semi-assigned at most once since the arrival of $i$.
        This last opportunity must be used for a deterministic assignment or else $y$ would not be in $Y_D$.
        Together with the first implication, we get that all impressions $i' \in R$ after and including $i$ are semi-assigned to \emph{disjoint} subsets of $(y, y^*) \setminus Y_D$:
        \[
            \bigcup_{i' \in R\,:\,i' \ge i} Y_{ai'} \subset (y, y^*) \setminus Y_D
            ~.
        \]

        However, the LHS has total measure $\sum_{i' \in R : i' \ge i} b_{ai'}$ by definition, and the RHS has total measure strictly less than that due to $k_a^{-i}(y) = k_a(y) - 1$.
        We have a contradiction.
    \end{proof}

    Consider any $i \in R$.
    By definition, $\tilde{Y}_{ai}$ is a subset with measure at most $b_{ai}$.
    Further, Lemma~\ref{lem:basic-k-comparison-r} above allows us to apply Lemma~\ref{lem:basic-status-comparison-r}:
    \[
        \int_{Y_{ai}} \sum_{\ell = k_a^i(y) + 1}^{\infty} \Delta \beta(\ell) dy \ge \int_{\tilde{Y}_{ai}} \sum_{\ell = k_a^{-i}(y)+1}^\infty \Delta \beta(\ell) dy
        ~.
    \]

    Finally, observe that $k_a^{-i}(y) = k_a(y) - 1$ for any $y \in \tilde{Y}_{ai}$.
    We have:
    \begin{equation}
        \label{eqn:basic-beta-bound-charging-r}
        \int_{Y_{ai}} \sum_{\ell = k_a^i(y) + 1}^{\infty} \Delta \beta(\ell) dy \ge \int_{\tilde{Y}_{ai}} \sum_{\ell = k_a(y)}^\infty \Delta \beta(\ell) dy
        ~.
    \end{equation}

    Eqn.~\eqref{eqn:basic-distribution-R} then follows by Eqn.~\eqref{eqn:basic-beta-bound-r}, the above inequality in Eqn.~\eqref{eqn:basic-beta-bound-charging-r}, and the definition of $Y_R$ and the correposnding $\beta(y)$ for $y \in Y_R$ in Eqn.~\eqref{eqn:basic-beta-charging-r}, through a sequence of inequalities as follows:
    \begin{align*}
        \sum_{i \in R} \beta_i
        &
        \ge \sum_{i \in R} \int_{Y_{ai}} \sum_{\ell = k_a^i(y)+1}^\infty \Delta \beta ( \ell ) dy
        &&
        \text{(Eqn.~\eqref{eqn:basic-beta-bound-r})} \\
        &
        \ge \sum_{i \in R} \int_{\tilde{Y}_{ai}} \sum_{\ell = k_a(y)}^\infty \Delta \beta ( \ell ) dy
        &&
        \text{(Eqn.~\eqref{eqn:basic-beta-bound-charging-r})} \\
        &
        \ge \int_{Y_R} \beta(y) dy
        ~.
        &&
        \text{(Eqn.~\eqref{eqn:basic-beta-charging-r})}
    \end{align*}

    \paragraph{Proof of Eqn.~\eqref{eqn:basic-approximate-dual-feasible-R}.}
    Writing $\alpha_a(y)$ as $\sum_{\ell=1}^{k_a(y)} \Delta \alpha(\ell)$ by the invariant in Eqn.~\eqref{eqn:alpha-invariant}, and by the definition of $\beta(y)$ in Eqn.~\eqref{eqn:basic-beta-charging-r}, Eqn.~\eqref{eqn:basic-approximate-dual-feasible-R} reduces to Eqn.~\eqref{eqn:beta-bound-half-to-a}, which is a constraint in the LP:
    \begin{align*}
        \alpha_a(y) + \beta(y)
        &
        = \sum_{\ell=1}^{k_a(y)} \Delta \alpha(\ell) + \sum_{\ell = k_a(y)}^\infty \Delta \beta ( \ell )
        &&
        \text{(Eqn.~\eqref{eqn:alpha-invariant} and \eqref{eqn:basic-beta-charging-r})} \\
        &
        \ge \Gamma ~.
        &&
        \text{(Eqn.~\eqref{eqn:beta-bound-half-to-a})}
    \end{align*}

    \paragraph{Disjointness.}
    This follows directly by the constructions of $Y_N$ and $Y_R$.
    The construction of $Y_N$ explicitly rules out points in $Y_D$ in Eqn.~\eqref{eqn:basic-beta-charging-per-impression-n}.
    The construction of $Y_R$ explicitly rules out points in $Y_D$ in Eqn.~\eqref{eqn:basic-beta-charging-per-impression-r}, and rules out points in $Y_R$ in Eqn.~\eqref{eqn:basic-beta-charging-r}.

    \paragraph{Bounding the Total Measure: Proof of Eqn.~\eqref{eqn:basic-total-measure}.}
    Observe that:
    \[
        \bigcup_{i \in N} \tilde{Y}_{ai} = \Big[ y^* ~,~ y^* \oplus_{Y_D} \sum_{i \in N} b_{ai} \Big) \setminus Y_D
        \quad,\quad
        \bigcup_{i \in R} \tilde{Y}_{ai} = \Big[ y^* \ominus_{Y_D} \sum_{i \in R} b_{ai} ~,~ y^* \Big) \setminus Y_D
        ~.
    \]

    If $\sum_{i \in N} b_{ai} + \sum_{i \in R} b_{ai} \ge B_a - \mu(Y_D)$, the union of $Y_D$, $Y_N$, and $Y_R$ covers $[0, B_a)$.
    Eqn.~\eqref{eqn:basic-total-measure} then holds trivially because the LHS equals $B_a$ and the RHS is upper bounded by $B_a$.

    Otherwise, the definition of $Y_R$ simplifies as $Y_R = \cup_{i \in R} \tilde{Y}_{ai}$, and we have:
    \[
        \mu(Y_R) = \sum_{i \in R} b_{ai}
        \quad,\quad
        \mu(Y_N) = \sum_{i \in N} b_{ai}
        ~.
    \]

    Finally, any $i \in S \setminus (N \cup R)$ is deterministically assigned to $a$ and therefore:
    \[
        \mu(Y_D) \ge \sum_{i \in S \setminus (N \cup R)} b_{ai}
        ~.
    \]

    Putting together proves Eqn.~\eqref{eqn:basic-total-measure}: 
    \[
        \mu(Y_N) + \mu(Y_R) + \mu(Y_D) \ge \sum_{i \in S} b_{ai} \ge b_a(S)
        ~.
    \]
\end{proof}

\subsection{Optimizing the Parameters}
\label{sec:basic-lp-solution}

It remains to optimize the parameters $\Delta \alpha(k)$'s and $\Delta \beta(k)$'s and the competitive ratio $\Gamma$ by solving the LP in Eqn.~\eqref{eqn:basic-factor-lp}.
Observe that the LP has countably infinitely many variables and constraints and therefore, cannot be directly solved with an LP solver.
One possible strategy is to solve a finite restriction by setting $\Delta \alpha(k) = \Delta \beta (k) = 0$ for $k > \kmax$ with some sufficiently large $\kmax$.
This is indeed the strategy we use for the hybrid algorithm in Section~\ref{sec:hybrid}.
Fortunately, the LP in Eqn.~\eqref{eqn:basic-factor-lp} admits benign structures.
As a result, we can provide an explicit solution.

\begin{lemma}
    \label{lem:basic-lp-solution}
    For any $0 \le \gamma \le 1$, the following is a solution of the LP in Eqn.~\eqref{eqn:basic-factor-lp}:
    \begin{equation}
        \label{eqn:basic-lp-solution}
        \begin{aligned}
            \Gamma & = \frac{3 + 2\gamma}{6+3\gamma} ~. \\[2ex]
            \Delta \alpha(k) & =
            \begin{cases}
                \displaystyle
                \frac{3+\gamma}{6+3\gamma} \Delta x(1) & k = 1 ~; \\[2ex]
                \displaystyle
                \frac{1+\gamma}{2+\gamma} \Delta x(k) & k \ge 2 ~. \\
            \end{cases} \\[2ex]
            \Delta \beta(k) & =
            \begin{cases}
                \displaystyle
                \frac{3+2\gamma}{6+3\gamma} \Delta x(1) & k = 1 ~; \\[2ex]
                \displaystyle
                \frac{1}{2+\gamma} \Delta x(k) & k \ge 2 ~.
            \end{cases}
        \end{aligned}
    \end{equation}
\end{lemma}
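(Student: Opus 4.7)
The plan is to verify each constraint of the LP in Eqn.~\eqref{eqn:basic-factor-lp} separately for the proposed closed form. Everything is essentially computation, and the key observation that unifies it is that $\Delta \alpha(k) + \Delta \beta(k) = \Delta x(k)$ for all $k \ge 1$ by construction (which immediately checks Eqn.~\eqref{eqn:beta-r-definition}): for $k=1$, $\tfrac{3+\gamma}{6+3\gamma}+\tfrac{3+2\gamma}{6+3\gamma}=1$, and for $k \ge 2$, $\tfrac{1+\gamma}{2+\gamma}+\tfrac{1}{2+\gamma}=1$. This decomposition is what makes the subsequent telescopings collapse nicely.

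Before handling the inequality constraints I would record two summation identities that follow directly from Eqn.~\eqref{eqn:x-increment-random} by summing a geometric series with ratio $(1-\gamma)/2$: first, $\sum_{\ell=1}^{k}\Delta x(\ell) = 1 - 2^{-k}(1-\gamma)^{\max\{k-1,0\}}$ (this is the $\bar x_a(y)$ for $k_a(y)=k$); second, $\sum_{\ell=k}^{\infty}\Delta x(\ell) = 2^{-(k-1)}(1-\gamma)^{k-2}$ for $k\ge 2$, and $=1$ for $k=1$. Armed with these, Eqn.~\eqref{eqn:bound-at-limit} is an immediate plug-in giving $\sum_{\ell\ge 1}\Delta\alpha(\ell) = \Gamma$ with equality. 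For Eqn.~\eqref{eqn:beta-bound-not-to-a}, splitting the sum at $\ell=1$ and using the $k \ge 2$ formula for $\Delta\alpha,\Delta\beta$, the tail terms in $2^{-k}(1-\gamma)^{k-1}$ from $\Delta\alpha$ and $2\Delta\beta(k+1)$ cancel exactly, leaving $\Gamma$ on the nose; the boundary cases $k=0,1$ are one-line checks. So Eqn.~\eqref{eqn:beta-bound-not-to-a} holds with equality for every $k$.

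Next, for Eqn.~\eqref{eqn:beta-bound-half-to-a}, the same substitution shows that the left-hand side equals $\Gamma$ plus a residual of the form $\tfrac{1+\gamma^2}{2+\gamma}\cdot 2^{-k}(1-\gamma)^{k-2}$ for $k\ge 2$, arising because the $\Delta\beta$ tail starts one index earlier than in Eqn.~\eqref{eqn:beta-bound-not-to-a} — this residual is manifestly non-negative. The $k=1$ case reduces to $\tfrac{3+\gamma}{2(2+\gamma)} \ge \tfrac{3+2\gamma}{3(2+\gamma)}$, equivalent to $\gamma \le 3$, which holds trivially. Finally, Eqn.~\eqref{eqn:random-vs-deter} reduces, for $k \ge 2$, to the comparison $(1+\gamma) \ge (1-\gamma)$ after dividing through by $2^{-k}(1-\gamma)^{k-2}/(2+\gamma)$; the $k=1$ case is a direct numerical check using $\sum_{\ell\ge 2}\Delta\beta(\ell) = \tfrac{1}{2(2+\gamma)}$. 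Monotonicity (Eqn.~\eqref{eqn:dual-monotonicity}) is trivial for $k\ge 2$ since the ratio $\Delta\beta(k{+}1)/\Delta\beta(k) = (1-\gamma)/2 \le 1/2$, and the single comparison $\Delta\beta(1) \ge \Delta\beta(2)$ reduces to $2(3+2\gamma) \ge 3(1+\gamma)$.

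There is no real obstacle; the only non-routine step is recognizing that the nice form of the solution forces the two critical constraints Eqn.~\eqref{eqn:beta-bound-not-to-a} and Eqn.~\eqref{eqn:bound-at-limit} to be tight at $\Gamma$ simultaneously, which is what pins down the ratio $\Gamma = \tfrac{3+2\gamma}{6+3\gamma}$. The mildly delicate bookkeeping is isolating the non-negative residual in Eqn.~\eqref{eqn:beta-bound-half-to-a}; once it is written as $\tfrac{1+\gamma^2}{2+\gamma}\cdot 2^{-k}(1-\gamma)^{k-2}$, its sign is immediate.
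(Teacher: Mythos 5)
Your proposal is correct and follows essentially the same route as the paper: a direct constraint-by-constraint verification using the geometric-series identities for $\sum_\ell \Delta x(\ell)$, with Eqn.~\eqref{eqn:beta-bound-not-to-a} and Eqn.~\eqref{eqn:bound-at-limit} tight at $\Gamma$ and Eqn.~\eqref{eqn:beta-bound-half-to-a} holding with slack. Your residual $\frac{1+\gamma^2}{2+\gamma}\,2^{-k}(1-\gamma)^{k-2}$ for the $k\ge 2$ case of Eqn.~\eqref{eqn:beta-bound-half-to-a} is in fact the correct value (the paper's displayed intermediate expression there contains typos), so nothing is missing.
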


\begin{proof}
    Below we verify the constraints of the LP, i.e., Eqn.~\eqref{eqn:beta-r-definition}, \eqref{eqn:beta-bound-not-to-a}, \eqref{eqn:random-vs-deter}, \eqref{eqn:beta-bound-half-to-a}, \eqref{eqn:bound-at-limit}, and \eqref{eqn:dual-monotonicity}.
    
    \paragraph{Eqn.~\eqref{eqn:beta-r-definition}:}
    The invariant $\Delta \alpha(k) + \Delta \beta(k) = \Delta x(k)$ is guaranteed explicitly by the definition of $\Delta \alpha(k)$'s and $\Delta \beta(k)$'s above.

    \paragraph{Eqn.~\eqref{eqn:beta-bound-not-to-a}:}
    It holds with equality as shown below.
    \begin{align*}
        \sum_{\ell = 1}^k \Delta \alpha(\ell) + 2 \cdot \Delta \beta(k+1)
        &
        = \Delta \alpha(1) + \sum_{\ell=2}^k \Delta \alpha(\ell) + 2 \cdot \Delta \beta(k+1) \\
        &
        = \frac{3+\gamma}{6+3\gamma} \Delta x(1) + \frac{1+\gamma}{2+\gamma} \sum_{\ell=2}^k \Delta x(\ell) + \frac{2}{2+\gamma} \Delta x(k+1) \\
        &
        = \frac{3+\gamma}{12+6\gamma} + \frac{1+\gamma}{4+2\gamma} \Big(1 - \big(\frac{1-\gamma}{2}\big)^{k-1} \Big) + \frac{1+\gamma}{4+2\gamma} \big(\frac{1-\gamma}{2}\big)^{-k-1} \\[1ex]
        &
        = \frac{3+2\gamma}{6+3\gamma}
        ~.
    \end{align*}

    \paragraph{Eqn.~\eqref{eqn:random-vs-deter}:}
    By definition, we have:
    \[
        \sum_{\ell=k+1}^\infty \Delta \beta(\ell) = \frac{1}{2+\gamma} \sum_{\ell=k+1}^{\infty} \Delta x(\ell)
        ~.
    \]

    Next we show:
    \[
        \sum_{\ell = k+1}^\infty \Delta x(\ell) \le \Delta x(k)
        ~.
    \]

    It holds with equality when $k = 1$ because both sides equal $\frac{1}{2}$.
    When $k \ge 2$, it follows by the observation that $\Delta x(\ell+1) \le \frac{1}{2} \Delta x(\ell)$ for any $\ell \ge 2$.

    Together we have:
    %Eqn.~\eqref{eqn:random-vs-deter} then follows by:
    %
    \[
        \sum_{\ell=k+1}^\infty \Delta \beta(\ell) \le \frac{1}{2+\gamma} \Delta x(k)
    \]

    The RHS equals $\Delta \beta(k)$ by definition when $k \ge 2$.
    For $k = 1$, this is less than $\Delta \beta(k)$ due to $\frac{3+2\gamma}{6+3\gamma} \ge \frac{1}{2+\gamma}$ for any $\gamma \ge 0$.
    In both cases we get Eqn.~\eqref{eqn:random-vs-deter}.

    \paragraph{Eqn.~\eqref{eqn:beta-bound-half-to-a}:}
    First consider $k = 1$.
    The constraint holds with strict inequality:
    \begin{align*}
        \sum_{\ell = 1}^k \Delta \alpha(\ell) + \sum_{\ell=k}^\infty \Delta \beta(\ell)
        &
        = \Delta \alpha(1) + \Delta \beta(1) + \sum_{\ell=2}^\infty \Delta \beta(\ell) \\
        &
        = \Delta x(1) + \frac{1}{2+\gamma} \sum_{\ell=2}^\infty \Delta x(\ell) \\
        &
        = \frac{1}{2} + \frac{1}{4+2\gamma} \\[1ex]
        &
        > \frac{3+2\gamma}{6+2\gamma}
        ~.
    \end{align*}

    The last inequality holds for any $\gamma \le 1$.

    Next, consider $k \ge 2$.
    The constraint still holds with strict inequality:
    \begin{align*}
        \sum_{\ell = 1}^k \Delta \alpha(\ell) + \sum_{\ell=k}^\infty \Delta \beta(\ell)
        &
        = \Delta \alpha(1) + \sum_{\ell=2}^{k-1} \Delta \alpha(\ell) + \big( \Delta \alpha(k) + \Delta \beta(k) \big) + \sum_{\ell=k+1}^\infty \Delta \beta(\ell) \\
        &
        = \frac{3+\gamma}{6+3\gamma} \Delta x(1) + \frac{1+\gamma}{2+\gamma} \sum_{\ell=2}^{k-1} \Delta x(\ell) + \Delta x(k) + \frac{1}{2+\gamma} \sum_{\ell=k+1}^\infty \Delta x(\ell) \\[1ex]
        &
        = \frac{3+\gamma}{12+6\gamma} + \frac{1+\gamma}{4+2\gamma} \Big(1 - \big(\frac{1-\gamma}{2}\big)^{k-2} \Big) + \frac{1+\gamma}{4} \big(\frac{1-\gamma}{2}\big)^{k-2} + \frac{1}{2} \big(\frac{1-\gamma}{2}\big)^{-k-1} \\[2ex]
        &
        = \frac{3+2\gamma}{6+3\gamma} + \frac{1}{4+2\gamma} \big(\frac{1-\gamma}{2}\big)^{-k-1} \\[2ex]
        &
        > \frac{3+2\gamma}{6+3\gamma}
        ~.
    \end{align*}

    \paragraph{Eqn.~\eqref{eqn:bound-at-limit}:}
    It holds with equality.
    In fact, it can be seen as the limit case of Eqn.~\eqref{eqn:beta-bound-not-to-a} or Eqn.~\eqref{eqn:beta-bound-half-to-a} when $k$ goes to infinity.
    We include the calculation below for completeness.
    \begin{align*}
        \sum_{\ell = 1}^\infty \Delta \alpha(\ell)
        &
        = \Delta \alpha(1) + \sum_{\ell=2}^{\infty} \Delta \alpha(\ell) \\
        &
        = \frac{3+\gamma}{6+3\gamma} \Delta x(1) + \frac{1+\gamma}{2+\gamma} \sum_{\ell=2}^{\infty} \Delta x(\ell) \\
        &
        = \frac{3+\gamma}{12+6\gamma} + \frac{1+\gamma}{4+2\gamma} \\[1ex]
        & 
        = \frac{3+2\gamma}{6+3\gamma}
        ~.
    \end{align*}

    \paragraph{Eqn.~\eqref{eqn:dual-monotonicity}:}
    By $\Delta x(k) > \Delta x(k+1)$ and $\frac{3+2\gamma}{6+3\gamma} \ge \frac{1}{2+\gamma}$, we have $\Delta \beta(k) > \Delta \beta(k+1)$ from the definition of $\Delta \beta(k)$'s.
    The strict inequality substantiates our earlier remark that the constraint would be satisfied automatically by the solution of the LP even if it was not stated explicitly.
\end{proof}

\subsubsection*{Large Bids: the Crux of AdWords}
%\label{sec:instantiation}

In light of the positive results by \citet{MehtaSVV/JACM/2007} for small bids that are at most half the budgets, the case of large bids, i.e., $\frac{1}{2} B_a < b_{ai} \le B_a$, can be viewed as the crux of the AdWords problem.
As a direct corollary of Lemma~\ref{lem:basic-lp-solution} and the $0.05144$-PanOCS for large bids in Theorem~\ref{thm:panocs-large-bid}, we get the first online algorithm that breaks the $0.5$ barrier in the crux.

\begin{theorem}
    \label{thm:basic-large-bid}
    Suppose all nonzero bids are large, i.e., $\frac{1}{2} B_a < b_{ai} \le B_a$ or $b_{ai} = 0$ for any advertiser $a \in A$ and any impression $i \in I$.
    Then, Algorithm~\ref{alg:pd-algorithm} with the $\gamma = 0.05144$-PanOCS in Theorem~\ref{thm:panocs-large-bid} is $\Gamma$-competitive for:
    \[
        \Gamma = \frac{3+2\gamma}{6+3\gamma} > 0.5041
        ~.
    \]
\end{theorem}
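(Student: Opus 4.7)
The plan is to assemble the result by combining three ingredients already established in the paper: the $0.05144$-PanOCS for large bids from Theorem~\ref{thm:panocs-large-bid}, the explicit parametric LP solution in Lemma~\ref{lem:basic-lp-solution}, and the corollary following Lemma~\ref{lem:basic-approximate-dual-feasible} which states that Algorithm~\ref{alg:pd-algorithm} is $\Gamma$-competitive whenever $\Gamma$, $\{\Delta\alpha(k)\}$, $\{\Delta\beta(k)\}$ form a feasible solution of the LP in Eqn.~\eqref{eqn:basic-factor-lp}. There is essentially no new analytical work — the only task is to verify that these pieces compose correctly and to evaluate the resulting ratio.

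First, I would instantiate Algorithm~\ref{alg:pd-algorithm} with the $\gamma$-PanOCS provided by Theorem~\ref{thm:panocs-large-bid}, setting $\gamma = 0.05144$; this is legitimate under the hypothesis that every nonzero bid satisfies $\tfrac{1}{2} B_a < b_{ai} \le B_a$, which is precisely the assumption of Theorem~\ref{thm:panocs-large-bid}. Second, I would plug the same value $\gamma = 0.05144$ into the closed-form expressions of Lemma~\ref{lem:basic-lp-solution} to obtain parameters $\Delta\alpha(k)$ and $\Delta\beta(k)$ and the value $\Gamma = \tfrac{3+2\gamma}{6+3\gamma}$; the lemma guarantees that this choice satisfies all constraints of the LP in Eqn.~\eqref{eqn:basic-factor-lp}, namely Eqn.~\eqref{eqn:beta-r-definition}, \eqref{eqn:beta-bound-not-to-a}, \eqref{eqn:random-vs-deter}, \eqref{eqn:beta-bound-half-to-a}, \eqref{eqn:bound-at-limit}, and \eqref{eqn:dual-monotonicity}. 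Third, by the corollary following Lemma~\ref{lem:basic-approximate-dual-feasible}, the algorithm is $\Gamma$-competitive for this very $\Gamma$.

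The final step is just arithmetic: substitute $\gamma = 0.05144$ to obtain
\[
    \Gamma \;=\; \frac{3 + 2 \cdot 0.05144}{6 + 3 \cdot 0.05144} \;=\; \frac{3.10288}{6.15432} \;>\; 0.5041,
\]
which yields the claimed bound.

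There is no genuine obstacle here: the work has been done in Lemma~\ref{lem:basic-lp-solution} (which handles feasibility of the parametric solution for every $\gamma \in [0,1]$) and in Theorem~\ref{thm:panocs-large-bid} (which supplies the PanOCS under the large-bid assumption). The one thing worth double-checking is that the hypothesis $\tfrac{1}{2} B_a < b_{ai} \le B_a$ in Theorem~\ref{thm:basic-large-bid} matches the hypothesis $\tfrac{1}{2} B_a \le b_{ai} \le B_a$ in Theorem~\ref{thm:panocs-large-bid}; since the former is a strict subset of the latter, the PanOCS guarantee transfers without issue.
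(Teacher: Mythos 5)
Your proposal is correct and matches the paper exactly: the paper itself presents Theorem~\ref{thm:basic-large-bid} as a direct corollary of Lemma~\ref{lem:basic-lp-solution} (instantiated at $\gamma = 0.05144$) combined with the PanOCS of Theorem~\ref{thm:panocs-large-bid}, feeding into the corollary of Lemma~\ref{lem:basic-approximate-dual-feasible}. Your composition of the three ingredients, the arithmetic check, and the observation that the strict-inequality large-bid hypothesis is subsumed by the PanOCS hypothesis are all exactly what the paper's (implicit) proof requires.
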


\subsubsection*{General Bids: a Weaker Version of Theorem~\ref{thm:main}}
%\label{sec:general-bid}

Next, consider a restricted version of Algorithm~\ref{alg:pd-algorithm} such that for any advertiser $a$ and any point $y \in [0, B_a)$, $y$ is semi-assigned at most $\kmax$ times for some positive integer $\kmax$.
This can be achieved by letting $\Delta x(k) = \Delta \alpha(k) = \Delta \beta(k) = 0$ for any $k > \kmax$.
The restriction allows us to use the PanOCS for general bids in Theorem~\ref{thm:panocs-general-bid}.
We need, however, a solution to the LP in Eqn.~\eqref{eqn:basic-factor-lp} under the restriction.
A natural choice is adopting the solution in Lemma~\ref{lem:basic-lp-solution} directly for $k \le \kmax$, and decreasing the competitive ratio $\Gamma$ accordingly to preserve feasibility.

\begin{lemma}
    \label{lem:basic-lp-solution-general}
    For any $0 \le \gamma \le 1$, the following is a solution of the LP in Eqn.~\eqref{eqn:basic-factor-lp}:
    \begin{equation}
        \label{eqn:basic-lp-solution-general}
        \begin{aligned}
            \Gamma & = \frac{3 + 2\gamma}{6+3\gamma} - 2^{-\kmax} (1-\gamma)^{\kmax-1} ~. \\[2ex]
            \Delta \alpha(k) & =
            \begin{cases}
                \displaystyle
                \frac{3+\gamma}{6+3\gamma} \cdot \Delta x(1) & k = 1 ~; \\[2ex]
                \displaystyle
                \frac{1+\gamma}{2+\gamma} \cdot \Delta x(k) & 2 \le k \le \kmax ~; \\[2ex]
                0 & k > \kmax ~. \\
            \end{cases} \\[2ex]
            \Delta \beta(k) & =
            \begin{cases}
                \displaystyle
                \frac{3+2\gamma}{6+3\gamma} \cdot \Delta x(1) & k = 1 ~; \\[2ex]
                \displaystyle
                \frac{1}{2+\gamma} \cdot \Delta x(k) & 2 \le k \le \kmax ~; \\[2ex]
                0 & k > \kmax ~.
            \end{cases}
        \end{aligned}
    \end{equation}
\end{lemma}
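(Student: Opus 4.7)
The plan is to verify the six constraints of the LP in Eqn.~\eqref{eqn:basic-factor-lp} one by one, just as in the proof of Lemma~\ref{lem:basic-lp-solution}, by leveraging the fact that the truncated parameters agree with the untruncated ones for all $k \le \kmax$, and by showing that the decrease of $\Gamma$ by $2^{-\kmax}(1-\gamma)^{\kmax-1}$ is enough to absorb the ``missing mass'' lost to truncation. The central quantity to compute is the tail $T \defeq \sum_{\ell > \kmax} \Delta x(\ell)$. A direct geometric-series calculation, using $\Delta x(\ell) = 2^{-\ell}(1-\gamma)^{\ell-2}(1+\gamma)$ for $\ell \ge 2$, gives $T = 2^{-\kmax}(1-\gamma)^{\kmax-1}$, which is precisely the amount subtracted from $\Gamma$.

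First I would dispose of the easy constraints. Eqn.~\eqref{eqn:beta-r-definition} is trivial for $k > \kmax$ since all three terms vanish, and holds with equality for $k \le \kmax$ exactly as before. Eqn.~\eqref{eqn:dual-monotonicity} and Eqn.~\eqref{eqn:random-vs-deter} are immediate: truncating to zero only helps monotonicity, and truncating the tail of $\Delta \beta$ only decreases the RHS of Eqn.~\eqref{eqn:random-vs-deter} while leaving the LHS unchanged for $k \le \kmax$ and making both sides zero for $k \ge \kmax$.

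Next I would handle the three main feasibility conditions. For Eqn.~\eqref{eqn:beta-bound-not-to-a} with $k < \kmax$, the LHS is identical to that of Lemma~\ref{lem:basic-lp-solution}, which equals $\frac{3+2\gamma}{6+3\gamma}$, strictly larger than the new $\Gamma$. For $k \ge \kmax$, the LHS collapses to $\sum_{\ell=1}^{\kmax} \Delta \alpha(\ell) = \frac{3+2\gamma}{6+3\gamma} - \frac{1+\gamma}{2+\gamma}\, T$, which dominates $\Gamma = \frac{3+2\gamma}{6+3\gamma} - T$ because $\frac{1+\gamma}{2+\gamma} \le 1$. The same calculation settles Eqn.~\eqref{eqn:bound-at-limit} since its LHS equals $\sum_{\ell=1}^{\kmax}\Delta\alpha(\ell)$. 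For Eqn.~\eqref{eqn:beta-bound-half-to-a} with $k \le \kmax$, the LHS decreases from the untruncated value by exactly $\sum_{\ell > \kmax}\Delta\beta(\ell) = \frac{1}{2+\gamma}T \le T$, and the decrease in $\Gamma$ is $T$, so the inequality survives; for $k > \kmax$ the case reduces to Eqn.~\eqref{eqn:bound-at-limit}.

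The only step requiring a bit of care is the tail computation and the bookkeeping that matches the lost mass in Eqn.~\eqref{eqn:beta-bound-not-to-a}, \eqref{eqn:beta-bound-half-to-a}, and \eqref{eqn:bound-at-limit} against the decrease in $\Gamma$; there is no genuine obstacle, because every truncated tail is a constant multiple (with coefficient in $[0,1]$) of $T = 2^{-\kmax}(1-\gamma)^{\kmax-1}$, and $\Gamma$ is lowered by the full $T$, giving slack in every constraint. This completes verification of all six conditions and establishes the lemma.
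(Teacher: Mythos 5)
Your proposal is correct and follows essentially the same route as the paper: compute the truncated tail $\sum_{\ell > \kmax} \Delta x(\ell) = 2^{-\kmax}(1-\gamma)^{\kmax-1}$, observe that the loss to each feasibility constraint in Eqn.~\eqref{eqn:beta-bound-not-to-a}, \eqref{eqn:beta-bound-half-to-a}, and \eqref{eqn:bound-at-limit} is at most this amount (indeed a factor in $[0,1]$ of it), and absorb it by lowering $\Gamma$ accordingly, with the remaining constraints trivially preserved. Your explicit constraint-by-constraint bookkeeping only fills in details the paper leaves implicit (the one slip, claiming both sides of Eqn.~\eqref{eqn:random-vs-deter} vanish at $k = \kmax$ where in fact only the right-hand side does, is harmless since the inequality still holds trivially).
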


\begin{proof}
    It follows by Lemma~\ref{lem:basic-lp-solution} that the contributions of $\Delta \alpha(k)$'s and $\Delta \beta(k)$'s, $k > \kmax$, to the approximate dual feasibility constraints, i.e., Eqn.~\eqref{eqn:beta-bound-not-to-a}, \eqref{eqn:beta-bound-half-to-a}, and \eqref{eqn:bound-at-limit}, are at most:
    \[
        \sum_{\ell = \kmax+1}^\infty \Delta x (\ell) = 2^{-\kmax} (1-\gamma)^{\kmax-1}
        ~.
    \]

    Hence, decreasing the competitive ratio $\Gamma$ by this amount restores approximate dual feasibility even after setting $\Delta \alpha(k) = \Delta \beta(k) = 0$ for $k > \kmax$.

    Finally, observe that the other constraints, i.e., Eqn.~\eqref{eqn:beta-r-definition}, \eqref{eqn:random-vs-deter}, and \eqref{eqn:dual-monotonicity}, are trivially preserved after letting $\Delta x(k) = \Delta \alpha(k) = \Delta \beta(k) = 0$ for $k > \kmax$.
\end{proof}

Consider the $\gamma$-PanOCS in Theorem~\ref{thm:panocs-general-bid} where $\gamma = 0.01245 \cdot \kmax^{-1}$.
The competitive ratio from the above solution is:
\[
    \Gamma = \frac{3 + 2\gamma}{6+3\gamma} - 2^{-\kmax} (1-\gamma)^{\kmax-1} = \frac{1}{2} + \Omega \big( \kmax^{-1} \big) - 2^{-\kmax} (1-\gamma)^{\kmax-1}
    ~.
\]

The second term is inverse proportional to $\kmax$ while the third term decreases exponentially in $\kmax$.
Hence, by choosing a sufficiently large $\kmax$, the competitive ratio is strictly larger than half.
Indeed, letting $\kmax = 18$ gives $\Gamma > 0.50005$.

\clearpage

\section{Panoramic Online Correlated Selection}
\label{sec:panocs}

This section details the design and analysis of the PanOCS algorithms used in the last section.
We first restate the definition of PanOCS below.

\defpanocs*

We start with a warmup algorithm in Subsection~\ref{sec:panocs-warmup} which gives a simple yet weaker $\frac{1}{64}$-PanOCS for large bids, substantiating the proof sketch in the previous section.
Then, we explain how to improve and generalize the warmup algorithm to prove Theorem~\ref{thm:panocs-large-bid} and Theorem~\ref{thm:panocs-general-bid} in Subsection~\ref{sec:panocs-large-bid} and Subsection~\ref{sec:panocs-general-bid} respectively.

\subsection{Warmup: \texorpdfstring{$\frac{1}{64}$}{1/64}-PanOCS for Large Bids}
\label{sec:panocs-warmup}

This subsection explains the basics of PanOCS algorithms and their analyses through a proof of the following theorem.

\begin{theorem}[Weaker Version of Theorem~\ref{thm:panocs-large-bid}]
    \label{thm:panocs-large-bid-weak}
    Suppose all nonzero bids are large, i.e., we have $\frac{1}{2} B_a < b_{ai} \le B_a$ or $b_{ai} = 0$ for any $a \in A$ and any $i \in I$.
    Then, there is a $\frac{1}{64} \approx 0.0156$-PanOCS.
\end{theorem}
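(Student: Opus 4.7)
The plan is to substantiate the proof sketch already given in Section~\ref{sec:panocs-glimpse}, turning it into a complete proof of the $\frac{1}{64}$-PanOCS guarantee. The algorithm itself is essentially dictated by the sketch: on each randomized round with impression $i$ semi-assigned to $\{a_1, a_2\}$, sample independently $a^* \in \{a_1,a_2\}$ uniformly and $j \in \{-2,-1,1,2\}$ uniformly; if $j > 0$, select between $a_1, a_2$ by an independent fair coin and ``send a token'' to future round $(i+j)_{a^*}$; if $j < 0$, check whether past round $(i+j)_{a^*}$ sent a token to $i$, and if so choose the opposite of that round's selection on $a^*$, otherwise use a fresh fair coin. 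The first step of the proof is to write this out precisely, noting that marginally each round still picks its two candidates with probability $\tfrac12$ each, so marginal fairness is preserved.

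Next I would fix an advertiser $a$ and a point $y \in [0, B_a)$, and let $i_1 < i_2 < \cdots < i_k$ be the rounds whose semi-assignments to $a$ include $y$. The core geometric lemma to establish is: under the large-bid hypothesis, $i_{\ell+1}$ is the first or second impression semi-assigned to $a$ after $i_\ell$. This follows from the panoramic interval-level assignment in Section~\ref{sec:panorama}: after $i_\ell$ the ``cursor'' $y^*$ advances by $b_{a i_\ell} > B_a/2$ along the circle $[0,B_a)$, so within two further semi-assignments the cursor must wrap past $y$, forcing $y$ to be re-covered. For each pair $(i_\ell, i_{\ell+1})$ define the event $E_\ell$ that $i_\ell$ sends a token to $i_{\ell+1}$ and $i_{\ell+1}$ receives it, i.e.\ $i_\ell$ picks $a^* = a$ and the right $j \in \{1,2\}$, while $i_{\ell+1}$ picks $a^* = a$ and the right $j \in \{-1,-2\}$. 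Independence of the $(a^*, j)$ choices across rounds gives $\Pr[E_\ell] = \tfrac{1}{8} \cdot \tfrac{1}{8} = \tfrac{1}{64}$, and whenever $E_\ell$ holds the selection bits in rounds $i_\ell, i_{\ell+1}$ are forced to be opposite, so $a$ is selected exactly once across those two rounds.

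The main obstacle is the correlation step. Adjacent events $E_{\ell-1}$ and $E_\ell$ both reference the $(a^*, j)$-draw of $i_\ell$, but they require disjoint values of $j_\ell$ (negative vs.\ positive), so conditioning on $\overline{E_{\ell-1}}$ only makes the ``pass'' side of $E_\ell$ at least as likely. I would formalize this by revealing the $(a^*_\ell, j_\ell)$ choices in order of $\ell$ and checking that for any history consistent with $\overline{E_1} \cap \cdots \cap \overline{E_{\ell-1}}$, the conditional probability of $E_\ell$ is still at least $\tfrac{1}{64}$, since the fresh randomness of $i_{\ell+1}$ supplies the ``receive'' side unconditionally. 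A sequential exposure then yields $\Pr\bigl[\bigcap_\ell \overline{E_\ell}\bigr] \le (1 - \tfrac{1}{64})^{k-1}$. Finally, I would condition on $\bigcap_\ell \overline{E_\ell}$ and observe that on this event every selection bit in rounds $i_1,\ldots, i_k$ is either an independent fresh coin or a pass-opposite pairing \emph{between two rounds of which at least one lies outside $\{i_1,\ldots,i_k\}$} (since inside-the-sequence pairings are exactly the $E_\ell$'s we are excluding). Hence conditional on this event the $k$ bits on $a$ in rounds $i_1,\ldots,i_k$ remain independent fair coins, so $a$ is missed in all $k$ with probability $2^{-k}$. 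Multiplying the two bounds gives $x_a(y) \ge 1 - 2^{-k}(1-\tfrac{1}{64})^{k-1}$, which is exactly the $\frac{1}{64}$-PanOCS guarantee required by Definition~\ref{def:panocs}.
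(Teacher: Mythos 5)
Your proposal is essentially the paper's own route: it fleshes out the sketch from Section~\ref{sec:panocs-glimpse}, which the paper then formalizes via the ex-ante/ex-post dependence graphs and the recursion $f_m = f_{m-1} - \frac{1}{64}f_{m-2}$. Your sequential-exposure step is equivalent to that recursion, and the idea behind it is sound: the ``send to $i_{\ell+1}$'' value and the ``receive from $i_{\ell-1}$'' value of $i_\ell$'s draw are disjoint, and the receive side of $E_\ell$ uses the fresh draw of $i_{\ell+1}$, so $\Pr[E_\ell \mid \overline{E_1}\cap\cdots\cap\overline{E_{\ell-1}}] \ge \frac{1}{64}$. One caveat on the formalization: state the bound conditioned on the event $\overline{E_1}\cap\cdots\cap\overline{E_{\ell-1}}$ itself rather than ``for any history,'' since a history that already reveals $i_\ell$'s draw can make the conditional probability of $E_\ell$ equal to $0$; the correct argument is that conditioning can only raise the probability that $i_\ell$ picked the send value, because the excluded event constrains a disjoint value of that same draw.

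The step that fails as written is the last one. Conditioned on $\bigcap_\ell \overline{E_\ell}$, the $k$ selection bits need not be independent, because the $E_\ell$'s do not exhaust the pairings inside $\{i_1,\dots,i_k\}$: $i_\ell$ can be paired with a non-consecutive $i_{\ell+2}$ via $a$ (e.g.\ when $i_{\ell+1}=(i_\ell+1)_a$ and $i_{\ell+2}=(i_\ell+2)_a$ both cover $y$), and two sequence members can be paired via the \emph{other} advertiser they share. So the parenthetical ``inside-the-sequence pairings are exactly the $E_\ell$'s'' is false, and when such a pairing is realized two of the bits are perfectly (negatively) correlated. The conclusion $\Pr[\text{miss all }k \mid \bigcap_\ell\overline{E_\ell}]\le 2^{-k}$ is still true, but needs the extra observation that any such additional pairing forces exactly one of the two rounds to select $a$ (if the pairing is via a shared advertiser $a'$, both rounds' candidate sets are $\{a,a'\}$, so being opposite on $a'$ is being opposite on $a$); hence the miss-all probability is $0$ on realizations of the pairing structure with an inside pair and exactly $2^{-k}$ on those without, giving the bound after conditioning on that structure. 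This is exactly how the paper handles it: the event ``no arc $(i_j,i_{j+1})_a$'' is used only as a \emph{necessary} condition for ``no arc at all among $i_1,\dots,i_k$,'' and its footnote notes that an arc due to another advertiser also assigns $y$ exactly once.
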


We adopt the concepts and \emph{ex-ante} and \emph{ex-post} dependence graphs from the research on OCS by Huang and Tao~\cite{Huang/arXiv/2019, HuangT/arXiv/2019}.
To avoid confusion with the vertices and edges in the bipartite graph of AdWords, we shall refer to the counterparts in the dependence graphs as nodes and arcs respectively.

\paragraph{Ex-ante Dependence Graph.}
Let $I^R$ denote the set of impressions in randomized rounds.
The \emph{ex-ante} dependence graph $D$ is a directed graph with a node for every impression $i \in I^R$;
we abuse notation and refer to the node also as $i$.
%Recall that every impression $i \in R$ is semi-assigned to a pair of advertiser-subset combinations, which we denote as $A_i$.
Recall that we write $i < i'$ if $i$ arrives before $i'$.

\begin{definition}[Correlation among Randomized Rounds]
    \label{def:randomized-round-correlation}
    Suppose $i < i'$ are two impressions semi-assigned to an advertiser $a$ and to subsets $Y_{ai}$ and $Y_{ai'}$ respectively.
    \begin{enumerate}
        \item They are \emph{related} w.r.t.\ advertiser $a$ if the subsets overlap, i.e., if there exists $y \in Y_{ai} \cap Y_{ai'}$.
        \item If further there is no impression $i''$ between them, i.e., $i < i'' < i'$, which is also semi-assigned to advertiser $a$ and a subset containing $y$, we say that $i < i'$ are \emph{adjacent} w.r.t.\ advertiser $a$.
        \item Otherwise, we say that $i < i'$ are \emph{unrelated} w.r.t.\ advertiser $a$.
    \end{enumerate}
\end{definition}

%Formally, let $(a, Y_{ai}) \in A_i$ and $(a, Y_{ai'}) \in A_{i'}$ be the advertiser-subset combinations involving advertiser $a$.
%The point $y$ in the above definition must satisfy (1) $y \in Y_{ai} \cap Y_{ai'}$, and (2) $y \notin Y_{ai''}$ for any $i < i'' < i'$ that are semi-assigned to advertiser $a$.

For large bids, two impressions semi-assigned to the same advertiser $a$ are always related.
We make the above definition more general so that it applies to arbitrary bids.

Let there be an arc $(i, i')_a$ in the \emph{ex-ante} dependence graph $D$ if $i < i'$ are adjacent w.r.t.\ $a$.
The subscript helps distinguish parallel arcs, when $i$ and $i'$ are semi-assigned to the same pairs of advertisers (yet potentially distinct subsets).
See Figure~\ref{fig:dependence-graph} for an illustrative example.

Further, we sometimes say that $i < i'$ are related or adjacent without specifying an advertiser, which means the relation holds \emph{for some} advertiser $a$.
Similarly, we say that $i < i'$ are unrelated without specifying an advertiser, which means they are unrelated w.r.t.\ \emph{any} advertiser $a$.
%Otherwise, $i < i'$ are \emph{independent}.

Informally, our PanOCS ensures that for any pair of adjacent nodes $i < i'$, with probability $\gamma$ the $\gamma$-PanOCS correlates the decisions perfectly negatively:
it selects advertiser $a$ in round $i'$ if it does not select $a$ in round $i$, and vice versa.
Further, if two nodes are related, the selections therein are either independent or negatively correlated.
Finally, if two nodes are unrelated, the selections could be arbitrarily correlated.
In this subsection and the next, the PanOCS algorithms for large bids make pairwise independent decisions for unrelated nodes.
The PanOCS for general bids in the last subsection, however, crucially utilizes the freedom of correlating unrelated nodes positively.

\begin{lemma}
    \label{lem:large-bid-bounded-degree}
    If all nonzero bids are large, any $i \in I^R$ has at most $4$ out-arcs and at most $4$ in-arcs.
\end{lemma}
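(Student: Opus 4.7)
Every randomized round $i \in I^R$ is semi-assigned to exactly two advertisers $a_1, a_2$, and each arc of $D$ incident to $i$ is labeled by one of them. It therefore suffices to show that, for each $a \in \{a_1, a_2\}$, $i$ has at most two out-arcs and at most two in-arcs of the form $(\cdot, \cdot)_a$; summing yields the claimed bound of $4$. I will handle out-arcs; in-arcs are entirely symmetric (scanning backwards from $y^*_i$ instead of forwards).

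Fix such an $a$ with subset $Y_{ai}$ of measure $b_{ai} > B_a/2$. Let $i_1 < i_2 < \cdots$ be the randomized rounds semi-assigned to $a$ strictly after $i$, possibly interleaved with deterministic rounds to $a$. By Definition~\ref{def:randomized-round-correlation}, $(i, i_j)_a$ is an arc exactly when there exists $y \in Y_{ai} \cap Y_{ai_j}$ not contained in $Y_{ai_{j'}}$ for any $j' < j$. Equivalently, the out-arc count equals the number of indices $j$ at which $Y_{ai} \cap \bigcup_{j' \le j} Y_{ai_{j'}}$ strictly grows.

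The heart of the argument is the inclusion
\[
    Y_{ai} \;\subseteq\; Y_{ai_1} \,\cup\, Y_{ai_2} \,\cup\, Y_D,
\]
where $Y_D$ denotes the set of points deterministically assigned to $a$ by the time $i_2$ completes. Since every subsequent $Y_{ai_j}$ with $j \ge 3$ is disjoint from $Y_D$ by construction of the panoramic interval-level assignment, the inclusion implies $Y_{ai} \cap Y_{ai_j} \subseteq Y_{ai_1} \cup Y_{ai_2}$ for $j \ge 3$, ruling out any new witness $y$ and capping the out-arc count at $2$.

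To prove the inclusion, I would unfold the panoramic placement starting from the position $y^*_i$ at the end of $Y_{ai}$. Each subsequent round to $a$, whether semi-assigned or deterministic, lays down an arc of length $b_{a\cdot} > B_a/2$ (when nonzero) along the live circle, consecutively and skipping the current $Y_D$. The total mass placed between the end of $Y_{ai}$ and the end of $Y_{ai_2}$ is at least $b_{ai_1} + b_{ai_2} > B_a$, so these placements must wrap the full circle and return past the starting point of $Y_{ai}$. Consequently every point of $Y_{ai}$ is either swallowed into $Y_D$ by an intermediate deterministic round to $a$ or swept over by $Y_{ai_1}$ or $Y_{ai_2}$, yielding the inclusion.

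The main obstacle will be the bookkeeping of $Y_D$ during wrap-around: deterministic rounds to $a$ interleaved between $i$ and $i_2$ shrink the live circle and can themselves cover parts of $Y_{ai}$. The large-bid regime controls this in two ways. First, each nonzero bid exceeds $B_a/2$, so every placement consumes more than half of the budget worth of arc length, making two semi-assignments alone sufficient to wrap past $Y_{ai}$. Second, after a single deterministic round to $a$ the remaining free space drops below $B_a/2$, which triggers the boundary case of the $\oplus_Y$ operator and can be isolated as a special subcase; in that subcase the remaining subsets coincide with the entire live circle and trivially contain $Y_{ai} \setminus Y_D$. Combining these observations gives the inclusion, and hence the bound of $2$ out-arcs per advertiser and $4$ in total; the in-arc bound follows by running the same argument backwards in time.
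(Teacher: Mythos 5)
Your proposal is correct and takes essentially the same route as the paper: fix one of the two advertisers $a$, argue that the next two randomized rounds semi-assigned to $a$ (together with points newly placed into $Y_D$) sweep every point of $Y_{ai}$ because each large bid exceeds $\frac{B_a}{2}$, so $i$ cannot be adjacent to any later round w.r.t.\ $a$, giving at most two out-arcs per advertiser and the in-arc bound by symmetry. The only difference is that you make the $Y_D$ and boundary-case bookkeeping explicit, which the paper's shorter proof leaves implicit.
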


\begin{proof}
    Fix any impression $i \in I^R$.
    Let $a$ and $a'$ be the advertisers chosen in this randomized round.
    We will show that there are at most two out-arcs $(i, i')_a$ w.r.t.\ advertiser $a$.
    Then, by symmetric arguments, there are at most two out-arcs w.r.t.\ advertiser $a'$, and at most two in-arcs w.r.t.\ each of $a$ and $a'$.
    Putting together proves the lemma.

    Let $i_1$ and $i_2$ be the next two randomized rounds after $i$ which semi-assign to advertiser $a$.
    We claim that $i$ has no out-arcs to any node other than $i_1$ and $i_2$ w.r.t.\ advertiser $a$, because when bids are large every point $y \in [0, B_a)$ is semi-assigned at least once in rounds $i_1$ and $i_2$.
    Hence, $i < i'$ cannot be adjacent for any later impression $i' \ne i_1, i_2$ because for any choice of $y \in [0, B_a)$ there always exists $i'' = i_1$ or $i_2$ such that $i < i'' < i'$ and $y \in Y_{ai''}$.
\end{proof}

%Next we prove a weaker version of Lemma~\ref{lem:large-bid-pocs} where $\gamma = \frac{1}{64} = 0.015625$.
%The weaker result uses Algorithm~\ref{alg:large-bid-pocs-weak}.

\begin{figure}
    \centering
    \begin{subfigure}{.48\textwidth}
        \includegraphics[width=\textwidth]{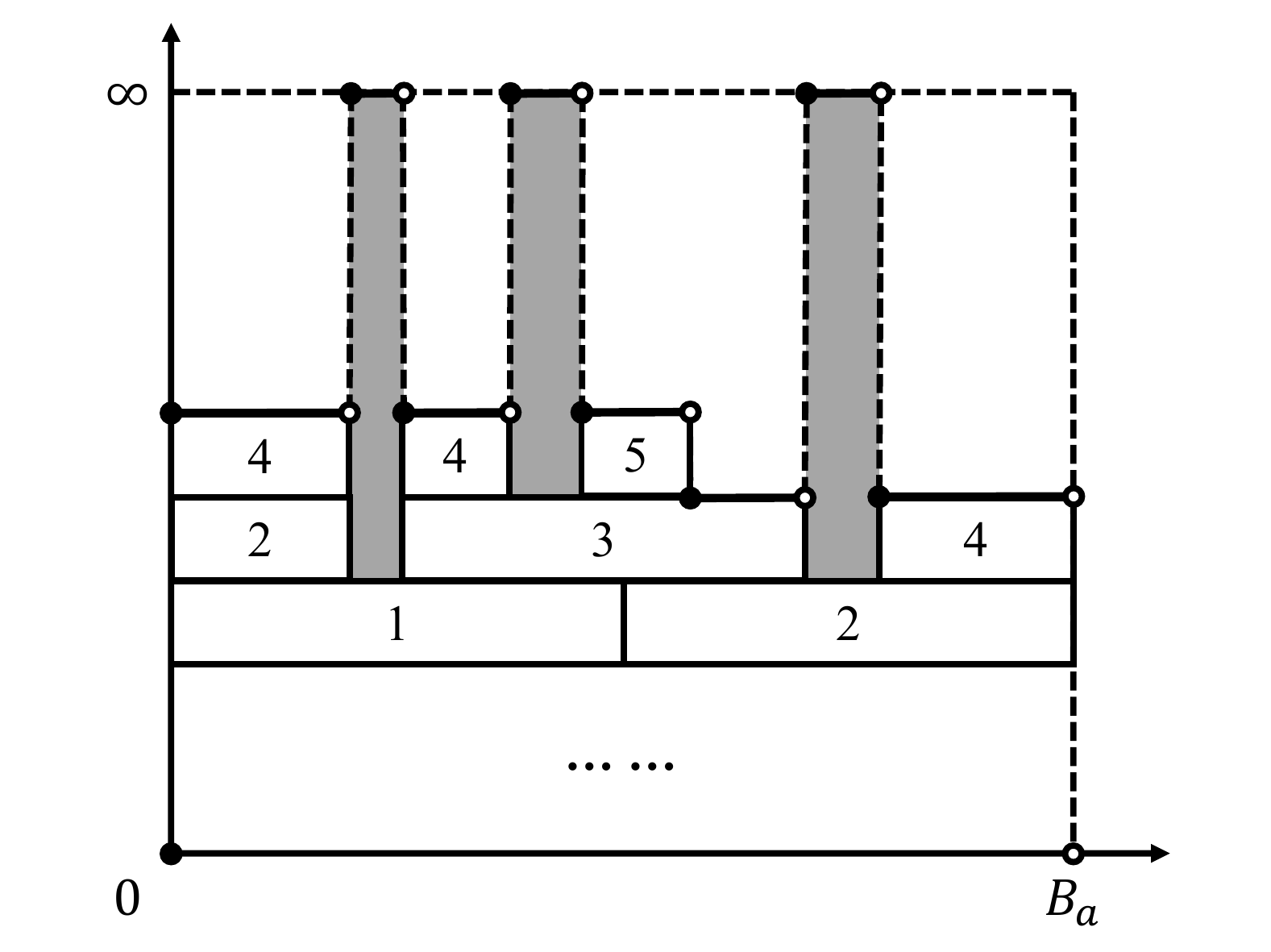}
        \caption{Semi-assignments to advertiser $a$}
    \end{subfigure}
    \begin{subfigure}{.48\textwidth}
        \includegraphics[width=\textwidth]{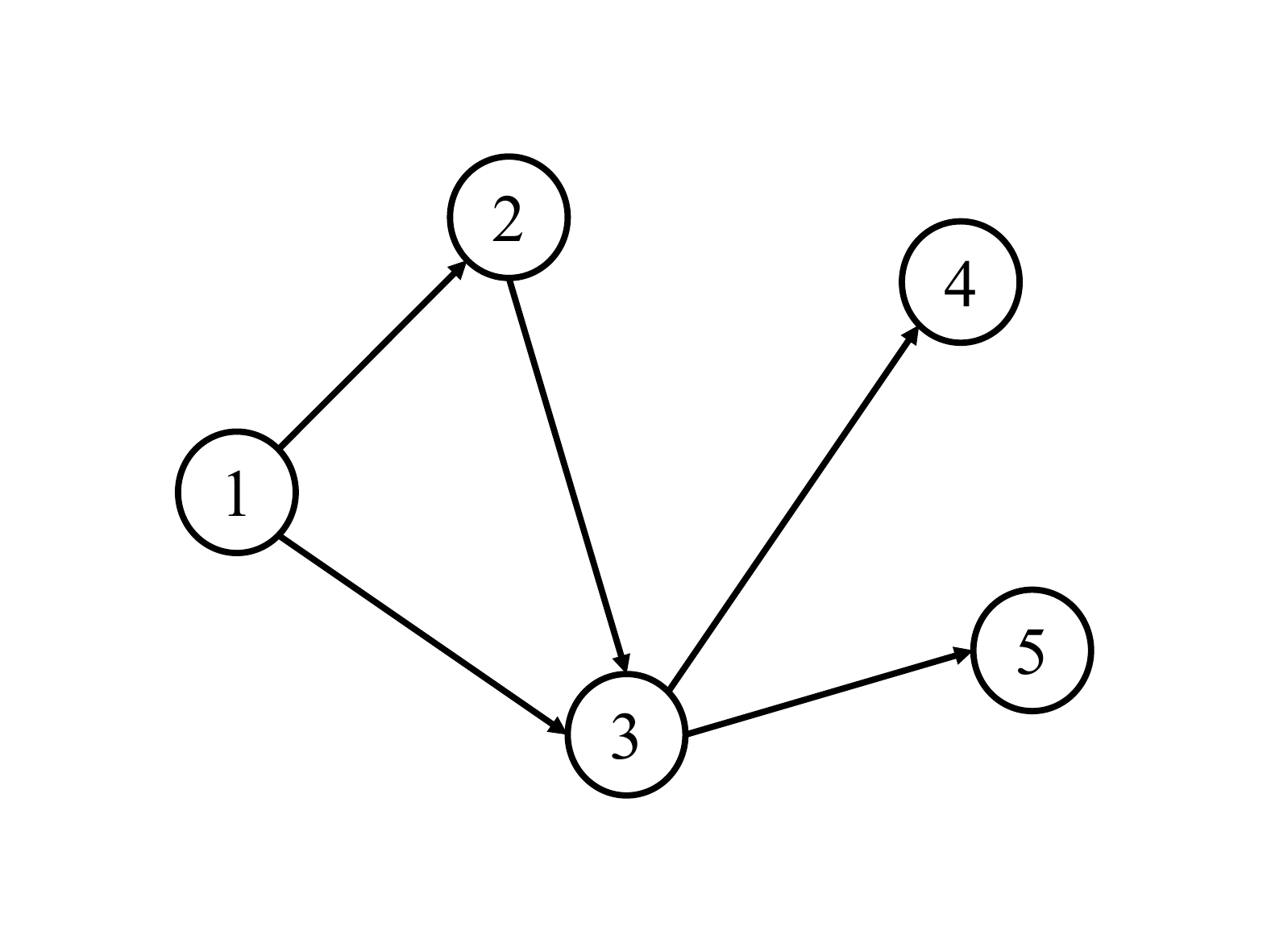}
        \caption{Arcs w.r.t.\ advertiser $a$}
    \end{subfigure}
    \caption{Example of \emph{ex-ante} dependence graph}
    \label{fig:dependence-graph}
\end{figure}

\paragraph{Ex-post Dependence Graph.}
The PanOCS algorithms in this paper follow the same recipe:
construct a random subgraph $D^*$ of $D$;
then, for every arc $(i, i')_a$ in the subgraph $D^*$, it correlates the selections in $i$ and $i'$ perfectly negatively.
We call $D^*$ the \emph{ex-post} dependence graph.
The subgraph $D^*$ must not introduce positive correlation among related nodes.
For instance, we cannot include both $(i, i')_a$ and $(i', i'')_a$ in $D^*$ if $i$ and $i''$ are related, or else the decisions in $i$ and $i''$ will be perfectly positively correlated.
We construct $D^*$ to be a random matching.
Hence, conditioned on any realization of $D^*$, any pair of nodes are either independent or perfectly negatively correlated.

Concretely, on the arrival of each node $i$, we pick an incident arc randomly each with probability $\frac{1}{8}$;%
\footnote{Even though the out-arcs have yet to reveal themselves, the PanOCS may reference them as the first and second out-arcs w.r.t\ each of the two chosen advertisers.}
an arc of $D$ is included in $D^*$ if both incident nodes pick it.
See Algorithm~\ref{alg:large-bid-panocs}.

\begin{algorithm}[t]
    \caption{Panoramic Online Correlated Selection (Large Bids, $\gamma = \frac{1}{64}$)}
    \label{alg:large-bid-panocs}
    \begin{algorithmic}
        \FORALL{impression $i \in I^R$}
            %\STATE let $a, a'$ be the chosen advertisers
            \STATE add arcs from $i'$ to $i$ to $D$ for every existing $i'$ that correlates with $i$
            \STATE randomly pick one of its at most $8$ incident arc in $D$, each with probability $\frac{1}{8}$
            \IF{it picks an in-arc, say, $(i', i)_a$, and $i'$ also picks the arc}
                \STATE add arc $(i', i)_a$ to $D^*$
                \STATE select $a$ if it does \emph{not} select $a$ for $i'$, and select $a'$ otherwise (i.e., the opposite selection)
            \ELSE
                \STATE select with a fresh random bit
            \ENDIF
        \ENDFOR
    \end{algorithmic}
\end{algorithm}

\begin{proof}[Proof of Theorem~\ref{thm:panocs-large-bid-weak}]
    %(TODO, almost identical to the proof of $\frac{1}{16}$-OCS)
    Fix any advertiser $a$ and any point $y \in [0, B_a)$.
    Let $i_1 < i_2 < \dots < i_k$ be the subset of impressions that are semi-assigned to advertiser $a$ and subsets containing $y$.
    Suppose there exists an arc in $D^*$ between a pair of these nodes.
    Then, Algorithm~\ref{alg:large-bid-panocs} picks advertiser $a$ in exactly one of these two rounds.%
    \footnote{Importantly, this holds even if the arc was due to an advertiser other than $a$.}
    If no such arc exists, Algorithm~\ref{alg:large-bid-panocs} picks $a$ with probability half independently in each of the $k$ rounds;
    the probability that $a$ is never chosen is $2^{-k}$.
    Hence, it suffices to show that the probability that there is no arc in $D^*$ among $i_1$ to $i_k$ is at most $(1 - \frac{1}{64})^{k-1}$.

    In fact, we will show a slightly stronger result.
    For any $0 \le m \le k$, let $F_m$ denote the event that no arc of the form $(i_j, i_{j+1})_a$ exists among the first $m$ nodes.
    Let $f_m$ denotes its probability.
    Observe that when $m = k$, this is a necessary condition of the event we seek to analyze.
    Hence, it is sufficient to bound the probability of $F_k$.
    We claim that $f_m$ is recursively defined as follows:
    \begin{equation}
        \label{eqn:large-bid-panocs-recursion}
        f_0 = f_1 = 1, f_m = f_{m-1} - \frac{1}{64} f_{m-2}
        ~.
    \end{equation}

    Observe that $f_{m-2} \ge f_{m-1}$ because $F_{m-1}$ is a subevent of $F_{m-2}$.
    The desired bound follows by:
    \[
        f_m = f_{m-1} - \frac{1}{64} f_{m-2} \le \big( 1 - \frac{1}{64} \big) f_{m-1}
        ~.
    \]

    The base cases of Eqn.~\eqref{eqn:large-bid-panocs-recursion} are trivial.
    It remains to show the recurrence.
    To do so, we further introduce an auxiliary subevent $G_m$ for any $1 \le m \le k-1$, which not only requires $F_m$, but also that the last impression $i_m$ picks arc $(i_m, i_{m+1})_a$.
    Let $g_m$ denote the probability of the subevent.

    For the subevent to happen, the last impression $i_m$ must pick a specific arc $(i_m, i_{m+1})_a$, which happens with probability $\frac{1}{8}$.
    This choice of $i_m$ further ensures that arc $(i_{m-1}, i_m)$ cannot be realized.
    Therefore, the rest of the restriction is simply that no arc $(i_j, i_{j+1})_a$ is realized in the first $m-1$ rounds.
    Hence:
    \[
        g_m = \frac{1}{8} f_{m-1}
        ~.
    \]

    For the original event to happen, on the other hand, there are two cases.
    The first case is when $i_m$ picks arc $(i_{m-1}, i_m)_a$, which happens with probability $\frac{1}{8}$.
    In this case the rest of the restriction requires not only that no arc $(i_j, i_{j+1})_a$ is realized in the first $m-1$ rounds, but also that impression $i_{m-1}$ cannot pick arc $(i_{m-1}, i_m)_a$.
    In other word, it corresponds to the event w.r.t.\ the first $m-1$ rounds, \emph{excluding the subevent}.
    The other case is when $i_m$ picks one of the other $7$ arcs, which happens with probability $\frac{7}{8}$.
    In this case it once again reduces to having no arc $(i_j, i_{j+1})_a$ realized in the first $m-1$ rounds.
    Putting together we have:
    \[
        f_m = \frac{1}{8} \big(f_{m-1} - g_{m-1} \big) + \frac{7}{8} f_{m-1}
        ~.
    \]

    Cancelling $g_m$ using the previous equation proves Eqn.~\eqref{eqn:large-bid-panocs-recursion}.
\end{proof}

\subsection{Large Bids: Proof of Theorem~\ref{thm:panocs-large-bid}}
\label{sec:panocs-large-bid}

We further adopt the nomenclature from the research on OCS~\cite{Huang/arXiv/2019, HuangT/arXiv/2019}.
In the warmup PanOCS in the last subsection, whenever a node $i$ picks an out-arc, say, $(i, i')_a$, the PanOCS uses a fresh random bit to select an advertiser-subset combination in the round $i$;
further, the selection of round $i$ is ready to be received by node $i'$ should $i'$ also picks arc $(i, i')_a$.
To this end, we call node $i$ a \emph{sender} if it picks an out-arc, and call it a \emph{receiver} otherwise.
In the warmup algorithm, a sender's random bit is sent to one other node, which is received only if the other node chooses to be a receiver and further \emph{the randomly chosen in-arc happens to be the one from the sender}.

Next, we refine the italic part above to obtain the improved ratio in Theorem~\ref{thm:panocs-large-bid}.
We still let each node be a sender or a receiver randomly.
Further, each sender still sends its selection to a random out-neighbor.
Each receiver, however, proactively checks whether any in-neighbors are senders who pick its corresponding in-arcs.
If so, it randomly picks one such in-arc.
Finally, we optimize the probability of letting a node be a sender to obtain the ratio stated in Theorem~\ref{thm:panocs-large-bid}.

We further define the algorithm more generally for arbitrary bids, where the correlation occurs only among large ones.
The more general definition will be useful in the hybrid algorithm in the next section.
See Algorithm~\ref{alg:large-bid-panocs-improved}.

\begin{algorithm}[t]
    \caption{Panoramic Online Correlated Selection (Large Bids)}
    \label{alg:large-bid-panocs-improved}
    \begin{algorithmic}
        \smallskip
        \STATE \textbf{parameter:~} $0 \le p \le 1$, the probability of being a sender\\[1ex]
        \FORALL{impression $i \in I^R$}
            \STATE add arc $(i', i)_a$ to $D$ for any existing $i'$ adjacent to $i$ w.r.t.\ some advertiser $a$ s.t.\ \emph{$b_{ai}, b_{ai'} > \frac{B_a}{2}$}
            \medskip
            \STATE $i$ is a \textbf{sender} with probability $p$:
                \INDSTATE select with a fresh random bit
                \INDSTATE randomly pick an out-arc in $D$, each with probability $\frac{1}{4}$
            %\ELSE
            \medskip
            \STATE $i$ is a \textbf{receiver} with probability $1-p$:
                \INDSTATE \textbf{if} some sender $i'$ picks an in-arc $(i', i)_a$ of $i$ in $D$ \textbf{then}
                    \INDINDSTATE randomly pick such an in-arc $(i', i)_a$ and add it to $D^*$
                    \INDINDSTATE select $a$ if it does \emph{not} select $a$ for $i'$, and vice versa (i.e., the opposite selection)
                \INDSTATE \textbf{else}
                    \INDINDSTATE select with a fresh random bit
            \INDSTATE \textbf{end if}
        \ENDFOR
    \end{algorithmic}
\end{algorithm}

The next lemma analyzes Algorithm~\ref{alg:large-bid-panocs-improved} for any choice of sender probability $0 < p < 1$.
Theorem~\ref{thm:panocs-large-bid}, i.e., $\gamma = 0.05144$, is a corollary with the optimal $p = \frac{4}{9}$. %\approx 0.452835$.%
%\footnote{The optimal choice is $p \approx 0.407564$, yet it does not significantly improve the final competitive ratio.}

\begin{lemma}
    \label{lem:panocs-large-bid}
    If all nonzero bids are large, Algorithm~\ref{alg:large-bid-panocs-improved} is a $\gamma$-PanOCS for:
    \begin{equation}
        \label{eqn:panocs-large-bid-gamma}
        \gamma = \frac{1}{4} \big(1-p\big) p \big(1-\frac{3p}{8} \big)
        %\left(1 - \big(1-\frac{p}{4}\big)^4 \right)
        ~.
    \end{equation}
\end{lemma}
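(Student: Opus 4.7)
The plan is to mirror the warmup analysis (Theorem~\ref{thm:panocs-large-bid-weak}) while tracking the asymmetric sender/receiver structure of Algorithm~\ref{alg:large-bid-panocs-improved}. Fix an advertiser $a$ and a point $y$, let $i_1 < \cdots < i_k$ be the impressions semi-assigned to $a$ with $y$ in their associated subsets, and note that by the large-bid assumption each consecutive pair is adjacent with respect to $a$ at $y$, so every arc $(i_j, i_{j+1})_a$ lies in $D$. Write $S_j$ for the event that $(i_j, i_{j+1})_a$ lands in $D^*$, let $F_m := \bigcap_{j<m}\overline{S_j}$ with $f_m := \Pr(F_m)$, and let $F^*$ denote the event that $D^*$ contains no arc with both endpoints inside $\{i_1, \ldots, i_k\}$.

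First I would establish the marginal bound $\Pr(S_j) \ge \gamma$. The event $S_j$ requires $i_j$ to be a sender (prob.\ $p$) picking the specified out-arc (prob.\ $1/4$), $i_{j+1}$ to be a receiver (prob.\ $1-p$), and $i_{j+1}$ to draw this particular in-arc out of its sender-picked in-arcs. Letting $X$ count the other sender-picked in-arcs of $i_{j+1}$, Lemma~\ref{lem:large-bid-bounded-degree} caps its other in-degree at three and each is sender-picked with marginal probability $p/4$, so $\E[X] \le 3p/4$. The elementary inequality $\tfrac{1}{1+X} \ge 1 - \tfrac{X}{2}$ valid for integer $X \ge 0$ (with equality at $X \in \{0,1\}$), combined with linearity of expectation, yields $\E[\tfrac{1}{1+X}] \ge 1 - \tfrac{3p}{8}$, and hence $\Pr(S_j) \ge \tfrac{p}{4}(1-p)(1-\tfrac{3p}{8}) = \gamma$.

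Next I would use the fact that each sender picks at most one out-arc and each receiver at most one in-arc, so $D^*$ is a matching in $D$. Combined with the warmup-style observation that an arc of $D^*$ between two nodes in $\{i_1, \ldots, i_k\}$ forces opposite $a$-selections at its endpoints (regardless of which advertiser labels the arc, because each node has only two options), the event ``$a$ is never selected in $i_1, \ldots, i_k$'' implies $F^*$; under $F^*$ the $k$ selections live in distinct components of $D^*$ and hence are mutually independent fresh uniform bits, so $\Pr(y \text{ never assigned}) = \Pr(F^*) \cdot 2^{-k} \le \Pr(F_k) \cdot 2^{-k}$. It then remains to prove $f_k \le (1-\gamma)^{k-1}$. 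For this I introduce $T_m := \{i_m \text{ is a sender picking the out-arc to } i_{m+1}\}$ and let $g_m := \Pr(F_m \cap T_m)$. Because $T_m$ makes $i_m$ a sender it implies $\overline{S_{m-1}}$ (which requires $i_m$ to be a receiver), so $F_m \cap T_m = F_{m-1} \cap T_m$; since $T_m$ depends only on $i_m$'s independent bits, $g_m = \tfrac{p}{4} f_{m-1}$. Writing $S_m = T_m \cap R_m$ with $R_m := \{i_{m+1} \text{ is a receiver drawing the in-arc from } i_m\}$, applying the per-arc calculation conditionally gives $\Pr(R_m \mid F_m \cap T_m) \ge (1-p)(1-\tfrac{3p}{8})$, whence $\Pr(F_m \cap S_m) \ge \gamma f_{m-1} \ge \gamma f_m$ and $f_{m+1} \le (1-\gamma) f_m$, iterating to $f_k \le (1-\gamma)^{k-1}$ as required.

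The hard part will be justifying the conditional bound $\Pr(R_m \mid F_m \cap T_m) \ge (1-p)(1-3p/8)$: an in-neighbor of $i_{m+1}$ may coincide with some earlier $i_j$ whose random bits already appear inside $F_{m-1}$, so the conditional distribution of which other in-arcs of $i_{m+1}$ are sender-picked is not a priori the unconditional one. My plan is to expose the randomness in a careful order --- first the sender/receiver bits and arc-picks at $i_1, \ldots, i_m$ (on which $F_m \cap T_m$ is determined), then the extra sender/receiver bit of $i_{m+1}$ and any additional arc-picks needed by $R_m$ --- and to note that for each other in-arc $(i', i_{m+1})_{a'}$ the indicator that $i'$ picks it remains uniform over $i'$'s out-arc choices given $i'$'s sender bit, hence sender-picked with probability at most $p/4$ even after conditioning on $F_m \cap T_m$. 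Chaining these conditional-independence arguments is the technical heart of the proof.
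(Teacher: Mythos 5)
Your skeleton is the same as the paper's proof of Lemma~\ref{lem:panocs-large-bid-general}: reduce to the chain $i_1 < \dots < i_k$ at the point $y$, factor out $2^{-k}$ from the fresh bits conditioned on no realized arc, introduce the auxiliary event $G_m = F_m \cap T_m$ with $g_m = \frac{p}{4} f_{m-1}$, and close the recursion $f_{m+1} \le (1-\gamma) f_m$ using $f_{m-1} \ge f_m$. The genuine gap is exactly the step you flag, the conditional bound $\Pr(R_m \mid F_m \cap T_m) \ge (1-p)\big(1-\frac{3p}{8}\big)$, and it is caused by your choice to define $F_m$ through the consecutive arcs $(i_j, i_{j+1})_a$ only. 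A competing in-neighbor of $i_{m+1}$ may itself be an earlier chain node $i_\ell$ (adjacent to $i_{m+1}$ w.r.t.\ the other advertiser of round $i_\ell$, or w.r.t.\ $a$ at another point), and for such a competitor your claim that it remains sender-picked with conditional probability at most $p/4$ is false: already conditioning on $\overline{S_\ell}$ alone \emph{inflates} the probability that $i_\ell$ sent its out-arc to $i_{m+1}$, since ``$i_\ell$ picks $(i_\ell, i_{m+1})_{a'}$'' implies $\overline{S_\ell}$ and hence has conditional probability $\frac{p/4}{1-\Pr(S_\ell)} > \frac{p}{4}$. So the conditional expected competition can exceed $\frac{3p}{4}$, and neither your ``remains uniform'' argument nor, possibly, the stated constant survives with the consecutive-only $F_m$.

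The paper closes precisely this hole by defining $F_m$ as ``no arc of $D^*$ between \emph{any} two of $i_1, \dots, i_m$'' (all pairs; at $m=k$ this is your $F^*$, so it still bounds the event you need), and then in its Case 2b observing that if the receiver picks an in-arc coming from any earlier chain node, this also violates $F_{m+1}$ (equivalently, also gets $y$ assigned), so chain-node competitors only help; its calculation shows the probability of picking \emph{some} chain in-arc is at least the $n=0$ baseline $1-\frac{3p}{8}$. Adopting that stronger $F_m$ and the corresponding case split would repair your recursion. Two smaller remarks: even for competitors outside the chain, their bits can enter $F_m$ through tie-breaks at earlier chain nodes, so ``the indicator remains uniform after conditioning'' is not literally correct and needs a correlation-direction argument (the paper implicitly treats these as independent); on the other hand, your pointwise bound $\frac{1}{1+X} \ge 1 - \frac{X}{2}$ plus linearity, giving $\E\big[\frac{1}{1+X}\big] \ge 1 - \frac{3p}{8}$, handles the paper's ``two in-arcs from the same node'' subcase more cleanly than the paper's separate computation.
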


Recall the intuition behind a $\gamma$-PanOCS:
any two adjacent impressions (which correspond to two neighboring nodes in $D$) are perfectly negatively correlated (which correspond to being in $D^*$) with probability $\gamma$;
moreover, the events are as least as good as independent.
Before diving into a formal proof of Lemma~\ref{lem:panocs-large-bid}, we explain the intuition why the marginal probability of realizing an arc $(i', i)_a$ in $D^*$ may be the above value of $\gamma$.
For an arc to be in $D^*$, $i$ must be a receiver, which happens with probability $1 - p$, and must pick arc $(i', i)_a$, which happens with probability $\frac{1}{4}$.
Further, $i'$ must be a receiver, which happens with probability $p$.
Conditioned on all of the above, what is the chance that $(i', i)_a$ is in $D^*$?
%Since $i$ proactively examines its in-neighbors, $D^*$ has an in-arc of $i$ whenever some in-neighbor is a sender who picks the corresponding in-arc of $i$.
Each in-neighbor is a sender with probability $p$, picks the in-arc of $i$ with probability $\frac{1}{4}$, and finally wins over $(i', i)_a$ with probability $\frac{1}{2}$.
%Hence, $D^*$ has an in-arc of $i$ with probability $1 - (1 - \frac{p}{4})^4$.
Hence, the three in-arcs of $i$ other than $(i', i)_a$ together prevent $i$ from choosing $(i', i)_a$ with probability at most $\frac{3p}{8}$.
This overestimation of the failure probability gives the stated value of $\gamma$.
%In the symmetric case, the four in-arcs are realized with equal probability and therefore, the marginal probability of having each arc in $D^*$ equals the stated values of $\gamma$.

We will in fact prove the following lemma that further applies to the more general case with a mixture of large and small bids.
Lemma~\ref{lem:panocs-large-bid} follows as a direct corollary.
The more general lemma will be useful in the hybrid algorithm in the next section.
For any advertiser $a$ and any any point $y \in [0, B_a)$, let $k_a^L(y) \le k_a(y)$ denote the number of impressions semi-assigned to advertiser $a$ and subsets containing $y$ whose bids are at least $\frac{B_a}{2}$, before the first time $y$ is semi-assigned to a small bid, i.e., smaller than $\frac{B_a}{2}$.
If $y$ has never been semi-assigned to a small bid, let $k_a^L(y) = k_a(y)$.

\begin{lemma}
    \label{lem:panocs-large-bid-general}
    For the $\gamma$ in Eqn.~\eqref{eqn:panocs-large-bid-gamma}, Algorithm~\ref{alg:large-bid-panocs-improved} satisfies that for any advertiser $a$ and any point $y \in [0, B_a)$, $y$ is assigned at least once with probability at least:
    \[
        1 - 2^{-k_a(y)} \cdot (1-\gamma)^{\max\{k_a^L(y)-1, 0\}}
        ~.
    \]
\end{lemma}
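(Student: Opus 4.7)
The plan is to generalize the recursive argument from the warmup proof of Theorem~\ref{thm:panocs-large-bid-weak}. Fix an advertiser $a$ and a point $y \in [0, B_a)$, and list the impressions $i_1 < i_2 < \cdots < i_k$ that semi-assign to $a$ at subsets containing $y$, where $k = k_a(y)$ and $k^L := k_a^L(y)$. By the definition of $k_a^L$, the first $k^L$ impressions on this list all have large bids for $a$, so each of the $k^L - 1$ arcs $(i_j, i_{j+1})_a$ with $j \in \{1, \ldots, k^L-1\}$ exists in the \emph{ex-ante} dependence graph $D$ and is eligible to enter $D^*$. Conditional on the realized matching $D^*$, any matched pair wholly contained in $\{i_1, \ldots, i_k\}$ forces exactly one of its two endpoints to select $a$ (the perfect negative correlation carries over to $a$ even when the arc is labeled by the other option $a''$ common to both impressions, since both have options $\{a, a''\}$), while any node matched to an outsider or unmatched in $D^*$ picks $a$ with marginal probability $\tfrac{1}{2}$ independent of the other in-set selections. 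Hence
\[
\Pr[\text{$a$ is never selected among $i_1,\ldots,i_k$}] \;=\; 2^{-k} \cdot \Pr[\text{no matched pair of $D^*$ within the set}],
\]
and the latter probability is at most the probability that none of the $k^L - 1$ arcs $(i_j, i_{j+1})_a$ lies in $D^*$.

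To control this last probability, I introduce $F_m$, the event that no arc $(i_j, i_{j+1})_a$ with $j < m$ is in $D^*$, and the auxiliary event $G_m := F_m \cap \{i_m \text{ is a sender picking out-arc } (i_m, i_{m+1})_a\}$; write $f_m := \Pr[F_m]$ and $g_m := \Pr[G_m]$, with base cases $f_0 = f_1 = 1$. Since the role and arc pick of $i_m$ are independent of rounds $i_1, \ldots, i_{m-1}$, and since ``$i_m$ is a sender'' automatically precludes $(i_{m-1}, i_m)_a$ from $D^*$, I obtain $g_m = \tfrac{p}{4} f_{m-1}$. Writing $f_m = f_{m-1} - \Pr[F_{m-1} \cap (i_{m-1}, i_m)_a \in D^*]$, the task reduces to lower-bounding the subtracted probability by $\gamma f_{m-2}$.

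The main obstacle is the three-factor decomposition of $\Pr[F_{m-1} \cap (i_{m-1}, i_m)_a \in D^*]$. The event $\{(i_{m-1}, i_m)_a \in D^*\}$ requires (i) $i_{m-1}$ is a sender picking its out-arc to $i_m$, whose joint probability with $F_{m-1}$ is exactly $g_{m-1} = \tfrac{p}{4} f_{m-2}$; (ii) $i_m$ is a receiver, contributing a fresh factor $1 - p$; and (iii) $i_m$'s uniform lottery over senders' in-arc picks selects $(i_{m-1}, i_m)_a$. For (iii), each of the three other in-arcs of $i_m$ independently becomes a candidate with probability at most $p/4$, and when a candidate it beats $(i_{m-1}, i_m)_a$ in the uniform lottery with probability at most $\tfrac{1}{2}$; a union bound yields a failure probability of at most $3 \cdot \tfrac{p}{4} \cdot \tfrac{1}{2} = \tfrac{3p}{8}$, so factor (iii) is at least $1 - \tfrac{3p}{8}$. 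Multiplying the three factors gives $\Pr[F_{m-1} \cap (i_{m-1}, i_m)_a \in D^*] \ge \gamma f_{m-2}$, hence the recursion $f_m \le f_{m-1} - \gamma f_{m-2}$. Because $F_m \subseteq F_{m-1}$ gives $f_{m-1} \le f_{m-2}$, this collapses to $f_m \le (1-\gamma) f_{m-1}$, and induction delivers $f_{k^L} \le (1-\gamma)^{k^L - 1}$ for $k^L \ge 1$ (the $k^L = 0$ case being trivial). Combining with the $2^{-k}$ factor from the matching-conditional analysis completes the proof.
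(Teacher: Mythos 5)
Your overall skeleton matches the paper's: the reduction to the event that none of the consecutive large-bid arcs is realized, the events $F_m$ and $G_m$, the identity $g_m = \frac{p}{4}f_{m-1}$, and the sender/receiver split leading to $f_m \le f_{m-1} - \gamma f_{m-2}$ with $\gamma$ as in Eqn.~\eqref{eqn:panocs-large-bid-gamma}. The gap is in your factor (iii). You define $F_m$ only in terms of the consecutive arcs $(i_j,i_{j+1})_a$, so the lottery event you must lower bound is that $i_m$ picks the \emph{specific} in-arc $(i_{m-1},i_m)_a$, and you assert that each of the three other in-arcs of $i_m$ "independently becomes a candidate with probability at most $p/4$" \emph{conditioned on} $G_{m-1}$ (and $i_m$ being a receiver). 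That independence claim is false in general: an alternative in-neighbor of $i_m$ may itself be one of the in-set nodes $i_\ell$, $\ell \le m-2$ (e.g.\ $i_{m-2}$ via a point of $Y_{a i_{m-2}} \cap Y_{a i_m}$ not covered by $Y_{a i_{m-1}}$), or a node whose role/pick already enters the determination of $F_{m-1}$ through the lotteries of $i_2,\dots,i_{m-1}$. Conditioning on $F_{m-1}$ (which forbids, e.g., $(i_{\ell},i_{\ell+1})_a$ and $(i_{\ell-1},i_\ell)_a$ from being realized) \emph{positively} biases the event that $i_\ell$ is a sender picking its out-arc to $i_m$, so its conditional candidacy probability can exceed $p/4$. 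Since with your weak $F_m$ an inside competitor winning the lottery is \emph{not} a success, this inflation pushes the failure probability above $\tfrac{3p}{8}$ and you no longer obtain the stated $\gamma$; your union bound only certifies some smaller constant.

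This is exactly the point where the paper's proof does something different: it defines $F_m$ as "no arc of $D^*$ at all among $i_1,\dots,i_m$" (any advertiser, not just consecutive $a$-arcs). With that stronger event, the lottery target becomes "$i_m$ picks an in-arc coming from \emph{any} of $i_1,\dots,i_{m-1}$"; competitors that are in-set nodes are then harmless (if they win, $F_m$ still fails, which is what one wants), and the only competitors that matter are genuinely outside nodes, whose roles and picks are independent of $G_{m-1}$, so the $1-\tfrac{3p}{8}$ bound (the paper's Case 2a computation, which your union bound reproduces) applies; the paper's Case 2b formalizes that the mixed case is only better. Your argument can be repaired by adopting this stronger definition of $F_m$ (note it still upper-bounds the probability of "no matched pair within the set" and is still compatible with $G_m = F_{m-1}\cap\{i_m \text{ sender picking }(i_m,i_{m+1})_a\}$ and $g_m=\frac{p}{4}f_{m-1}$), but as written the proof does not establish the lemma for the $\gamma$ in Eqn.~\eqref{eqn:panocs-large-bid-gamma}.
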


\begin{proof}%[Proof of Lemma~\ref{lem:panocs-large-bid}]
    Fix any advertiser $a$ and any point $y \in [0, B_a)$.
    Let $i_1 < i_2 < \dots < i_{k_a(y)}$ be the impressions semi-assigned to advertiser $a$ and subsets containing $y$.
    If there is an arc in the \emph{ex-post} dependence graph $D^*$ between two of them, point $y$ is assigned in exactly one of the two rounds.
    Otherwise, each of the $k$ rounds independently has probability half of assigning $y$;
    there is only a $2^{-k_a(y)}$ probability that $y$ is never assigned.
    Hence, it remains to analyze the former event and show that the probability of having no arc among $i_1 < i_2 < \dots < i_k$ in the \emph{ex-post} dependence graph $D^*$ is at most $(1 - \gamma)^{k_a^L(y)-1}$ for the stated value of $\gamma$ in the lemma.
    We will upper bound this by the probability of having no arcs among the first $k_a^L(y)$ impressions $i_1 < i_2 < \dots < i_{k_a^L(y)}$.

Concretely, for any $0 \le m \le k_a^L(y)$, let $F_m$ denote the event that there is no arc in $D^*$ among $i_\ell$, $1 \le \ell \le m$.
    Let $f_m$ denote the probability of $F_m$.
    Trivially we have $f_0 = f_1 = 1$.
    
    Next, we inductively derive the following upper bound of $f_m$ for $2 \le m \le k_a^L(y)$:
    \[
        f_m \le (1 - \gamma) \cdot f_{m-1}
        ~.
    \]

    To do so, further consider an auxiliary subevent $G_m$ of $F_m$ for any $1 \le m \le k_a^L(y)-1$, which further requires that $i_m$ is a sender who picks arc $(i_m, i_{m+1})_a$.
    Let $g_m$ be the probability of $G_m$.

    \paragraph{Auxiliary Event.}
    In order to have event $G_m$, we need:
    \begin{enumerate}
        \item Node $i_m$ is a sender (probability $p$);
        \item Node $i_m$ picks $(i_m, i_{m+1})_a$ (probability $\frac{1}{4}$); and
        \item Event $F_{m-1}$ (probability $f_{m-1})$.
    \end{enumerate}
    These conditions are independent since they rely on disjoint subsets of random bits.
    We have:
    \begin{equation}
        \label{eqn:panocs-large-bid-auxiliary}
        g_m = \frac{p}{4} f_{m-1}
        ~.
    \end{equation}

    \paragraph{Main Event.}
    Next we turn to event $F_m$.
    There are two subcases: $i_m$ is a sender, or a receiver.

    \paragraph{Case 1: Sender.}
    If $i_m$ is a sender, $F_m$ cannot fail due to a pair of nodes including $i_m$.
    Hence, it remains to ensure $F_{m-1}$.
    The contribution of this case to the probability of $F_m$ is:
    \begin{equation}
        \label{eqn:panocs-large-bid-1}
        p f_{m-1}
    \end{equation}

    \paragraph{Case 2: Receiver.}
    In this case, we need to further ensure that $F_m$ does not fail due to a pair of nodes including $i_m$.
    There are at most $4$ in-arcs of concern.
    First, there is always an arc $(i_{m-1}, i_m)_a$ in $D$.
    Further, there may some other arcs, either from $i_{m-2}$ to $i_m$ w.r.t.\ advertiser $a$, or from some $i_\ell$, $1 \le \ell \le m-1$ w.r.t.\ other advertisers.
    Let $n \le 3$ be the number of in-arcs of the latter form.
    The binding case of the analysis is when $n = 0$, which we shall demonstrate first.

    \paragraph{Case 2a: $n = 0$.}
    The only arc incident to $i_m$ of concern is $(i_{m-1}, i_m)_a$.
    As a result, it is sufficient (but not necessary in general) if the first $m-1$ rounds are in $F_{m-1} \setminus G_{m-1}$.
    This part contributes:
    \begin{equation}
        \label{eqn:panocs-large-bid-1a-1}
        (1-p) \big( f_{m-1} - g_{m-1} \big)
        ~.
    \end{equation}

    Even if the first $m-1$ rounds are in $G_{m-1}$, there may not be a perfect negative correlation between the selections of impressions $i_{m-1}$ and $i_m$, due to the competition from the other in-arcs of $i_m$ in $D$.
    Concretely, there are $3$ in-arcs other than $(i_{m-1}, i_m)_a$.
    Moreover, by the assumption of the subcase, these in-arcs are not from $i_\ell$ for any $1 \le \ell \le m-2$.

    Suppose the in-arcs are further from $3$ distinct nodes.
    Then, each is a sender who picks $i_m$ with probability $\frac{p}{4}$ independently.
    The probability that $i_m$ picks $i_{m-1}$ equals:
    \begin{align*}
        \sum_{i=0}^{3} \frac{1}{i+1} \bigg(\frac{p}{4}\bigg)^i \bigg(1-\frac{p}{4}\bigg)^{3-i} \binom{3}{i}
        & 
        = \sum_{i=0}^{3} \frac{1}{4} \bigg(\frac{p}{4}\bigg)^i \bigg(1-\frac{p}{4}\bigg)^{3-i} \binom{4}{i+1} \\
        &
        = \frac{1}{p} \bigg( 1 - \big(1 - \frac{p}{4}\big)^{4} \bigg)
        ~.
    \end{align*}

    Through a similar calculation, we further conclude that if there are two in-arcs from the same node, the probability that $i_m$ picks $i_{m-1}$ is $1 - \frac{3}{8} p + \frac{1}{24} p^2$.
    %, which is greater than or equal to the above bound for any $0 \le p \le 1$.
    %Intuitively, this is because $\frac{1}{i} + \frac{1}{i+2} \ge \frac{2}{i+1}$ and therefore, correlating two competitors positively is worse than the independent case.
    The bounds in both cases are greater than $1 - \frac{3}{8} p$ for any $0 \le p \le 1$.

    Hence, the contribution of this part to the probability of $F_m$ is at most:
    \begin{equation}
        \label{eqn:panocs-large-bid-1a-2}
        (1-p) \bigg( 1 - \big( 1 - \frac{3p}{8} \big) \bigg) g_{m-1}
        ~.
    \end{equation}

    Summing up the contributions in Equations~\eqref{eqn:panocs-large-bid-1}, \eqref{eqn:panocs-large-bid-1a-1}, and \eqref{eqn:panocs-large-bid-1a-2}, cancelling $g_{m-1}$ using Eqn.~\eqref{eqn:panocs-large-bid-auxiliary}, and plugging in the value of $\gamma$, we have:
    \[
        f_m \le f_{m-1} - \gamma \cdot f_{m-2} \le (1 - \gamma) f_{m-1}
        ~.
    \]

    \paragraph{Case 2b: $1 \le n \le 3$.}
    We will show that the probability of $F_m$ in this case is upper bounded by the previous case.
    In this case, having the first $m-1$ rounds in $F_{m-1} \setminus G_{m-1}$ is no longer a sufficient condition since there may still be an arc from $i_\ell$, $1 \le \ell \le m-2$ to $i_m$.
    Nonetheless, Eqn.~\eqref{eqn:panocs-large-bid-1} is still an upper bound of the contribution of this part.

    Next, consider the case when the first $m-1$ rounds are in $G_{m-1}$.
    Comparing with the previous case, the main difference is that $i_m$ may have in-arcs in $D$ other than $(i_{m-1}, i_m)_a$ from some $i_\ell$, $1 \le \ell \le m-1$.
    However, should $i_m$ pick one of these in-arcs, we still get that $y$ has been selected.
    Intuitively, it is less likely than the previous case that $i_m$ picks an in-arc \emph{not} from $i_\ell$, $1 \le \ell \le m-1$.

    Formally, fix any realization of randomness in the first $m-1$ rounds in $G_{m-1}$.
    Suppose $n' \le n$ in-neighbors of $i_m$ of the form $i_\ell$, $1 \le \ell \le m-2$, are senders who pick the arc to $i_m$.
    If each of the $3-n$ other in-arcs from from distinct nodes, each of the $3-n$ other in-neighbors is a sender who picks the arc to $i_m$ independently with probability $\frac{p}{4}$.
    The probability that $i_m$ picks one of the in-arcs from in-neighbors of the form $i_\ell$ is:
    \begin{align*}
        \sum_{i=0}^{3-n} \frac{1+n'}{i+n'+1} \bigg(\frac{p}{4}\bigg)^i \bigg(1-\frac{p}{4}\bigg)^{3-n-i} \binom{3-n}{i}
        &
        \ge \sum_{i=0}^{3-n} \frac{1}{i+1} \bigg(\frac{p}{4}\bigg)^i \bigg(1-\frac{p}{4}\bigg)^{3-n-i} \binom{3-n}{i} \\
        &
        = \sum_{i=0}^{3-n} \frac{1}{4-n} \bigg(\frac{p}{4}\bigg)^i \bigg(1-\frac{p}{4}\bigg)^{3-n-i} \binom{4-n}{i+1} \\
        &
        = \frac{4}{(4-n)p} \bigg( 1 - \big(1-\frac{p}{4}\big)^{4-n} \bigg)
        ~.
    \end{align*}

    Observe that $x^{-1} \big( 1 - (1-\frac{p}{4})^x \big)$ is decreasing in $x$, the above is greater than the bound in the previous case which corresponds to $n = 0$.

    Finally, if there are two of the $3-n$ other in-arcs are from the same neighbor, it must be $n = 1$.
    This in-neighbor is a sender who picks one of these two arcs with probability $\frac{p}{2}$.
    We have a similar calculation for the probability that $i_m$ picks one of the in-arcs from in-neighbors of the form $i_\ell$:
    \[
        \frac{p}{2} \cdot \frac{1+n'}{2+n'} + \big(1 - \frac{p}{2}\big) \ge \frac{p}{2} \cdot \frac{1}{2} + \big(1 - \frac{p}{2}\big) = 1 - \frac{p}{4} 
        ~,
    \]
    which is also greater than the $1 - \frac{3p}{8}$ bound needed in the analysis.
\end{proof}

\subsection{General Bids: Proof of Theorem~\ref{thm:panocs-general-bid}}
\label{sec:panocs-general-bid}

\paragraph{Challenge.}
In the presence of both large and small bids, a semi-assignment of an impression $i$ to an advertiser-subset combination $(a, Y_{ai})$ may be adjacent to an arbitrary number of subsequent semi-assignments of smaller bids.
For instance, consider an impression with a large bid $b_{ai} = B_a$ followed by $n$ impressions with small bids $b_{ai'} = \frac{B_a}{n}$.
Therefore, the previous approach of letting each impression randomly picks an out-arc in the \emph{ex-ante} dependence graph $D$ no longer works because the probability that an arc $(i, i')_a$ in the \emph{ex-ante} dependence graph $D$ is included in the \emph{ex-post} dependence graph $D^*$ may be arbitrarily small.

\paragraph{Solution.}
For any advertiser $a$, we will partition the impressions semi-assigned to $a$ with small bids into groups.
Then, define group-level correlation similar to the case of large bids, treating the union of impressions within the same group as a single large bid.
We will argue that each group is correlated with a bounded number of other groups.
Finally, recall that any impression in a randomized round is associated with two advertisers and thus, belongs to two groups, one for each advertiser.
Pick one of the two groups randomly and follow its decision.

It is worth remarking that the impressions in the same group are positively correlated.
Hence, two impressions in the same group, say, w.r.t.\ an advertiser $a$ must be unrelated.
In other words, other than the common advertiser $a$, the two impressions are semi-assigned to either two distinct advertisers, or the same advertiser but disjoint subsets.
%the other advertiser is also the same but the semi-assigned subsets are disjoint.

\bigskip

The rest of the subsection will substantiate the above intuition with a formal definition of the algorithm and its analysis.
We will build on the notions of two nodes' being related, adjacent, and unrelated (Definition~\ref{def:randomized-round-correlation}).
\paragraph{First-level Partition.}
For each advertiser $a$, let $I^R_a$ denote the set of randomized round which semi-assign the impressions to $a$.
We shall greedily partition $I^R_a$ into subsets of impressions that are pairwise unrelated w.r.t.\ advertiser $a$, denoted as $I_a^j$, $j \ge 1$.
\begin{enumerate}
    \item Initialize $j_a = 1$ and $I_a^j = \emptyset$ for any $j \ge 1$.
    \item For each impression $i \in I^R_a$:
    \begin{enumerate}
        \item Let $j_a \leftarrow j_a + 1$ if impression $i$ is adjacent to the first impression in $I_a^{j_a}$ w.r.t.\ $a$.
        \item Let $I_a^{j_a} \leftarrow I_a^{j_a} \cup \{ i \}$.
    \end{enumerate}
\end{enumerate}

Next we establish several properties of the above greedy partition.
%of randomized rounds into blocks.
By the above definition and the panoramic interval-level assignment in Section~\ref{sec:panorama}, we have:

\begin{lemma}
    \label{lem:same-block-unconstrained}
    Any two impressions in the same subset $I_a^j$ are unrelated w.r.t.\ advertiser $a$.
\end{lemma}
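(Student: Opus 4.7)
The plan is to decode what the greedy partition is doing in terms of the panoramic interval-level scan and then read off disjointness from the scan's monotonicity.

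Recall that the panoramic interval-level assignment maintains a pointer $y^\ast \in [0, B_a)$ that advances monotonically around the circle, and that each impression $i$ assigned or semi-assigned to $a$ receives the next forward arc $Y_{ai} = [y^\ast, y^\ast \oplus_{Y_D} b_{ai}) \setminus Y_D$ before $y^\ast$ is updated to $y^\ast \oplus_{Y_D} b_{ai}$. The key geometric fact I will use is that within any stretch of impressions during which the pointer does not complete a full lap around $[0, B_a)$, the subsets $Y_{ai}$ it produces are consecutive arcs, hence pairwise disjoint.

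Fix a group $I_a^j$ with first impression $i^{(j)}$, and let $y^{(j)}$ denote the value of $y^\ast$ just before $i^{(j)}$ is processed, so that $y^{(j)} \in Y_{a i^{(j)}}$. I claim that the greedy rule increments $j_a$ precisely the first time the scan laps back past $y^{(j)}$ after $i^{(j)}$. As long as the pointer stays strictly ahead of $y^{(j)}$ in the circular order, every new subset $Y_{ai}$ lies in a forward arc disjoint from $Y_{a i^{(j)}}$, so the new impression is not even related to $i^{(j)}$ w.r.t.\ $a$. When the scan first wraps back, the corresponding impression $i'$ has $y^{(j)} \in Y_{a i'}$ and thus overlaps $Y_{a i^{(j)}}$ at $y^{(j)}$; moreover, no intermediate impression in the current group had $y^{(j)}$ in its subset, again by monotonicity of the scan. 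So $i^{(j)}$ and $i'$ are adjacent w.r.t.\ $a$, $j_a$ is incremented, and $i'$ falls into $I_a^{j+1}$.

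Hence throughout $I_a^j$ the scan never laps past $y^{(j)}$, so $\{Y_{ai} : i \in I_a^j\}$ is a family of consecutive, pairwise disjoint arcs. In particular, for any $i, i' \in I_a^j$ we have $Y_{ai} \cap Y_{ai'} = \emptyset$, which is exactly the statement that $i$ and $i'$ are unrelated w.r.t.\ $a$. The only subtlety I expect is the interleaving of deterministic rounds (and rounds deterministic or semi-assigned to other advertisers do not touch this advertiser's scan at all), which advance $y^\ast$ past chunks that are simultaneously absorbed into $Y_D$ and excised from subsequent $Y_{ai}$'s; I would dispose of this up front by stating an explicit invariant that the subsets produced in a single lap form a partition of a circular arc of $[0, B_a) \setminus Y_D$, an invariant that the monotone advancement of $y^\ast$ preserves whether the current round is deterministic or randomized.
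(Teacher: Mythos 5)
Your lap-based reading is the natural formalization of what the paper leaves implicit (the paper states this lemma with no proof, asserting it follows from the greedy partition and the interval-level assignment), and the first half of your argument is fine: rounds that touch advertiser $a$ advance its pointer monotonically, so the subsets handed out before the scan completes a lap are consecutive and pairwise disjoint. The gap is exactly at the step you flag and then wave off. You claim the greedy rule increments $j_a$ precisely when the scan first wraps past $y^{(j)}$, because the wrapping impression's subset contains $y^{(j)}$ and is therefore adjacent to $i^{(j)}$. But the rule only tests adjacency against the \emph{first} impression of the current group, adjacency requires the two subsets to overlap, and every later subset excludes the \emph{current} $Y_D$, which grows over time. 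A deterministic round arriving after $i^{(j)}$ can itself wrap around and deterministically assign $y^{(j)}$ --- indeed all of $Y_{a i^{(j)}}$. After that, the next randomized-round subset that wraps skips over $Y_{a i^{(j)}}$ entirely, so it is not related to $i^{(j)}$, no new group is started under the stated rule, and yet it can overlap the arc of a \emph{later} member of the same group. Concretely, take $B_a = 3$: two semi-assignments receive $[0,1)$ and $[1,2)$ (both land in $I_a^1$); a deterministic round with bid $2$ then receives $[2,3)\cup[0,1)$; a further semi-assignment with bid $1$ receives $[1,2)$ again. That last impression is unrelated to the group's first impression, so it joins $I_a^1$, while being related (in fact adjacent) to the second member --- contradicting the conclusion you are trying to prove. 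Your proposed invariant, that the subsets of one lap partition a circular arc of $[0,B_a)\setminus Y_D$, does not repair this, because the failure is driven precisely by the time-variation of $Y_D$: it can swallow the group-initial arc after that arc was semi-assigned, breaking the implication ``wraps past $y^{(j)}$ $\Rightarrow$ adjacent to $i^{(j)}$.''

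To close the gap you would need either an argument that such a configuration cannot arise from the primal--dual algorithm's actual choices of deterministic versus randomized rounds, or to test the incoming impression against relatedness to \emph{every} member of the current group (which makes the lemma immediate by construction, and still supports the downstream uses such as Lemma~\ref{lem:two-cover-all} and the strictly increasing group indices in the general-bid PanOCS proof). Since the paper offers no proof of its own, your difficulty is not one the source resolves for you; but as written, the key adjacency claim is unjustified and can fail, so the argument has a genuine gap exactly at the interaction with deterministic rounds that you set aside.
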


The next property shows that the subsets restore the main structural property of large bids, i.e., Lemma~\ref{lem:large-bid-bounded-degree}.
Indeed, if all bids were large, each impression would be a subset on its own.

\begin{lemma}
    \label{lem:two-cover-all}
    For any neighboring subsets $I_a^j$ and $I_a^{j+1}$, and any point $y \in [0, B_a)$ which is \emph{not} deterministically assigned, $y$ is semi-assigned at least once in the rounds in $I_a^j$ and $I_a^{j+1}$.
\end{lemma}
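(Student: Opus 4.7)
Let $i_0$ be the earliest impression of $I_a^j$ and $i^*$ the earliest of $I_a^{j+1}$. By the rule of the greedy partition, $j_a$ was incremented from $j$ to $j+1$ precisely because $i^*$ is adjacent to $i_0$ w.r.t.\ $a$; hence I may fix a witness $y_0 \in Y_{ai_0} \cap Y_{ai^*}$. I will exploit the panoramic interval-level assignment, under which the pointer $y^*$ for $a$ advances only forward (circularly) around $[0, B_a)$ with each assignment or semi-assignment to $a$, always skipping the current $Y_D$. Since $y_0$ is covered by both $Y_{ai_0}$ and $Y_{ai^*}$, the pointer passes through the angular position $y_0$ once during $i_0$ and once during the strictly later round $i^*$. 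Forward-only monotonicity then forces a full revolution to occur between these two crossings: unrolling the circle to the real line and tracking cumulative pointer positions, the lift of $y_0$ attained during $i^*$ must exceed the lift attained during $i_0$ by at least $B_a$.

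Next I take an arbitrary $y \in [0, B_a)$ that is never deterministically assigned. Because $Y_D$ only grows and always excludes such $y$, the pointer never skips $y$. Together with the full revolution just established, this means the pointer sweeps across $y$ during some round $i'$ to advertiser $a$ with $i_0 \le i' \le i^*$, so that $y \in Y_{ai'}$. If $i'$ were a deterministic round, $y$ would permanently enter $Y_D$, contradicting the assumption on $y$; hence $i' \in I^R_a$. Inspecting the greedy partition, every randomized round to $a$ arriving between $i_0$ (inclusive) and $i^*$ (exclusive) must join $I_a^j$, since otherwise it, rather than $i^*$, would have started $I_a^{j+1}$. Therefore $i' \in I_a^j \cup \{i^*\} \subseteq I_a^j \cup I_a^{j+1}$, and $y$ is semi-assigned at least once in rounds of $I_a^j \cup I_a^{j+1}$, as required.

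The main obstacle is the full-revolution step, since intermediate deterministic rounds to $a$ can enlarge $Y_D$ during the sweep and distort how much angular distance each bid translates into. The argument is robust to this because it relies only on monotonicity of the unrolled pointer position and the bare fact that $y_0$ is covered at both ends, not on the precise distance travelled at any intermediate step. The remaining steps---non-skippability of $y$, elimination of a deterministic $i'$, and greedy-partition membership---are then routine case analysis.
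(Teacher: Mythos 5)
Your proof is correct and takes essentially the same route as the paper's: both hinge on the adjacency of the first rounds of $I_a^j$ and $I_a^{j+1}$ yielding a common point $y_0$ that is covered twice, and on the round-robin nature of the panoramic interval-level assignment forcing a full sweep of $[0,B_a)$ in between (the paper invokes this equalization property directly, while you make it explicit via the unrolled, forward-monotone pointer). You even recover the paper's stronger claim that the rounds of $I_a^j$ together with only the first round of $I_a^{j+1}$ suffice.
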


\begin{proof}
    In fact, we will show a stronger claim that the rounds in subset $I_a^{j}$ and the \emph{first round} in $I_a^{j+1}$ suffice.
    By the definition of the greedy partition algorithm, the subset of $[0, B_a)$ chosen in the first round in $I_a^{j+1}$ intersects with the subset of the first round in $I_a^{j}$;
    otherwise, it would have been added to $I_a^{j}$ instead.
    Hence, the points in this intersection have already been semi-assigned twice.
    Finally, by the panoramic interval-level assignment in Section~\ref{sec:panorama}, any point $y \in [0, B_a)$ that has not been deterministically assigned thus far must have been semi-assigned at least once.
\end{proof}

This lemma has two direct corollaries.

\begin{corollary}
    \label{cor:setset-num}
    For any advertiser $a$ there are at most $2 \kmax$ nonempty subsets $I_a^j$.
\end{corollary}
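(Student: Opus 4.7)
The plan is to prove the bound via a pairing argument: apply Lemma~\ref{lem:two-cover-all} to consecutive pairs of nonempty subsets and use Lemma~\ref{lem:same-block-unconstrained} to prevent double-counting within a single subset. Concretely, I would fix any point $y \in [0, B_a)$ that is never deterministically assigned by the end of the algorithm; by Lemma~\ref{lem:same-block-unconstrained} the subsets $Y_{ai}$ of impressions inside a single $I_a^j$ are pairwise disjoint, so $y$ lies in at most one $Y_{ai}$ per $I_a^j$. Partitioning the $m$ nonempty subsets into consecutive pairs $(I_a^{2j-1}, I_a^{2j})$ for $j = 1, \ldots, \lfloor m/2 \rfloor$ and invoking Lemma~\ref{lem:two-cover-all} on each pair extracts at least one semi-assignment of $y$ per pair, and these assignments live in distinct subsets by intra-subset disjointness, giving $k_a(y) \ge \lfloor m/2 \rfloor$. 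Combined with the hypothesis $k_a(y) \le \kmax$ this already yields $m \le 2\kmax + 1$.

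To close the off-by-one gap and obtain the claimed $m \le 2\kmax$, I would refine the choice of witness. Let $i_1^{(1)}$ denote the first impression of $I_a^1$ and pick $y$ inside $Y_{a\,i_1^{(1)}}$; this forces $y$ to be semi-assigned in $I_a^1$. Listing the subset-indices that semi-assign $y$ as $j_1 < j_2 < \cdots < j_{k_a(y)}$, this choice guarantees $j_1 = 1$. A second application of Lemma~\ref{lem:two-cover-all} yields both the gap bound $j_{\ell+1} - j_\ell \le 2$ (else the intermediate consecutive pair would violate the lemma) and the tail bound $j_{k_a(y)} \ge m - 1$ (applied to the last pair). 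Hence $m - 1 \le 1 + 2(k_a(y) - 1)$, which simplifies to $m \le 2\,k_a(y) \le 2\kmax$.

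The hard part will be guaranteeing that the refined witness $y \in Y_{a\,i_1^{(1)}}$ remains never deterministically assigned at the end of the algorithm, since $Y_D$ grows over time and could in principle overwrite all of $Y_{a\,i_1^{(1)}}$. I would use Lemma~\ref{lem:K-property} together with the panoramic interval-level assignment to locate a surviving portion of $Y_{a\,i_1^{(1)}}$ (whose occupancy $k_a$ lies in $\{\kmin, \kmin+1\}$), or else reindex by starting the partition at the earliest subset whose first impression's $Y$ is preserved throughout the run; either variant leaves the pairing structure and the final bound intact. A secondary subtlety is the degenerate situation $\mu(Y_D) = B_a$, which I would reduce to the non-degenerate setting by truncating the instance at the moment of saturation, after which no additional positive-measure semi-assignments are possible and therefore no new nonempty subsets can be created.
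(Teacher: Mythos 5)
Your first paragraph is fine and matches what Lemma~\ref{lem:two-cover-all} gives "for free": the $\lfloor m/2\rfloor$ disjoint consecutive pairs yield $m \le 2\kmax+1$ (the intra-block disjointness from Lemma~\ref{lem:same-block-unconstrained} is not even needed there, since distinct pairs consist of disjoint sets of rounds). The genuine gap is in your refinement to $2\kmax$: it hinges on the existence of a point of $Y_{a\,i_1^{(1)}}$ that is never deterministically assigned, and nothing guarantees this — the first impression's subset can later be absorbed entirely into $Y_D$. Neither fallback you sketch repairs this. Lemma~\ref{lem:K-property} controls the values $k_a(y)$ outside $Y_D$ but says nothing about which points avoid $Y_D$, so it cannot "locate a surviving portion" of $Y_{a\,i_1^{(1)}}$. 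Reindexing at the earliest block whose first-round subset survives bounds only the number of blocks from that index onward and leaves the earlier blocks uncounted, so it does not leave the final bound intact. (The paper itself writes no proof, treating the statement as a direct corollary of Lemma~\ref{lem:two-cover-all}, so the burden of closing this off-by-one is exactly the part you left open.)

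The fix is to anchor at the last block transition rather than at block $1$, which makes the witness's survival automatic. Suppose $m \ge 2\kmax+1$. By the greedy partition rule, the first impression of $I_a^{2\kmax+1}$ is adjacent to the first impression of $I_a^{2\kmax}$, so their semi-assigned subsets share a point $q$. Since semi-assigned subsets always exclude the currently deterministically assigned region, $q$ is not deterministically assigned at any time before the first round of $I_a^{2\kmax+1}$; hence the "thus far" version of Lemma~\ref{lem:two-cover-all} (which is what its proof actually establishes) applies to $q$ for each of the $\kmax-1$ disjoint pairs $(I_a^1,I_a^2),\dots,(I_a^{2\kmax-3},I_a^{2\kmax-2})$, whose rounds all precede the first round of $I_a^{2\kmax}$. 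Together with the two semi-assignments of $q$ in the first rounds of $I_a^{2\kmax}$ and $I_a^{2\kmax+1}$, this gives $q$ at least $(\kmax-1)+2 = \kmax+1$ distinct semi-assignments, contradicting the hypothesis of at most $\kmax$ semi-assignments per point. With this anchor your gap/tail bookkeeping (and the saturation truncation) becomes unnecessary, and the claimed bound $m \le 2\kmax$ follows.
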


\begin{corollary}
    \label{cor:plus-minus-two-correlation}
    Suppose two impressions belong to subsets $I_a^j$ and $I_a^k$ respectively, where $k > j + 2$.
    Then, they are \emph{not} adjacent w.r.t.\ $a$.
\end{corollary}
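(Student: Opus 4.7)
The plan is to prove the contrapositive by contradiction: assume there exist $i \in I_a^j$ and $i' \in I_a^k$ with $k > j+2$ that are adjacent with respect to advertiser $a$, witnessed by some point $y \in Y_{ai} \cap Y_{ai'}$, and produce an intermediate impression that covers $y$, thereby breaking adjacency.

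First I would record two order-theoretic facts about the greedy first-level partition. Because $j_a$ is monotonically non-decreasing as the algorithm processes $I_a^R$ in arrival order, every impression of $I_a^{j+1} \cup I_a^{j+2}$ arrives strictly after all of $I_a^j$ and strictly before all of $I_a^k$. In particular, every such impression lies in the open arrival interval $(i, i')$. Since $k > j+2$, the subsets $I_a^{j+1}$ and $I_a^{j+2}$ are both available and disjoint from $\{i, i'\}$.

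Next I would argue that the point $y$ witnessing adjacency is not deterministically assigned at any time up to and including the arrival of $i'$. This is immediate from the panoramic interval-level assignment: since $y \in Y_{ai'}$ and $Y_{ai'}$ is chosen as a subset of $[0, B_a) \setminus Y_D$ at the moment $i'$ arrives, $y$ cannot lie in $Y_D$ then, so $k_a(y) < \infty$ during the entire span of the rounds in $I_a^{j+1} \cup I_a^{j+2}$. This is exactly the hypothesis required to invoke Lemma~\ref{lem:two-cover-all} on the neighboring subsets $I_a^{j+1}$ and $I_a^{j+2}$: the lemma guarantees an impression $i'' \in I_a^{j+1} \cup I_a^{j+2}$ that semi-assigns $a$ with a subset containing $y$.

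Combining the two previous paragraphs, $i''$ satisfies $i < i'' < i'$ and is semi-assigned to $a$ with $y \in Y_{ai''}$. This directly contradicts the definition of $i < i'$ being adjacent w.r.t. $a$ witnessed by $y$, completing the proof. The main subtlety—and the only step that needs care—is the timing of Lemma~\ref{lem:two-cover-all}: one must confirm that ``not deterministically assigned'' holds during the window relevant to $I_a^{j+1}$ and $I_a^{j+2}$ rather than only at algorithm termination, which is why I would explicitly derive it from $y \in Y_{ai'}$ and the fact that deterministic assignments are irrevocable and monotone. Everything else is bookkeeping that follows directly from the greedy partition's preservation of arrival order.
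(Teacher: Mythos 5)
Your proof is correct and follows exactly the route the paper intends: the paper states this as a direct corollary of Lemma~\ref{lem:two-cover-all} without writing out the details, and your argument — arrival order respects the first-level partition indices, the adjacency witness $y$ cannot be in $Y_D$ before $i'$ arrives because $Y_{ai'}$ excludes it, hence Lemma~\ref{lem:two-cover-all} applied to $I_a^{j+1}, I_a^{j+2}$ yields an intermediate round covering $y$ and contradicting adjacency — is precisely the omitted verification, with the timing subtlety handled properly.
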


\paragraph{Second-level Partition.}
An pair of impressions in the same subset $I_a^j$ could still be related or even adjacent w.r.t.\ an advertiser other than $a$.
To resolve this we further introduce another layer of partition of each $I_a^j$ into $\cup_k I_a^{j,k}$ as follows:
\begin{equation}
    \label{eqn:panocs-partition}
    \forall k \in \mathbb{Z}_+ : \quad
    I_a^{j,k} = I_a^j \cap \bigg( \bigcup_{a' \ne a} I_{a'}^k \bigg)
    ~.
\end{equation}

In other words, an impression $i$ is in $I_a^{j,k}$ if it belongs to the $j$-th subset $I_a^j$ w.r.t.\ advertiser $a$ in the first-level partition, and further belongs to the $k$-th subset $I_{a'}^k$ of the other advertiser $a'$ to which $i$ is semi-assigned.
We shall refer to each $I_a^{j,k}$ as a \emph{group} of impressions.

As a corollary of Lemma~\ref{lem:same-block-unconstrained}, we have:

\begin{corollary}
    \label{cor:panocs-partition-unconstrained}
    Any two impressions in the same group $I_a^{j,k}$ are unrelated.
\end{corollary}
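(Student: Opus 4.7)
The plan is to reduce the corollary to Lemma~\ref{lem:same-block-unconstrained} by case analysis on which advertiser could possibly make two impressions $i, i' \in I_a^{j,k}$ related. Since they lie in $I_a^j$, Lemma~\ref{lem:same-block-unconstrained} immediately rules out being related w.r.t.\ $a$, so the work is to show that no other advertiser $b \ne a$ can witness a related pair either.

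Here is the key observation I would use at the start. Every impression in a randomized round is semi-assigned to exactly two advertisers. For $i \in I_a^{j,k}$, one is $a$, and the other — call it $a(i)$ — is uniquely determined by $i$; moreover, by the definition in Eqn.~\eqref{eqn:panocs-partition}, we have $i \in I_{a(i)}^{k}$. The same holds for $i'$ with its own unique $a(i')$, and $i' \in I_{a(i')}^{k}$ (the second index is the same $k$, by membership in $I_a^{j,k}$).

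Now I would split into cases according to a generic third advertiser $b \ne a$. If $b \notin \{a(i), a(i')\}$ then at most one of $i, i'$ is semi-assigned to $b$, so they are vacuously unrelated w.r.t.\ $b$. If $b = a(i)$ but $b \ne a(i')$, then $i'$ is not semi-assigned to $b$ at all, and again they are vacuously unrelated w.r.t.\ $b$ (symmetric case analogous). The only substantive case is $b = a(i) = a(i')$: then by the observation above, both $i$ and $i'$ lie in $I_b^k$, so Lemma~\ref{lem:same-block-unconstrained} applied to advertiser $b$ yields that $i$ and $i'$ are unrelated w.r.t.\ $b$. Combining all cases with the $b = a$ case handled by Lemma~\ref{lem:same-block-unconstrained}, the pair is unrelated w.r.t.\ every advertiser, which is exactly the definition of unrelated (without specifying an advertiser) from Definition~\ref{def:randomized-round-correlation}.

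The proof should be essentially mechanical once the uniqueness of the ``other'' advertiser is isolated; I do not anticipate a genuine obstacle. The only subtle point worth stating carefully is the parsing of Eqn.~\eqref{eqn:panocs-partition}: although the union $\bigcup_{a' \ne a} I_{a'}^k$ ranges over all $a' \ne a$, in practice exactly one $a'$ (namely $a(i)$) witnesses the membership of each $i$, and this $a'$ can differ between $i$ and $i'$; the nontrivial case is precisely when these witnesses coincide, and that is where the second-level index $k$ earns its keep.
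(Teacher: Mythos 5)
Your proof is correct and follows exactly the route the paper intends: the paper states this as an immediate corollary of Lemma~\ref{lem:same-block-unconstrained} without spelling out the case analysis, and your argument (unrelated w.r.t.\ $a$ since both lie in $I_a^j$; vacuously unrelated w.r.t.\ any advertiser not shared; and if the second advertisers coincide, both impressions lie in that advertiser's $k$-th first-level subset, so Lemma~\ref{lem:same-block-unconstrained} applies again) is precisely the omitted justification. No gaps.
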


\paragraph{Group-level Decision.}
Fix any advertiser $a \in A$.
We say that two groups $I_a^{j,k}$ and $I_a^{j',k'}$ are adjacent if there exist impressions $i \in I_a^{j,k}$ and $i' \in I_a^{j',k'}$ such that $i$ and $i'$ are adjacent w.r.t.\ $a$.
%Otherwise, the two subblocks are unconstrained.
Then, as a further corollary of Lemma~\ref{lem:same-block-unconstrained}, Corollary~\ref{cor:setset-num}, and Corollary~\ref{cor:plus-minus-two-correlation}, we have:

\begin{corollary}
    \label{cor:group-level-degree-bound}
    Any group $I_a^{j,k}$ is adjacent to at most $8 \kmax$ other groups $I_a^{j',k'}$, whose superscripts are $j'\in\{j-2, j-1, j+1, j+2\}$ and $1 \le k' \le 2 \kmax$.
\end{corollary}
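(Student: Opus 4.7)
The plan is to bound the number of adjacent groups by independently bounding the possible values of $j'$ and of $k'$, then multiplying.

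First I would pin down the admissible $j'$. Two groups $I_a^{j,k}$ and $I_a^{j',k'}$ are adjacent only if there exist impressions $i \in I_a^{j,k} \subseteq I_a^j$ and $i' \in I_a^{j',k'} \subseteq I_a^{j'}$ that are adjacent w.r.t.\ $a$. If $j' = j$, then $i$ and $i'$ lie in the same first-level subset $I_a^j$, so by Lemma~\ref{lem:same-block-unconstrained} they are unrelated w.r.t.\ $a$, hence not adjacent. If $|j'-j| > 2$, then Corollary~\ref{cor:plus-minus-two-correlation} (applied in the direction dictated by whichever index is larger) precludes adjacency w.r.t.\ $a$. Consequently $j' \in \{j-2, j-1, j+1, j+2\}$, giving four possibilities.

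Next I would bound the admissible $k'$. By construction in Eqn.~\eqref{eqn:panocs-partition}, the second superscript $k'$ of a nonempty group $I_a^{j',k'}$ must be the first-level index of some advertiser $a'' \ne a$ in whose partition at least one impression of $I_a^{j'}$ lives. Invoking Corollary~\ref{cor:setset-num} uniformly over all such $a''$, the index $k'$ can only take values in $\{1, 2, \ldots, 2\kmax\}$. This yields at most $2\kmax$ possibilities for $k'$, independently of the choice of $j'$.

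Combining the two bounds gives at most $4 \cdot 2\kmax = 8\kmax$ candidate pairs $(j',k')$, and hence at most $8\kmax$ adjacent groups. The only step requiring care is ruling out $j' = j$: one should verify that ``unrelated w.r.t.\ $a$'' indeed forbids the kind of overlap needed for group-level adjacency (which it does, since adjacency w.r.t.\ $a$ is a strict strengthening of being related w.r.t.\ $a$, per Definition~\ref{def:randomized-round-correlation}). Everything else is a direct bookkeeping step on top of the three corollaries already established.
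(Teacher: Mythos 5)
Your proposal is correct and follows exactly the route the paper intends: the paper states this corollary as a direct consequence of Lemma~\ref{lem:same-block-unconstrained}, Corollary~\ref{cor:setset-num}, and Corollary~\ref{cor:plus-minus-two-correlation}, and your argument simply fills in that bookkeeping (four admissible values of $j'$ times at most $2\kmax$ values of $k'$). No gaps.
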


The group-level (negative) correlation is achieved using the following algorithm similar to the PanOCS for large bids, treating each group as a large bid.
For each group $I_a^{j,k}$, it returns either $a$ or $\neg a$ with 50-50 marginal probability.

Concretely, define an \emph{ex-ante} dependence graph $D_a$ for each advertiser $a$.
Let there be a node for each group $I_a^{j,k}$.
Further, let there be an arc from $I_a^{j,k}$ to $I_a^{j',k'}$ if they are adjacent, and $j < j'$.
The second condition indicates that arcs are from earlier groups to later ones.
The algorithm constructs an \emph{ex-post} dependence graph $D_a^*$ similar to the PanOCS for large bids.
It is parameterized by $0 < p < 1$, the probability of letting each group be a sender.
For each group:
\begin{enumerate}
    \item With probability $p$, let it be a \emph{sender}:
    \begin{enumerate}
        \item Pick a subsequent adjacent group $I_a^{j',k'}$, $j'\in\{j+1, j+2\}$, $1 \le k' \le 2 \kmax$, randomly.
        \item Return $a$ or $\neg a$ uniformly at random with a fresh random bit.
    \end{enumerate}
    \item Otherwise, let it be a \emph{receiver}:
    \begin{enumerate}
        \item If there exists a previous adjacent group $I_a^{j',k'}$ which is a sender and picks $I_a^{j,k}$, makes the opposite decision, i.e., return $a$ if group $I_a^{j',k'}$ returns $\neg a$, and vice versa.
        \item Otherwise, return $a$ or $\neg a$ uniformly at random with a fresh random bit.
    \end{enumerate}
\end{enumerate}

\paragraph{Impression-level Decision.}
Recall that the impression of each randomized round is associated with two advertisers $a$ and $a'$ and thus, two corresponding groups $I_a^{j,k}$ and $I_{a'}^{j',k'}$.
Follow the decision of one of the groups, chosen uniformly at random.
By following the decision of a group, say, $I_a^{j,k}$, the PanOCS picks $a$ if the group picks $a$, and picks $a'$ if the group picks $\neg a$.

\bigskip

The PanOCS for general bids is summarized in Algorithm~\ref{alg:panocs-weak}.
We next show a general analysis of the algorithm for any value of $0 < p < 1$.

\begin{algorithm}[t]
    \caption{Panoramic Online Correlated Selection (General Bids, Parameter $0 < p < 1$)}
    \label{alg:panocs-weak}
    \begin{algorithmic}
        \STATE initialize $j_a = 1$, $I_a^j = \emptyset$, and $I_a^{j,k} = \emptyset$ for any $a \in A$, any $j \ge 1$, and any $1 \le k \le 2 \kmax$.\\[1ex]
        \FORALL{group $I_a^{j,k}$}
            \smallskip
            \STATE \textbf{\emph{\# group-level decision}}
            \smallskip
            \STATE with probability $p$, let it be a \textbf{sender}:
                \INDSTATE let its decision be $a$ or $\neg a$ with a fresh random bit
                \INDSTATE randomly pick an $I_a^{j',k'}$, $j' \in \{j+1, j+2\}$, $1 \le k' \le 2 \kmax$, as the potential receiver
            \smallskip
            \STATE otherwise, i.e., with probability $1 - p$, let it be a \textbf{receiver}:
                \INDSTATE randomly pick a sender $I_a^{j',k'}$, $j' \in \{j-1, j-2\}$, $1 \le k' \le 2 \kmax$, who picks $I_a^{j,k}$
                \INDSTATE let $I_a^{j,k}$'s decision be $a$ if $I_a^{j',k'}$'s decision is $\neg a$, and vise versa
                \INDSTATE if no such $I_a^{j',k'}$ exists, let its decision be $a$ or $\neg a$ with a fresh random bit
        \ENDFOR
        \medskip
        \FORALL{impression $i \in I^R$, say, semi-assigned to $(a, S)$ and $(a', S')$}
            \medskip
            \STATE \textbf{\emph{\# 1st-level partition}}
            \STATE let $j_a \leftarrow j_a + 1$ if $i$ is adjacent to the first impression in $I_a^{j_a}$ w.r.t.\ $a$
            \STATE let $j_{a'} \leftarrow j_{a'} + 1$ if $i$ is adjacent to the first impression in $I_{a'}^{j_{a'}}$ w.r.t.\ $a'$
            \STATE let $I_a^{j_a} \leftarrow I_a^{j_a} \cup \{i\}$, and $I_{a'}^{j_{a'}} \leftarrow I_{a'}^{j_{a'}} \cup \{i\}$
            \medskip
            \STATE \textbf{\emph{\# 2nd-level partition}}
            \STATE let $I_a^{j_a,j_{a'}} \leftarrow I_a^{j_a,j_{a'}} \cup \{i\}$ and $I_{a'}^{j_{a'},j_a} \leftarrow I_{a'}^{j_{a'},j_a} \cup \{i\}$
            \medskip
            \STATE \textbf{\emph{\# impression-level decision}}
            \STATE follow either $I_a^{j_a,j_{a'}}$ or $I_{a'}^{j_{a'},j_a}$'s decision, each with probability half
            \medskip
        \ENDFOR
    \end{algorithmic}
\end{algorithm}

\begin{lemma}
    Algorithm~\ref{alg:panocs-weak} is a $\gamma$-PanOCS, where:
    \[
        \gamma = \frac{1}{16\kmax} \big(1-p\big) \left( 1 - \big(1 - \frac{p}{4\kmax} \big)^{4\kmax} \right)
        ~.
    \]
\end{lemma}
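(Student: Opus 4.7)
The plan is to adapt the proof of Lemma~\ref{lem:panocs-large-bid} to the two-level partition underlying Algorithm~\ref{alg:panocs-weak}. Fix an advertiser $a$ and a point $y \in [0, B_a)$, and list the impressions $i_1 < i_2 < \dots < i_k$ semi-assigned to $a$ with subsets containing $y$. Each consecutive pair $(i_\ell, i_{\ell+1})$ is adjacent w.r.t.\ $a$ by definition, so by Corollary~\ref{cor:panocs-partition-unconstrained} the two impressions lie in distinct groups $I_a^{j_\ell, k_\ell}$ and $I_a^{j_{\ell+1}, k_{\ell+1}}$, and by Corollary~\ref{cor:plus-minus-two-correlation} the first-level indices satisfy $j_{\ell+1} - j_\ell \in \{1, 2\}$, so these groups are adjacent in the sense used by Algorithm~\ref{alg:panocs-weak}. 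As in the warmup, the goal reduces to bounding the probability that $a$ is never selected in $i_1, \dots, i_k$ by $2^{-k}(1-\gamma)^{\max\{k-1, 0\}}$.

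The key computation is the probability of the ``good event'' $E_\ell$ that the anti-correlation mechanism between the two adjacent groups fires and is propagated down to both impressions: (i) at the impression level, $i_\ell$ follows the decision of $I_a^{j_\ell, k_\ell}$ rather than of its other-advertiser group (probability $\tfrac12$); (ii) symmetrically for $i_{\ell+1}$ (probability $\tfrac12$); and (iii) at the group level the receiver $I_a^{j_{\ell+1}, k_{\ell+1}}$ ends up picking the sender $I_a^{j_\ell, k_\ell}$. For (iii) I would condition on $I_a^{j_\ell, k_\ell}$ being a sender targeting $I_a^{j_{\ell+1}, k_{\ell+1}}$ (probability $p/(4\kmax)$ by Corollary~\ref{cor:group-level-degree-bound}); the number $Y$ of competing senders among the up to $4\kmax - 1$ other potential senders is stochastically dominated by $\mathrm{Binomial}(4\kmax - 1, p/(4\kmax))$, and using the identity $\tfrac{1}{i+1}\binom{n}{i} = \tfrac{1}{n+1}\binom{n+1}{i+1}$ exactly as in Lemma~\ref{lem:panocs-large-bid} yields
\[
\E\!\left[\tfrac{1}{Y+1}\right] = \tfrac{1}{p}\bigg(1 - \Big(1 - \tfrac{p}{4\kmax}\Big)^{4\kmax}\bigg).
\]
Multiplying the factors in (i), (ii), the receiver probability $(1-p)$, and the above expression gives exactly the stated $\gamma$. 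When $E_\ell$ fires, exactly one of $i_\ell, i_{\ell+1}$ selects $a$.

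Next I would replay the recursive argument from the proof of Theorem~\ref{thm:panocs-large-bid-weak}. Let $f_m$ be the probability that none of $E_1, \dots, E_m$ fires, introduce an auxiliary subevent $G_m \subseteq F_m$ that additionally pins down the sender/receiver coin and arc-choice of the group $I_a^{j_{m+1}, k_{m+1}}$, and case-split on its role to derive $f_m \le f_{m-1} - \gamma f_{m-2}$, and hence $f_m \le (1-\gamma) f_{m-1}$. Conditioned on $\bar E_1 \cap \cdots \cap \bar E_{k-1}$, the impression-level coins and the remaining (unmatched or impression-routed-away) group decisions make the $a$-versus-not-$a$ outcomes of $i_1, \dots, i_k$ behave as pairwise independent fair bits, contributing the $2^{-k}$ factor. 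Multiplying the two bounds yields the PanOCS guarantee.

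The main obstacle is the negative-dependence step in the recursion: the group $I_a^{j_{\ell+1}, k_{\ell+1}}$ plays the receiver role in $E_\ell$ and simultaneously a potential-sender role in $E_{\ell+1}$, so the group-level coins are shared across consecutive good events just as in Lemma~\ref{lem:panocs-large-bid}. The care is to verify that the additional competitors contributed by potential senders at positions $j-1, j-2$ with various $k' \in \{1, \dots, 2\kmax\}$ only worsen the receiver's chance of picking the desired sender, and that this extra competition is already absorbed into the $\E[1/(Y+1)]$ calculation, so that the unconditional bound $\Pr[E_\ell] \ge \gamma$ remains a valid conditional lower bound in the recursion.
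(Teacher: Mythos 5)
Your proposal follows essentially the same route as the paper's proof: the same two-level partition facts (consecutive $y$-containing impressions lie in distinct, group-level-adjacent groups with first-level index gap one or two), the same binomial identity producing the factor $\frac{1}{p}\big(1-(1-\frac{p}{4\kmax})^{4\kmax}\big)$, and the same $F_m$/$G_m$ recursion giving $f_m \le f_{m-1} - \gamma f_{m-2} \le (1-\gamma) f_{m-1}$, multiplied against the $2^{-k}$ factor. The one step you flag but do not carry out --- the competitor group at first-level distance two whose sender coin is correlated with the conditioning event --- is precisely the paper's Case 2b, which it closes by a Bayes-rule computation showing that, conditioned on $G_{m-1}$, this group acts as a competitor (a sender picking $I_a^{j_m,*}$ while the PanOCS does not follow its decision) with probability at most $\frac{p}{4\kmax}$, i.e., no worse than an independent fresh competitor, so the unconditional bound indeed survives the conditioning.
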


Then, Theorem~\ref{thm:panocs-general-bid} follows as a corollary, observing the stated value of $\gamma$ is at least:
\[
    \frac{1}{16\kmax} (1-p) (1 - e^{-p})
    ~,
\]
and letting $p = 2 - W(e^2) \approx 0.44285$ to maximize it, where $W(\cdot)$ is the product logarithm.

\begin{proof}
    Fix any advertiser $a$ and any point $y \in [0, B_a)$.
    Let $i_1 < i_2 < \dots < i_k$ be the impressions semi-assigned to advertiser $a$ and subsets containing $y$.
    Let $j_1 < j_2 < \dots < j_k$ be the subsets such that $i_\ell \in I_a^{j_\ell}$.
    Recall that each subset $I_a^{j_\ell}$ is further partitioned into groups and the impression $i_\ell$ belongs to exactly one group.
    Nonetheless, the index in the second-level partition is unimportant for the argument;
    we write the group as $I_a^{j_\ell,*}$.

    Suppose there exists an arc in the \emph{ex-post} dependence graph $D_a^*$ between two of $I_a^{j_\ell,*}$, $1 \le \ell \le k$, and further the PanOCS chooses to follows $I_a^{j_\ell,*}$'s decision in these two rounds.
    Then, point $y$ is assigned in exactly one of the two rounds.
    Otherwise, each of the $k$ rounds independently has probability half of assigning $y$.
    Hence, it remains to analyze the former event and upper bound the probability that it does \emph{not} happen by $(1 - \gamma)^{k-1}$, for the stated value of $\gamma$ in the lemma.

    Concretely, for any $0 \le m \le k$, let $F_m$ denote the event that for any two nodes $I_a^{j_\ell,*}$, $1 \le \ell \le m$, either there is no arc in $D_a^*$ between them, or the PanOCS does not follow their decisions in at least one of the two rounds.
    Let $f_m$ denote the probability of $F_m$.
    Trivially we have $f_0 = f_1 = 1$.
    
    Next, we inductively derive the following upper bounds of $f_m$ for $2 \le m \le k$:
    \[
        f_m \le (1 - \gamma) \cdot f_{m-1}
        ~.
    \]

To do so, further consider an auxiliary subevent $G_m$ of $F_m$ for any $1 \le m \le k-1$, which further requires that $I_a^{j_m,*}$ is a sender who picks $I_a^{j_{m+1},*}$ and the PanOCS follows $I_a^{j_m,*}$'s decision.
    Let $g_m$ be the probability of $G_m$.

    \paragraph{Auxiliary Event.}
    In order to have event $G_m$, we need:
    \begin{enumerate}
        \item The PanOCS follows $I_a^{j_m,*}$'s decision (probability $\frac{1}{2}$);
        \item Node $I_a^{j_m,*}$ is a sender (probability $p$);
        \item Node $I_a^{j_m,*}$ picks $I_a^{j_{m+1},*}$ (probability $\frac{1}{4\kmax}$); and
        \item Event $F_{m-1}$ (probability $f_{m-1})$.
    \end{enumerate}

    Further observe that these conditions are independent since they rely on disjoint subsets of random bits.
    We have:
    \begin{equation}
        \label{eqn:panocs-general-bid-auxiliary}
        g_m = \frac{p}{8\kmax} f_{m-1}
        ~.
    \end{equation}

    \paragraph{Main Event.}
    Next we turn to event $F_m$.
    There are two cases depending on the conditions below, which are independent because they rely on disjoint subsets of random bits.
    \begin{enumerate}
        \item The PanOCS follows $I_a^{j_m,*}$'s decision (probability $\frac{1}{2}$);
        \item Node $I_a^{j_m,*}$ is a receiver (probability $1-p$).
    \end{enumerate}

    \paragraph{Case 1: Last Node Does Not Matter.}
    If at least one of the above conditions do not hold, $F_m$ cannot fail due to a pair of nodes including $I_a^{j_m,*}$.
    Hence, it remains to ensure $F_{m-1}$.
    The contribution of this case to the probability of $F_m$ is:
    \begin{equation}
        \label{eqn:panocs-general-last-node-not-matter}
        \big( 1 - \frac{1-p}{2} \big) f_{m-1}
    \end{equation}

    \paragraph{Case 2: Last Node Matters.}
    If both conditions hold, we need to further ensure that $F_m$ does not fail due to a pair of nodes including $I_a^{j_m,*}$.
    There are two pairs of nodes of concern.
    First, there is always an arc from $I_a^{j_{m-1},*}$ to $I_a^{j_m,*}$ in $D_a$.
    We need either that the arc is \emph{not} realized in $D_a^*$, or that the PanOCS does not follows $I_a^{j_{m-1},*}$'s decision.
    Here, recall the assumption that the PanOCS does follow $I_a^{j_m,*}$'s decision.
    The second pair of nodes is $I_a^{j_{m-2},*}$ and $I_a^{j_m,*}$.
    There may or may not be an arc between them in the \emph{ex-ante} dependence graph $D_a$, depending on whether $j_m = j_{m-2} + 2$.
    This further divides the rest of the analysis into two subcases

    \paragraph{Case 2a.}
    The first subcase is when $j_m > j_{m-2} + 2$, which is also the bottleneck of the analysis.
    Then, there cannot be an arc from $I_a^{j_{m-2},*}$ to $I_a^{j_m,*}$ in the dependence graphs.
    The only arc of concern is from $I_a^{j_{m-1},*}$ to $I_a^{j_m,*}$.
    As an immediate implication, it is sufficient (but not necessary in general) if the first $m-1$ rounds are in $F_{m-1} \setminus G_{m-1}$.
    The contribution of this part is:
    \begin{equation}
        \label{eqn:panocs-general-last-node-matter-a1}
        \frac{1-p}{2} \big( f_{m-1} - g_{m-1} \big)
        ~.
    \end{equation}

    Even if the first $m-1$ rounds are in $G_{m-1}$, there may not be a perfect negative correlation between the decisions of impressions $i_{m-1}$ and $i_m$, due to the competition from other in-neighbors of $I_a^{j_m,*}$ in $D_a$.
    Concretely, there are $4\kmax-1$ in-neighbors other than $I_a^{j_{m-1},*}$.
    Further, by the assumption of the subcase, these in-neighbors are not $I_a^{j_\ell}$ for any $1 \le \ell \le m-2$
    Hence, each of them independently has probability $\frac{p}{4\kmax}$ of being a sender who picks $I_a^{j_m,*}$.
    The probability that $I_a^{j_m,*}$ picks $I_a^{j_{m-1},*}$ instead of one of these competitors is equal to:
    \begin{align*}
        \sum_{i=0}^{4\kmax-1} & \frac{1}{i+1} \bigg(\frac{p}{4\kmax}\bigg)^i \bigg(1-\frac{p}{4\kmax}\bigg)^{4\kmax-1-i} \binom{4\kmax-1}{i} \\
        & 
        = \sum_{i=0}^{4\kmax-1} \frac{1}{4\kmax} \bigg(\frac{p}{4\kmax}\bigg)^i \bigg(1-\frac{p}{4\kmax}\bigg)^{4\kmax-1-i} \binom{4\kmax}{i+1} \\
        &
        = \frac{1}{p} \bigg( 1 - \big(1 - \frac{p}{4\kmax}\big)^{4\kmax} \bigg)
        ~.
    \end{align*}

    Hence, the contribution of this part to the probability of $F_m$ is:
    \begin{equation}
        \label{eqn:panocs-general-last-node-matter-a2}
        \frac{1-p}{2} \bigg( 1 - \frac{1}{p} \bigg( 1 - \big(1 - \frac{p}{4\kmax}\big)^{4\kmax} \bigg) \bigg) g_{m-1}
        ~.
    \end{equation}

    Summing up the contributions in Equations~\eqref{eqn:panocs-general-last-node-not-matter}, \eqref{eqn:panocs-general-last-node-matter-a1}, and \eqref{eqn:panocs-general-last-node-matter-a2}, cancelling $g_{m-1}$ using Eqn.~\eqref{eqn:panocs-general-bid-auxiliary}, and plugging in the value of $\gamma$, we have:
    \[
        f_m = f_{m-1} - \gamma \cdot f_{m-2} \le (1 - \gamma) f_{m-1}
        ~.
    \]

    \paragraph{Case 2b.}
    The second subcase is when $j_m = j_{m-2} + 2$.
    We will show that the probability of $F_m$ is upper bounded by the previous case.
    In this case, having the first $m-1$ rounds in $F_{m-1} \setminus G_{m-1}$ is no longer a sufficient condition since there may still be an arc $I_a^{j_{m-2},*}$ to $I_a^{j_m,*}$.
    Nonetheless, Eqn.~\eqref{eqn:panocs-general-last-node-matter-a1} is still an upper bound of the contribution of this part.

    Next, consider the case when the first $m-1$ rounds are in $G_{m-1}$.
    Comparing with the previous case, the main difference is that $I_a^{j_{m-2},*}$ is also an in-neighbor of $I_a^{j_m,*}$ in $D_a$.
    However, $I_a^{j_{m-2},*}$ serve as an competitor only when the PanOCS does \emph{not} follow its decision;
    otherwise, having an arc from $I_a^{j_{m-2},*}$ to $I_a^{j_m,*}$ in the \emph{ex-post} dependence graph $D_a^*$ also precludes event $G_{m-1}$.

    Next, we argue that Eqn.~\eqref{eqn:panocs-general-last-node-matter-a2} continues to serve as an upper bound of the contribution from this part.
    Formally, let $H$ denote the event that $I_a^{j_{m-2},*}$ is a sender who picks $I_a^{j_m,*}$, and further the PanOCS does \emph{not} follow $I_a^{j_{m-2},*}$'s decision.
    We show that \emph{conditioned on $G_{m-1}$}, event $H$ holds with probability at most $\frac{p}{4\kmax}$.
    Observe the intersection of events $G_{m-1}$ and $H$ is equivalent to the following collection of independent conditions:
    \begin{enumerate}
        \item The PanOCS follows $I_a^{j_{m-1},*}$'s decision (probability $\frac{1}{2}$);
        \item Node $I_a^{j_{m-1},*}$ is a sender (probability $p$);
        \item Node $I_a^{j_{m-1},*}$ picks $I_a^{j_m,*}$ (probability $\frac{1}{4\kmax}$);
        \item The PanOCS does \emph{not} follows $I_a^{j_{m-2},*}$'s decision (probability $\frac{1}{2}$);
        \item Node $I_a^{j_{m-2},*}$ is a sender (probability $p$);
        \item Node $I_a^{j_{m-2},*}$ picks $I_a^{j_m,*}$ (probability $\frac{1}{4\kmax}$); and
        \item Event $F_{m-3}$ (probability $f_{m-3}$).
    \end{enumerate}

    Putting together, the joint event happens with probability:
    \[
        \frac{p^2}{32 \kmax^2} f_{m-3}
        ~.
    \]

    Then, the conditional probability bound is:
    \begin{align*}
        \Pr \big[ H \,|\, G_{m-1} \big] 
        &
        = \frac{p^2}{64 \kmax^2} \frac{f_{m-3}}{g_{m-1}} && \text{(Bayes's rule)} \\
        &
        = \frac{p}{8 \kmax} \frac{f_{m-3}}{f_{m-2}} && \text{(Eqn.~\eqref{eqn:panocs-general-bid-auxiliary})} \\
        &
        \le \frac{p}{4\kmax} 
        ~.
    \end{align*}

    The last inequality is due to the observation that having event $F_{m-3}$ and having PanOCS not follow $I_a^{j_{m-2},*}$'s decision is sufficient for $F_{m-2}$.
\end{proof}

\section{Hybrid Algorithm}
\label{sec:hybrid}

This section gives a $0.5016$-competitive algorithm to prove Theorem~\ref{thm:main}.
This is a hybrid algorithm which combines the basic algorithm in Section~\ref{sec:basic-algorithm} and the algorithm of \citet{MehtaSVV/JACM/2007} (see also Appendix~\ref{app:small-bid}) to handle large and small bids with different strategies.
For large bids, we continue to utilize the negative correlation enabled by PanOCS.
For small bids, however, we fall back to either deterministic matches, or randomized matches with independent randomness.
%This is because independent rounding suffices to break the $0.5$-barrier.
By doing so, we can exploit the PanOCS for large bids (Algorithm~\ref{alg:large-bid-panocs-improved}, Theorem~\ref{thm:panocs-large-bid}, and Lemma~\ref{lem:panocs-large-bid-general}), and enjoy its superior performance compared to its counterpart for general bids (Algorithm~\ref{alg:panocs-weak} and Theorem~\ref{thm:panocs-general-bid}).
%are more efficiency

%Because we know that PanOCS(large bids) has a better performance than PanOCS(general bids) used in the basic algorithm, if given that all bids are large, a natural improvement idea is to use the PanOCS(large bids) which aims for a larger negative correlation between large bids, while combining the technique in the small bid analysis. 

% The main idea in the large bid case is the improved expected primal increment created by PanOCS. We realize it by setting different requirement for dual variable $\alpha$ for an point $y$, based on the number of randomized round $y$ involved. Meanwhile, in the small bid case, the main idea is to balance the budget of every advertiser. We realize it by setting different requirement for $\alpha$ at $y$ in $[0,B_a/2)$ and $[B_a/2,B_a)$. 

% In the previous algorithm, we unitedly deal with large and small bids, which limit the power of PanOCS. (Refer to Lemma~\ref{thm:panocs-large-bid} and Theorem~\ref{thm:panocs-general-bid}.) A straightforward improvement is to create the correlation only between large bids, aiming a more powerful PanOCS. At the same time, we set different dual requirement for $[0,B_a/2)$ and $[B_a/2,B_a)$, to benefit the small bids.

%\paragraph{Semi-randomized Online Primal Dual Algorithm (Recap).}
%to 
\subsection{Online Primal Dual Algorithm}

\subsubsection{Overview}

The hybrid algorithm is also an oblivious semi-randomized algorithm following the online primal dual framework.
When an impression $i$ arrives, each advertiser $a$ makes two offers $\Delta_a^R \beta_i$ and $\Delta_a^D \beta_i$.
They would be the increments of $\beta_i$ if $i$ is semi-assigned or assigned to advertiser $a$ respectively, determined by a specific dual update rule to be explained shortly.
Then, the impression $i$ either picks two offers of the first kind, or one offer of the second kind, whichever maximizes $\beta_i$.

Recall that we call the first case a randomized round, and say that the impression is semi-assigned to the two advertisers and the corresponding subsets given by the panoramic interval-level assignments.
In this case, we select one of the advertiser-subset combinations using the PanOCS for large bids in Algorithm~\ref{alg:large-bid-panocs-improved}, which has been presented in a generalized form that accepts both large and small bids but handle them differently.
%handles large and small bids differently.
In particular, it introduces negative correlation only among large bids.
%, and make the randomized decision by a specific panoramic online correlated selection.

Further recall that we call the second case a deterministic round, and say that the impression is assigned to the advertiser and the corresponding subset by the panoramic interval-level assignments.

In sum, the hybrid algorithm, defined in Algorithm~\ref{alg:hybrid-algorithm}, is almost identical to the basic algorithm at the high-level.
However, it uses a PanOCS that handles large and small bids differently, which in turn leads to a different dual update rule and different definitions of the offers $\Delta_a^R \beta_i$ and $\Delta_a^D \beta_i$.
The next subsections detail these differences.

\subsubsection{Primal Increments}

We first introduce a lower bound of the primal objective.
In the rest of the section, let $\gamma = 0.05144$ be the ratio given by Theorem~\ref{thm:panocs-large-bid} and its generalization stated as Lemma~\ref{lem:panocs-large-bid-general}.
Recall that $k_a^L(y)$ denote the number of times $y$ is semi-assigned by large bids before it is semi-assigned by a small bid.
For any advertiser $a \in A$ and any point $y \in [0, B_a)$, define:
\[
    \bar{x}_a(y) \defeq \begin{cases}
        1 - 2^{-k_a(y)} & y \in [0, \frac{B_a}{2}) ~; \\
        1 - 2^{-k_a(y)} (1 - \gamma)^{k_a^L(y)-1} & y \in [\frac{B_a}{2}, B_a), \textnormal{$k_a(y) \ne 1$ or $k_a^L(y) \ne 0$} ~; \\
        \frac{1}{2} - \frac{\gamma}{4} & y \in [\frac{B_a}{2}, B_a), 
        \textnormal{$k_a(y) = 1$ and $k_a^L(y) = 0$} ~.
    \end{cases}
\]

By Lemma~\ref{lem:panocs-large-bid-general}, for any advertiser $a \in A$ and any point $y \in [0, B_a)$:
\[
    x_a(y) \ge \bar{x}_a(y)
\]

The different definitions for the first and second halves of the interval $[0, B_a)$ differently is motivated by the online primal dual analysis of the small-bid algorithm of \citet{MehtaSVV/JACM/2007} (see Appendix~\ref{app:small-bid}), which also handles the two halves differently.
The lower bound of the second half is exactly the guarantee given by Lemma~\ref{lem:panocs-large-bid-general}, apart from a special case when $k_a(y) = 1$ and $k_a^L(y) = 0$.
The special case corresponds to when the first semi-assignment to $y$ is a small bid.
In this case, we penalize the small bid in order to reserve sufficient primal increments for future semi-assignments that are potentially large bids.
The analysis below will substantiate this intuition.
In the first half, however, we give up the correlation given by the PanOCS.
We remark that it is not an inferior choice and can be seen as banking up primal increments for the future.

Accordingly, define a surrogate primal objective which lower bounds the actual primal:
\[
    \bar{P} \defeq \sum_{a \in A} \int_0^\infty \bar{x}_a(y) dy
    ~.
\]

Hence, it would suffice to show competitiveness of the algorithm w.r.t.\ the surrogate objective.

\begin{algorithm}[t]
    \caption{Hybrid Algorithm \big(Parameterized by $\Delta\alpha_L^R(\cdot)$, $\Delta\alpha_R^R(\cdot)$, $\Delta\alpha_L^D(\cdot)$, and $\Delta\alpha_R^D(\cdot)$\big)}
    \label{alg:hybrid-algorithm}
    \begin{algorithmic}
        \smallskip
        \STATE \textbf{state variables:~}\\
            $k_a(y) \ge 0$, number of times $y$ is semi-assigned;
            $k_a(y) = \infty$ if $y$ is assigned in a deterministic round\\[1ex]
            %$k_a(y) \ge 0$, number of rounds $y$ is semi-assigned;
            %$k_a(y) = \infty$ if $y$ is assigned in a deterministic round\\[1ex]
        %
        \FORALL{impression $i \in I$}
            \FORALL{advertiser $a \in A$}
                \STATE computer subset $Y_{ai} \subseteq [0,B_a)$ using panoramic interval-level assignment (Seciton~\ref{sec:panorama})
                \STATE compute $\Delta_a^R \beta_i$ and $\Delta_a^D \beta_i$ according to Equations~\eqref{eqn:hybrid-beta-r-increment} and \eqref{eqn:hybrid-beta-d-increment} % in the dual update rule parameterized by $\Delta\alpha_L^R(\cdot)$, $\Delta\alpha_R^R(\cdot)$, $\Delta\alpha_L^D(\cdot)$, and $\Delta\alpha_R^D(\cdot)$
            \ENDFOR
            % %
            % \STATE find $a_1$, $a_2$ that maximize $\Delta_{a_1}^R \beta_i + \Delta_{a_2}^R \beta_i$, and $a_3$ that maximizes $\Delta_{a_3}^D \beta_i$
            % \STATE \textbf{if} $\Delta^R_{a_1} \beta_i + \Delta^R_{a_2} \beta_i \ge \Delta^D_{a_3} \beta_i$ \hspace*{\fill} \textbf{\emph{\# randomized round}}
            %     \INDSTATE assign $i$ to $a^* \in \{a_1,a_2\}$ which PanOCS(large bids) selects and the corresponding subsets
            %     \STATE \textbf{else} (i.e., $\Delta^R_{a_1} \beta_i + \Delta^R_{a_2} \beta_i < \Delta^D_{a^*} \beta_i$) \hspace*{\fill} \textbf{\emph{\# deterministic round}}
            %     \INDSTATE assign $i$ to $a_3$ and the corresponding subset
            % \STATE \textbf{endif}
            \STATE find $a_1$, $a_2$ that maximize $\Delta_a^R \beta_i$, and $a^*$ that maximizes $\Delta_a^D \beta_i$
            \STATE \textbf{if} $\Delta^R_{a_1} \beta_i + \Delta^R_{a_2} \beta_i \ge \Delta^D_{a^*} \beta_i$ \hspace*{\fill} \textbf{\emph{\# randomized round}}
                    \INDSTATE assign $i$ to what PanOCS (large bids) selects between $a_1$ and $a_2$ and the corresponding subsets
                \STATE \textbf{else} (i.e., $\Delta^R_{a_1} \beta_i + \Delta^R_{a_2} \beta_i < \Delta^D_{a^*} \beta_i$) \hspace*{\fill} \textbf{\emph{\# deterministic round}}
                    \INDSTATE assign $i$ to $a^*$ and the corresponding subset
            \STATE \textbf{endif}
        \ENDFOR
    \end{algorithmic}
\end{algorithm}

\paragraph{Primal-increment Constants.}
%
%Before introducing the increment of $\beta_i$, 
We continue to introduce some constants for the increment in $\bar{x}_a(y)$ as a point $y$ gets further assignments and semi-assignments, depending on whether they are from large or small bids.
In the following discussions, subscripts $L$ and $R$ represent if $y$ is in the left half or the right half of the interval $[0, B_a)$.
Superscripts specify the nature of the assignments.
In particular, $D$ and $R$ stand for deterministic assignments and randomized semi-assignments respectively.
For points in the second half of the interval, a further superscript $L$ or $S$ indicates whether the assignments are from large or small bids.

We start with the simpler left half of the interval, which follows from the definition of $\bar{x}_a(y)$.
\begin{lemma}
    \label{lem:hybrid-primal-increment-left}
    For any advertiser $a \in A$ and any point $y \in [0, \frac{B_a}{2})$:
    \begin{itemize}
        \item The $k$-th \textbf{semi-assignment} to advertiser $a$ and point $y$ increases $\bar{x}_a(y)$ by:
            \[
                \Delta x^R_L(k) \defeq 2^{-k}
                ~.
            \]
        \item A \textbf{deterministic assignment} to advertiser $a$ and point $y$ after $k-1$ semi-assignments increases $\bar{x}_a(y)$ by:
            \[
                \Delta x^D_L(k) \defeq 2^{-k+1}
                ~.
            \]
            %
    %Let $\Delta x(y,j) = \Delta x^D_L x(j)$ when $y$ is involved in a deterministic round after $j-1$ randomized rounds. 
    %    $$
    %    \Delta x^D_L(j) = 1 - \sum_{l=1}^{j-1} \Delta x^R_L x(j).
    %    $$
    \end{itemize}
\end{lemma}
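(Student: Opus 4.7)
The proof is a direct computation from the definition of $\bar{x}_a(y)$ restricted to the left half of the interval, where $\bar{x}_a(y) = 1 - 2^{-k_a(y)}$ depends only on $k_a(y)$ (and not on $k_a^L(y)$). So the plan is simply to evaluate $\bar{x}_a(y)$ before and after each type of event and take differences.

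For the semi-assignment part, I would observe that immediately before the $k$-th semi-assignment, $k_a(y) = k-1$, so $\bar{x}_a(y) = 1 - 2^{-(k-1)}$; immediately after, $k_a(y) = k$, so $\bar{x}_a(y) = 1 - 2^{-k}$. The increment is therefore $2^{-(k-1)} - 2^{-k} = 2^{-k}$, matching $\Delta x^R_L(k)$.

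For the deterministic assignment part, before the event we have $k_a(y) = k-1$ (reflecting the $k-1$ preceding semi-assignments) and $\bar{x}_a(y) = 1 - 2^{-(k-1)}$; after the event, by the convention $k_a(y) = \infty$ stated in the state variables of Algorithm~\ref{alg:hybrid-algorithm}, we get $\bar{x}_a(y) = 1$. The increment is $2^{-(k-1)} = 2^{-k+1}$, matching $\Delta x^D_L(k)$.

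There is no obstacle here: the statement is essentially a bookkeeping definition. The only thing worth flagging is that the formula $\bar{x}_a(y) = 1 - 2^{-k_a(y)}$ in the left half has no dependence on $k_a^L(y)$ or on $\gamma$, which reflects the design choice described in the paragraph right after the definition, namely that for the left half we forgo the PanOCS-based negative correlation and treat semi-assignments as if they used independent random bits. Consequently the primal-increment constants coincide with the classical $\Delta x(k)$ for the OCS-free case and are independent of whether the bids in question are large or small, which is why only two functions $\Delta x^R_L$ and $\Delta x^D_L$ are needed here (as opposed to the four-way split that will arise for the right half).
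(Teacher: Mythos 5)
Your proof is correct and follows the same route as the paper, which proves this lemma simply by noting it follows from the definition $\bar{x}_a(y) = 1 - 2^{-k_a(y)}$ on $[0, \frac{B_a}{2})$; your before/after difference computation (including the $k_a(y) = \infty$ convention giving $\bar{x}_a(y) = 1$ for the deterministic case) is exactly the intended bookkeeping.
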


%Observe that $x^D_L(k) = \sum_{\ell = k}^\infty x^R_L(\ell)$.
As for the second half of the interval, the increments depend on the natural of the previous semi-assignments.
Nonetheless, we shall define history-free constants by taking the smallest increment over all possibilities.

\begin{lemma}
    \label{lem:hybrid-primal-increment-right}
    For any advertiser $a \in A$ and any $y \in [\frac{B_a}{2}, B_a)$:
    \begin{itemize}
        \item The $k$-th \textbf{semi-assignment} to advertiser $a$ and point $y$, if it is a \textbf{small bid}, increases $\bar{x}_a(y)$ by at least:
            \[
                \Delta x^{RS}_R(k) \defeq \begin{cases}
                    \frac{1}{2} - \frac{\gamma}{4} & k = 1 ~; \\[1ex]
                    2^{-k} (1-\gamma)^{k-2} & k \ge 2 ~.
                \end{cases}
            \]
        \item The $k$-th \textbf{semi-assignment} to advertiser $a$ and point $y$, if it is a \textbf{large bid}, increases $\bar{x}_a(y)$ by at least:
            \[
                \Delta x^{RL}_R(k) \defeq \begin{cases}
                    \frac{1}{2} & k = 1 ~; \\[1ex]
                    2^{-k} (1-\gamma)^{k-2} (1+\gamma) & k \ge 2 ~.
                \end{cases}
            \]
        \item A \textbf{deterministic assignment} to advertiser $a$ and point $y$ after $k-1$ semi-assignments increases $\bar{x}_a(y)$ by at least:
            \[
                \Delta x^D_L(k) \defeq 2^{-k+1} (1-\gamma)^{k-2}
                ~.
            \]
    \end{itemize}
\end{lemma}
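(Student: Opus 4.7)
The plan is a direct case analysis: for each of the three update types in the lemma, compute $\bar{x}_a(y)$ before and after the update using the piecewise definition on $[B_a/2, B_a)$, and verify the claimed lower bound on the increment by minimizing over all admissible pre-update states $(k_a(y), k_a^L(y))$.

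First I would catalog how each update type transforms the pair $(k_a(y), k_a^L(y))$. Any semi-assignment increments $k_a(y)$ by one. A large-bid semi-assignment additionally increments $k_a^L(y)$ if and only if $k_a^L(y) = k_a(y)$ pre-update, i.e., no small bid has yet semi-assigned $y$; once a small bid semi-assigns $y$, $k_a^L(y)$ is frozen for all subsequent large-bid semi-assignments. A small-bid semi-assignment leaves $k_a^L(y)$ at its current value (freezing it if not already frozen). A deterministic assignment sends $k_a(y) \to \infty$, driving the formula $1 - 2^{-k_a(y)}(1-\gamma)^{k_a^L(y)-1}$ up to $1$. With this bookkeeping, each increment becomes an explicit algebraic expression in the pre-update pair, and only finitely many shapes of histories remain to check.

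Next I would argue that the binding history—the one achieving the minimum increment—is the canonical one in which all previous $k-1$ semi-assignments to $y$ were large bids, so that $k_a^L(y) = k_a(y) = k-1$ pre-update. Under this history, a small-bid $k$-th semi-assignment shifts the state from $(k-1, k-1)$ to $(k, k-1)$, giving increment $2^{-(k-1)}(1-\gamma)^{k-2} - 2^{-k}(1-\gamma)^{k-2} = 2^{-k}(1-\gamma)^{k-2}$; a large-bid $k$-th semi-assignment shifts $(k-1,k-1)$ to $(k,k)$, and factoring $2^{-k}(1-\gamma)^{k-2}$ from the difference yields $2^{-k}(1-\gamma)^{k-2}(1+\gamma)$; a deterministic assignment, by the telescoping tail $\sum_{\ell\ge k} \Delta x^{RL}_R(\ell)$, yields $2^{-k+1}(1-\gamma)^{k-2}$. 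These are precisely the claimed $\Delta x^{RS}_R(k)$, $\Delta x^{RL}_R(k)$, and $\Delta x^{D}_R(k)$ for $k \ge 2$. The $k = 1$ bounds match the special-case value $\tfrac{1}{2} - \tfrac{\gamma}{4}$ and the generic value $\tfrac{1}{2}$ directly.

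The main obstacle I anticipate is verifying that non-canonical histories—those in which a small bid has already frozen $k_a^L(y)$ at a value strictly below $k_a(y)$—do not yield a smaller increment. The delicate sub-case is the boundary state $(1,0)$, where $\bar{x}_a(y) = \tfrac{1}{2} - \tfrac{\gamma}{4}$ rather than $\tfrac{1}{2}$; the penalty $\tfrac{\gamma}{4}$ is calibrated precisely so that whichever update arrives next recovers enough to clear the required threshold, which I would confirm with a short algebraic check at $k = 2$. For $k \ge 3$ with $k_a^L < k-1$, monotonicity of the factor $(1-\gamma)^{k_a^L-1}$ in $k_a^L$ lets me dominate the non-canonical increment by the canonical one, reducing everything back to the clean computation above.
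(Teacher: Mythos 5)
Your proposal is correct and follows essentially the same route as the paper's proof: a case analysis on the definition of $\bar{x}_a(y)$ over $[\frac{B_a}{2}, B_a)$, identifying the all-large-bid history (so $k_a^L(y)=k_a(y)=k-1$ before the update) as the binding case, checking the special state $k_a(y)=1$, $k_a^L(y)=0$ separately at $k=2$, and dominating the remaining frozen-$k_a^L$ histories via monotonicity in $k_a^L$ together with $(1-\gamma)(1+\gamma)\le 1$. No substantive differences from the paper's argument.
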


\begin{proof}
    Recall the definition of $\bar{x}_a(y)$.
    When $y \in [ \frac{B_a}{2}, B_a)$, it falls into the second or the third case, which we restate below.
    If $k_a(y) \ne 1$ or $k_a^L(y) \ne 0$, we call it the regular case:
    \begin{equation}
        \label{eqn:hybrid-primal-increment-right-regular}
        \bar{x}_a(y) = 1 - 2^{-k_a(y)} (1 - \gamma)^{k_a^L(y)-1}.
    \end{equation}

    If $k_a(y) = 1$ and $k_a^L(y) = 0$, we call it special case:
    \begin{equation}
        \label{eqn:hybrid-primal-increment-right-special}
        \bar{x}_a(y) = \frac{1}{2} - \frac{\gamma}{4}.
    \end{equation}
    \textbf{(Semi-assignment, small bid)~}
    The case of $k = 1$ follows by the definition in the special case, i.e., Eqn.~\eqref{eqn:hybrid-primal-increment-right-special}.
    For $k \ge 2$, the expression of $\bar{x}_a(y)$ follows the regular case in Eqn.~\eqref{eqn:hybrid-primal-increment-right-regular}, both before and after this semi-assignment, and $k_a^L(y) \le k-1$ stays the same.
    Hence, the increment equals:
    \[
        2^{-k} (1-\gamma)^{k_a^L(y)-1}
        ~,
    \]

    The minimum is achieved when $k^L_a(y) = k-1$, and equals the $\Delta x^{RS}_R(k)$ defined in the lemma.\\[2ex]
    \textbf{(Semi-assignment, large bid)~}  
    First consider the binding case when all semi-assignments to $y$ are large bids.
    Then, $k_a(y) = k_a^L(y)$ holds throughout, in particular, before and after the semi-assignment at hand.
    Hence, if follows from the definition of $\bar{x}_a(y)$ that its increment equals the $\Delta x^{RL}_R(k)$ defined in the lemma.

    Next, suppose there was some previous semi-assignment to $y$ that is a small bid.
    If $k = 2$, it corresponds to the case the first semi-assignment is a small bid and the second one is a large bid.
    In other words, $\bar{x}_a(y)$ before the large bid at hand equals $\frac{1}{2} - \frac{\gamma}{4}$ due to the special case in its definition.
    After the semi-assignment of the large bid, $\bar{x}_a(y) = \frac{3}{4}$ by definition.
    The increment is therefore $\frac{1+\gamma}{4}$, which equals the $\Delta x^{RL}_R(k)$ for $k = 2$ defined in the lemma.
    This part crucially uses that $\bar{x}_a(y)$ is smaller than $\frac{1}{2}$ in the special case, reserving part of the gain for the second round.

    Finally, suppose $k \ge 3$.
    The expression of $\bar{x}_a(y)$ follows the regular case in Eqn.~\eqref{eqn:hybrid-primal-increment-right-regular}, both before and after this semi-assignment, and $k_a^L(y)$ stays the same.
    Hence, the increment equals:
    \[
        2^{-k} (1-\gamma)^{k_a^L(y)-1}
        ~.
    \]
 
    Further observe that we have $k_a^L(y) \le k - 2$ which achieves equality only when the first $k-2$ semi-assignments are all large bids and the $k-1$-th semi-assignment is a small one.
    Therefore, the increment is at least:
    \[
        2^{-k} (1-\gamma)^{k-3}
        ~.
    \]
    
    This is strictly greater than the $\Delta x^{RL}_R(k)$ defined in the lemma because $1 > (1-\gamma)(1+\gamma)$.\\[2ex]
    \textbf{(Deterministic assignment)~}
    Observe that $k_a^L(y) \le k_a(y) = k-1$ before the assignment.
    By the definition of $\bar{x}_a(y)$, the increment equals:
    \[
        2^{-k+1} (1-\gamma)^{k_a^L(y)-1}
        ~,
    \]

    It is minimized when $k_a^L(y) = k - 1$, and equals the $\Delta x^D_R(k)$ defined in the lemma.
\end{proof}

\subsubsection{Invariants}

The dual update rule is driven by two invariants.
The first one is easy to state:
\begin{invariant}%[``Primal'' Equals Dual]
    Dual increment equals the primal increment defined in the previous subsection. 
\end{invariant}

%\subsubsection*{$\alpha$-Invariant.}
%and $\alpha$-requirement Function} 
%
Next we explain another invariant that determines the value of $\alpha_a(y)$ based on the status of advertiser $a$.
It is similar to Eqn.~\eqref{eqn:alpha-invariant} in the basic algorithm yet more complicated.
To begin with, its counterpart in the basic algorithm uses only one group of parameters, while here it needs four:
%it depends on four sets of parameters instead:
%        which determines $\alpha_a$ and the status of $a$, depends on the four sets of parameters: 
%
\begin{itemize}
    \item $\Delta\alpha_L^R(k)$:
        Increment in $\alpha_a(y)$ when $y \in [0, \frac{B_a}{2})$ is semi-assigned for the $k$-th time.
    \item $\Delta\alpha_R^R(k)$:
        Increment in $\alpha_a(y)$ when $y \in [\frac{B_a}{2}, B_a)$ is semi-assigned for the $k$-th time.
    \item $\Delta\alpha_L^D(k)$:
        Increment in $\alpha_a(y)$ when $y \in [0, \frac{B_a}{2})$ is deterministically assigned if there are $k-1$ semi-assignments before that.
    \item $\Delta\alpha_R^D(k)$:
        Increment in $\alpha_a(y)$ when $y \in [\frac{B_a}{2}, B_a)$ is deterministically assigned if there are $k-1$ semi-assignments before that.
\end{itemize}

These parameters will be selected in the online primal analysis by solving an LP to optimize the competitive ratio.
%optimized in the analysis to achieve the best

Let us make a few remarks regarding the design of these parameters.
First, observe that the increment in $\alpha_a(y)$ does not depend on whether the bid is small or large.
This is intentional so that the subsequent of arguments for approximate dual feasibility is history-free.
It also means that the costs of having small bids, which means smaller primal increments, are completely charged to the $\beta$ variables.
This intuitively indicates that the algorithm is less likely to choose small bids in randomized rounds compared to the large ones, all other things being equal.
%because the PanOCS (large bids) do not create negative correlation for them.

Further, we handle the left and right halves of the interval $[0, B_a)$ differently.
Again, this is motivated by the analysis of the small-bid algorithm, and is aligned with our definition of the primal increments in the previous subsection.

Finally, we introduce two additional groups of parameters to model the increments in $\alpha_a(y)$ due to deterministic assignments.
By contrast, the basic algorithm uses an ad-hoc choice of $\Delta \alpha^D(k) = \sum_{\ell=k}^\infty \Delta \alpha^R(\ell)$.
We bring in the additional parameters because a constraint concerning the gain of $\beta$ variables in deterministic assignments, i.e., Eqn.~\eqref{eqn:beta-bound-half-to-a}, is nonbinding in the analysis of the basic algorithm.
In other words, we could have let the $\beta$ variables get less and let the $\alpha$ variables get more in deterministic assignments.
Although the slack is inconsequential in the basic algorithm, the extra gain is crucial for the hybrid algorithm and its analysis.

Let $k_a^R(y)$ be the number of semi-assignments to $y$ so that we retain this information even if $y$ has been eventually deterministically assigned.
Observe that $k_a^R(y) = k_a(y)$ if $k_a(y) < \infty$.

\begin{invariant}
    For any advertiser $a \in A$ and any point $y \in [0, B_a)$:
    %Noticed that if $k_a(y) \neq \infty$, then $k_a(y) = k_a^R(y)$ . Next, we define the $\alpha$-invariant to be 
    %
    \begin{align*}
        \alpha_a(y) = \begin{cases}
            \displaystyle
            \sum_{\ell=1}^{k_a^R(y)} \Delta\alpha^R(y,\ell)
            & k_a(y) \neq \infty \\
            \displaystyle
            \sum_{\ell=1}^{k_a^R(y)} \Delta\alpha^R(y,\ell) + \Delta\alpha^D({k_a^R(y)+1})
            & k_a(y) = \infty.
        \end{cases}
    \end{align*}
\end{invariant}

%At first, we introduce the $\alpha$-invariant. It is defined by two different $\alpha$-requirement functions: a randomized requirement function $\Delta\alpha^R:[0,B_a)\times Z^+ \rightarrow [0,1)$ and a deterministic one $\Delta\alpha^D:[0,B_a)\times Z^+ \rightarrow [0,1)$. For any $y\in[0,B_a)$ and $j\in Z^+$, $\Delta\alpha^R(y,j)$ and $\Delta\alpha^D(y,j)$ represent the required increment of $\alpha_a$ at point $y$ when $y$ is involved in a randomized round and a deterministic round respectively, based on it has been already involved in $j-1$ randomized rounds. 

%Recall that in the basic algorithm, we set the same requirement for a single level $j$, which means that $\forall y \in [0,B_a), \forall j\in Z^+$, $\Delta\alpha^R(y,j) = \Delta\alpha(j)$. Besides, we simply set the deterministic requirement to be the sum of all the randomized requirements $y$ may ask for in the future, which is $\Delta\alpha^D(y,j) = \sum_{l=j}^{\infty} \Delta\alpha(j)$. 

\subsubsection{Dual Increments: \texorpdfstring{$\alpha$}{Alpha} Variables}
%\label{subsec:hybrid-dual-updating}

%\paragraph{Increments of $\alpha$ Variables.}
%
%However, in the hybrid algorithm, there are two main difference.
%Inspired by the idea in the small bid algorithm, the first is that for a fixed $j$, we set different requirements in the left half interval and the right one.
%The second is that we separately make $\Delta\alpha^R(\cdot)$ and $\Delta\alpha^D(\cdot)$ by different parameters: ($\Delta\alpha_L^R$, $\Delta\alpha_R^R$) and ($\Delta\alpha_L^D$, $\Delta\alpha_R^D$). 

We further introduce notations $\Delta \alpha^R(y, k)$ and $\Delta \alpha^D(y, k)$ as follows so that subsequent integrals are more succinct.
\begin{align*}
    \Delta\alpha^R(y,k) &
    \defeq \begin{cases}
        \Delta\alpha^R_L(k) & y \in [0, \frac{B_a}{2}) ~; \\[1ex]
        \Delta\alpha^R_R(k) & y \in [\frac{B_a}{2}, B_a) ~.
    \end{cases} \\
    \Delta\alpha^D(y,k) &
    \defeq \begin{cases}
        \Delta\alpha^D_L(k) & y \in [0, \frac{B_a}{2}) ~; \\[1ex]
        \Delta\alpha^D_R(k) & y \in [\frac{B_a}{2}, B_a) ~.
    \end{cases}
\end{align*}

By the invariant regarding the $\alpha$ variables and the accounting by the point-level in Eqn.~\eqref{eqn:alpha-point-level}, the dual increment of $\alpha_a$ when an impression is semi-assigned or assigned to advertiser $a$ and the corresponding subset $Y_{ai}$ are ($k_a(y)$'s are the values before the assignment):
%, denoted as $\Delta_i^R\alpha_a$ and $\Delta_i^D\alpha_a$ respectively, are:
%be the increment of $\alpha_a$ respectively. 
%
$$
    \Delta_i^R\alpha_a = \int_{Y_{ai}} \Delta\alpha^R(y,k_a(y) + 1) dy, \quad
    \Delta_i^D\alpha_a = \int_{Y_{ai}} \Delta\alpha^D(y,k_a(y) + 1) dy. 
$$

\subsubsection{Dual Increments: \texorpdfstring{$\beta$}{Beta} Variables}

%\paragraph{The $\beta$-Increment Function}
%Following the ideas, we formulate the increment of $\beta_i$ through several $\beta$-increment functions $\Delta\beta^{RL}(y,j)$, $\Delta\beta^{RS}(y,j)$, $\Delta\beta^{D}(y,j): [0,Ba)\times Z^+\rightarrow[0,1)$ to be the increment of $\beta_i$ at point $y$ when $y$ is involved in a randomized round by a large bid, by a small bid, and in a deterministic round respectively, based on that $y$ have been already involved in $j-1$ randomized rounds.
%For $\forall y\in [0,B_a)$ and $\forall j \in Z^+$, we define
%\qiankun{recall when we use S or L}
By the first invariant, the definition of primal increments, and the definition of dual increments in terms of the $\alpha$ variables, the increments of $\beta$ variables by the point-level have been pinned down:
\begin{itemize}
    \item The $k$-th semi-assignment, $y \in [0, \frac{B_a}{2})$:
        \[
            \Delta\beta_L^R(k) \defeq \Delta x_L^{R}(k) - \Delta \alpha_L^R(k)
            ~.
        \]
    \item The $k$-th semi-assignment, small bid, $y \in [\frac{B_a}{2}, B_a)$:
        \[
            \Delta\beta_R^{RS}(k) \defeq \Delta x_R^{RS}(k) - \Delta \alpha_R^R(k)
            ~.
        \]
    \item The $k$-th semi-assignment, large bid, $y \in [\frac{B_a}{2}, B_a)$:
        \[
            \Delta\beta_R^{RL}(k) \defeq \Delta x_R^{RL}(k) - \Delta \alpha_R^R(k)
            ~.
        \]
    \item Deterministic assignment after $k-1$ semi-assignments, $y \in [0, \frac{B_a}{2})$:
        \[
            \Delta \beta_L^D(k) \defeq \Delta x_L^D(k) - \Delta \alpha_L^D(k)
            ~.
        \]
    \item Deterministic assignment after $k-1$ semi-assignments, $y \in [\frac{B_a}{2}, B_a)$:
        \[
            \Delta \beta_R^D(k) \defeq \Delta x_R^D(k) - \Delta \alpha_R^D(k)
            ~.
        \]
\end{itemize}

The next lemma shows that the $\beta$ increments for small bids are smaller than those for large bids.
This is because the primal increments for small bids are smaller, while the increments in $\alpha$ variables are the same for both small and large bids by definition.

\begin{lemma}
    \label{lem:hybrid-large-vs-small}
    For any $k \ge 1$, we have:
    \[
        \Delta \beta_L^{RL}(k) \ge \Delta \beta_L^{RS} (k)
        \quad,\quad
        \Delta \beta_R^{RL}(k) \ge \Delta \beta_R^{RS} (k)
        ~.
    \]
\end{lemma}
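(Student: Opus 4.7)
The plan is to show that this lemma is essentially immediate from the definitions of the $\beta$-increments, together with the explicit formulas for the primal increments $\Delta x_R^{RL}$ and $\Delta x_R^{RS}$ in Lemma~\ref{lem:hybrid-primal-increment-right}. The key observation is that the $\alpha$-increment $\Delta\alpha_R^R(k)$ appearing in both $\Delta\beta_R^{RL}(k)$ and $\Delta\beta_R^{RS}(k)$ is the same by design: the invariants pin down $\alpha_a(y)$'s update to depend only on the location of $y$ (left half or right half) and on $k_a(y)$, not on whether the incoming bid is large or small. This is the point of the preceding discussion, which notes that ``the costs of having small bids \dots are completely charged to the $\beta$ variables''.

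First I would subtract the definitions directly. For the right half we get
\[
\Delta\beta_R^{RL}(k) - \Delta\beta_R^{RS}(k)
\;=\;
\bigl(\Delta x_R^{RL}(k) - \Delta\alpha_R^R(k)\bigr)
- \bigl(\Delta x_R^{RS}(k) - \Delta\alpha_R^R(k)\bigr)
\;=\;
\Delta x_R^{RL}(k) - \Delta x_R^{RS}(k).
\]
Then I would plug in the explicit formulas from Lemma~\ref{lem:hybrid-primal-increment-right}: for $k=1$ the difference equals $\tfrac12 - (\tfrac12 - \tfrac{\gamma}{4}) = \tfrac{\gamma}{4} \ge 0$, and for $k\ge 2$ the difference equals $2^{-k}(1-\gamma)^{k-2}\bigl((1+\gamma)-1\bigr) = \gamma\cdot 2^{-k}(1-\gamma)^{k-2} \ge 0$. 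Both cases yield the desired nonnegativity, so $\Delta\beta_R^{RL}(k) \ge \Delta\beta_R^{RS}(k)$.

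For the left-half inequality $\Delta\beta_L^{RL}(k) \ge \Delta\beta_L^{RS}(k)$, I would observe that the definition of $\bar x_a(y)$ for $y \in [0, B_a/2)$ is $1 - 2^{-k_a(y)}$, which makes no distinction between large and small bids. Accordingly the primal increment $\Delta x_L^R(k) = 2^{-k}$ from Lemma~\ref{lem:hybrid-primal-increment-left} applies uniformly, and the symbols $\Delta\beta_L^{RL}(k)$ and $\Delta\beta_L^{RS}(k)$ both collapse to the single quantity $\Delta\beta_L^R(k) = \Delta x_L^R(k) - \Delta\alpha_L^R(k)$. Hence the inequality holds trivially with equality.

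There is really no obstacle to this proof; the only care needed is making the accounting explicit so the reader sees that the $\alpha$-increment is indeed the \emph{same} constant $\Delta\alpha_R^R(k)$ in both $\Delta\beta_R^{RL}(k)$ and $\Delta\beta_R^{RS}(k)$, which is exactly how we chose to parameterize the dual updates. Once this is noted, the inequality reduces to comparing $\Delta x_R^{RL}(k)$ and $\Delta x_R^{RS}(k)$, which is a one-line calculation using the formulas established earlier.
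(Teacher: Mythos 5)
Your proposal is correct and matches the paper's (very brief) justification: the $\alpha$-increment $\Delta\alpha_R^R(k)$ is the same for large and small bids by design, so the comparison reduces to $\Delta x_R^{RL}(k) \ge \Delta x_R^{RS}(k)$, which follows from the explicit formulas ($\gamma/4$ for $k=1$ and $\gamma\cdot 2^{-k}(1-\gamma)^{k-2}$ for $k\ge 2$), while the left-half inequality is an equality since both symbols reduce to $\Delta\beta_L^{R}(k)$. Nothing is missing; your write-up just spells out the one-line calculation the paper leaves implicit.
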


Again, we further introduce notations $\beta^{RS}(y,k)$, $\beta^{RL}(y,k)$, and $\beta^D(y,k)$ in order to make integrals succinct in subsequent arguments.
\begin{align*}
    \Delta\beta^{RL}(y,k)
    &
    \defeq \begin{cases}
        \Delta\beta_L^{R}(k) & y\in[0, \frac{B_a}{2}) ~; \\[1ex]
        \Delta\beta_R^{RL}(k) & y\in[\frac{B_a}{2}, B_a) ~.
    \end{cases}
    \\
    \Delta\beta^{RS}(y,k)
    &
    \defeq \begin{cases}
        \Delta\beta_L^{R}(k) & y\in[0, \frac{B_a}{2}) ~; \\[1ex]
        \Delta\beta_R^{RS}(j) & y\in[\frac{B_a}{2}, B_a) ~.
    \end{cases}
    \\
    \Delta\beta^{D}(y,k)
    &
    \defeq \begin{cases}
        \Delta\beta_L^D(k) & y \in [0, \frac{B_a}{2}) ~; \\[1ex]
        \Delta\beta_R^D(j) & y \in [\frac{B_a}{2},B_a) ~.
    \end{cases}
\end{align*}

%\paragraph{Offers in the Hybrid Algorithm.}
%Increment of $\beta_i$}
%
Therefore, the increments of $\beta_i$ when $i$ is semi-assigned or assigned to $a$, i.e., the offers $\Delta^{R}_a\beta_i$ and $\Delta^D_a\beta_i$ in the hybrid algorithm, are defined as:
\begin{align}
    \label{eqn:hybrid-beta-r-increment}
    \Delta^{R}_a\beta_i
    &
    \defeq \begin{cases}
        \displaystyle
        \int_{Y_{ai}} \Delta\beta^{RL} (y,k_a(y) + 1) dy & \textnormal{large bid, i.e., } \frac{B_a}{2} \le b_{ai} \le B_a ~; \\[2ex]
        \displaystyle
        \int_{Y_{ai}} \Delta\beta^{RS} (y,k_a(y) + 1) dy & \textnormal{small bid, i.e., } 0 \le b_{ai} < \frac{B_a}{2} ~.
    \end{cases}
    \\
    \label{eqn:hybrid-beta-d-increment}
    \Delta^D_a\beta_i
    &
    \defeq ~~~ \int_{Y_{ai}} \Delta\beta^D(y,k_a(y) + 1) dy
    ~.
\end{align}
%To sum up, we have defined a family of dual updating rules. Different choices of $\Delta\alpha^R_L(\cdot)$, $\Delta\alpha^R_R(\cdot)$, $\Delta\alpha^D_L(\cdot)$ and $\Delta\alpha^D_R(\cdot)$ lead to different specific rules. All the rules in this family maintains the two properties mentioned at the beginning.

\subsubsection{Regularity Constraints}

Finally, we introduce several sets of regularity constraints on the parameters.
The first two correspond to the monotonicity assumption, i.e., Eqn.~\eqref{eqn:dual-monotonicity}, and the superiority of randomized round, i.e., Eqn.~\eqref{eqn:random-vs-deter}, in the basic algorithm.
The last one assumes nonnegativity of the dual variables.
From now on, we shall label the constraints that will appear in the LP for optimizing the parameters and the competitive ratio by (C1), (C2), (C3), and so on.
%We remark these regularity constraints and some other simplifying constraints which will be raised later are labeled as (C1), (C2), and so on.
%In this paragraph, we give some basic constraints to the rule family, which leads to some intuitive heuristics.

\paragraph{Monotonicity.}
The first set of constraints state that $\Delta\beta^{RL}(y,k)$, $\Delta\beta^{RS}(y,k)$ and $\Delta\beta^D(y,k)$ are nondecreasing w.r.t.\ $y+k\cdot B_a$.
That is, they are larger for a smaller $k$, and conditioned on the same $k$ they are larger for a smaller $y$'s (i.e., a point $y$ on the left half of the interval).
\begin{align}
    \label{con:hybrid-delta-beta-rl-decreasing}
    \tag{C1}
    \forall k \ge 1: &&
    \Delta\beta^{RL}_{L}(k) &\geq \Delta\beta^{RL}_{R}(k) ~,&\quad \Delta\beta^{RL}_{R}(k) &\geq \Delta\beta^{RL}_{L}(k+1) ~, \\
    \label{con:hybrid-delta-beta-rs-decreasing}
    \tag{C2}
    \forall k \ge 1: &&
    \Delta\beta^{RS}_{L}(k) &\geq \Delta\beta^{RS}_{R}(k) ~,&\quad \Delta\beta^{RS}_{R}(k) &\geq \Delta\beta^{RS}_{L}(k+1) ~, \\
    \label{con:hybrid-delta-beta-d-decreasing}
    \tag{C3}
    \forall k \ge 1: &&
    \Delta\beta^{D}_{L}(k)  &\geq \Delta\beta^{D}_{R}(k) ~,&\quad \Delta\beta^{D}_{R}(k) &\geq \Delta\beta^{D}_{L}(k+1) ~.
\end{align}

We state below two useful consequences of the monotonicity.

\begin{lemma}
    \label{lem:hybrid-panoramic-interval}
    The panoramic interval-level assignment chooses a subset $Y_{ai}$ that maximizes the offers $\Delta_a^R \beta_i$ in Eqn.~\eqref{eqn:hybrid-beta-r-increment} and $\Delta_a^D \beta_i$ in Eqn.~\eqref{eqn:hybrid-beta-d-increment}.
    %
    %interval-level assignment always maximize $\beta_i$
\end{lemma}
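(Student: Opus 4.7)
The plan is to observe that the panoramic scanning from $y^*$ traverses $[0,B_a)\setminus Y_D$ in nonincreasing order of each of the three relevant integrands, so that the prefix of measure $b_{ai}$ returned by the scan is automatically the set that maximizes each offer.

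First, I would apply Lemma~\ref{lem:K-property} to obtain that outside $Y_D$, $k_a(y)=k_{\min}$ on $[y^*,B_a)\setminus Y_D$ and $k_a(y)=k_{\min}+1$ on $[0,y^*)\setminus Y_D$. Cutting further at $B_a/2$ decomposes $[0,B_a)\setminus Y_D$ into at most four contiguous arcs on the circle obtained by gluing the endpoints of $[0,B_a)$, each labeled by a pair (which half of $[0,B_a)$, value of $k_a$). On each such arc the integrands $\Delta\beta^{RL}(y,k_a(y)+1)$, $\Delta\beta^{RS}(y,k_a(y)+1)$, and $\Delta\beta^{D}(y,k_a(y)+1)$ appearing in Equations~\eqref{eqn:hybrid-beta-r-increment} and~\eqref{eqn:hybrid-beta-d-increment} are constants determined entirely by the arc's label.

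Second, I would perform a short case analysis on whether $y^*\le B_a/2$ or $y^*>B_a/2$ to show that scanning forward from $y^*$ visits the nonempty arcs in the order (left half, $k_{\min}$), (right half, $k_{\min}$), (left half, $k_{\min}+1$), (right half, $k_{\min}+1$), with empty arcs simply skipped. The monotonicity constraints~\eqref{con:hybrid-delta-beta-rl-decreasing}, \eqref{con:hybrid-delta-beta-rs-decreasing}, and~\eqref{con:hybrid-delta-beta-d-decreasing} say precisely that each of the three integrands is nonincreasing along this four-arc sequence, so the values encountered during the scan form a nonincreasing step function.

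Finally, since the panoramic interval-level assignment returns the prefix of measure $b_{ai}$ of this forward scan restricted to $[0,B_a)\setminus Y_D$, the resulting $Y_{ai}$ coincides with the superlevel set of the integrand above a suitable threshold (with ties inside a single arc broken arbitrarily). A standard greedy argument then yields that, for each of the three integrands $f\in\{\Delta\beta^{RL}(\cdot,k_a(\cdot)+1),\Delta\beta^{RS}(\cdot,k_a(\cdot)+1),\Delta\beta^{D}(\cdot,k_a(\cdot)+1)\}$, the set $Y_{ai}$ maximizes $\int_Y f(y)\,dy$ among all $Y\subseteq[0,B_a)\setminus Y_D$ with $\mu(Y)\le b_{ai}$, which is exactly the claim that $Y_{ai}$ maximizes both offers $\Delta_a^R\beta_i$ and $\Delta_a^D\beta_i$. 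The only substantive step is the small case analysis pinning down the arc-visiting order; the remainder is the textbook observation that greedy attains the optimum when the order of arrival agrees with the value order.
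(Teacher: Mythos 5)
Your argument is correct and is exactly the route the paper intends: the lemma is stated there without proof as a direct consequence of the monotonicity constraints \eqref{con:hybrid-delta-beta-rl-decreasing}--\eqref{con:hybrid-delta-beta-d-decreasing} together with the structure of $k_a(\cdot)$ from Lemma~\ref{lem:K-property}, and your case analysis on $y^*$ versus $\tfrac{B_a}{2}$ followed by the prefix-greedy (rearrangement) step is precisely the omitted verification. The only point worth adding is the boundary case $b_{ai} > B_a - \mu(Y_D)$, where the scan returns all of $[0,B_a)\setminus Y_D$ and optimality follows from nonnegativity \eqref{con:hybrid-beta-non-negative}, which your superlevel-set/threshold argument already implicitly uses.
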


\begin{lemma}
    \label{lem:hybrid-offer-concave}
    The offers $\Delta_a^R \beta_i$ in Eqn.~\eqref{eqn:hybrid-beta-r-increment} and $\Delta_a^D \beta_i$ in Eqn.~\eqref{eqn:hybrid-beta-d-increment} as functions of impression $i$'s bid $b_{ai}$ are concave.
\end{lemma}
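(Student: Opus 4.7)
The plan is to reduce each offer to the optimal value of a measure-constrained integral maximization, and then invoke a standard decreasing-rearrangement argument to conclude concavity.

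First I would use Lemma~\ref{lem:hybrid-panoramic-interval} to rewrite each offer as
\[
    \mathrm{offer}(b_{ai}) \;=\; \max_{\,Y \subseteq [0,B_a),\ \mu(Y)=b_{ai}\,}\; \int_Y f(y)\,dy,
\]
where $f(y)$ is the corresponding integrand viewed as a function of $y$ only: for $\Delta_a^D\beta_i$ it is $\Delta\beta^D(y, k_a(y)+1)$; for $\Delta_a^R\beta_i$ with large $b_{ai}$ it is $\Delta\beta^{RL}(y, k_a(y)+1)$; and for $\Delta_a^R\beta_i$ with small $b_{ai}$ it is $\Delta\beta^{RS}(y, k_a(y)+1)$. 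Crucially, when impression $i$ arrives, the state $k_a(y)$ is already determined, so $f$ is a fixed bounded measurable step function on $[0,B_a)$, taking only finitely many values.

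Next I would let $g:[0,B_a)\to\mathbb{R}_{\ge 0}$ denote the decreasing rearrangement of $f$, so that the solution of the maximization is to take the superlevel sets of $f$ greedily. Then $\mathrm{offer}(b) = \int_0^{b} g(s)\,ds$, which is concave in $b$ because $g$ is non-increasing. This yields concavity of $\Delta_a^D\beta_i$ on the entire range $[0,B_a]$ directly, and it yields concavity of $\Delta_a^R\beta_i$ separately on each regime $[0,B_a/2)$ and $[B_a/2, B_a]$.

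The hard part will be verifying concavity of $\Delta_a^R\beta_i$ \emph{across} the regime boundary at $b_{ai} = B_a/2$, where the integrand switches from $\Delta\beta^{RS}$ to $\Delta\beta^{RL}$. I would reduce this to a comparison of the left and right derivatives at $B_a/2$. The key ingredients are: (i) on $y\in[0,B_a/2)$ the two integrands agree (both equal $\Delta\beta_L^R(k_a(y)+1)$ by the definitions of $\Delta\beta^{RL}(y,k)$ and $\Delta\beta^{RS}(y,k)$); (ii) on $y\in[B_a/2,B_a)$ we have $\Delta\beta^{RL}(y,k)\ge\Delta\beta^{RS}(y,k)$ by Lemma~\ref{lem:hybrid-large-vs-small}; and (iii) the monotonicity constraints \eqref{con:hybrid-delta-beta-rl-decreasing} and \eqref{con:hybrid-delta-beta-rs-decreasing} order $\Delta\beta_L^R(k) \ge \Delta\beta_R^{RL}(k) \ge \Delta\beta_R^{RS}(k)$. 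Combining these, the point whose $f$-value equals the left derivative at $B_a/2$ (in the small-bid regime) must dominate the point whose $f$-value equals the right derivative at $B_a/2$ (in the large-bid regime), giving the required inequality and completing concavity globally.
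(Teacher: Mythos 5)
Your first two paragraphs are correct and are essentially the paper's (implicit) argument: the lemma is stated as a consequence of the monotonicity constraints, since by Lemma~\ref{lem:K-property} and Eqn.~\eqref{con:hybrid-delta-beta-rl-decreasing}--\eqref{con:hybrid-delta-beta-d-decreasing} the per-point marginal $\Delta\beta(y,k_a(y)+1)$ encountered along the panoramic scan from $y^*$ is nonincreasing, so each offer is the integral of a nonincreasing marginal of the bid and hence concave in it; routing this through Lemma~\ref{lem:hybrid-panoramic-interval} and a decreasing-rearrangement argument is just another way of saying the same thing, and it gives concavity of $\Delta_a^D\beta_i$ on all of $[0,B_a]$ and of $\Delta_a^R\beta_i$ within each bid regime.

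The third paragraph, however, is a genuine error. The combined function $b\mapsto\Delta_a^R\beta_i(b)$ is in general \emph{not} concave across $b_{ai}=B_a/2$: the subset $Y_{ai}$ of measure $B_a/2$ chosen by the panoramic assignment is the same whether the bid is classified as small or large, but the integrand switches from $\Delta\beta^{RS}$ to $\Delta\beta^{RL}$, and $\Delta\beta_R^{RL}(k)\ge\Delta\beta_R^{RS}(k)$ (Lemma~\ref{lem:hybrid-large-vs-small}) is typically strict on the right half. Hence the offer jumps \emph{upward} at $b=B_a/2$ whenever the prefix of measure $B_a/2$ meets $[\frac{B_a}{2},B_a)$, and an upward jump at an interior point already rules out concavity. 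Your derivative comparison also points the wrong way: the left and right marginals at $b=B_a/2$ are $\Delta\beta^{RS}(y_0,\cdot)$ and $\Delta\beta^{RL}(y_0,\cdot)$ evaluated at the \emph{same} marginal point $y_0$, so the right derivative dominates the left one, the opposite of what concavity requires; and even equal derivatives would not cure the jump in the value itself. Fortunately the cross-boundary claim is not needed: the lemma is only invoked for $\Phi^{NS}$ on $[0,\frac{B_a}{2}]$ and $\Phi^{NL}$ on $[\frac{B_a}{2},B_a]$ as separate concave functions, and the merging argument deliberately never lets a bid cross the threshold (the large-to-small comparison is handled by Lemma~\ref{lem:hybrid-large-vs-small}, not by concavity). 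With the lemma read per regime, which is how it is used, your rearrangement argument already completes the proof; the cross-boundary step should simply be deleted.
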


%Intuitively, the monotonicity makes the interval-level assignment always maximize $\beta_i$. That means $\forall j\in Z^+$, 

\paragraph{Superiority of Randomized Rounds.}
The second set of constraints implies that the algorithm prefers randomizing over two equally good advertisers with semi-assignments, over a deterministic assignment to only one of them.
%choose randomized offers than to choose deterministic ones when the primal gain is similar.
%Thus, we set the following constraint to restrict the relationship between randomized offers and deterministic offers. That means $\forall j\in Z^+$, 
%
\begin{equation}
     \label{con:hybrid-random-vs-deterministic}
     \tag{C4}
     \begin{aligned}
         \forall k \ge 1 : \quad && 
         2 \cdot \Delta \beta_L^{RL}(k) \ge \Delta \beta_L^D(k) \quad,\quad 2 \cdot \Delta \beta_L^{RS}(k) \ge \Delta \beta_L^D(k) ~; \\
         \forall k \ge 1 : \quad && 
         2 \cdot \Delta \beta_R^{RL}(k) \ge  \Delta \beta_R^D(k) \quad,\quad 2 \cdot \Delta \beta_R^{RS}(k) \ge  \Delta \beta_R^D(k) ~.
     \end{aligned}
\end{equation}

Recall that the parameters satisfy that $\Delta \beta_L^{RL}(k) \ge \Delta \beta_L^{RS}(k)$ and $\Delta \beta_R^{RL}(k) \ge \Delta \beta_R^{RS}(k)$.
%because the primal increments are smaller for small bids while the increments of $\alpha$ variables are the same.
In this sense, listing only the inequalities for small bids would have sufficed.
Nevertheless, we opt to make it more explicit above.

As a direct corollary:
\begin{lemma}
    \label{lem:hybrid-random-superior}
    For any advertiser $a \in A$ and any impression $i \in I$:
    \[
        \Delta_a^D \beta_i \le 2 \cdot \Delta_a^R \beta_i
        ~.
    \]
\end{lemma}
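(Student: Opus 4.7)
The plan is to reduce the lemma to a pointwise application of constraint \eqref{con:hybrid-random-vs-deterministic} inside the defining integrals of the two offers.

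First, I would observe that by the definitions in Equations~\eqref{eqn:hybrid-beta-r-increment} and \eqref{eqn:hybrid-beta-d-increment}, both offers $\Delta_a^R\beta_i$ and $\Delta_a^D\beta_i$ are integrals over the \emph{same} subset $Y_{ai} \subseteq [0, B_a)$ of Lebesgue measure $b_{ai}$, chosen by the panoramic interval-level assignment. The integrands depend only on the point $y$ and the value $k = k_a(y)+1$ at the moment impression $i$ arrives. Hence it suffices to compare the two integrands pointwise on $Y_{ai}$.

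Next, I would split into four cases according to whether $y \in [0, B_a/2)$ or $y \in [B_a/2, B_a)$ and whether $b_{ai}$ is a large or a small bid. In each case, the appropriate inequality of \eqref{con:hybrid-random-vs-deterministic} directly gives
\[
    2 \cdot \Delta\beta^{RL}(y, k) \ge \Delta\beta^D(y, k) \quad \text{or} \quad 2 \cdot \Delta\beta^{RS}(y, k) \ge \Delta\beta^D(y, k),
\]
matching the large/small bid choice used in the definition of $\Delta_a^R\beta_i$. (Recall that the superscript $L/R$ on $\Delta\beta$ records which half of $[0, B_a)$ contains $y$, while $RL/RS$ records whether the bid is large or small; the four inequalities in \eqref{con:hybrid-random-vs-deterministic} cover exactly these four combinations.)

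Finally, integrating the resulting pointwise inequality over $y \in Y_{ai}$ and invoking the definitions in Equations~\eqref{eqn:hybrid-beta-r-increment} and \eqref{eqn:hybrid-beta-d-increment} yields $\Delta_a^D\beta_i \le 2\cdot\Delta_a^R\beta_i$. There is no real obstacle here since constraint \eqref{con:hybrid-random-vs-deterministic} was designed precisely to make this corollary immediate; the only thing to be careful about is that the \emph{same} subset $Y_{ai}$ is used for both the randomized and deterministic offers, which is guaranteed by the algorithm's single invocation of the panoramic interval-level assignment per $(a,i)$ pair.
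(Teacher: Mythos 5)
Your proposal is correct and matches the paper's intended argument: the paper states this lemma as a direct corollary of constraint \eqref{con:hybrid-random-vs-deterministic}, and your write-up simply fills in the obvious details — both offers integrate over the same subset $Y_{ai}$, and \eqref{con:hybrid-random-vs-deterministic} gives the pointwise domination $2\,\Delta\beta^{RL}(y,k) \ge \Delta\beta^{D}(y,k)$ and $2\,\Delta\beta^{RS}(y,k) \ge \Delta\beta^{D}(y,k)$ in every combination of interval half and bid size. No gap; this is essentially the same (one-line) argument the paper relies on.
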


%\yuhao{remark that $\Delta\beta^{RL} > \Delta\beta^{RS}$.}
%do not need this in our analysis and hence, do not state it as a constrain

\paragraph{Nonnegativity.}
We shall choose the parameters $\Delta \alpha_L^R(\cdot), \Delta \alpha_L^D(\cdot), \Delta \alpha_R^R(\cdot), \Delta \alpha_R^D(\cdot)$ to be nonnegative.
Further, we shall ensure that the corresponding $\beta$ parameters are nonnegative. 
%
%First of all, the simple case is to maintain the non-negativity properties, we set a group of non-negativity constraints, $\forall j\in Z^+$, 
%
\begin{align}
    \label{con:hybrid-alpha-non-negative}
    \tag{C5}
    \forall k \ge 1 : \qquad &
    \Delta \alpha_L^R(k) \geq 0,\ \Delta \alpha_R^R(k) \geq 0,\ \Delta \alpha_L^D(k) \geq 0,\ \Delta \alpha_R^D(k) \geq 0 ~; \\
    \label{con:hybrid-beta-non-negative}
    \tag{C6}
    \forall k \ge 1 : \qquad &
    \Delta \beta_L^R(k) \geq 0,\  \Delta \beta_R^{RS}(k) \geq 0,\ \Delta \beta_R^{RL}(k) \geq 0,\ \Delta \beta_L^D(k) \geq 0,\ \Delta \beta_R^D(k) \geq 0 ~.
\end{align}

\begin{lemma}
    \label{lem:hybrid-non-negative}
    For any advertiser $a \in A$ and any impression $i$, $\alpha_a \ge 0$ and $\beta_i \ge 0$.
\end{lemma}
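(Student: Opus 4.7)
The plan is to verify both claims by unwinding the definitions and invoking the sign constraints \eqref{con:hybrid-alpha-non-negative} and \eqref{con:hybrid-beta-non-negative}; the argument is essentially bookkeeping rather than substantive mathematics.

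For $\alpha_a \ge 0$, I would apply the invariant that expresses $\alpha_a(y)$ as
\[
\alpha_a(y) = \sum_{\ell=1}^{k_a^R(y)} \Delta\alpha^R(y,\ell) + \mathbbm{1}\bigl[k_a(y)=\infty\bigr]\cdot \Delta\alpha^D\bigl(y,k_a^R(y)+1\bigr).
\]
By the case split that defines $\Delta\alpha^R(y,\cdot)$ and $\Delta\alpha^D(y,\cdot)$, each summand above is one of the four parameters $\Delta\alpha_L^R(\cdot),\Delta\alpha_R^R(\cdot),\Delta\alpha_L^D(\cdot),\Delta\alpha_R^D(\cdot)$, all of which are nonnegative by \eqref{con:hybrid-alpha-non-negative}. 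Hence $\alpha_a(y) \ge 0$ pointwise on $[0,B_a)$, and integrating via \eqref{eqn:alpha-point-level} yields $\alpha_a = \int_0^{B_a}\alpha_a(y)\,dy \ge 0$.

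For $\beta_i \ge 0$, I would use that the first invariant (primal increment equals dual increment) together with the second invariant for $\alpha_a(y)$ pins down the $\beta_i$ update uniquely: in a randomized round $\beta_i = \Delta^R_{a_1}\beta_i + \Delta^R_{a_2}\beta_i$, while in a deterministic round $\beta_i = \Delta^D_{a^*}\beta_i$. By the definitions \eqref{eqn:hybrid-beta-r-increment} and \eqref{eqn:hybrid-beta-d-increment}, each such offer is an integral over $Y_{ai}$ of one of $\Delta\beta^{RL}(y,\cdot), \Delta\beta^{RS}(y,\cdot), \Delta\beta^D(y,\cdot)$, and the corresponding parameters $\Delta\beta_L^R(\cdot), \Delta\beta_R^{RL}(\cdot), \Delta\beta_R^{RS}(\cdot), \Delta\beta_L^D(\cdot), \Delta\beta_R^D(\cdot)$ are all nonnegative by \eqref{con:hybrid-beta-non-negative}. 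Thus the offer chosen by Algorithm~\ref{alg:hybrid-algorithm}, and hence $\beta_i$ itself, is nonnegative.

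The only minor point requiring attention is distinguishing the randomized and deterministic branches of the algorithm when reading off the value of $\beta_i$, but since \eqref{con:hybrid-beta-non-negative} enforces nonnegativity on every $\Delta\beta$ parameter that can appear in either branch, no genuine obstacle arises; the lemma is a sanity check confirming that the sign constraints we imposed on the parameters propagate to the dual variables they induce.
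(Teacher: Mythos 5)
Your proposal is correct and matches the paper's (implicit) reasoning: the paper states this lemma without proof precisely because it follows immediately from the $\alpha$-invariant, the definitions of the offers in Eqn.~\eqref{eqn:hybrid-beta-r-increment} and \eqref{eqn:hybrid-beta-d-increment}, and the nonnegativity constraints \eqref{con:hybrid-alpha-non-negative} and \eqref{con:hybrid-beta-non-negative}, exactly as you argue. Your case split between randomized rounds (where $\beta_i$ is the sum of two nonnegative offers) and deterministic rounds (where it is a single nonnegative offer) is the same bookkeeping the paper intends.
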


\subsection{Online Primal Dual Analysis}

% At the beginning, we list two basic property for the increment of $\beta_i$. Fix an impression $i$ and an advertiser $a$, we have the first property
% \begin{lemma}[Randomized vs Deterministic]
%     \label{lem:hybrid-offer-bound-random-vs-detministic}
%     $2\Delta^R \beta_i \geq \Delta^D \beta_i$.
% \end{lemma}
% \begin{proof}
% \TBD
% \end{proof}
% Second, define another $a'$ with $B_{a'} = B_a$ and $b_{a'i} = b_{ai}$, with status $k_{a'}(\cdot)$. That can be regard as the same $a$ but with another status. we have 
% \begin{lemma}[Non-increasing]
%     \label{lem:hybrid-offer-monotone}
%     If $\forall y\in[0,B_a], k_a(y) \geq k_{a'}(y)$, 
%     $$
%     \Delta_{a'}^R\beta_i \leq \Delta_{a}^R\beta_i,\quad
%     \Delta_{a'}^D\beta_i \leq \Delta_{a}^D\beta_i
%     $$
% \end{lemma}
% \begin{proof}
% \TBD
% \end{proof}

In this section, we let the competitive ratio $\Gamma$ also be a parameter to be optimized together with the other ones in the analysis.
Next, we derive a set of sufficient conditions for proving that the hybrid algorithm is $\Gamma$-competitive.

\paragraph{Reverse Weak Duality.}
This holds for any choice of the parameters by the design of the primal dual algorithm.
In particular, we ensure the invariant that dual increment equals the lower bound of the surrogate primal increment given in Lemma~\ref{lem:hybrid-primal-increment-left} and Lemma~\ref{lem:hybrid-primal-increment-right}.
Therefore, we have $\bar{P} \ge D$.
Recall that the surrogate primal objective lower bounds the actual one, i.e., $P \ge \bar{P}$, we get reverse weak duality.

\paragraph{Approximate Dual Feasibility.}
%
%discuss which kind of dual updating rule can maintain a group of $\Gamma$-approximate feasible dual variables. It suffices to show that the hybrid algorithm with these rules are $\Gamma$-competitive. In particular, the dual variables maintained by these rules should satisfy the following two properties. 
%\begin{itemize}
%    \item None-negative Property: $\forall a\in A$, $\forall i \in I$, $\alpha_a \geq 0$, $\beta_i \geq 0$. 
%    \item Approximate Feasibility Property: $\forall a\in A$ and any subset $S$, 
%    \begin{equation}
%
Fix any advertiser $a$, and an impression set $S$.
We restate approximate dual feasibility below, where the contribution of $\alpha_a$ is accounted by the point-level:
\begin{equation}
    \label{eqn:hybrid-basic-approximate-feasibility}
    \int_0^{B_a} \alpha_a(y) dy + \sum_{i \in S} \beta_i \ge \Gamma\cdot b_a(S).
    ~.
\end{equation}
%    \end{equation}
%\end{itemize}

%In the following part, we list a set of constraints related to the target factor $\Gamma$. We claim that the dual updating rules with those constraints are suffice to meet the two properties above. 

%Next, we focus on the approximate feasibility properties.
%We first discuss the lower bound of the LHS in the inequality~\eqref{eqn:hybrid-basic-approximate-feasibility}. It can be split to the gain from $\alpha_a$ and the gain from $\beta_i$. Assume that the final status of $a$ is $k_a(\cdot)$. We first argue a lower bound for the gain from $\alpha_a$.

Next, we will explain how to lower bound the $\alpha$ and $\beta$ variables respectively, and will show a charging scheme that distributes the lower bounds of the $\beta$ variables to the points $y \in [0, B_a)$.
These are similar to their counterparts in the basic algorithm.
Unlike the basic algorithm, however, here the lower bound distributed to each point may not be at least $\Gamma$ on its own.
We will demonstrate how to prove Eqn.~\eqref{eqn:hybrid-basic-approximate-feasibility} by constructing an appropriate measure preserving mapping between points in the left half and those in the right half of the interval $[0, B_a)$, such that the lower bound charged to each pair is at least $2 \Gamma$.

\subsubsection[Lower Bound of Alpha]{Lower Bound of $\alpha_a$}

By the definition of the $\alpha$-invariant, the gain from $\alpha_a(y)$ depends on $k_a(y)$ and $k^R_a(y)$.
To simplify the case, we impose further constraints below:
%for any $k \ge 1$ we add two simplifying constraints:
%
\begin{align}
    \label{con:hybrid-random-vs-determine-alpha-1}
    \tag{C7}
    \forall k \ge 1 : \qquad &
    \Delta \alpha_L^R(k) \leq \Delta \alpha_L^D(k) - \alpha_L^D(k+1) ~, \\
    \label{con:hybrid-random-vs-determine-alpha-2}
     \tag{C8}
    \forall k \ge 1 : \qquad &
    \Delta \alpha_R^R(k) \leq \Delta \alpha_R^D(k) - \alpha_R^D(k+1) ~.
\end{align}

They imply that for any point $y$ with $k_a(y) = \infty$, the larger $k^R_a(\cdot)$ is, the small $\alpha_a(y)$ is by the $\alpha$-invariant.
How large could $k^R_a(\cdot)$ be?
For this, recall the property of $k_a(\cdot)$ from Lemma~\ref{lem:K-property}.
It states that other than the subsets that have been deterministically assigned, the value of $k_a(y)$ equals $\tilde{k}_a(y)$ defined as:
\[
    \tilde{k}_a(y) \defeq \begin{cases}
        k_{\min} + 1 & y<y^* ~; \\
        k_{\min} & y \geq y^* ~.
    \end{cases}
\]

Here, recall that $k_{\min} = \min_{z \in [0, B_a)} k_a(z)$, and $y^*$ denote the start point of the next subset in the panoramic interval-level assignment.

Then, the largest possible value of $k^R_a(y)$ is $\tilde{k}_a(y) - 1$ because at least the last time must be reserved for the deterministic assignment.
Further recall that $Y_D$ denote the subset of of points in $[0, B_a)$ that are deterministically assigned:
\[
    Y_D \defeq \big\{y\in[0,B_a) : y\text{ is deterministically assigned by the end of the algorithm }\big\}
    ~.
\]
%We argue that $\forall y\in Y_D$, $k^R_a(y)$ is at most $\Tilde{k}_a(y)$, we lower bound $\alpha_a$ to be  

We have the following lower bound of the $\alpha$ variables.

\begin{lemma}
    \label{hybrid:gain-from-alpha}
    For any advertiser $a \in A$ and any point $y \in [0, B_a)$, we have:
    \[
        \alpha_a(y) \ge \begin{cases}
            \displaystyle
            \sum_{\ell = 1}^{\tilde{k}_a(y)} \Delta \alpha^R(y, \ell) & y \notin Y_D ~; \\[4ex]
            \displaystyle
            \sum_{\ell = 1}^{\tilde{k}_a(y) - 1} \Delta \alpha^R(y, \ell) + \Delta \alpha^D(y, \tilde{k}_a(y)) & y \in Y_D ~.
        \end{cases}
    \]
    %
    %If the simplifying constraints~\eqref{con:hybrid-random-vs-determine-alpha-1}~\eqref{con:hybrid-random-vs-determine-alpha-2} are satisfied, then 
    %\begin{equation*}
    %\begin{aligned}
    %    \alpha_a \geq \int_{[0,B_a) \setminus Y_D} \sum_{j=1}^{\Tilde{k_a}(y)} \Delta \alpha^R(y,j)dy + \int_{Y_D} \left(\sum_{j=1}^{\Tilde{k_a}(y)-1} \Delta \alpha^R(y,j) + \Delta\alpha^D(y,\Tilde{k_a}(y))\right)dy 
    %\end{aligned}
    %\end{equation*}
\end{lemma}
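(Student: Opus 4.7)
The plan is to split on whether $y \in Y_D$ and in each case match the claimed lower bound against the $\alpha$-invariant.

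For $y \notin Y_D$ we have $k_a(y) \ne \infty$, so the invariant gives $\alpha_a(y) = \sum_{\ell=1}^{k_a^R(y)} \Delta\alpha^R(y,\ell)$. Since $y \notin Y_D$ forces $k_a^R(y) = k_a(y)$, and Lemma~\ref{lem:K-property} gives $k_a(y) = \tilde k_a(y)$, the claimed inequality holds with equality. For $y \in Y_D$, the invariant instead gives $\alpha_a(y) = \sum_{\ell=1}^{k_a^R(y)} \Delta\alpha^R(y,\ell) + \Delta\alpha^D(y, k_a^R(y)+1)$. The constraints (C7) and (C8) can be rewritten as $\Delta\alpha^R(y,k) + \Delta\alpha^D(y,k+1) \le \Delta\alpha^D(y,k)$ for each $k$, which implies that the function $k \mapsto \sum_{\ell=1}^{k} \Delta\alpha^R(y,\ell) + \Delta\alpha^D(y,k+1)$ is non-increasing in $k$. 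Hence, once we show the key claim $k_a^R(y) \le \tilde k_a(y) - 1$ for every $y \in Y_D$, plugging in the larger $\tilde k_a(y) - 1$ in place of $k_a^R(y)$ can only decrease the invariant, yielding exactly the stated lower bound.

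The key claim is established by a time-tracking argument on the panoramic interval-level assignment. Let $t$ denote the round in which $y$ is deterministically assigned. At that moment $k_a(y) = k_a^R(y)$; applying Lemma~\ref{lem:K-property} at time $t$ gives $k_a^R(y) \in \{k_{\min}^{(t)}, k_{\min}^{(t)}+1\}$, with the value determined by whether $y$ lies in the first-pass portion of the time-$t$ slice $[y^{*,(t)}, y^{*,(t)} \oplus_{Y_D} b) \setminus Y_D$ or in its wrap-around portion. After round $t$, $y$ is frozen in $Y_D$ while every subsequent semi-assignment can only raise the minimum count of the surviving points, and each such round moves the pointer $y^*$ forward along the circle (skipping $Y_D$). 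A short case analysis on how these further rounds advance $y^*$ around the circle then shows that the final pair $(k_{\min}, y^*)$ always satisfies $\tilde k_a(y) \ge k_a^R(y) + 1$, in both the first-pass and the wrap-around sub-cases.

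I expect the main obstacle to be the case analysis behind this claim: one must carefully track how $Y_D$ keeps growing through subsequent deterministic rounds, and how the panoramic pointer $y^*$ can wrap around the circle multiple times, verifying in each sub-case that the final relative position of $y$ to $y^*$ combined with the final $k_{\min}$ ensures the required inequality. Once the claim is in place, the remainder of the proof is a mechanical application of the $\alpha$-invariant together with the monotonicity consequence of (C7) and (C8).
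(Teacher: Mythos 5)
Your proposal takes essentially the same route as the paper: the $y \notin Y_D$ case is the $\alpha$-invariant together with $k_a^R(y) = k_a(y) = \tilde{k}_a(y)$ from Lemma~\ref{lem:K-property}, and the $y \in Y_D$ case combines the invariant with the monotonicity consequence of Equations~\eqref{con:hybrid-random-vs-determine-alpha-1} and \eqref{con:hybrid-random-vs-determine-alpha-2} and the key bound $k_a^R(y) \le \tilde{k}_a(y) - 1$, which the paper itself justifies only with a brief appeal to Lemma~\ref{lem:K-property} and the observation that the last touch of $y$ must be the deterministic assignment. Your sketched justification of that bound (apply Lemma~\ref{lem:K-property} at the round in which $y$ is deterministically assigned, then note that the pointer can only end up at a position at or before $y$ by sweeping, and hence incrementing or freezing, every surviving point, which forces the final $k_{\min}$ up by one) is correct and is carried out at roughly the same level of detail as the paper's own one-line argument.
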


In fact, the former case always holds with equality, although this observation is unimportant for our analysis.

\begin{comment}
\begin{proof}
    The case of $y \notin Y_D$ follows by the $\alpha$-invariant and that $k_a(y) = \tilde{k}_a(y)$ for such points.

    Next consider a point $y \in Y_D$.
    Otherwise, we first claim that $k^R_a(y) \leq \Tilde{k}_a(y)$. Assume not, when $y$ is at least semi assigned $\Tilde{k}_a(y)+1$ times. Because all the $k_a(y)$ can only be increased, at the $\Tilde{k}_a(y)+1$-th semi assignment, $k'_a(y^*) = k_{\min}$. If $y<y^*$, then it means the algorithm choose $y$ with $k'_a(y) = k_{\min} + 1$ while $k'_a(y^*)$ is only $k_{\min}$, which is a contradiction. If $y>y^*$, then it means the algorithm choose $y$ with $k'_a(y) = k_{\min}$ while a smaller $y^*$ with $k'_a(y^*)$ is also $k_{\min}$, which is also a contradiction. Finally, $y$ can not equal to $y^*$ because $k_a(y)$ should be $\infty$. 

With the claim, and combining the simplifying constraints~\eqref{con:hybrid-random-vs-determine-alpha-1}~\eqref{con:hybrid-random-vs-determine-alpha-2}, 
$$
\alpha_a(y) \geq \sum_{j=1}^{\Tilde{k_a}(y)-1} \Delta \alpha^R(y,j) + \Delta\alpha^D(y,\Tilde{k_a}(y)).
$$
\end{proof}
\end{comment}

\subsubsection[Lower Bound of Beta Variables and Charging to Points]{Lower Bounds of $\beta$ Variables and Charging to Points}

We now turn to the lower bound of $\beta_i$ for impressions $i \in S$.
Similar to the analysis of the basic algorithm in Section~\ref{sec:basic-algorithm}, it depends on the matching status of impression $i$, in particular, whether $i$ is semi-assigned or assigned to advertiser $a$ and whether it is a deterministic or randomized round.
Recall the definitions of sets $N$ and $R$ from the analysis of the basic algorithm:
%to be the set of the two different types of impressions, 
%
\begin{align*}
    N & \defeq \big\{i \in S : i\text{ is neither assigned nor semi-assigned to }a\big\}  ~, \\
    R & \defeq \big\{i \in S : i\text{ is semi-assigned to }a\big\} ~.
\end{align*}

Further define the sum of the bids in these subsets as $b_N$ and $b_R$ respectively for future reference:
\[
    b_N \defeq \sum_{i\in N} b_{ai}
    \quad,\quad
    b_R \defeq \sum_{i \in R} b_{ai}
    ~.
\]

We will find two subset $Y_N, Y_R \subseteq [0, B_a)$, and will distribute the lower bounds of $\beta_i$'s, $i \in N$ and $i \in R$, to the points $y \in Y_N$ and $y \in Y_R$ respectively.
More precisely, we shall define $\beta(y)$'s such that:
\begin{itemize}
    \item $Y_N$, $Y_R$, and $Y_D$ are disjoint.
    \item $Y_N$, $Y_R$, and $Y_D$ have total measure at least $b_a(S)$, i.e.:
        \begin{equation}
            \label{eqn:hybrid-total-measure}
            \mu(Y_N) + \mu(Y_R) + \mu(Y_D) \geq b_a(S)
            ~.
        \end{equation}
    \item The $\beta(y)$'s lower bound the $\beta_i$'s, i.e.:
        \begin{align}
            \label{eqn:hybrid-distribution-N}
            \sum_{i \in N} \beta_i & 
            \ge \int_{Y_N} \beta(y) dy ~; \\
            \label{eqn:hybrid-distribution-R}
            \sum_{i \in R} \beta_i &
            \ge \int_{Y_R} \beta(y) dy ~.
        \end{align}
\end{itemize}

%In the following part of this subsection, we discuss how we give construct $Y_N$ and $Y_R$ and give corresponding lower bound $\psi^N(y)$ and $\psi^R(y)$. The construction and analysis is very similar than the one in the basic algorithm, we restate them to complete proof. 

%\paragraph{Construction of $Y_N$ and Corresponding $\beta(y)$'s: Randomized Rounds, Small Bids.}
%
\paragraph{Construction of $Y_N$ and the Corresponding $\beta(y)$'s.}
Consider the impressions $i \in N$.
%At first, we discuss the gain from set $N$.
There are two different cases depending on whether $i$ is a deterministic or randomized round.
We claim that in both cases:
\[
    \beta_i \ge 2 \cdot \Delta_a^R \beta_i
    ~.
\]
%, the first is $i$ is matched in a deterministic round.
%

Suppose it is a deterministic round.
Since $i$ chooses advertiser $a^*$ deterministically instead of randomizing between advertisers $a^*$ and $a$, $\beta_i = \Delta_{a^*}^D \beta_i \ge \Delta_{a^*}^R \beta_i + \Delta_a^R \beta_i$.
Further by Lemma~\ref{lem:hybrid-random-superior}, $\Delta_{a^*}^D \beta_i \le 2 \Delta_{a^*}^R \beta_i$.
Cancelling $\Delta_{a^*}^R \beta_i$ by combining the two inequalities leads to $\beta_i= \Delta_{a^*}^D \beta_i \ge 2 \Delta_a^R \beta_i$.

Suppose it is a randomized round.
By definition, both candidates in this round offer at least $\Delta_a^R \beta_i$, or else the algorithm would have chosen advertiser $a$ instead.
Hence, $\beta_i \ge 2 \Delta_a^R \beta_i$.

Next, express $\Delta_a^R \beta_i$ in terms of the state variables using the definition of $\Delta_a^R \beta_i$ in Eqn.~\eqref{eqn:hybrid-beta-r-increment}.
Recall that $k_a^i(y)$'s denote the values of the state variables when impression $i$ arrives, and $Y_{ai}$ denotes the subset by the panoramic interval-level assignment, should $i$ be semi-assigned or assigned to advertiser $a$ when it arrives.
If $b_{ai}$ is a small:
\begin{equation}
    \label{eqn:hybrid-basic-beta-bound-n-small}
    \beta_i \ge 2 \cdot \int_{Y_{ai}} \Delta \beta^{RS}(k_a^i(y)+1) dy
    ~.
\end{equation}

If $b_{ai}$ is large:
\begin{equation}
    \label{eqn:hybrid-basic-beta-bound-n-large}
    \beta_i \ge 2 \cdot \int_{Y_{ai}} \Delta \beta^{RL}(k_a^i(y)+1) dy
    ~.
\end{equation}

We need to further derive a lower bound w.r.t.\ the state variables $k_a(y)$'s at the end of the algorithm.
Its proof is almost verbatim to its counterpart in the basic algorithm, i.e., Lemma~\ref{lem:hybrid-basic-status-comparison-n}.
We include it for completeness.
%We prove the following general lemma that will be useful in the other cases as well.

\begin{lemma}
    \label{lem:hybrid-basic-status-comparison-n}
    %Consider any status of advertiser $a$ given by $\hat{k}_a(y)$'s such that $\hat{k}_a(y) \ge k_a^i(y)$ for any $y \in [0, B_a)$.
    %Then, 
    For any subset $\tilde{Y}_{ai}$ with measure at most $b_{ai}$, we have:
    \begin{align*}
        \int_{Y_{ai}} \Delta \beta^{RS}(k_a^i(y)+1) dy & \ge \int_{\tilde{Y}_{ai}} \Delta \beta^{RS}(k_a(y)+1) dy
        ~, \\
        \int_{Y_{ai}} \Delta \beta^{RL}(k_a^i(y)+1) dy & \ge \int_{\tilde{Y}_{ai}} \Delta \beta^{RL}(k_a(y)+1) dy
        ~.
    \end{align*}
\end{lemma}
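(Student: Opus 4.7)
The plan is to mirror the two-step chain used in Lemma~\ref{lem:basic-status-comparison-n} of the basic algorithm, applied separately to each of the two inequalities. For either choice of $\beta$-function (namely $\Delta\beta^{RS}$ or $\Delta\beta^{RL}$), I would first compare $Y_{ai}$ and $\tilde{Y}_{ai}$ at the ``old state'' $k_a^i$, then compare the ``old state'' and the ``final state'' $k_a$ on $\tilde{Y}_{ai}$; combining the two inequalities along $\tilde{Y}_{ai}$ with the $k_a^i$-values in the middle yields the claim.

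For the first step I would invoke Lemma~\ref{lem:hybrid-panoramic-interval}. It states that the panoramic interval-level assignment chooses $Y_{ai}$ so as to maximize the offer $\Delta_a^R\beta_i$, which is $\int_{Y_{ai}}\Delta\beta^{RS}(y,k_a^i(y)+1)\,dy$ when $b_{ai}$ is small and $\int_{Y_{ai}}\Delta\beta^{RL}(y,k_a^i(y)+1)\,dy$ when $b_{ai}$ is large. Because the panoramic interval-level assignment picks its subset based on the state $k_a^i$ alone and never on the specific $\beta$-function, the same maximality extends to the ``other'' type of integrand: both $\Delta\beta^{RS}(y,k)$ and $\Delta\beta^{RL}(y,k)$ respect the single lexicographic ordering (smaller $k$ first, with left half of $[0,B_a)$ preferred to the right half at equal $k$) by constraints~\eqref{con:hybrid-delta-beta-rl-decreasing} and~\eqref{con:hybrid-delta-beta-rs-decreasing}. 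Hence $Y_{ai}$ dominates every measure-$b_{ai}$ competitor $\tilde{Y}_{ai}$ for both integrals:
\[
    \int_{Y_{ai}} \Delta\beta^{RS}(y,k_a^i(y)+1)\,dy \ge \int_{\tilde{Y}_{ai}} \Delta\beta^{RS}(y,k_a^i(y)+1)\,dy,
\]
and likewise with $\Delta\beta^{RL}$.

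For the second step, observe that $k_a(y)\ge k_a^i(y)$ pointwise since the counts only increase over time. The same monotonicity constraints~\eqref{con:hybrid-delta-beta-rl-decreasing} and~\eqref{con:hybrid-delta-beta-rs-decreasing} ensure that $\Delta\beta^{RS}(y,\cdot)$ and $\Delta\beta^{RL}(y,\cdot)$ are non-increasing in the second argument, so pointwise on $\tilde{Y}_{ai}$ we have $\Delta\beta^{RS}(y,k_a^i(y)+1)\ge \Delta\beta^{RS}(y,k_a(y)+1)$ and the analogue for $RL$. Integrating over $\tilde{Y}_{ai}$ and chaining with the first step gives both displayed inequalities.

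The only subtlety, and therefore the main thing to be careful about, is the extension of Lemma~\ref{lem:hybrid-panoramic-interval} to the ``unmatched'' $\beta$-function (e.g.\ using $\Delta\beta^{RL}$ for a small bid in the second inequality). This is handled uniformly by the observation above: the panoramic interval-level assignment is oblivious to the bid type and selects in the common lexicographic order that governs both $\Delta\beta^{RS}$ and $\Delta\beta^{RL}$, so no separate argument is needed. Everything else is a direct application of monotonicity.
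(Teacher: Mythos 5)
Your proposal is correct and follows essentially the same two-step argument as the paper: first compare $Y_{ai}$ to $\tilde{Y}_{ai}$ at the arrival-time state $k_a^i$ using the fact that the panoramic interval-level assignment picks the points with minimum (and leftmost) $k_a^i(y)$'s together with the monotonicity constraints \eqref{con:hybrid-delta-beta-rl-decreasing} and \eqref{con:hybrid-delta-beta-rs-decreasing}, then use $k_a(y)\ge k_a^i(y)$ and monotonicity again on $\tilde{Y}_{ai}$. Your extra remark that the selection is oblivious to the bid type, so the domination holds simultaneously for $\Delta\beta^{RS}$ and $\Delta\beta^{RL}$, is exactly what the paper's proof uses implicitly.
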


\begin{proof}
    Since the panoramic interval-level assignment chooses a subset $Y_{ai}$ of measure $b_{ai}$ with the minimum $k_a^i(y)$'s, i.e., Lemma~\ref{lem:hybrid-panoramic-interval}, by the monotonicity of $\Delta \beta^{RS} (\cdot)$ and $\Delta \beta^{RL} (\cdot)$ in Eqn.~\eqref{con:hybrid-delta-beta-rl-decreasing} and Eqn.~\eqref{con:hybrid-delta-beta-rs-decreasing}, we have:
    \begin{align*}
        \int_{Y_{ai}} \Delta \beta^{RS}(k_a^i(y)+1) dy & \ge \int_{\tilde{Y}_{ai}} \Delta \beta^{RS}(k_a^i(y)+1) dy
        ~, \\
        \int_{Y_{ai}} \Delta \beta^{RL}(k_a^i(y)+1) dy & \ge \int_{\tilde{Y}_{ai}} \Delta \beta^{RL}(k_a^i(y)+1) dy
        ~.
    \end{align*}

    Further observe that $k_a(y) \ge k_a^i(y)$ for any $y \in [0, B_a)$.
    Applying the monotonicity of $\Delta \beta^{RS} (\cdot)$ and $\Delta \beta^{RL} (\cdot)$ once again proves the lemma.
\end{proof}

Next, define $Y_N$ as:
\begin{equation}
    \label{eqn:hybrid-beta-charging-n}
    Y_N \defeq \big[y^*, y^* \oplus_{Y_D} b_N \big)
    ~.
\end{equation}

Here, recall that $[y, y \oplus_Y b)$ denotes a subset of $[0, B_a)$ with measure $b$, obtained by scanning through the interval starting from $y$ and exluding the points in $Y$.

The values of $\beta(y)$'s are determined by two possible lower bounds $\Psi^{NL}$ and $\Psi^{NS}$ for $\sum_{i \in N} \beta_i$, which we explain below.
The first one corresponds to when there is a single large impression $i \in N$ with bid $b_{ai} = b_N$.
The second one corresponds to when there are one or two small impressions $i \in N$ whose bids sum to $b_N$, and in case of two impressions, one of them has the largest possible small bid, i.e., $\frac{B_a}{2}$.

Formally, define the first lower bound as:
\[
    \Psi^{NL} \defeq 2\cdot\int_{Y_N} \Delta\beta^{RL}(y,k_a(y)+1)dy
    ~.
\]

To define the second lower bound, let $b_{N1} = \min \{B_a/2,~B_N\}$ and further define:
\begin{align*}
    Y_{N1} & = \big[y^*, y^* \oplus_{Y_D} b_{N1}\big) \setminus Y_D,\\[1ex]
    Y_{N2} & = \big[y^* \oplus_{Y_D} b_{N1}, y^* \oplus_{Y_D} b_{N}\big) \setminus Y_D.
\end{align*}

Observe that $Y_{N2} = Y_N \setminus Y_{N1}$ by definition.
The second lower bound is then defined as:
%The two possible lower bounds are defined as:
%\begin{align*}
    % Y_{N1} &= [y^*, y^* \oplus b_{N1}) \setminus Y_D,\\
    % Y_{N2} &= Y_N \setminus Y_{N1} = [y^* \oplus b_{N1}, y^* \oplus b_{N}),\\
%    \Psi^{NL} =& 2\cdot\int_{Y_N} \Delta\beta^{RL}(y,k_a(y)+1)dy, \\
%
\[
    \Psi^{NS} \defeq 2 \int_{Y_{N1}} \Delta\beta^{RS} \big( y, k_a(y)+1 \big) dy + 2 \int_{Y_{N2}} \Delta\beta^{RS} \big( y\ominus_{Y_D} \tfrac{B_a}{2}, k_a(y \ominus_{Y_D} \tfrac{B_a}{2}) + 1 \big) dy
    ~.
\]

An explanation for the design of $y$ in the second integral is due.
This part is relevant only when $b_N > \frac{B_a}{2}$.
That is, we consider two small bids: the first is $\frac{B}{2}$ and the second is $b_N - \frac{B_a}{2}$.
Here, mapping $y$ to $y \ominus_{Y_D} \frac{B_a}{2}$ is a measure preserving map from $Y_{N2}$ to the first $b_N - \frac{B_a}{2}$ measure of $Y_{N1}$.
By doing so, we explicitly use the fact that the second small bid's $\beta_i$ can be lower bounded using a subset starting from $y^*$ in Lemma~\ref{lem:hybrid-basic-status-comparison-n}, even though the lower bound is eventually charged to $Y_{N2}$ which does not start from $y^*$.

By the monotonicity of parameters $\Delta \beta^{RS}_L(\cdot)$ and $\Delta \beta^{RS}_R(\cdot)$, i.e., Eqn.~\eqref{con:hybrid-delta-beta-rs-decreasing}, this is strictly better than the trivial bound without the mapping of $y$, which we used in the analysis of the basic algorithm.
\textbf{This is the main power of small bids, which pays for the penalties they had in the dual update rule.}

The values of $\beta(y)$'s are set according to the smaller of the two lower bounds:
\begin{equation}
    \label{eqn:hybrid-beta-definition}
    \beta(y) \defeq \begin{cases}
        2 \cdot \Delta \beta^{RL}(y,k_a(y)+1) & \Psi^{NL} < \Psi^{NS} ~; \\[2ex]
        2 \cdot \Delta \beta^{RS}(y,k_a(y)+1) & \Psi^{NL} \geq \Psi^{NS}, y \in Y_{N1} ~; \\[2ex]
        2 \cdot \Delta \beta^{RS} \big( y\ominus_{Y_D} \tfrac{B_a}{2}, k_a(y\ominus_{Y_D} \tfrac{B_a}{2})+1 \big) & \Psi^{NL} \geq \Psi^{NS}, y\in Y_{N2} ~.
    \end{cases}
\end{equation}

% We claim that 
% \begin{lemma}
% If the basic constraints~\eqref{con:hybrid-delta-beta-rl-decreasing}~\eqref{con:hybrid-delta-beta-rs-decreasing}~\eqref{con:hybrid-random-vs-deterministic} are satisfied, then 
% \label{hybrid:gain-from-N}
% $$
% \forall i\in N, 
% $$
% \end{lemma}
% \begin{proof}
% \TBD
% \end{proof}
% Let $Y_N = [y^*, y^* \oplus_{Y_D} b_N) \setminus Y_D$. Moreover, let $b_{N1} = \min\{b_N, B_a/2\})$, $Y_{N1} = [y^*, y^* \oplus_{Y_D} b_{N1})$ and $Y_{N2} = Y_N \setminus Y_{N1}$, we show: 

Finally, we prove that the smaller of the two indeed lower bounds $\sum_{i \in N} \beta_i$.

%\begin{lemma}
    %\label{hybrid:gain-from-N}
    %
\paragraph{Proof of Eqn.~\eqref{eqn:hybrid-distribution-N}.}
We restate the inequality below:
\[
    \sum_{i\in N} \beta_i \geq \int_{Y_N} \beta(y) = 2\cdot\min \left\{ \Psi^{NL},~\Psi^{NS}\right\}
    ~.
\]
%
% \begin{aligned}
% &\int_{Y_N} \Delta\beta^{RL}(y,k_a(y)+1)dy,\\
% &\int_{Y_{N1}} \Delta\beta^{RS}(y,k_a(y)+1)dy\\
% & + \int_{Y_{N2}} \Delta\beta^{RS}(y\ominus_{Y_D} B_a/2,k_a(y \ominus_{Y_D} B_a/2)+1) dy 
% \end{aligned}
% \right\}

Let $\Phi^{NS}(b)$, $0 \le b \le \frac{B_a}{2}$, denote the offer from advertiser $a$ for a small bid $b$ \emph{at the final state of the algorithm}.
Observe that it would be semi-assigned to $[y^*, y^* \oplus_{Y_D} b) \setminus Y_D$.
We have:
\[
    \Phi^{NS}(b) \defeq \int_{[y^*, y^* \oplus_{Y_D} b) \setminus Y_D} \Delta \beta^{RS} \big( y, k_a(y)+1 \big) dy
    ~.
\]

Similarly, $\Phi^{NL}(b)$, $\frac{B_a}{2} < b \le B_a$, denote the offer from advertiser $a$ for a large bid $b$ \emph{at the final state of the algorithm}:
\[
    \Phi^{NL}(b) \defeq \int_{[y^*, y^* \oplus_{Y_D} b) \setminus Y_D} \Delta \beta^{RL} \big( y, k_a(y)+1 \big) dy
    ~.
\]

For any impression $i \in N$, letting $\tilde{Y}_{ai} = [y^*, y^* \oplus_{Y_D} b_{ai})$ in Lemma~\ref{lem:hybrid-basic-status-comparison-n}, we have:
\begin{equation}
    \label{eqn:hybrid-beta-bound-charging-n}
    \begin{aligned}
        \int_{Y_{ai}} \Delta \beta^{RS}(y, k_a^i(y)+1) dy
        &
        \ge \Phi^{NS}(b_{ai})
        ~, \\
        \int_{Y_{ai}} \Delta \beta^{RL}(y, k_a^i(y)+1) dy
        &
        \ge \Phi^{NL}(b_{ai})
        ~.
    \end{aligned}
\end{equation}

Recall that the LHS of the above inequalities are half the lower bound of $\beta_i$ for small and large bids respectively, due to Eqn.~\eqref{eqn:hybrid-basic-beta-bound-n-small} and Eqn.~\eqref{eqn:hybrid-basic-beta-bound-n-large}.
In other words, we lower bound each $\beta_i$ by what advertiser $a$ would have offered if impression $i$ arrived at the end.

The rest of the proof further transforms the sum of these lower bounds for $\beta_i$ for $i \in N$ into the stated bound in the lemma.
First we may asssume wlog that $b_N \le B_a - \mu(Y_D)$.
Otherwise, we could have decrease the bid of some impressions in $N$ such that the LHS of Eqn.~\eqref{eqn:hybrid-distribution-N} decreases, while the RHS stays the same.

Further, it is wlog to merge small bids into at most two; in the case of two small bids, it is wlog that the larger one has size $\frac{B_a}{2}$.
This is because $\Phi^{NS}(\cdot)$ is concave by By Lemma~\ref{lem:hybrid-offer-concave}.
Formally, for any two small bids $b \ge b'$ and any $\delta$, concavity implies:
\[
    \Phi^{NS}(b) + \Phi^{NS}(b') \ge \Phi^{NS}(b+\delta) + \Phi^{NS}(b'-\delta)
    ~.
\]
Then, we may let $\delta = b'$ if $b + b' \le \frac{B_a}{2}$, and let $\delta = \frac{B_a}{2} - b$ otherwise.
Repeating this operation proves the claim.

Finally, we claim that it is wlog to assume having either only small bids, or a single large bid.
Observe that there can be at most one large bid by defintion.
In the presence of both large and small bids, and after the aforementioned merging of small bids, it must be the case that we have one large bid, say $b > \frac{B_a}{2}$, and one small bid $b' \le \frac{B_a}{2}$.

Since both $\Phi^{NS}(\cdot)$ and $\Phi^{NL}(\cdot)$ are concave by Lemma~\ref{lem:hybrid-offer-concave}, we either have:
\[
    \Phi^{NL}(b) + \Phi^{NS}(b') \ge \Phi^{NL}(b+\delta) + \Phi^{NS}(b'-\delta)
    ~,
\]
for any $0 \le \delta \le b'$, or:
\[
    \Phi^{NL}(b) + \Phi^{NS}(b') \ge \Phi^{NL}(b-\delta) + \Phi^{NS}(b'+\delta)
    ~,
\]
for any $0 \le \delta \le b - \frac{B_a}{2}$.
The range of $\delta$ in the second case is chosen such that the large bid does not become small. 
Observe that the small bid can not become large without letting the large bid become small since they sum to at most $B_a$.

In the former case, we let $\delta = b'$ to eliminate the small impression.

In the latter case, we let $\delta = b - \frac{B_a}{2}$.
Then, the claim follows by the observation that conditioned on having the same size $\frac{B_a}{2}$, downgrading the large bid into a small bid leads to a smaller offer and thus, a smaller lower bound for the corresponding $\beta_i$.
The observation follows by the definition of $\beta$ increment in Eqn.~\eqref{eqn:hybrid-beta-definition}, and the comparison of $\beta$ increments for large and small bids in Lemma~\ref{lem:hybrid-large-vs-small}.
%

%%%%%%%%

\paragraph{Reading Guide.}
The remaining parts of the subsection, including the construction of $Y_R$ and the correponsding $\beta(y)$'s, the disjointness of $Y_N$, $Y_R$, and $Y_D$, and their measure bounds, are almost verbatim to the counterparts in the basic algorithm.
We include them below for completeness.
Nonetheless, readers may want to skip to the next subsection.

\paragraph{Construction of $Y_R$ and the Corresponding $\beta(y)$'s.}
Since the algorithm does not choose matching to advertiser $a$ deterministically, $\beta_i \ge \Delta_a^D \beta_i$.
By the definition of $\Delta_a^D \beta_i$ in Eqn.~\eqref{eqn:hybrid-beta-d-increment}:
\begin{equation}
    \label{eqn:hybrid-beta-bound-r}
    \beta_i \geq \int_{Y_R} \Delta\beta^D(y,k_a^i(y)).
\end{equation}

We need to further derive a lower bound w.r.t.\ the $k_a(y)$'s at the end of the algorithm.
    The next lemma is similar to Lemma~\ref{lem:hybrid-basic-status-comparison-n} in the previous case, but more generally considers arbitrary $\hat{k}_a(y) \ge k_a^i(y)$ instead of only the $k_a(y)$ at the end of the algorithm.

\begin{lemma}
    \label{lem:hybrid-status-comparison-r}
    Consider any $\hat{k}_a(y)$'s such that $\hat{k}_a(y) \ge k_a^i(y)$ for any $y \in [0, B_a)$.
    Then, for any subset $\tilde{Y}_{ai}$ with measure at most $b_{ai}$:
    \[
        \int_{Y_{ai}} \Delta\beta^D(y,k^i_a(y)+1) dy \ge \int_{\tilde{Y}_{ai}} \Delta\beta^D(y,\hat{k}_a(y)+1) dy
        ~.
    \]
\end{lemma}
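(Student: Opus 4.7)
The plan is to follow the two-step template used for Lemma~\ref{lem:basic-status-comparison-r} in the basic algorithm, appropriately adapted to the hybrid setting where the increment function $\Delta\beta^D(y,k)$ depends on which half of $[0, B_a)$ the point $y$ lies in. First I would invoke the panoramic interval-level assignment's optimality, as stated in Lemma~\ref{lem:hybrid-panoramic-interval}, to get that among all subsets of measure at most $b_{ai}$, the subset $Y_{ai}$ chosen by the algorithm maximizes the deterministic offer $\int_{Y} \Delta\beta^D(y, k_a^i(y)+1)\,dy$. This yields
\[
    \int_{Y_{ai}} \Delta\beta^D(y, k_a^i(y)+1)\,dy \;\ge\; \int_{\tilde{Y}_{ai}} \Delta\beta^D(y, k_a^i(y)+1)\,dy.
\]

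Next I would use the monotonicity constraint~\eqref{con:hybrid-delta-beta-d-decreasing}, which in particular implies that for each fixed $y$ the function $k \mapsto \Delta\beta^D(y, k)$ is non-increasing in $k$ (combining $\Delta\beta^D_L(k) \ge \Delta\beta^D_R(k) \ge \Delta\beta^D_L(k+1)$ and the analogous chain starting from $R$). Since $\hat{k}_a(y) \ge k_a^i(y)$ pointwise by hypothesis, we obtain $\Delta\beta^D(y, \hat{k}_a(y)+1) \le \Delta\beta^D(y, k_a^i(y)+1)$ for every $y \in [0, B_a)$, and integrating this pointwise inequality over $\tilde{Y}_{ai}$ gives
\[
    \int_{\tilde{Y}_{ai}} \Delta\beta^D(y, k_a^i(y)+1)\,dy \;\ge\; \int_{\tilde{Y}_{ai}} \Delta\beta^D(y, \hat{k}_a(y)+1)\,dy.
\]
Chaining the two inequalities proves the lemma.

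I do not anticipate any real obstacle here; the proof is essentially mechanical once one has the right monotonicity statements. The only subtlety is verifying that the optimality asserted in Lemma~\ref{lem:hybrid-panoramic-interval} is indeed with respect to the state variables $k_a^i(\cdot)$ at the time of impression $i$'s arrival (which matches the definition of the offer $\Delta_a^D \beta_i$ in Eqn.~\eqref{eqn:hybrid-beta-d-increment}), and that the joint $(y,k)$-monotonicity in~\eqref{con:hybrid-delta-beta-d-decreasing} is applied only in its $k$-direction component, so that no issue arises from $\tilde{Y}_{ai}$ possibly containing different mixtures of left-half and right-half points than $Y_{ai}$.
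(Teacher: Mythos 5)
Your proof is correct and follows essentially the same two-step argument as the paper: first use the optimality of the panoramic interval-level assignment (Lemma~\ref{lem:hybrid-panoramic-interval}, which rests on the monotonicity in \eqref{con:hybrid-delta-beta-d-decreasing}) to compare $Y_{ai}$ with $\tilde{Y}_{ai}$ at the state $k_a^i(\cdot)$, then apply the $k$-direction monotonicity of $\Delta\beta^D$ together with $\hat{k}_a(y) \ge k_a^i(y)$ pointwise. Your closing remark correctly identifies the only subtlety, and it is handled the same way in the paper.
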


\begin{proof}
    Since the panoramic interval-level assignment chooses a subset $Y_{ai}$ of measure $b_{ai}$ with the minimum and left most $k_a^i(y)$'s, combining with the monotonicity of $\Delta \beta^D$ in Eqn.~\eqref{con:hybrid-delta-beta-d-decreasing}:
    \[
        \int_{Y_{ai}} \Delta\beta^D(y,k^i_a(y)+1) dy \ge \int_{\tilde{Y}_{ai}} \Delta\beta^D(y,k^i_a(y)+1) dy
        ~.
    \]

    The lemma then follows by the assumption that $\hat{k}_a(y) \ge k_a^i(y)$ for any $y \in [0, B_a)$, and by the monotonicity of $\Delta \beta^D$ in Eqn.~\eqref{con:hybrid-delta-beta-d-decreasing}.
\end{proof}

    %Recall that $y^*$ is the threshold above which $y \notin Y_D$ satisfies $k_a(y) = \kmin = \min_{z \in [0, B_a)} k_a(z)$, and below which $y \notin Y_D$ satisfies $k_a(y) = \kmin+1$ (see panoramic interval-level assignment and Lemma~\ref{lem:K-property} in Section~\ref{sec:panorama}).
    %Further recall that we write $i' > i$ if $i'$ arrives after $i$.
    %Let $i_1, i_2, \dots i_{|R|}$ be the impressions in $R$, \emph{sorted by their arrival order from earliest to latest}.
    For any $i \in R$, define $\tilde{Y}_{ai}$ as:
    %, $1 \le j \le |R|$, as:
    %
    \begin{equation}
        \label{eqn:hybrid-beta-charging-per-impression-r}
        \tilde{Y}_{ai} = \Big[ y^* \ominus_{Y_D} \sum_{i' \in R\,:\,i' \ge i} b_{ai'}, y^* \ominus_{Y_D} \sum_{i' \in R\,:\,i' > i} b_{ai'} \Big) \setminus Y_D
        ~.
    \end{equation}

    In other words, we scan \emph{backwards} through the interval $[0, B_a)$ starting from $y^*$, treating the interval as a circle by gluing its endpoints.
    Then, we construct $\tilde{Y}_{ai}$'s for $i \in R$ one at a time by their arrival order from latest to earliest, letting each be a subset excluding $Y_D$ with measure up to $b_{ai}$.
    If $\sum_{i \in N} b_{ai} \le B_a - \mu(Y_D)$, which we consider the canonical case, these would be the panoramic interval-level assignments if these $i \in R$ arrived at the end of the instance, assuming the same final state of the algorithm.
    By the definition of the boundary case, we stop scanning through $[0, B_a)$ after a full circle;
    therefore, the above $\tilde{Y}_{ai}$'s are disjoint.

    Define $Y_R$ and the corresponding $\beta(y)$ as:
    \begin{equation}
        \label{eqn:hybrid-beta-charging-r}
        \begin{aligned}
            Y_R & \defeq \bigcup_{i \in R} \tilde{Y}_{ai} \setminus Y_N ~, \\
            \forall y \in Y_R : \qquad \beta(y) & \defeq \Delta\beta^D(y,k_a(y)) ~.
        \end{aligned}
    \end{equation}
    %after the final state of the algorithm and were semi-assigned to $a$.

    We remark that $Y_R$ can be simplified as $Y_R = [y^* \ominus_{Y_D} b_R, y^*) \setminus Y_D$ if $b_R + b_N + \mu(Y_D) \le B_a$, which we consider the canonical case of the analysis.

%\begin{lemma}
%\label{hybrid:gain-from-R}
%If the basic constraint~\eqref{con:hybrid-delta-beta-d-decreasing} is satisfied, then 
\paragraph{Proof of Eqn.~\eqref{eqn:hybrid-distribution-R}.}
We restate the inequality below:
\[
    \sum_{i\in R} \beta_i \geq \int_{Y_R} \beta(y) dy = \int_{Y_R} \Delta\beta^D(y,k_a(y)) dy
    ~.
\]

For any $i \in R$, define $k^{-i}_a(y)$ by considering what the state variables of advertiser $a$ would have been before the arrival of $i$ \emph{if the impressions in $R$ were the latest ones in the instance}.
    More precisely, for any $i \in R$, let:
    \[
        k^{-i}_a(y) \defeq \begin{cases}
            k_a(y) - 1 & y \in \big[ y^* \ominus_{Y_D} \sum_{i' \in R\,:\,i' \ge i} b_{ai'}, y^* \big) \setminus Y_D ~; \\
            k_a(y) & \text{otherwise.}
        \end{cases}
    \]

    Intuitively, these are the largest possible values of $k_a^i(y)$'s.
    We restate Lemma~\ref{lem:basic-k-comparison-r} below, which still holds in the hybrid algorithm with a verbatim proof.

    \begin{lemma}
        \label{lem:hybrid-k-comparison-r}
        For any $i \in R$ and any $y \in [0, B_a)$:
        \[
            k^{-i}_a(y) \ge k^i_a(y)
            ~.
        \]
    \end{lemma}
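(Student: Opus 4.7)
The plan is to prove this by contradiction, since the excerpt already signals that the proof is verbatim from Lemma~\ref{lem:basic-k-comparison-r}. The argument depends only on structural properties of the panoramic interval-level assignment (which is the same in the hybrid algorithm) and not on the specific dual update rule, so no new ideas should be required.

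Suppose for contradiction that $k_a^{-i}(y) < k_a^i(y)$ for some $i \in R$ and some $y \in [0, B_a)$. First I would unpack the definition of $k_a^{-i}$: the strict inequality forces $y$ to lie in the range $[y^* \ominus_{Y_D} \sum_{i' \in R\,:\,i' \ge i} b_{ai'}, y^*) \setminus Y_D$, so $k_a^{-i}(y) = k_a(y) - 1$. Combined with the trivial monotonicity $k_a^i(y) \le k_a(y)$, this pins down $k_a^i(y) = k_a(y) < \infty$. In words: $y$ is a non-deterministic point whose semi-assignment count does not grow between the arrival of $i$ and the end of the instance.

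The crucial step is to exploit this non-growth. By the panoramic interval-level assignment (Section~\ref{sec:panorama} and Lemma~\ref{lem:K-property}), the scanning pointer sweeps $[0, B_a)$ circularly while skipping $Y_D$, and the equality $k_a^i(y) = k_a(y)$ means that after the arrival of $i$ no semi-assignment to $a$ touches $y$. Consequently, every semi-assignment to $a$ from round $i$ onward lands in $(y, y^*) \setminus Y_D$, each point being covered at most once (otherwise $k_a(y)$ would rise or $y^*$ would pass through $y$). In particular,
\[
    \bigcup_{i' \in R\,:\,i' \ge i} Y_{ai'} \subseteq (y, y^*) \setminus Y_D,
\]
and the union is disjoint.

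Finally I would compare measures. The left-hand side has total measure $\sum_{i' \in R\,:\,i' \ge i} b_{ai'}$. The right-hand side, by the assumption $y \in [y^* \ominus_{Y_D} \sum_{i' \in R\,:\,i' \ge i} b_{ai'}, y^*) \setminus Y_D$ being satisfied with $y$ strictly to the right of the left endpoint (strictness comes from $k_a^{-i}(y) = k_a(y) - 1$ versus the alternative $k_a^{-i}(y) = k_a(y)$ on the complement), has measure strictly less than $\sum_{i' \in R\,:\,i' \ge i} b_{ai'}$, giving the required contradiction. The main obstacle, if any, is just to be careful that $y^*$ and the scanning convention are the same objects in the hybrid algorithm as in the basic one, which follows because both algorithms call the panoramic interval-level assignment of Section~\ref{sec:panorama} as a black box for computing $Y_{ai}$.
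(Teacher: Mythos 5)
Your proposal is correct and follows essentially the same route as the paper, which proves this lemma by noting it is verbatim the proof of Lemma~\ref{lem:basic-k-comparison-r}: assume $k_a^{-i}(y) < k_a^i(y)$, deduce $k_a^i(y) = k_a(y) < \infty$ and $k_a^{-i}(y) = k_a(y)-1$, use the panoramic interval-level assignment to conclude that all impressions $i' \in R$ with $i' \ge i$ are semi-assigned to disjoint subsets of $(y, y^*) \setminus Y_D$, and derive a contradiction by comparing measures. Your observation that the argument only uses the shared panoramic interval-level assignment (not the hybrid dual update rule) is exactly the paper's justification for reusing the proof.
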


    %We claim that it holds even in the hybrid algorithm with the same proof.
    Consider any $i \in R$.
    By definition, $\tilde{Y}_{ai}$ is a subset with measure at most $b_{ai}$.
    Further, Lemma~\ref{lem:hybrid-k-comparison-r} above allows us to apply Lemma~\ref{lem:hybrid-status-comparison-r}, which by the monotonicity of $\Delta \beta^D(\cdot)$ in Eqn.~\eqref{con:hybrid-delta-beta-d-decreasing} gives:
    \[
        \int_{Y_{ai}} \Delta\beta^D(y,k_a^i(y) +1) \ge \int_{\tilde{Y}_{ai}} \Delta\beta^D(y,k_a^{-i}(y) +1) dy
        ~.
    \]

    Finally, by $k_a^{-i}(y) = k_a(y) - 1$ for any $y \in \tilde{Y}_{ai}$:
    \begin{equation}
        \label{eqn:hybrid-beta-bound-charging-r}
        \int_{Y_{ai}} \Delta\beta^D(y,k_a^i(y) +1) dy \ge \int_{\tilde{Y}_{ai}} \Delta\beta^D(y,k_a(y)) dy
        ~.
    \end{equation}

    Eqn.~\eqref{eqn:hybrid-distribution-R} then follows by Eqn.~\eqref{eqn:hybrid-beta-bound-r}, the above inequality in Eqn.~\eqref{eqn:hybrid-beta-bound-charging-r}, and the definition of $Y_R$ and the correposnding $\beta(y)$ for $y \in Y_R$ in Eqn.~\eqref{eqn:hybrid-beta-charging-r}, through a sequence of inequalities as follows:
    \begin{align*}
        \sum_{i \in R} \beta_i
        &
        \ge \sum_{i \in R} \int_{Y_{ai}} \Delta\beta^D(y,k_a^i(y) +1) dy
        &&
        \text{(Eqn.~\eqref{eqn:hybrid-beta-bound-r})} \\
        &
        \ge \sum_{i \in R} \int_{\tilde{Y}_{ai}} \Delta\beta^D(y,k_a(y)) dy
        &&
        \text{(Eqn.~\eqref{eqn:hybrid-beta-bound-charging-r})} \\
        &
        \ge \int_{Y_R} \beta(y) dy
        ~.
        &&
        \text{(Eqn.~\eqref{eqn:hybrid-beta-charging-r})}
    \end{align*}

\paragraph{Disjointness.}
\label{par:hybrid-disjoint-and-measure}
The sets $Y_N$ and $Y_R$ can be written as:
\[
    Y_N = \big[ y^* , ~y^* \oplus_{Y_D} b_N \big) \setminus Y_D
    \quad,\quad
    Y_R = \big [ y^* \ominus_{Y_D} b_R ,~ y^* \big) \setminus (Y_D \cup Y_N)
    ~.
\]
Hence, they are disjoint by definition.

\paragraph{Measure Bound.}
%
%Then, define $Y_R' = \big [ y^* \ominus_{Y_D} b_R ,~ y^* \big) \setminus Y_D$ be the set of $Y_R$ that not exclude $Y_N$, 
%
If $b_N + b_R + \mu(Y_D) > B_a$, the union of $Y_D$, $\big[ y^* , ~y^* \oplus_{Y_D} b_N \big)$, and $\big[ y^* \ominus_{Y_D} b_R ,~ y^* \big)$ covers $[0, B_a)$.
Further by the above equivalent forms of $Y_N$ and $Y_R$, the union of $Y_N$, $Y_R$, and $Y_D$ also covers $[0, B_a)$.
%and $Y_R$ covers $[0,B_a)$, the same as $Y_N \cup Y_D \cup Y_R$. Then $\mu(Y_N) + \mu(Y_R) + \mu(Y_D) \geq B_a$. 
Then, the measure bound follows by:
\[
    \mu(Y_N) + \mu(Y_R) + \mu(Y_D) \geq B_a \ge b_a(S)
    ~.
\]

Otherwise, $Y_R$ simplifies as $\big [ y^* \ominus_{Y_D} b_R ,~ y^* \big) \setminus Y_D$.
We have:
\[
    \mu(Y_N) = b_N = \sum_{i \in R} b_{ai}
    \quad , \quad
    \mu(Y_R) = b_R = \sum_{i \in R} b_{ai}
    ~.
\]

Further, any impression $i \in S$ that is not in $N$ or $R$ must be deterministically assigned.
Hence:
\[
    \mu(Y_D) \geq \sum_{i\in S\setminus (N \cup R)} b_{ai}
    ~.
\]

Together we have:
\[
    \mu(Y_R) + \mu(Y_N) + \mu(Y_D) \geq \sum_{i\in S} b_{ai} \ge b_a(S)
    ~.
\]

\subsubsection{Amortization: Pairing Points Between Left and Right}
\label{subsec:simply-optimization}

%We remark that we will prove the first two properties in Paragraph \textbf{Djsjointness  and  The  Measure} after the constructions. 
%\yuhao{I want to remark here that the construction is very similar as before and directly argue $Y_N =xxx Y_R = xxx$, then put the construction in the appendix, let readers read 6.5 first before the construction}

Let us summarize the lower bound from $\alpha_a(y)$ and $\beta(y)$ by the point-level below.
%Recall the notations $\Delta \alpha^R(y, k)$, $\Delta \alpha^D(y, k)$, $\Delta \beta^{RL} (y, k)$, $\Delta \beta^{RS}(y, k)$, and $\Delta \beta^D (y, k)$ which allow us to write the bounds more succinct without repeating verbatim discussions for the left and right halves of the interval $[0, B_a)$.
%
\begin{itemize}
    \item If $y \in Y_N$ (large bid subcase), by Lemma~\ref{hybrid:gain-from-alpha} and Eqn.~\eqref{eqn:hybrid-beta-definition}, $\alpha_a(y) + \beta(y)$ is at least:
        \[
            \begin{cases}
                \psi^{NL}_L \big( \tilde{k}_a(y) \big) \defeq \sum_{\ell=1}^{\tilde{k}_a(y)} \Delta \alpha^R_L(\ell) + 2 \cdot \Delta \beta^{RL}_L \big(\tilde{k}_a(y) + 1\big)
                &
                0 \le y < \frac{B_a}{2}
                ~; \\[2ex]
                \psi^{NL}_R \big( \tilde{k}_a(y) \big) \defeq \sum_{\ell=1}^{\tilde{k}_a(y)} \Delta \alpha^R_R(\ell) + 2 \cdot \Delta \beta^{RL}_R \big(\tilde{k}_a(y) + 1\big)
                &
                \frac{B_a}{2} \le y < B_a
                ~.
            \end{cases}
        \]
    \item If $y \in Y_N$ and further $y \in Y_{N1}$ (small bid subcase, first small bid), by Lemma~\ref{hybrid:gain-from-alpha} and Eqn.~\eqref{eqn:hybrid-beta-definition}, $\alpha_a(y) + \beta(y)$ is at least:
        \[
            \begin{cases}
                \psi^{NS1}_L \big( \tilde{k}_a(y) \big) \defeq \sum_{\ell=1}^{\tilde{k}_a(y)} \Delta \alpha^R_L(\ell) + 2 \cdot \Delta \beta^{RS}_L \big(\tilde{k}_a(y) + 1\big)
                & 
                0 \le y < \frac{B_a}{2}
                ~; \\[2ex]
                \psi^{NS1}_R \big( \tilde{k}_a(y) \big) \defeq \sum_{\ell=1}^{\tilde{k}_a(y)} \Delta \alpha^R_R(\ell) + 2 \cdot \Delta \beta^{RS}_R \big(\tilde{k}_a(y) + 1\big)
                & 
                \frac{B_a}{2} \le y < B_a
                ~.
            \end{cases}
        \]
    \item If $y \in Y_N$ and further $y \in Y_{N2}$ (small bid subcase, second small bid), by Lemma~\ref{hybrid:gain-from-alpha} and Eqn.~\eqref{eqn:hybrid-beta-definition}, $\alpha_a(y) + \beta(y)$ is at least:
        \[
            \begin{cases}
                \sum_{\ell=1}^{\tilde{k}_a(y)} \Delta \alpha^R_L(\ell) + 2 \cdot \Delta \beta^{RS} \big(y \ominus_{Y_D} \tfrac{B_a}{2}, \tilde{k}_a(y \ominus_{Y_D} \tfrac{B_a}{2}) + 1\big)
                & 
                0 \le y < \frac{B_a}{2}
                ~; \\[2ex]
                \sum_{\ell=1}^{\tilde{k}_a(y)} \Delta \alpha^R_R(\ell) + 2 \cdot \Delta \beta^{RS} \big(y \ominus_{Y_D} \tfrac{B_a}{2}, \tilde{k}_a(y \ominus_{Y_D} \tfrac{B_a}{2}) + 1\big)
                & 
                \frac{B_a}{2} \le y < B_a
                ~. \\[2ex]
            \end{cases}
        \]
        This case is more involved since it is unclear whether point $y \ominus_{Y_D} \frac{B_a}{2}$ is on the left or the right.
        If point $y$ is on the left half, i.e., $y < \frac{B_a}{2}$, both $y^*$ and $y \ominus_{Y_D} \frac{B_a}{2}$ are on the right of $y$.
        Hence, $\tilde{k}_a(y \ominus_{Y_D} \tfrac{B_a}{2}) \le \tilde{k}_a(y) - 1$.
        %regardless of the location of $y \ominus_{Y_D} \frac{B_a}{2}$ we have 
        By the monotonicity of $\Delta \beta^{RS}(y, k)$:
        % , we have:
        \[
            \alpha_a(y) + \beta(y) \ge \psi^{NS2}_L \big(\tilde{k}_a(y) \big) \defeq \sum_{\ell=1}^{\tilde{k}_a(y)} \Delta \alpha^R_L(\ell) + 2 \cdot \Delta \beta^{RS}_R \big(\tilde{k}_a(y)\big)
            \qquad 0 \le y < \frac{B_a}{2}
            ~.
        \]

        Next suppose point $y$ is on the right half, i.e., $\frac{B_a}{2} \le y < B_a$.
        If point $y \ominus_{Y_D} \frac{B_a}{2}$ is on the left half, $\tilde{k}_a(y \ominus_{Y_D} \tfrac{B_a}{2}) \le \tilde{k}_a(y)$ and $\beta(y) \ge 2 \Delta \beta_L^{RS} ( \tilde{k}_a(y) + 1)$.
        Otherwise, i.e., point $y \ominus_{Y_D} \frac{B_a}{2}$ is on the right, $\tilde{k}_a(y \ominus_{Y_D} \tfrac{B_a}{2}) \le \tilde{k}_a(y) - 1$, and $\beta(y) \ge 2 \Delta \beta_R^{RS}(\tilde{k}_a(y))$, which is even larger than the previous case by the monotonicity of $\Delta \beta^{RS}(y, k)$.
        In sum:
        \[
            \alpha_a(y) + \beta(y) \ge \psi^{NS2}_R \big(\tilde{k}_a(y) \big) \defeq \sum_{\ell=1}^{\tilde{k}_a(y)} \Delta \alpha^R_R(\ell) + 2 \cdot \Delta \beta^{RS}_L \big(\tilde{k}_a(y) + 1\big)
            \qquad \frac{B_a}{2} \le y < B_a
            ~.
        \]
    \item If $y \in Y_R$, by Lemma~\ref{hybrid:gain-from-alpha} and Eqn.~\eqref{eqn:hybrid-beta-charging-r}, $\alpha_a(y) + \beta(y)$ is at least:
        \[
            \begin{cases}
                \psi^R_L\big( \tilde{k}_a(y) \big) \defeq \sum_{\ell=1}^{\tilde{k}_a(y)} \Delta \alpha^R_L(\ell) + \Delta \beta^D_L(\tilde{k}_a(y))
                &
                0 \le y < \frac{B_a}{2}
                ~; \\[2ex]
                \psi^R_R\big( \tilde{k}_a(y) \big) \defeq \sum_{\ell=1}^{\tilde{k}_a(y)} \Delta \alpha^R_R(\ell) + \Delta \beta^D_R(\tilde{k}_a(y))
                & 
                \frac{B_a}{2} \le y < B_a
                ~.
            \end{cases}
        \]
    \item If $y \in Y_D$, by Lemma~\ref{hybrid:gain-from-alpha} and define $\beta(y) = 0$, $\alpha_a(y) + \beta(y)$ is at least:
        %, left half, i.e., , $0 \le y < \frac{B_a}{2}$:
        %
        \[
            \begin{cases}
                \psi^D_L\big(\tilde{k}_a(y)\big) \defeq \sum_{\ell=1}^{\tilde{k}_a(y)-1} \Delta \alpha^R_L(\ell) + \Delta \alpha^D_L \big( \tilde{k}_a(y) \big)
                &
                0 \le y < \frac{B_a}{2}
                ~; \\[2ex]
                \psi^D_R\big(\tilde{k}_a(y)\big) \defeq \sum_{\ell=1}^{\tilde{k}_a(y)-1} \Delta \alpha^R_R(\ell) + \Delta \alpha^D_R \big( \tilde{k}_a(y) \big)
                & 
                \frac{B_a}{2} \le y < B_a
                ~.
            \end{cases}
        \]
\end{itemize}

\paragraph{Approximate Dual Feasibility Fails Locally.}
In the previous analysis of the basic algorithm, approximate dual feasibility holds locally in that $\alpha_a(y) + \beta(y) \ge \Gamma$ for any $y \in Y_N \cup Y_R \cup Y_D$.
If we follow the same strategy, the above functions from $\psi^{NL}_L(k)$ to $\psi^D_R(k)$ need to be at least $\Gamma$ for all possible values of $k = \tilde{k}_a(y)$.
%After that, the $\Gamma$ approximated feasibility holds under these constraints.
This is impossible, however, for any nontrivial competitive ratio $\Gamma > 0.5$.
% Following the previous plan, we have a very strong restriction such that the $\Gamma$ ratio holds on every single point $y$.
This shall not be surprising since it goes against the strategy of handling the left and right halves of the interval $[0, B_a)$ differently in the hybrid algorithm.
%balancing idea in the hybrid algorithm. Recall that we have different updating strategies in the left and right half intervals. It means that we sacrifice those $y$ in the left half and aim to get it back in the right half. Thus, the ratio $\Gamma$ should not be required to holds in the left half interval. Otherwise, the different requirement techniques will not work.
%\yuhao{make the paragraph better?}

\paragraph{New Plan: Pairing Points Between Left and Right.}
Based on the above discussion, an amortization between left and right is needed.
%To get back the advantage of the different requirement technique, we pair a left point and a right point together and bound their sum. 
% Let us consider a special case that $\mu(Y_N) + \mu(Y_R) + \mu(Y_D) = B_a$. 
Concretely, we will design a measure preserving map $h: [0, \frac{B_a}{2}) \mapsto [\frac{B_a}{2}, B_a)$ such that:
%such that for any $0 \le y < \frac{B_a}{2}$ on the left:
%
%We shall satisfy the following inequality:
%, which is sufficed to prove the approximate feasibility property~\eqref{eqn:hybrid-basic-approximate-feasibility}:
%
\begin{equation}
    \label{eqn:hybrid-approximate-feasibility-group-2}
    \forall y \in \Big[0, 
    \frac{B_a}{2} \Big) \quad, \qquad  \alpha_a(y) + \beta(y) + \alpha_a(h(y)) + \beta(h(y)) \geq 2\Gamma
    ~.
\end{equation}

This would be sufficient for approximate dual feasibility if the union of $Y_N$, $Y_R$, and $Y_D$ covers $[0, B_a)$.
To make the above pairing idea works in the general case, we further impose the following constraint on the increments in $\beta$ variables:
\begin{equation}
    \label{con:hybrid-simlfy-to-full}
    \tag{C9}
    \Delta\beta^{RS}_L(1) \le \frac{\Gamma}{2}
    ~.
\end{equation}

% With the above inequality, the approximate feasibility property is sufficed to hold if only consider the case $\mu(Y_N) + \mu(Y_R) + \mu(Y_D) = B_a$, by the following equations:
% \begin{align}
% \TBD
% \end{align}
% Combining with Lemma~\ref{lem:hybrid-non-negative}, we have that
% \begin{lemma}
% If the non-negativity constraints~\eqref{con:hybrid-alpha-non-negative}~\eqref{con:hybrid-beta-non-negative} and the Inequality~\eqref{eqn:hybrid-approximate-feasibility-group-2} are satisfied, then the hybrid algorithm is $\Gamma$-competitive.
% \end{lemma}

\begin{lemma}
    Eqn.~\eqref{eqn:hybrid-approximate-feasibility-group-2} implies approximate dual feasibility.
\end{lemma}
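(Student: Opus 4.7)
The plan is to reduce approximate dual feasibility to the pointwise inequality Eqn.~\eqref{eqn:hybrid-approximate-feasibility-group-2} by a clean change of variables via the measure preserving map $h$. The strategy is to charge the $\beta_i$'s to points in $Y_N\cup Y_R$, extend $\beta(\cdot)$ by zero elsewhere, rewrite the integral on $[0,B_a)$ as a single integral on $[0,B_a/2)$ using $h$, and then invoke the hypothesis inside that integral; finally we use $b_a(S)\le B_a$ to close the argument.

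Concretely, I would proceed as follows. First, combining Eqn.~\eqref{eqn:hybrid-distribution-N} and Eqn.~\eqref{eqn:hybrid-distribution-R} with the disjointness of $Y_N$ and $Y_R$ (established in the ``Disjointness'' paragraph above) gives
\[
    \sum_{i \in S} \beta_i \;\ge\; \sum_{i\in N}\beta_i+\sum_{i\in R}\beta_i \;\ge\; \int_{Y_N\cup Y_R}\beta(y)\,dy.
\]
Extending $\beta(y)\defeq 0$ for $y\in[0,B_a)\setminus(Y_N\cup Y_R)$ (this is precisely the convention under which Eqn.~\eqref{eqn:hybrid-approximate-feasibility-group-2} is stated), the right-hand side equals $\int_0^{B_a}\beta(y)\,dy$, so
\[
    \int_0^{B_a}\alpha_a(y)\,dy+\sum_{i\in S}\beta_i \;\ge\; \int_0^{B_a}\bigl(\alpha_a(y)+\beta(y)\bigr)\,dy.
\]
Next, since $h$ is a measure preserving bijection from $[0,B_a/2)$ onto $[B_a/2,B_a)$, the change of variables $y\mapsto h(y)$ gives $\int_{B_a/2}^{B_a}\!(\alpha_a(y)+\beta(y))\,dy = \int_0^{B_a/2}\!(\alpha_a(h(y))+\beta(h(y)))\,dy$. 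Splitting the integral on $[0,B_a)$ at $B_a/2$ and combining the two halves therefore yields
\[
    \int_0^{B_a}\bigl(\alpha_a(y)+\beta(y)\bigr)\,dy \;=\; \int_0^{B_a/2}\bigl(\alpha_a(y)+\beta(y)+\alpha_a(h(y))+\beta(h(y))\bigr)\,dy.
\]
Applying Eqn.~\eqref{eqn:hybrid-approximate-feasibility-group-2} pointwise inside this integral lower bounds it by $\int_0^{B_a/2}2\Gamma\,dy=\Gamma B_a$. Since $b_a(S)\le B_a$ by the definition of the budget-additive payment, we conclude $\int_0^{B_a}\alpha_a(y)\,dy+\sum_{i\in S}\beta_i\ge\Gamma B_a\ge\Gamma\cdot b_a(S)$, which is precisely Eqn.~\eqref{eqn:hybrid-basic-approximate-feasibility}.

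The only subtle point is the zero extension of $\beta(\cdot)$ outside $Y_N\cup Y_R$: we must ensure that the hypothesis Eqn.~\eqref{eqn:hybrid-approximate-feasibility-group-2} is indeed meant with $\beta(\cdot)$ so extended, so that applying it pointwise on the ``gap'' region (where $\beta(y)=\beta(h(y))=0$ may hold) is legitimate. This is the role played by the constraint (C9) in the upcoming construction of $h$: it forces the $\alpha$-mass on the paired pair of points to carry the whole $2\Gamma$ whenever one side contributes no $\beta$. Given that Eqn.~\eqref{eqn:hybrid-approximate-feasibility-group-2} holds in this form, the rest of the argument is a routine measure-preserving change of variables, and there is no further obstacle beyond the bookkeeping outlined above.
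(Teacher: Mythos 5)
Your covering case is fine and coincides with the paper's first case: charge the $\beta_i$'s via Eqn.~\eqref{eqn:hybrid-distribution-N} and \eqref{eqn:hybrid-distribution-R}, fold $\int_0^{B_a}(\alpha_a(y)+\beta(y))\,dy$ onto $[0,\frac{B_a}{2})$ using the measure-preserving $h$, apply the pointwise bound, and finish with $\Gamma B_a \ge \Gamma\, b_a(S)$. The genuine gap is the case where $Y_N\cup Y_R\cup Y_D$ does \emph{not} cover $[0,B_a)$. On the uncovered region $\beta(y)=0$ and $\alpha_a(y)$ can be far below what is needed (e.g., $k_a(y)$ may be $0$ or very small), and the pointwise inequality \eqref{eqn:hybrid-approximate-feasibility-group-2} is only ever certified by the LP constraints of Table~\ref{table:final-approximate-feasibility-constraint}, whose rows and columns are indexed by membership in $Y_N$, $Y_R$, $Y_D$; there is no constraint covering a pair $(y,h(y))$ with a ``gap'' point. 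So applying the hypothesis pointwise on the gap, as your argument does, proves a statement whose hypothesis cannot be verified there, and your reading of \eqref{con:hybrid-simlfy-to-full} is not what it provides: \eqref{con:hybrid-simlfy-to-full} caps $\Delta\beta^{RS}_L(1)\le \Gamma/2$, i.e., it bounds the $\beta$-share of a first small-bid semi-assignment; it does not force $\alpha_a(y)+\alpha_a(h(y))\ge 2\Gamma$ on points that receive no $\beta$ charge.

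The paper closes the non-covering case by a different argument that your proposal would need to supply: augment $S$ with fictitious small bids (w.r.t.\ $a$) summing to $B_a-\sum_{i\in S}b_{ai}$. This extends $Y_N$ so that $Y_N\cup Y_R\cup Y_D$ covers $[0,B_a)$, reducing to the case you handled; and by \eqref{con:hybrid-simlfy-to-full} together with the monotonicity constraint \eqref{con:hybrid-delta-beta-rs-decreasing}, each unit of added bid increases the left-hand side of \eqref{eqn:hybrid-basic-approximate-feasibility} by at most $\Gamma$ (the charge per unit of measure is at most $2\Delta\beta^{RS}_L(1)\le\Gamma$), while the right-hand side increases by exactly $\Gamma$ per unit, so the augmented instance is only harder. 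Without this reduction (or an equivalent handling of the gap region), the lemma is not established in the form the rest of the analysis needs.
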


\begin{proof}
    First consider the case when the union of $Y_N$, $Y_R$, and $Y_D$ covers $[0, B_a)$.
    \begin{align*}
        \int_0^{B_a} \alpha_a(y) ~dy + \sum_{i \in S} \beta_i
        &
        \ge \int_0^{B_a} \alpha_a(y) ~dy + \int_{Y_N \cup Y_N \cup Y_D} \beta(y) ~dy 
        \tag{Eqn.~\eqref{eqn:hybrid-distribution-N} and \eqref{eqn:hybrid-distribution-R}} \\
        &
        = \int_0^{B_a} \big( \alpha_a(y) + \beta(y) \big) ~dy \\
        &
        = \int_0^{\frac{B_a}{2}} \big( \alpha_a(y) + \beta(y) + \alpha_a(h(y)) + \beta(h(y)) \big) ~dy \tag{$h$ is measure preserving} \\
        &
        \ge \int_0^{\frac{B_a}{2}} 2\Gamma ~dy = \Gamma \cdot B_a
        ~.
        \tag{Eqn.~\eqref{eqn:hybrid-approximate-feasibility-group-2}}
    \end{align*}

    Next, consider an instance in which this does not hold, i.e.: 
    \[
        \sum_{i \in S} b_{ai} \le \mu(Y_N \cup Y_R \cup Y_D) < B_a
        ~.
    \]

    Suppose we add to $S$ a set of impressions that are small bids w.r.t.\ advertiser $a$ to $S$ summing to $B_a - \sum_{i \in S} b_{ai}$.
    Then, the RHS of approximate dual feasibility increases by $\Gamma \cdot (B_a - \sum_{i \in S} b_{ai})$, while the LHS increases by at most this amount due to Eqn.~\eqref{con:hybrid-simlfy-to-full} and the monotonicity of $\Delta \beta^{RS}_L(\cdot)$ and $\Delta \beta^{RS}_R(\cdot)$ in Eqn.~\eqref{con:hybrid-delta-beta-rs-decreasing}.
    In other words, the new impressions make approximate dual feasibility harder to satisfy, and prove the lemma by reducing it to the case when the union of $Y_N$, $Y_R$, and $Y_D$ covers $[0, B_a)$.
    %This finishes the proof.
    %since we have already shown that approximate dual feasibility holds in this case.
%At first we claim that the same limited case for \eqref{eqn:hybrid-basic-approximate-feasibility} can holds. Then, we discuss the general case, and we reduce all of them to the limited case. 
%Consider the Inequality~\eqref{eqn:hybrid-basic-approximate-feasibility}
%$$
%    \alpha_a + \sum_{i\in S} \beta_i \geq \Gamma\cdot b_a(S).
%$$
%If $b_N + b_R + \mu(Y_D) > B_a$, then decreasing some impressions' bid a little bit in these set only decrease the LHS; If it's smaller, then adding some impressions to $N$ with very small bid increase RHS by factor $\Gamma$, but the LHS less than $\Gamma$, by 
%Inequality~\eqref{con:hybrid-simlfy-to-full}. 
\end{proof}

\paragraph{Naïve Measure Preserving Map.}
Consider a trivial measure preserving map $h(y) = y + \frac{B_a}{2}$.
By the property of $k_a(\cdot)$ in Lemma~\ref{lem:K-property}, there are two possible combinations of $\Tilde{k}_a(y)$ and $\Tilde{k}_a(h(y))$:
\begin{itemize}
    \item $\Tilde{k_a}(y) = k_{\min} + 1$ and $\Tilde{k_a}(h(y)) = k_{\min}$; or
    \item $\Tilde{k_a}(y) = k_{\min}$ and $\Tilde{k_a}(h(y)) = k_{\min}$.
\end{itemize}

Moreover, we can rule out an impossible case by the policy of interval-level assignment.
\begin{lemma}[Impossible Case]
    For any point $y$ on the left half, and any point $h(y)$ on the right half, it is impossible that $y \in N$ and $\Tilde{k}_a(y) = k_{\min} + 1$, while $h(y) \in R$ and $\Tilde{k}_a(h(y)) = k_{\min}$.
\end{lemma}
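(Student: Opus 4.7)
The plan is to reach a contradiction by unpacking the four simultaneous hypotheses into circular position constraints relative to $y^{*}$, and then invoking the disjointness of $Y_N$, $Y_R$, and $Y_D$.

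First, I would translate the $\tilde{k}_a$ equalities using the definition of $\tilde{k}_a$ (which in turn comes from Lemma~\ref{lem:K-property}): the hypothesis $\tilde{k}_a(y) = k_{\min}+1$ forces $y < y^{*}$, while $\tilde{k}_a(h(y)) = k_{\min}$ forces $h(y) \ge y^{*}$. Together with $h(y) = y + B_a/2$, these pin $h(y)$ into the arc $[y^{*}, B_a)$ and $y$ into $[\,\max\{0, y^{*} - B_a/2\}, y^{*})$. Crucially, the hypothesis $y \in Y_N$ also gives $y \notin Y_D$, and $h(y) \in Y_R$ gives $h(y) \notin Y_D \cup Y_N$.

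Next, I would exploit the construction of $Y_N$ in Eqn.~\eqref{eqn:hybrid-beta-charging-n}, namely $Y_N = [y^{*}, y^{*} \oplus_{Y_D} b_N) \setminus Y_D$, interpreted circularly. Since $y \in Y_N$ sits strictly before $y^{*}$ on the circle, the forward scan from $y^{*}$ must already have wrapped through $B_a$ (and around to $0$) by the time it encounters $y$. In particular, the entire arc $[y^{*}, B_a)$ has been traversed by the scan, so $[y^{*}, B_a) \setminus Y_D \subseteq Y_N$, and therefore $[y^{*}, B_a) \subseteq Y_N \cup Y_D$.

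The contradiction is then immediate: since $h(y) \in [y^{*}, B_a)$, the previous step yields $h(y) \in Y_N \cup Y_D$, which contradicts $h(y) \in Y_R$ because $Y_R$ is disjoint from both $Y_N$ and $Y_D$ (as established in the disjointness paragraph following Eqn.~\eqref{eqn:hybrid-beta-charging-r}, paralleling the basic algorithm).

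The only delicate point is the circular interpretation of $\oplus_{Y_D}$ together with its boundary convention. I need to check the argument both in the generic case, where $y^{*} \oplus_{Y_D} b_N < y^{*}$ after a genuine wrap, and in the boundary case where $b_N \ge B_a - \mu(Y_D)$ forces $y^{*} \oplus_{Y_D} b_N = y^{*}$ and the scan fills the entire non-$Y_D$ region. In both cases the inclusion $[y^{*}, B_a) \setminus Y_D \subseteq Y_N$ follows directly from the definitions, so I do not expect a genuine obstacle beyond bookkeeping.
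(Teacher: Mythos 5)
Your proof is correct and follows essentially the same route as the paper: both translate the $\tilde{k}_a$ values into $y < y^*$ and $h(y) \ge y^*$, observe that membership of a point before $y^*$ in $Y_N$ forces the forward scan to have wrapped so that $[y^*, B_a) \setminus Y_D \subseteq Y_N$, and then contradict the disjointness of $Y_N$, $Y_R$, and $Y_D$. The only cosmetic difference is that the paper also unwraps $[h(y), y^*) \subseteq Y_R$ symmetrically before intersecting, whereas you place $h(y)$ directly in $Y_N \cup Y_D$; this is a harmless streamlining of the same argument.
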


\begin{proof}
    Suppose for contrary that there are such points $y$ and $h(y)$.
    By $y \in N$, we have $[y^*, y) \subseteq Y_N$.
    Further by $\tilde{k}_a(y) = \kmin+1$, it follows from Lemma~\ref{lem:K-property} that $y < y^*$.
    Hence, by the panoramic treatment of the notation for intervals:
    \[
        [y^*, y) = [0, y) \cup [y^*, B_a) \subseteq Y_N
        ~.
    \]

    Similarly, by $h(y) \in Y_R$, we have $[h(y), y^*) \subseteq Y_R$.
    Further by $\tilde{k}_a(h(y)) = \kmin$, it follows from Lemma~\ref{lem:K-property} that $h(y) \ge y^*$.
    Hence, by the panoramic treatment of notation:
    \[
        [h(y), y^*) = [0, y^*) \cup [h^(y), B_a) \subseteq Y_R
        ~.
    \]

    Together we conclude that $Y_N \cap Y_R$ is not empty, contradicting their construction.
    %Besides, because $\Tilde{k}_a(h(y)) = k_{\min}$, $y\geq y^*$ and $[y,y^*)$ actually means $[y,B_a) \cup [0,y^*)$.
    %Thus, all the points $y' < y^*$ should be in $Y_R$. We know that $\Tilde{k}_a(y) = k_{\min} + 1$, so that $ y < y^*$, and $y$ should be in $y_R$ but not $Y_N$. 
\end{proof}
%To prove Inequality~\eqref{eqn:hybrid-approximate-feasibility-group-2}, for all fixed $\Tilde{k}_a(y) = l$, it is sufficed to add constraints to let all possible lower bounds of $\psi(y) + \psi(h(y))$ which are listed in the following table at least $2\Gamma$. 

Table~\ref{table:first-approximate-feasibility-constraint} summarizes the worst-case bound for the LHS of Eqn.~\eqref{eqn:hybrid-approximate-feasibility-group-2}.
Here, whenever $y$, or $h(y)$, or both belong to $Y_N$, we use the worst-case bound when they are in the small-bid subcase, and when $y, h(y) \in Y_{N1}$.
This is indeed the main drawback of the trivial mapping since we cannot exploit the extra gain from the $Y_{N2}$ case of small bids.
As a result, these constraints are still too restricted to get any competitive better than $0.5$.
%Actually, the constraints made by the basic table is still too strong to beat $0.5$.
%The reason is that we omit the extra gain for those $y\in Y_{N2}$.

\begin{table}[t]
    \renewcommand{\arraystretch}{1.4}
    \centering
    \begin{tabular}{|c|c|c|c|}
        \hline
        & $Y_N,k$ & $Y_R,k$ & $Y_D,k$ \\ \hline
        $Y_N,k$             & $\psi_L^{NS1}(k) + \psi_R^{NS1}(k)$       & $\psi_L^{NS1}(k) + \psi_R^R(k)$           & $\psi_L^{NS1}(k) + \psi_R^D(k)$       \\ \hline
        $Y_N,k+1$           & $\psi_L^{NS1}(k+1) + \psi_R^{NS1}(k)$     & (impossible) %\sout{$\psi_L^{NS1}(k+1) + \psi_R^R(k)$}        
        & $\psi_L^{NS1}(k+1) + \psi_R^D(k)$     \\ \hline
        $Y_R,k$             & $\psi_L^R(k) + \psi_R^{NS1}(k)$           & $\psi_L^R(k) + \psi_R^R(k)$               & $\psi_L^R(k) + \psi_R^D(k)$          \\ \hline
        $Y_R,k+1$           & $\psi_L^R(k+1) + \psi_R^{NS1}(k)$         & $\psi_L^R(k+1) + \psi_R^R(k)$             & $\psi_L^R(k+1) + \psi_R^D(k)$        \\ \hline
        $Y_D,k$             & $\psi_L^D(k) + \psi_R^{NS1}(k)$           & $\psi_L^D(k) + \psi_R^R(k)$               & $\psi_L^D(k) + \psi_R^D(k)$          \\ \hline
        $Y_D,k+1$           & $\psi_L^D(k+1) + \psi_R^{NS1}(k)$         & $\psi_L^D(k+1) + \psi_R^R(k)$             & $\psi_L^D(k+1) + \psi_R^D(k)$        \\ \hline
    \end{tabular}
    \caption{
        Approximate dual feasibility constraints with the naïve map $h(y) = y + \frac{B_a}{2}$.
        Rows are combinations of $y$'s type and $\tilde{k}_a(y)$.
        Columns are combinations of $h(y)$'s type and $\tilde{k}_a(h(y))$.
        Write $\tilde{k}_a(h(y))$ as $k$ for brevity.
        The range is $k \ge 1$;
        the first, second, fourth, and sixth cells in the first column further include $k = 0$.
        This is because $y \not \in Y_R \cup Y_D$ when $\tilde{k}_a(y) = 0$ and $h(y)\not\in Y_R \cup Y_D$ when  $\tilde{k}_a(h(y)) = 0$.
        Each formula in the table shall be at least $2 \Gamma$.
    }
    \label{table:first-approximate-feasibility-constraint}
\end{table}

\paragraph{Our Measure Preserving Map.}
Next, we design a better mapping to improve the cases when $y, h(y) \in Y_N$.
In particular, we shall rule out the case when both of them are in $Y_{N1}$, i.e., the top-left cells in Table~\ref{table:final-approximate-feasibility-constraint}.
Intuitively, this is possible because the total measure of $Y_{N1}$ is at most $\frac{B_a}{2}$.
Concretely, we construct the measure-preserving map $h$ in four steps as follows:
%
%Next, we discuss how we get it back by a special mapping function $h:[0,B_a/2)\rightarrow [B_a/2,B_a)$, which are generated by a perfect matching created by the following steps.
% At first, recall that we only consider the case $\mu(Y_N) + \mu(Y_R) + \mu(Y_D) = B_a$. Thus, for those $y\in Y_{N2}$, 
%
\begin{enumerate}
    \item For any $y \in [0, \frac{B_a}{2})$ such that $y \in Y_{N2}$ (which means $y\ominus_{Y_D} \frac{B_a}{2} \in Y_{N1}$), and $y\ominus_{Y_D} \frac{B_a}{2} \in [\frac{B_a}{2},B_a)$, map $y$ to $h(y) \defeq y \ominus_{Y_D} \frac{B_a}{2}$.
    \item For any $y \in [0, \frac{B_a}{2})$ such that $y \oplus_{Y_D} \frac{B_a}{2} \in Y_{N2}$ (which means $y \in Y_{N1}$), and $y \oplus_{Y_D} \frac{B_a}{2} \in [\frac{B_a}{2}, B_a)$, map $y$ to $h(y) \defeq y\oplus_{Y_D} B_a/2$.
\end{enumerate}

These two steps consider pairs of points $y_1 \in Y_{N1}$ and $y_2 = y_1 \oplus_{Y_D} \frac{B_a}{2} \in Y_{N2}$ such that there is one of the left half and one on the right half.
The measure-preserving map $h$ then maps the former to the latter. We show an example in Figure~\ref{fig:measure-preserving-map}.
%match such pairs with each other.

\begin{enumerate}
    \setcounter{enumi}{2}
    \item For the points $y \in [0, \frac{B_a}{2})$ and $y \in Y_{N1}$ whose map $h(y)$ remains undefined after the first two steps, map to the unmapped points in $[\frac{B_a}{2}, B_a) \setminus Y_{N1}$ in an arbitrary measure-preserving way.
\end{enumerate}

The third step ensures that the points in $Y_{N1}$ are not mapped with each other.
%Then, for step 3, the problem is that there may not have enough unmatched points in $[B_a/2,B_a)\setminus Y_{N1}$ for those unmatched points in $[0,B_a/2) \cap Y_{N1}$. 
Why is this possible?
Suppose a measure of $\mu_0$ has been mapped from each half of the interval $[0, B_a)$ in the first two steps.
Recall that the first two steps only define a mapping between $Y_{N1}$-$Y_{N2}$ pairs.
Further by the observation that $Y_{N1}$ has measure at most $\frac{B_a}{2}$, the total measure of the unmapped points in $Y_{N1}$ is at most $\frac{B_a}{2} - \mu_0$.%
\footnote{It holds with equality in the case of two small bids, and with strict inequality in the case of only one small bid strictly smaller than $\frac{B_a}{2}$.}
%half the total measure of all unmapped points.
Suppose $\mu_L$ of these measure are on the left half.
Then, $Y_{N1}$ has a total measure of at most $\frac{B_a}{2} - \mu_0 - \mu_L$ on the right.
Finally, since the unmapped measure on the right half is precisely $\frac{B_a}{2} - \mu_0$, we conclude that the unmapped measure on the right half excluding $Y_{N1}$ is at least $(\frac{B_a}{2} - \mu_0) - (\frac{B_a}{2} - \mu_0 - \mu_L) = \mu_L$.
%Finally, the unmapped measures on the left and on the right are equal.

%in the first two steps, same measure $x$ of $Y_{N1}$ and $Y_{N2}$ is matched after the first two steps. The same as the measure matched in $[0,B_a)$ and $[B_a/2,B_a)$. Let $x'$ be the measure of unmatched points in $Y_{N1} \cap [0,B_a/2)$, the measure of unmatched point in $[B_a/2,B_a)\cap Y_{N1}$ is $B_a/2 - x - x'$. Thus, the measure of unmatched points in $[B_a/2,B_a) \backslash Y_{N1}$ is $B_a/2 - x - (B_a/2 - x - x') =  x'$, which is enough for the unmatched point in $[0,B_a/2)\cap Y_{N1}$. It means the problem do not occur. 

\begin{enumerate}
    \setcounter{enumi}{3}
    %
    %, \forall y \in [0,B_a/2)$, if $y \in Y_{N1}$ are unmatched in step 1 and 2, match $y$ to an arbitrary unmatched $y'\in[B_a/2,B_a) \backslash Y_{N1}$.  
    \item For the points $y \in [0,\frac{B_a}{2})$ whose map $h(y)$ is still undefined, map to the unmapped points in $[\frac{B_a}{2}, B_a)$ in an arbitrary measure-preserving way.
    %, if it's still unmatched, make an arbitrary matching to the unmatched points in $[B_a/2,B_a)$.
\end{enumerate}

\begin{figure}
    \centering
    \includegraphics[width=.5\textwidth]{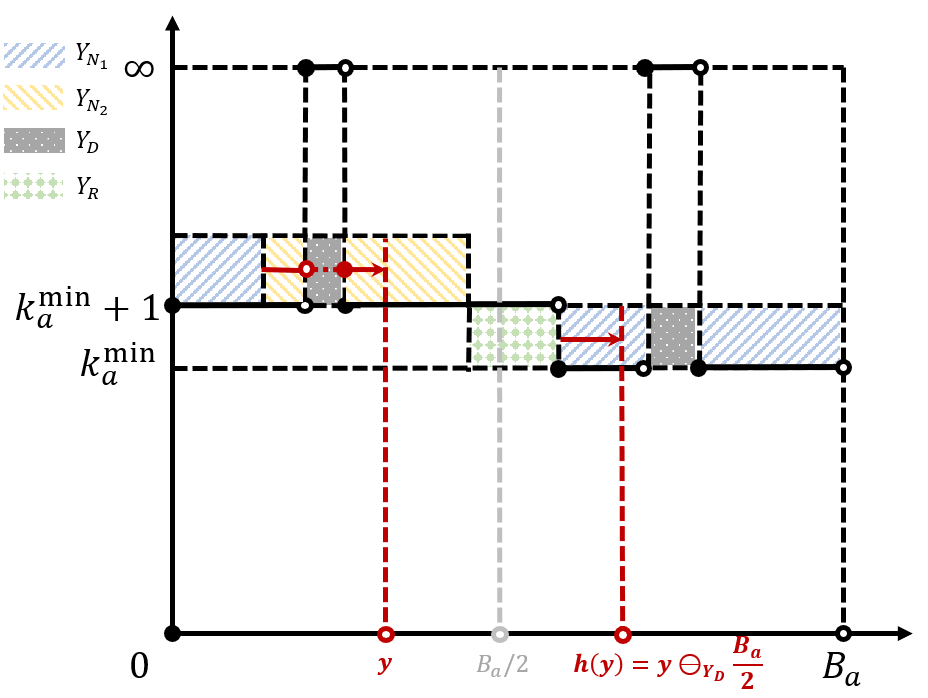}
    \caption{Example: map $y$ to $h(y)$. ($\mu(Y_{N_1})=B_a/2$)}
    \label{fig:measure-preserving-map}
    % \begin{tikzpicture}
    %     \draw[help lines] (0, 0) grid (9, 6);

    %     \node[red] at (4.5, 3) {\huge TODO};
    % \end{tikzpicture}
    % \caption{Example: TODO}
\end{figure}

Next, we present the main property of the above measure-preserving map $h$.%, which crucially leads to an improved bound for the case both $y$ and $h(y)$ are both in $Y_N$.

\begin{lemma}[Refined $Y_N$-$Y_N$ Small-bid Cases]
    \label{lem:hybrid-h-function-property}
    Suppose $y$ and $h(y)$ are both in $Y_N$ in the small-bid subcase.
    Then, it must be one of the following cases:
    \begin{itemize}
        \item $y \in Y_{N1}$ with $\tilde{k}_a(y) = \kmin$ and $h(y) \in Y_{N2}$ with $\tilde{k}_a(h(y)) = \kmin$.
        \item $y \in Y_{N1}$ with $\tilde{k}_a(y) = \kmin +   1$ and $h(y) \in Y_{N2}$ with $\tilde{k}_a(h(y)) = \kmin + 1$.
        \item $y \in Y_{N2}$ with $\tilde{k}_a(y) = \kmin + 1$ and $h(y) \in Y_{N1}$ with $\tilde{k}_a(h(y)) = \kmin$.
    \end{itemize}
\end{lemma}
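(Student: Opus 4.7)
The plan is to organize the argument around ruling out impossible configurations and then verifying the $\tilde{k}_a$ values for the remaining ones. First I would rule out both $y, h(y) \in Y_{N1}$; next rule out both $y, h(y) \in Y_{N2}$; and finally analyze the two remaining configurations $(Y_{N1}, Y_{N2})$ and $(Y_{N2}, Y_{N1})$ to verify they correspond to the three listed cases.

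Ruling out $(Y_{N1}, Y_{N1})$ is straightforward by inspecting the four-step construction of $h$: Step 1 forces $y \in Y_{N2}$, Step 2 forces $h(y) \in Y_{N2}$, Step 3 explicitly maps into $[\frac{B_a}{2}, B_a) \setminus Y_{N1}$, and Step 4 only processes points outside $Y_{N1}$ on the left because Step 3 already exhausted $Y_{N1}$ on the left. Hence no step can produce a $(Y_{N1}, Y_{N1})$ pair.

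Ruling out $(Y_{N2}, Y_{N2})$ is the main subtlety, since it can only arise from Step 4 and requires judicious use of the ``arbitrary measure-preserving'' freedom in Steps 3 and 4. The strategy is (i) in Step 3, greedily consume the remaining $Y_{N2}$ on the right, and (ii) in Step 4, route any leftover $Y_{N2}$ on the left into the non-$Y_N$ portion of the right. A short counting argument shows this is feasible: in the two-small-bid subcase $b_{N2} \le \frac{B_a}{2}$, so $\mu(\text{non-}Y_N \text{ on left}) + \mu(Y_{N1} \text{ on left}) = \frac{B_a}{2} - \mu(Y_{N2} \text{ on left}) \ge \mu(Y_{N2} \text{ on right})$, which is exactly the bound required to accommodate all leftover $Y_{N2}$ mass on the right without pairing $Y_{N2}$-left with $Y_{N2}$-right.

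For the $(Y_{N1}, Y_{N2})$ and $(Y_{N2}, Y_{N1})$ pairings, I would verify the $\tilde{k}_a$ values via Lemma~\ref{lem:K-property} ($\tilde{k}_a(z) = k_{\min}$ iff $z \ge y^*$), combined with the structural observation that in the canonical regime $b_N < B_a - \mu(Y_D)$, the set $Y_{N2}$ lies entirely in the forward arc starting at $y^* \oplus_{Y_D} b_{N1}$ that does not reach back to $y^*$. Together with the left/right position of each point and the step producing the pair, this rules out the mixed combinations $(k_{\min}, k_{\min}+1)$ and $(k_{\min}+1, k_{\min})$ for step-2 and step-3 pairs via contradictions on the position of $y^*$ relative to $\frac{B_a}{2}$, leaving Cases 1 and 2; for step-1 pairs, whenever $Y_{N2}$ has points on the left (which is what triggers Step 1), the wrapped portion of $Y_{N1}$ cannot simultaneously extend into the right half, forcing $h(y) \in Y_{N1}$ on right to satisfy $h(y) \ge y^*$ and giving Case 3.

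The main obstacle is expected to be the geometric case analysis for the wrap-around scenarios, especially confirming that the wrapped portion of $Y_{N1}$ cannot reach into the right half at the same time that $Y_{N2}$ wraps back into the left half; the clean way to see this is to track total angular distance with $Y_D$ excluded, which is bounded by $b_N + \mu(Y_D) \le B_a$ in the canonical case, so the total non-$Y_D$ arc traversed by $Y_{N1} \cup Y_{N2}$ starting at $y^*$ stops strictly before returning to $y^*$.
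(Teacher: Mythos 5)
Your plan diverges from the paper's proof in two places, and the second one contains a genuine gap. First, ruling out $(Y_{N2},Y_{N2})$ pairs does not require any judicious use of the arbitrary choices in Steps 3--4: the paper observes that whenever $Y_{N2}\neq\emptyset$ we have $b_{N1}=\frac{B_a}{2}$, so $\mu(Y_{N1})=\frac{B_a}{2}$ exactly, i.e.\ $Y_{N1}$ carries exactly half of the total mass. Since the construction never pairs two $Y_{N1}$ points, a one-line mass count then forces \emph{every} point outside $Y_{N1}$ on the left to be paired with a $Y_{N1}$ point on the right, so no two points outside $Y_{N1}$ (in particular no two $Y_{N2}$ points) can ever be paired, for \emph{any} admissible completion of Steps 3--4. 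Your weaker bound $\mu(Y_{N2})\le\frac{B_a}{2}$ misses this exact-measure fact, which is why you believe extra care is needed.

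The real gap is in your verification of the $\tilde k_a$ values, which you organize by the step that produced the pair and which relies on the routing rule ``in Step 4, send leftover left-$Y_{N2}$ into the non-$Y_N$ portion of the right.'' That rule can be infeasible, and mixed pairs can be forced in Step 4. Example ($B_a=2$): $Y_D=[1,1.9)$, $y^*=0.95$, $b_N=1.1$, so $Y_{N1}=[0.95,1)\cup[1.9,2)\cup[0,0.85)$ and $Y_{N2}=[0.85,0.95)$. Step 1 maps only $[0.9,0.95)$ (the partners of $[0.85,0.9)$ under $\ominus_{Y_D}\frac{B_a}{2}$ land on the left), Step 2 is vacuous, and Step 3 maps all $0.9$ of left-$Y_{N1}$ onto the $0.9$ of $Y_D$ on the right, exhausting all non-$Y_N$ mass there. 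Step 4 is then forced to pair $[0.85,0.9)\subseteq Y_{N2}$ with $[1.95,2)\subseteq Y_{N1}$ --- a $Y_{N2}$--$Y_{N1}$ pair that your step-1-specific argument does not cover and that your routing rule cannot avoid. (The pair is in fact fine: it is Case 3.) The fix is the paper's step-agnostic argument: once you know one of $y,h(y)$ is in $Y_{N1}$ and the other in $Y_{N2}$, use only that $Y_{N1}$ precedes $Y_{N2}$ in the scan from $y^*$, so the forward arc inside $Y_N$ between the two points avoids $y^*$; combined with $y<\frac{B_a}{2}\le h(y)$ and Lemma~\ref{lem:K-property}, the $(Y_{N1},Y_{N2})$ orientation forces both points onto the same side of $y^*$ (Cases 1 and 2), while the $(Y_{N2},Y_{N1})$ orientation forces $y<y^*\le h(y)$ (Case 3), with no reference to which step created the pair.
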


%\begin{proof}
%\end{proof}

%Then, we define a notation called \emph{late}. For two points $y$ and $y'$, we say that $y'$ is \emph{later} than $y$ if $\Tilde{k}_a(y') > \Tilde{k}_a(y)$, or $\Tilde{k_a}(y') = \Tilde{k_a}(y)$ and $y'>y$.
% For all $y\in Y_N$, we can regard $y$ to be $y^*\oplus_{Y_D} b$ with a specific $y^*$-distance $b$. Because $y^*$ is the first point such that $\Tilde{k}_a(y) = k_{\min}$, larger $b$ represent a later $y$. 

% Moreover, $\forall y \in Y_N$, define the $y^*$-distance between $y$ and $y^*$ to be the minimized $b$ such that $y^* \oplus_{Y_D} b = y$.
% \begin{lemma}
% \label{lem:hybrid-later}
% For two points $y_1$, $y_2 \in Y_N$, with $y^*$-distance $b_1$ and $b_2$. If $b_2 > b_1$ then $y_2$ is later than $y_1$.  
% \end{lemma}
% \begin{proof}
% If both $y_1,y_2 \geq y^s*$ or both $y_1,y_2 < y^*$. Then, $b_2 > b_1$ means $y_2 > y_1$. Because $k_a(y_1) = k_a(y_2)$, $y_2$ is later. Otherwise, because $b_2 > b_1$, it is impossible that $y_1 < y^*$ and $y_2 \geq y^*$. The remaining case is $y_1 \geq y^*$ and $y_2 < y^*$, it means that $k_a(y_2) > k_a(y_1)$, $y_2$ is also later.
% \end{proof}
% Next, we argue that the $h$ function we generate satisfy the following property.
%\begin{lemma}
%If $y$ and $h(y)$ are both in $N$, one of them must be in $Y_{N2}$. Let it be $y_2$ and the other be $y_1$. We have that $y_2$ is later than $y_1$ and $y_1$ is exactly $y_2 \ominus_{Y_D} B_a/2$ or later than it.
%\end{lemma}
\begin{proof}
    By definition, two points in $Y_{N1}$ can not be matched.
    Further, if $Y_{N2}$ is nonempty then $Y_{N1}$ must have measure $\frac{B_a}{2}$.
    Hence, any measure-preserving map that does not map points in $Y_{N1}$ to each other must not map points outside $Y_{N1}$ to each other.
    In other words, two points in $Y_{N2}$ cannot be matched.
    Hence, we have one of $y$ and $h(y)$ in $Y_{N1}$ and the other in $Y_{N2}$.

    Suppose $y \in Y_{N1}$ and $h(y) \in Y_{N2}$.
    By definition, $[y, h(y)) \setminus Y_D \subseteq Y_N$, which does not contain $y^*$ in it.
    In other words, $y < \frac{B_a}{2} \le h(y)$ must be on the same side of $y^*$.
    Then, by Lemma~\ref{lem:K-property} either $\tilde{k}_a(y) = \tilde{k}_a(h(y)) = \kmin$, which happens if they are both on the right of $y^*$, or $\tilde{k}_a(y) = \tilde{k}_a(h(y)) = \kmin+1$, which happens if they are both on the left of $y^*$.

    Next suppose $y \in Y_{N2}$ and $h(y) \in Y_{N1}$.
    Similarly, by definition $[h(y), y) \subseteq Y_N$, which does not contain $y^*$ in it.
    Here recall the panoramic treatment of notation.
    By $y < \frac{B_a}{2} \le h(y)$ we further conclude $[h(y), y) = [0, y) \cup [h(y), B_a) \subseteq Y_N$.
    In other words, $y$ is on the left of $y^*$ and $h(y)$ is on the right of $y^*$.
    Then, by Lemma~\ref{lem:K-property}, $\tilde{k}_a(y) = \kmin+1$ and $\tilde{k}_a(h(y)) = \kmin$.
\end{proof}

%\paragraph{Improve Bound for The $Y_N$-$Y_N$ Case}Then, it is sufficed to have an improved bound for the case $y$ and $h(y)$ are both in $Y_N$. For the case $\Tilde{k}_a(y) = l$, $h(y)$ is later so that $\psi(h(y))$ is improved to 
%\begin{equation*}
%    \psi_R^{NS2}(l) = \sum_{j=1}^{l} \Delta\alpha_R^R(j) +2\cdot\Delta\beta_L^{RS}(l);
%\end{equation*}
%For the case $\Tilde{k}_a(y) = l+1$, $y$ is later so that $\psi(y)$ is improved to 
%\begin{equation*}
%    \psi_L^{NS2}(l+1) = \sum_{j=1}^{l+1} \Delta\alpha_L^R(j) +2\cdot\Delta\beta_R^{RS}(l).
%\end{equation*}

Hence, for the top-left cells in the table which concern the cases when both $y$ and $h(y)$ are in $Y_N$, it suffices to consider the large-bid subcase and the small-bid subcase stated in the above lemma.
We formulate the refined constraints in Table~\ref{table:final-approximate-feasibility-constraint}.

%Further, the lower bound of $\psi(y) + \psi(h(y))$ becomes 
%\begin{align*}
%    &\min\{\psi_L^{NS1}(l) + \psi_R^{NS2}(l), \psi_L^{NL}(l) + \psi_R^{NL}(l)\}\tag{$\Tilde{k}_a(y) = l$}\\ 
%    &\min\{\psi_L^{NS2}(l+1) + \psi_R^{NS1}(l), \psi_L^{NL}(l+1) + \psi_R^{NL}(l)\}.\tag{$\Tilde{k}_a(y) = l+1$}
%\end{align*}
\begin{table}[t]
    \renewcommand{\arraystretch}{2.0}
    \centering
    \begin{tabular}{|c|c|c|c|}
        \hline
        & $Y_N,k$ & $Y_R,k$ & $Y_D,k$ \\ \hline
        $Y_N,k$             & 
        $\begin{aligned}
            & \psi_L^{NS1}(k) + \psi_R^{NS2}(k) ~, \\   
            & \psi_L^{NL}(k) + \psi_R^{NL}(k)
        \end{aligned}$
        & $\psi_L^{NS1}(k) + \psi_R^R(k)$           & $\psi_L^{NS1}(k) + \psi_R^D(k)$       \\
        \hline
        $Y_N,k+1$           &
        $\begin{aligned}
            & \psi_L^{NS2}(k+1) + \psi_R^{NS1}(k) ~,\\
            & \psi_L^{NL}(k+1) + \psi_R^{NL}(k)
        \end{aligned}$
        & (impossible) %\sout{$\psi_L^{NS1}(k+1) + \psi_R^R(k)$}        
        & $\psi_L^{NS1}(k+1) + \psi_R^D(k)$     \\ \hline
        $Y_R,k$             & $\psi_L^R(k) + \psi_R^{NS1}(k)$           & $\psi_L^R(k) + \psi_R^R(k)$               & $\psi_L^R(k) + \psi_R^D(k)$
        \\ \hline
        $Y_R,k+1$           & $\psi_L^R(k+1) + \psi_R^{NS1}(k)$         & $\psi_L^R(k+1) + \psi_R^R(k)$             & $\psi_L^R(k+1) + \psi_R^D(k)$        \\ \hline
        $Y_D,k$             & $\psi_L^D(k) + \psi_R^{NS1}(k)$           & $\psi_L^D(k) + \psi_R^R(k)$               & $\psi_L^D(k) + \psi_R^D(k)$          \\ \hline
        $Y_D,k+1$           & $\psi_L^D(k+1) + \psi_R^{NS1}(k)$         & $\psi_L^D(k+1) + \psi_R^R(k)$             & $\psi_L^D(k+1) + \psi_R^D(k)$        \\ \hline
    \end{tabular}
    \caption{
        Approximate dual feasibility constraints with our measure preserving map.
        Rows are combinations of $y$'s type and $\tilde{k}_a(y)$.
        Columns are combinations of $h(y)$'s type and $\tilde{k}_a(h(y))$.
        Write $\tilde{k}_a(h(y))$ as $k$ for brevity.
        The range is $k \ge 1$;
        the first, second, fourth, and sixth cells in the first column further include $k = 0$.
        This is because $y \not \in Y_R \cup Y_D$ when $\tilde{k}_a(y) = 0$ and $h(y)\not\in Y_R \cup Y_D$ when  $\tilde{k}_a(h(y)) = 0$.
        Each formula in the table shall be at least $2 \Gamma$.
    }
    \label{table:final-approximate-feasibility-constraint}
\end{table}
%To prove Inequality~\eqref{eqn:hybrid-approximate-feasibility-group-2}, for all fixed $\Tilde{k}_a(y) = l$, it is sufficed to add constraints to let all possible lower bounds of $\psi(y) = \psi(h(y))$ which are listed in Table~\ref{table:final-approximate-feasibility-constraint} at least $2\Gamma$. 

\subsection{Optimizing the Parameters: Proof of Theorem~\ref{thm:main}}

Finally, to avoid having an infinite number of parameters and constraints, we set all parameters, including  $\Delta\alpha_L^R(k)$, $\Delta\alpha_R^R(k)$, $\Delta\alpha_L^D(k)$, $\Delta\alpha_L^D(k)$, $\Delta\beta_L^{RL}(k)$, $\Delta\beta_R^{RL}(k)$, $\Delta\beta_L^{RS}(k)$, $\Delta\beta_R^{RS}(k)$, $\Delta\beta_L^D(k)$, and $\Delta\beta_R^D(k)$, to be $0$ when $k > \kmax$ for some sufficiently large integer $\kmax$.
Doing so does not violate the regularity constraints, including the monotonicity constraints in Equations~\eqref{con:hybrid-delta-beta-rl-decreasing}, \eqref{con:hybrid-delta-beta-rs-decreasing}, and \eqref{con:hybrid-delta-beta-d-decreasing}, the superiority of randomized rounds in Eqn.~\eqref{con:hybrid-random-vs-deterministic}, and the nonnegativity in Equations~\eqref{con:hybrid-alpha-non-negative} and \eqref{con:hybrid-beta-non-negative}, so long as nonzero parameters for $1 \le k \le \kmax$ satisfy them.
Further, the same applies to the constraints for simplifying the lower bounds of $\alpha$ variables in Equations~\eqref{con:hybrid-random-vs-determine-alpha-1} and \eqref{con:hybrid-random-vs-determine-alpha-2}.
Further, the simplifying constraint in Eqn.~\eqref{con:hybrid-simlfy-to-full}, which allows us to prove  approximate dual feasibility by pairing points between left and right, does not consider large $k$ and thus, is unaffected.
Finally, the approximate feasibility constraints in Table~\ref{table:final-approximate-feasibility-constraint} for $k > \kmax$ simplifies to a single boundary constraint as follows:
\begin{equation}
    \label{con:alpha-boundary}
    \tag{C10}
    \sum_{k=1}^{k_{\max}} \big( \Delta\alpha_L^R(k) + \Delta\alpha_R^R(k) \big) \geq 2\Gamma
    ~.
\end{equation}

The finite LP has $O(k_{\max})$ constraints and is formulated below:
\begin{align*}
    \label{eqn:matching-lp-primal}
    \text{maximize} \quad & \Gamma \\
    \text{subject to} \quad 
    &\text{Regularity Constraints~\eqref{con:hybrid-delta-beta-rl-decreasing}~\eqref{con:hybrid-delta-beta-rs-decreasing}~\eqref{con:hybrid-delta-beta-d-decreasing}~\eqref{con:hybrid-random-vs-deterministic}\eqref{con:hybrid-alpha-non-negative}~\eqref{con:hybrid-beta-non-negative}}&\quad & 1 \le k \le \kmax \\
    &\text{Simplifying Constraints~\eqref{con:hybrid-random-vs-determine-alpha-1}~\eqref{con:hybrid-random-vs-determine-alpha-2}}~ &\quad & 1 \le k \le \kmax \\ 
    & \text{Simplifying Constraint \eqref{con:hybrid-simlfy-to-full}} \\
    &\text{Approximate Dual Feasibility Constraints in Table~\ref{table:final-approximate-feasibility-constraint}}  &\quad & 0 \text{ or } 1 \le k \le \kmax \\
    &\text{Boundary Constraint~\eqref{con:alpha-boundary}}\\
    % &\Delta\alpha_L^R(k),~\Delta\alpha_R^R(k),~\Delta\alpha_L^D(k),~\Delta\alpha_R^D(k) \geq 0 &\quad &\forall k \in 1...k_{\max}
\end{align*}

Finally, we set $k_{\max}=20$ and solve the LP using the PuLP package in Python.%
\footnote{Our code is at available at \url{http://www.zyhwtc.com:8080/file/hybrid.py}}
This gives a set of parameters with $\Gamma > 0.5016$.
%Then we set the parameters $\Delta\alpha_L^R$, $\Delta\alpha_R^R$, $\Delta\alpha_L^D$, $\Delta\alpha_R^D$ by the LP solution. 
In other words, we have found a set of parameters which ensure approximate dual feasibility w.r.t.\ the desired competitive ratio.
Recall that we always have $\alpha_a \geq 0$ and $\beta_i \geq 0$ and reverse weak duality.
This completes the proof of Theorem~\ref{thm:main}.
 
%Then, by combining Lemma~\ref{hybrid:gain-from-alpha}~, Eqn.~\eqref{eqn:hybrid-distribution-N} and Eqn.~\eqref{eqn:hybrid-distribution-R} which derive the lower bounds of $\alpha_a(y) + \beta(y)$ with the constraints made by Table~\ref{table:final-approximate-feasibility-constraint}, the approximate feasibility property holds. Moreover, by the non-negativity constraints~~\eqref{con:hybrid-alpha-non-negative},~\eqref{con:hybrid-beta-non-negative}, we have $\alpha_a \geq 0$ and $\beta_i \geq 0$. Thus far, we show the hybrid algorithm with these parameters is at $0.5016$-competitive, which 

% \ref{hybrid:gain-from-N}~\ref{hybrid:gain-from-R} and the approximate feasibility constraints made by Table~\ref{table:final-approximate-feasibility-constraint}, the approximate feasibility property holds. Moreover, by Lemma~\ref{lem:hybrid-non-negative}, the non-negative property holds. Thus far, we show the hybrid algorithm with these parameters is at $0.5016$-competitive, which completes the proof of Theorem~\ref{thm:main}.

\appendix

\section{Small Bids}
\label{app:small-bid}

This section presents the analysis of the algorithm by \citet{MehtaSVV/JACM/2007} for small bids, i.e., $b_{ai} \le \frac{1}{2} B_a$ for any advertiser $a \in A$ and any impression $i \in I$.
In this case, we consider a deterministic algorithm that can achieve a $\frac{5}{9}$ competitive ratio, which is part of results by \citet{MehtaSVV/JACM/2007}.
We restate and analyze it using online primal dual framework and the configuration LP as demonstrated in Section~\ref{sec:prelim}.
This may serve as a warmup for readers who are not familiar with the framework.
Further, this is combined with the approach dealing with the large bids case in Section~\ref{sec:pd-algorithm} to obtain a hybrid algorithm in Section~\ref{sec:hybrid}, and to get the competitive ratio as stated in Theorem~\ref{thm:main}.

\subsection{Online Primal Dual Algorithm}
\label{sec:pd-algorithm-small-bids}

Algorithm~\ref{alg:pd-algorithm-small-bids} is driven by maximizing the dual variable $\beta_i$ for each impression $i$.
For each advertiser $a$, maintain the following invariant based on the subset of impressions that are already assigned to $a$, denoted as $S_a$:
\begin{equation}
    \label{eqn:small-bid-alpha-invariant}
    \alpha_a \defeq B_a \cdot \alpha \bigg( \frac{b_a(S_a)}{B_a} \bigg)
    ~,
\end{equation}
where $\alpha : [0, 1] \mapsto [0, 1]$ is a function to be optimized in the analysis.
The online primal dual algorithm and analysis shall impose several conditions on $\alpha$, which we shall explain shortly in the next subsection.

The computation of $\beta_i$ is based on the online primal dual framework in Lemma~\ref{lem:online-primal-dual}.
First, let the primal objective equal the dual objective, i.e., $P=D$.
In fact, we make the increments of primal and dual objectives equal in each round of assignments.
That is, if an impression $i$ is assigned to an advertiser $a$, the assigned subset of impressions to advertiser $a$ changes from $S_a$ to $S_a \cup \{i\}$:
\[
\Delta D = \Delta P = b_a(S_a \cup \{i\}) - b_a(S_a)
~. 
\]
Second, we divide the increment of the dual objective into two parts, the increment of $\alpha_a$ and the value of $\beta_i$. By Eqn.~\eqref{eqn:small-bid-alpha-invariant}, the former equals:
\begin{equation*}
    \label{eqn:small-bid-alpha-update}
    \Delta \alpha_a = B_a \bigg( \alpha \bigg( \frac{b_a(S_a \cup \{i\})}{B_a} \bigg) - \alpha \bigg( \frac{b_a(S_a)}{B_a} \bigg) \bigg)
    ~.
\end{equation*}
For convenience of notations, for any $y \in [0,1]$, define:
\begin{equation}
    \label{eqn:small-bids-beta-definition}
    \beta(y) \defeq y - \alpha(y)
    ~.
\end{equation}

Thus, define:
\begin{equation}
\label{eqn:small-bid-beta-update}
    %\begin{aligned}
        \beta_i  \defeq \Delta D - \Delta \alpha_a 
         = B_a \bigg( \beta \bigg( \frac{b_a(S_a \cup \{i\})}{B_a} \bigg) - \beta \bigg( \frac{b_a(S_a)}{B_a} \bigg) \bigg)
        ~.
        %\label{eqn:small-bid-beta-update}
    %\end{aligned}
\end{equation}

Algorithm~\ref{alg:pd-algorithm-small-bids} assigns each impression $i$ a neighboring advertiser to maximize the value of $\beta_i$.

\begin{algorithm}[t]
    \caption{A Deterministic Online Primal Dual Algorithm by \citet{MehtaSVV/JACM/2007} (Parameterized by a function $\alpha : [0, 1] \mapsto [0, 1]$)}
    \label{alg:pd-algorithm-small-bids}
    \begin{algorithmic}
        \smallskip
        \STATE \textbf{state variables:}
            $S_a$, the subset of impressions that are assigned to $a$ already\\[1ex]
        \FORALL{impression $i$}
            \FORALL{advertiser $a \in A$}
                % \STATE computer subset $Y_{ai} \subseteq [0,B_a)$ using panoramic interval-level assignment (Seciton~\ref{sec:panorama})
                \STATE compute $\beta_i$ according to Eqn.~\eqref{eqn:small-bid-beta-update}
            \ENDFOR
            \STATE find $a^*$ that maximize $\beta_i$, and assign $i$ to $a^*$
            % \STATE \textbf{if} $\Delta^R_{a_1} \beta_i + \Delta^R_{a_2} \beta_i \ge \Delta^D_{a^*} \beta_i$ \hspace*{\fill} \textbf{\emph{\# randomized round}}
            %     \INDSTATE assign $i$ to what PanOCS selects between $a_1$ and $a_2$ and the corresponding subsets
            %     \STATE \textbf{else} (i.e., $\Delta^R_{a_1} \beta_i + \Delta^R_{a_2} \beta_i < \Delta^D_{a^*} \beta_i$) \hspace*{\fill} \textbf{\emph{\# deterministic round}}
            %         \INDSTATE assign $i$ to $a^*$ and the corresponding subset
            %     \STATE \textbf{endif}
        \ENDFOR
    \end{algorithmic}
\end{algorithm}

\subsection{Online Primal Dual Analysis}
\label{sec:pd-analysis-small-bids}

This subsection presents the online primal dual analysis using the framework in Lemma~\ref{lem:online-primal-dual}, and proves the following theorem.
%We prove Theorem~\ref{thm:small-bids-main} utilizing the online primal dual framework in Lemma~\ref{lem:online-primal-dual}. 
%
\begin{theorem}
    \label{thm:small-bids-main}
    Algorithm~\ref{alg:pd-algorithm-small-bids} is $\frac{5}{9}$-competitive for AdWords under small bids assumption.
    % \footnote{Theorem~\ref{thm:small-bids-main} has been proved by \citet{MehtaSVV/JACM/2007}. We restate it here and prove it utilizing online primal dual framework.}
\end{theorem}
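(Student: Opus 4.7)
The plan is to apply the online primal dual framework from Lemma~\ref{lem:online-primal-dual} with $\Gamma = \frac{5}{9}$. Reverse weak duality $P \ge D$ holds by the design of Algorithm~\ref{alg:pd-algorithm-small-bids}: in every round where impression $i$ is assigned to some advertiser $a^*$, the primal increment $\Delta P = b_{a^*}(S_{a^*} \cup \{i\}) - b_{a^*}(S_{a^*})$ equals the dual increment $\Delta \alpha_{a^*} + \beta_i$ by the very definition of $\beta_i$ in Eqn.~\eqref{eqn:small-bid-beta-update}. Hence, the entire weight of the analysis falls on approximate dual feasibility.

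To establish approximate dual feasibility, I would fix any advertiser $a \in A$ and any subset $S \subseteq I$, and aim to show $\alpha_a + \sum_{i \in S} \beta_i \ge \frac{5}{9}\,b_a(S)$. Let $S_a$ denote the final assignment to $a$ and $y_a = b_a(S_a)/B_a$, so that $\alpha_a = B_a \alpha(y_a)$. For each $i \in S$, since the algorithm picks $a^*$ that maximizes $\beta_i$, the value of $\beta_i$ is at least what advertiser $a$ itself would have offered at the moment $i$ arrived:
\[
    \beta_i \ge B_a \Big( \beta \big( \tfrac{\min\{b_a(S_a^i) + b_{ai},\,B_a\}}{B_a} \big) - \beta \big( \tfrac{b_a(S_a^i)}{B_a} \big) \Big)\,.
\]
Choosing $\alpha$ to be concave and increasing makes $\beta = \mathrm{id} - \alpha$ convex, so this increment is monotone in $b_a(S_a^i)$. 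Together with the key observation $b_a(S_a^i) \le b_a(S_a) = B_a y_a$, I can sum over $i \in S$ and collapse the bound into a single expression in terms of $y_a$ and $b_a(S)$.

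After the summation, the desired approximate dual feasibility should reduce to a pointwise functional inequality of the form $\alpha(y) + \Phi(y)\,\sigma \ge \frac{5}{9}\,\sigma$ for all $y \in [0,1]$ and all $\sigma = b_a(S)/B_a \in [0,1]$, where $\Phi(y)$ is a lower bound on the total $\beta$-contribution per unit of $b_a(S)$ consumed. Since $\sigma = 0$ is trivial and $\sigma$ appears linearly on both sides, the binding case is $\sigma = 1$, which yields an inequality relating $\alpha(y)$ and $\alpha'(y)$. I would then exhibit an explicit concave $\alpha$ (piecewise-defined, with breakpoints dictated by the $b_{ai} \le B_a/2$ threshold) such that this differential-style inequality holds with equality at the $5/9$ level, and verify the boundary condition at $y = 1$.

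The main obstacle will be handling the budget cap cleanly. Under the small-bid assumption $b_{ai} \le B_a/2$, when $b_a(S_a^i)$ is near $B_a$ the increment $b_a(S_a^i \cup \{i\}) - b_a(S_a^i)$ can be strictly less than $b_{ai}$, so $\sum_{i \in S} b_{ai}$ overcounts $b_a(S)$ (which is itself capped at $B_a$); I will have to split $S$ into the portion hitting the cap and the portion that does not, and argue separately that $\alpha_a$ alone takes care of the capped part while the $\beta_i$'s cover the uncapped part. The $\frac{1}{2}$ threshold is precisely what makes this split work with a clean two-piece $\alpha$, giving the $5/9$ ratio rather than the stronger $1 - 1/e$ which requires vanishing bids and hence a smooth exponential $\alpha$.
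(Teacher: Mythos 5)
Your overall skeleton matches the paper's proof (online primal dual, reverse weak duality from equal primal/dual increments, lower bounding each $\beta_i$ by advertiser $a$'s own offer at arrival time, collapsing to the final state of $a$, then solving a functional inequality for a piecewise $\alpha$ at level $\frac{5}{9}$), but there is a genuine error at the key step: you have the curvature direction backwards. You propose $\alpha$ concave, hence $\beta = \mathrm{id} - \alpha$ convex, and claim the resulting monotonicity lets you combine $b_a(S_a^i) \le b_a(S_a)$ with the offer bound to collapse $\sum_{i \in S}\beta_i$ into an expression at the final spend $y_a$. With $\beta$ convex the increment $\beta\big(\min\{x+b,1\}\big) - \beta(x)$ is \emph{non-decreasing} in the starting point $x$, so the offer $a$ made when $i$ arrived (at the smaller spend $b_a(S_a^i)$) is \emph{at most}, not at least, the increment evaluated at the final spend; the chain of inequalities therefore does not connect, and you cannot lower bound $\beta_i$ by a quantity depending only on $y_a$. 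The paper needs exactly the opposite: $\alpha$ convex and $\beta$ concave (Eqn.~\eqref{eqn:small-bids-convexity}), which is what makes Lemma~\ref{lem:small-bids-beta-bound} true, and the optimal solution is indeed convex piecewise linear, $\alpha(y)=\frac{4}{9}y$ on $[0,\frac12]$ and $\alpha(y)=\frac{2}{3}y-\frac19$ on $(\frac12,1]$ --- not concave. Intuitively, an advertiser with little spend must offer a large share to $\beta_i$ and rely on $\alpha_a$ only when nearly full; a concave $\alpha$ gives small early offers, and an adversary whose impressions for $a$ arrive early but are assigned elsewhere defeats any ratio above $\tfrac12$.

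The concavity of $\beta$ is also what justifies your later reduction, in a way your sketch glosses over: for a fixed total $\sum_{i\in S} b_{ai}$, concavity (with $\beta(0)=0$) implies the sum of offers is \emph{minimized} when the bids are as large as the small-bid assumption allows, so the binding configuration is two bids of $\frac{B_a}{2}$, and the resulting constraints are finite-difference inequalities coupling $\alpha(b)$ with $\alpha\big(b+\tfrac12\big)$ (and $\alpha(1)$), namely $3\alpha(b)-2\alpha\big(b+\tfrac12\big)+1\ge\Gamma$ and $3\alpha(b)-2\alpha(1)+2(1-b)\ge\Gamma$ --- not a differential inequality in $\alpha(y)$ and $\alpha'(y)$, which is the right object only in the vanishing-bid ($1-\frac1e$) regime. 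With convex $\beta$ this extremal argument also reverses, so both the collapse step and the worst-case identification would fail. Fixing the proposal amounts to flipping the curvature assumption, adding the boundary/nonnegativity conditions (Equations~\eqref{eqn:small-bids-initial-value} and \eqref{eqn:small-bids-curvature}), and then your plan coincides with the paper's proof.
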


Recall that the primal and dual assignments ensure that the increments of primal and dual objectives are equal in every step by definition. 
Next, we derive a set of conditions on the function $\alpha$ which imply the approximate dual feasibility.
Finally, we optimize $\alpha$ by solving a set of inequalities derived by the feasibility analysis.
Recall the approximate dual feasibility in Lemma~\ref{lem:online-primal-dual}.
For any advertiser $a$ and any subset of impressions $S \subseteq I$, we need:
\begin{equation}
    \label{eqn:small-bid-approx-dual-feasible}
    \alpha_a + \sum_{i \in S} \beta_i \ge \Gamma \cdot b_a(S)
    ~.
\end{equation}

\paragraph{Conditions on $\alpha$.}
We first give some conditions for $\alpha$, which simplify the subsequent analysis.
%proof directly.
%
\begin{enumerate}
    \item \textbf{Initial values.}
        The above primal and dual assignments ensure equal increments in the primal and dual objectives in every step.
        In order to get equal primal and dual objectives, we further need them to have value $0$ initially.
        It follows from its definition that the primal objective equals $0$ at the beginning. Thus we need:
        \begin{equation}
        \label{eqn:small-bids-initial-value}
            \alpha(0) = 0, \quad \beta(0) = 0
            ~.
        \end{equation}
    \item \textbf{Convexity of $\alpha$ and Concavity of $\beta$.}
        We shall choose $\alpha$ such that the portion of the primal increment that is assigned to $\alpha_a$, i.e., the ratio of the increment of $\alpha_a$ to the primal increment, is nondecreasing in the value of $\alpha_a$.
        This further implies the concavity of $\beta$ by definition.
        We need:
        \begin{equation}
            \label{eqn:small-bids-convexity}
            \forall y \in [0,1] \qquad \alpha''(y) \ge 0, \quad \beta''(y) \le 0
            ~.
        \end{equation}
        This condition is driven by the online primal dual analysis of approximate dual feasibility, i.e., Eqn.~\eqref{eqn:small-bid-approx-dual-feasible}.
    In particular, the crux case is when none of $i \in S$ is assigned to $a$ and thus, the value of $\beta_i$ is lower bounded by what advertiser $a$ offers in Eqn.~\eqref{eqn:small-bid-beta-update}.
    When $\alpha$ is smaller, we need to offer a larger portion of the gain to $\beta$ in order to guarantee approximate dual feasibility;
    and vice versa.
    \item \textbf{Curvature of $\alpha$.} We restrict the curvature of $\alpha$ with upper and lower bounds on its derivative, which implies bounds on the derivative of $\beta$ by definition:
    \begin{equation}
       \forall y \in [0,1] \qquad 1 - \Gamma \le \alpha'(y) \le 1, \quad 0 \le \beta'(y) \le \Gamma
       ~.
       \label{eqn:small-bids-curvature}
    \end{equation}
    The upper bound on $\alpha'$ and the lower bound on $\beta'$ ensures that the assignment of $\beta_i$'s in Eqn.~\eqref{eqn:small-bid-beta-update} satisfies nonnegativity.
    The lower bound on $\alpha'$ and the upper bound on $\beta'$ is driven by the observation that offering a $\Gamma$ portion of the gain of an edge $(a, i)$ to $\beta_i$ is sufficient for covering the contribution of the edge to the RHS of Eqn.~\eqref{eqn:small-bid-approx-dual-feasible}.
    
\end{enumerate}

\paragraph{Contribution from $\beta_i$.}
We next show the approximate dual feasibility, i.e., Eqn.~\eqref{eqn:small-bid-approx-dual-feasible}, by characterizing the contribution from $i \in S$ for any $S \subseteq I$.
Let $S_a$ be the set of impressions assigned to advertiser $a$. 
\begin{lemma}
\label{lem:small-bids-beta-bound}
For any impression $i \in S$,
\[
    \beta_i \ge B_a \bigg( \beta \bigg( \frac{b_a(S_a \cup \{i\})}{B_a} \bigg) - \beta \bigg( \frac{b_a(S_a)}{B_a} \bigg) \bigg)
    ~.
\]
\end{lemma}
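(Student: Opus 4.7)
The plan is to argue that $\beta_i$ is at least what advertiser $a$ would have offered \emph{at the moment impression $i$ arrived}, and then to use concavity and monotonicity of $\beta$ to replace the ``at arrival'' state by the final state $S_a$.

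First I would split into two cases based on whether $i$ was assigned to $a$ by the algorithm. If $i \in S_a$ then $S_a \cup \{i\} = S_a$, so the claimed RHS equals $0$; together with $\beta_i \ge 0$ (which follows from $\beta' \ge 0$ in Eqn.~\eqref{eqn:small-bids-curvature} applied to Eqn.~\eqref{eqn:small-bid-beta-update}) the lemma holds trivially. The interesting case is $i \notin S_a$. Let $S_a^i \subseteq S_a$ denote the set of impressions already assigned to $a$ at the moment $i$ arrives. Since Algorithm~\ref{alg:pd-algorithm-small-bids} assigns $i$ to the advertiser maximizing Eqn.~\eqref{eqn:small-bid-beta-update}, and since $a$ is a valid option evaluated at state $S_a^i$, we immediately obtain
\[
    \beta_i \ge B_a \bigg( \beta \bigg( \frac{b_a(S_a^i \cup \{i\})}{B_a} \bigg) - \beta \bigg( \frac{b_a(S_a^i)}{B_a} \bigg) \bigg) ~.
\]

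It remains to replace $S_a^i$ by the final $S_a$ on the RHS. Set $u = b_a(S_a^i)/B_a$, $v = b_a(S_a)/B_a$, $\Delta_u = b_a(S_a^i \cup \{i\})/B_a - u$, and $\Delta_v = b_a(S_a \cup \{i\})/B_a - v$. Since $S_a^i \subseteq S_a$ we have $u \le v \le 1$. Moreover, by the budget-additivity of $b_a$, both $\Delta_u$ and $\Delta_v$ equal $b_{ai}/B_a$ truncated by the remaining budget, namely $\Delta_u = \min\{b_{ai}/B_a,\ 1-u\}$ and $\Delta_v = \min\{b_{ai}/B_a,\ 1-v\}$, so $\Delta_u \ge \Delta_v \ge 0$. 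Using $\beta' \ge 0$ (Eqn.~\eqref{eqn:small-bids-curvature}) gives $\beta(u+\Delta_u) - \beta(u) \ge \beta(u+\Delta_v) - \beta(u)$, and using the concavity $\beta'' \le 0$ (Eqn.~\eqref{eqn:small-bids-convexity}) together with $u \le v$ (and the fact that both $u+\Delta_v$ and $v+\Delta_v$ lie in $[0,1]$) gives $\beta(u+\Delta_v) - \beta(u) \ge \beta(v+\Delta_v) - \beta(v)$. Chaining these two inequalities with the greedy bound above and multiplying by $B_a$ yields exactly the statement of the lemma.

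The only delicate point, which I would flag briefly, is the interplay between the budget cap and the fact that $\Delta_u \ne \Delta_v$ in general; this is precisely why we need both monotonicity and concavity of $\beta$ in the final chain, rather than concavity alone as one might expect from a standard marginal-gain argument.
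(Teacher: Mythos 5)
Your proof is correct and follows essentially the same route as the paper: lower bound $\beta_i$ by advertiser $a$'s offer at the moment $i$ arrives (by the greedy choice of the algorithm), then use the fact that $S_a$ was smaller at that time together with the shape conditions on $\beta$ to pass to the final state. Your extra care in splitting off the budget-truncation case (where $\Delta_u \ne \Delta_v$, requiring $\beta' \ge 0$ from Eqn.~\eqref{eqn:small-bids-curvature} in addition to concavity) correctly fills in a detail the paper's one-line proof leaves implicit.
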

\begin{proof}
Any impression $i \in S$ could have gotten a share equal to the above, except that $S_a$ might be a smaller subset at the time when $i$ arrives;
the RHS above is therefore a valid lower bound by the concavity of $\beta$ and the definition of the algorithm.
\end{proof}

\begin{proof}[Proof of Theorem~\ref{thm:small-bids-main}.]
    Combining Lemma~\ref{lem:small-bids-beta-bound} with the definition of $\alpha_a$, it remains to prove that:
    \[
        B_a \alpha \bigg( \frac{b_a(S_a)}{B_a} \bigg) + \sum_{i \in S} B_a \bigg( \beta \bigg( \frac{b_a(S_a \cup \{i\})}{B_a} \bigg) - \beta \bigg( \frac{b_a(S_a)}{B_a} \bigg) \bigg) \ge \Gamma \cdot b_a(S)
        ~.
    \]

    Next, we simplify the above inequality using the sufficient conditions in Equations~\eqref{eqn:small-bids-convexity} and \eqref{eqn:small-bids-curvature}.
    First, dividing both sides by $B_a$, it becomes clear that only the ratios of the bids $b_{ai}$'s to the budget $B_a$ matter.
    Hence, we may wlog normalize $B_a = 1$ to simplify notations.
    The above inequality turns to:
    \[
        \alpha \big( b_a(S_a) \big) + \sum_{i \in S} \big( \beta \big( b_a(S_a \cup \{i\}) \big) - \beta \big( b_a(S_a) \big) \big) \ge \Gamma \cdot b_a(S)
        ~,
    \]
    where the definition of $b_a(\cdot)$ becomes:
    \[
        \forall S \subseteq I: \quad b_a(S) \defeq \min \bigg\{ 1, \sum_{i \in S} b_{ai} \bigg\}
        ~.
    \]

    Second, we claim that it suffices to consider the case when $\sum_{i \in S} b_{ai} = B_a = 1$.
    If $\sum_{i \in S} b_{ai}$ is strictly larger than $B_a$, we may decrease some $b_{ai}$:
    the LHS weakly decreases while the RHS remains the same.
    If $\sum_{i \in S} b_{ai}$ is strictly smaller than $B_a$, on the other hand, we may increase some $b_{ai}$ so that the LHS increases at rate at most $\Gamma$ (by the definition of $b_a$ and Eqn.~\eqref{eqn:small-bids-convexity}), the the RHS increases at rate exactly $\Gamma$.

    Finally, recall that the small bids assumption ensures any $b_{ai} \le \frac{1}{2} B_a$. 
    Thus by the concavity of $\beta$ (Eqn.~\eqref{eqn:small-bids-curvature}) and $b_a(\cdot)$, it suffices to consider $|S| = 2$ with $b_{ai} = \frac{1}{2}$ for both $i \in S$.
    
    Therefore, for any $b = b_a(S_a)$, we need:
    \[
        \alpha ( b ) + 2 \cdot \big( \beta ( \min \{ b + \tfrac{1}{2}, 1 \} ) - \beta (b) \big) \ge \Gamma
        ~.
    \]

    Combining with the definition of $\beta$ in Eqn.~\eqref{eqn:small-bids-beta-definition}, and writing the cases of $0 \le \beta \le \frac{1}{2}$ and $\frac{1}{2} < \beta \le 1$ separately, we get that:
    \begin{equation*}
        \label{eqn:small-bid-approx-dual-feasible-simplified}
        \begin{aligned}
            0 \le b \le \tfrac{1}{2}: \quad & 
            3 \alpha ( b ) - 2 \alpha ( b + \tfrac{1}{2} ) + 1 \ge \Gamma ~; \\
            \tfrac{1}{2} < b \le 1: \quad & 
            3 \alpha ( b ) - 2 \alpha ( 1) + 2 (1 - b) \ge \Gamma ~.
        \end{aligned}
    \end{equation*}
    
    Solving it with the boundary condition $\alpha(0) = 0$ (Eqn.~\eqref{eqn:small-bids-initial-value}) gives $\Gamma = \frac{5}{9}$ with:
    \begin{equation*}
        \alpha(y) = \begin{cases}
            \frac{4}{9} y & 0 \le y \le \frac{1}{2} ~; \\
            \frac{2}{3} y - \frac{1}{9} & \frac{1}{2} < y \le 1 ~.
        \end{cases}
    \end{equation*}
\end{proof}

\bibliographystyle{plainnat}
\bibliography{matching}

\end{document}